\numberwithin{equation}{section}
\newtheorem{thm}{Theorem}[section]
\newtheorem{defi}{Definition}[section]
\newtheorem{assume}{Assumption}[section]
\newtheorem{prop}[thm]{Proposition}
\newtheorem{lemma}[thm]{Lemma}
\newtheorem{remark}{Remark}[section]
\begin{document}

	\author{Jaehyun Kim\thanks{jaehyun107@snu.ac.kr} and Hyungbin Park\thanks{hyungbin@snu.ac.kr, hyungbin2015@gmail.com}     \\ \\ \normalsize{Department of Mathematical Sciences} \\ 
		\normalsize{Seoul National University}\\
		\normalsize{1, Gwanak-ro, Gwanak-gu, Seoul, Republic of Korea} 
	}

\title{A $G$-BSDE approach to the long-term decomposition of robust pricing kernels}	
	\maketitle

\abstract{
	This study proposes a BSDE approach to the long-term decomposition of pricing kernels under the $G$-expectation framework. We establish the existence, uniqueness, and regularity of solutions to three types of quadratic $G$-BSDEs: finite-horizon $G$-BSDEs, infinite-horizon $G$-BSDEs, and ergodic $G$-BSDEs. Moreover, we explore the Feynman--Kac formula associated with these three types of quadratic $G$-BSDEs. Using these results, a pricing kernel is uniquely decomposed into four components: an exponential discounting component, a transitory component, a symmetric $G$-martingale, and a decreasing component that captures the volatility uncertainty of the $G$-Brownian motion. Furthermore, these components are represented through the solution to a second-order PDE. This study extends previous findings obtained under a single fixed probability framework to the $G$-expectation context.
	}

\section{Introduction}

A long-term analysis of pricing kernels plays a central role in financial economics, especially in understanding asset prices over extended horizons. 
Pricing kernels serve as valuation operators in dynamic stochastic environments, reflecting how economic agents discount uncertain future cash flows depending on current and future states of the world.
This paper explores a long-term decomposition of pricing kernels, which allows researchers to study how valuation evolves over time and how it depends on the structure of the underlying economic model. The decomposition characterizes the asymptotic behavior of the pricing kernel and reveals how long-term risks are priced. 
This framework offers key insights into the persistence of risk premia and the intertemporal structure of asset values.
This analysis is especially helpful when pricing long-lived assets such as long-term bonds and equities.

$G$-Brownian motion has been a valuable tool for modeling financial markets under volatility uncertainty. The mathematical foundations of $G$-expectation theory were rigorously developed in \cite{peng2007g,peng2008multi}. \cite{epstein2013ambiguous}
provided an economic rationale for employing $G$-Brownian motion in such settings, deriving arbitrage-free pricing rules based on hedging arguments and analyzing equilibrium asset prices within the framework of a consumption-based capital asset pricing model. The pricing of contingent claims under the $G$-expectation framework has been investigated in \cite{biagini2019robust}, \cite{holzermann2024pricing}, and \cite{vorbrink2014financial}.
Portfolio optimization problems within this framework have been explored in
\cite{fouque2016portfolio}, \cite{pun2021g}, and \cite{lin2021optimal}.

This study examines pricing kernels of the form \begin{equation}\label{eqn:pk} D_s=e^{-\int_0^sr(X_u)\,du-\int_0^sk_{ij}(X_u)\,d\langle B^i,B^j\rangle_u-\int_0^sv(X_u)\,dB_u},\quad s\ge0\,, \end{equation} 
where $r,k_{ij},v$ are continuous functions, $B=(B^1,B^2,\cdots,B^d)$ is a $G$-Brownian motion, and $X$ is a solution to a $G$-SDE. The pricing kernel \begin{equation} \label{eqn:our_decomp} D_s=e^{\lambda s}e^{u(X_0)-u(X_s)} M_se^{K_s},\quad s\ge0 
\end{equation} 
is uniquely decomposed 
into a discounting component with a long-term exponential rate $\lambda$, a transitory component $e^{u(X_0)-u(X)}$ for some function $u$, a positive symmetric $G$-martingale $M$, and a decreasing process $e^{K}$ that captures the volatility uncertainty of the $G$-Brownian motion. Furthermore, 
the function $u$ is characterized by the solution to a PDE, and the components $M$ and $K$ are expressed in terms of $u$.
To achieve this, the study employs a quadratic $G$-BSDE approach, investigating both finite and infinite horizon $G$-BSDEs, along with ergodic $G$-BSDEs.

This type of long-term decomposition has been studied by several authors.
Under the standard Brownian environment, \cite{hansen2009long} and \cite{hansen2012dynamic} decomposed
the pricing kernel as 
\begin{equation} \label{HS}
D_s=e^{\lambda s}e^{u(X_0)-u(X_s)}
M_s\,,\;s\ge0\,.
\end{equation}
The long-term exponential
rate $\lambda$ and 
the process $e^{u(X_0)-u(X)}$ 
correspond to the eigenvalue and eigenfunction of the Perron--Frobenius operator and the process $M$ is a martingale whose logarithm has stationary increments.
\cite{qin2017long} 
conducted the long-term decomposition under a general semimartingale framework.
The constant $-\lambda$ is the yield on the long bond and the process  
$e^{u(X_0)-u(X)}$ characterizes
gross holding
period returns on the long bond net of the long-term discount rate, and $M$
is a positive martingale that defines a long-term forward measure. A similar work was performed in \cite{qin2018along}
under the Heath--Jarrow--Morton
models.
\cite{severino2025long}
employed the long-term decomposition to reconcile the stochastic discount factor growth with the instantaneous relations between returns and rates in stochastic rate markets.

This study makes three key contributions.
First, our work extends previous results established
under a single fixed probability 
framework to the $G$-expectation context.
Unlike the literature mentioned above,
our $G$-BSDE approach does not rely on a single fixed expectation but operates with a sublinear expectation, representing the supremum of all possible expectations.
This robust approach is beneficial for conservative trading agents who evaluate upper prices with the sublinear expectation under volatility uncertainty. The process $K$ in the decomposition \eqref{eqn:our_decomp} is a decreasing $G$-martingale that reflects the effect of volatility uncertainty of the $G$-Brownian motion.
In the specific case where the $G$-Brownian motion is the standard Brownian motion, we have $K=0,$  which simplifies equation \eqref{eqn:our_decomp} to \eqref{HS}.
Consequently, our results 
can be interpreted as 
a robust version of long-term decompositions under volatility uncertainty.

Second, this study presents a novel BSDE approach that provides novel insights compared to previous literature. This is the first study to apply the BSDE framework to long-term decomposition. 
Our approach differs significantly from 
the operator approach proposed by \cite{hansen2009long} and \cite{hansen2012dynamic}, as well as the martingale approach suggested by \cite{qin2017long}.
Although the operator approach offers a methodology for achieving long-term decomposition, our BSDE approach explicitly identifies a sufficient condition for achieving long-term decomposition based on the specified parameters of pricing kernels. Additionally, our method establishes the uniqueness of the decomposition, a feature not addressed by the operator approach. In contrast to the martingale approach, 
our method characterizes the components of the decomposition via the solution of a second-order parabolic PDE.

Third, this paper develops a theory for a specific class of $G$-BSDEs. More precisely,  we investigate $G$-BSDEs
whose drivers are  uniformly Lipschitz continuous and possibly unbounded in the state variable and exhibit quadratic growth in the martingale integrand.
This specific class of $G$-BSDEs plays a crucial role in establishing the long-term decomposition in standard market models.  
We focus on  three main classes: finite-horizon,  infinite-horizon and ergodic $G$-BSDEs.
For each class, we establish the existence, uniqueness, and regularity of solutions.
These results are then employed, as outlined in \eqref{pic}, to  obtain the long-term decomposition. Furthermore, we investigate the related Feynman--Kac formula, which characterizes the components of the long-term  decomposition through the solution to a second-order parabolic PDE.

Our main results on the long-term decomposition are presented in Theorems \ref{sdf thm}, \ref{thm:unique}, and \ref{feynmac kac formula reverse} in Section \ref{section 6}.
Theorem \ref{sdf thm} establishes a sufficient condition for achieving the long-term decomposition \eqref{eqn:our_decomp} based on the specified parameters of pricing kernels. The uniqueness of this decomposition is demonstrated in Theorem \ref{thm:unique}. Additionally, Theorem \ref{feynmac kac formula reverse} offers an alternative characterization of the decomposition through a solution to a PDE.
We would like to emphasize that these results are novel, even in the specific case where the $G$-Brownian motion is the standard Brownian motion.

\begin{equation}\label{pic}
\begin{tikzpicture}  [>=stealth,sloped]
\matrix (tree) [
matrix of nodes,
minimum size=0.5cm,
column sep=0.4cm,
row sep=0.1cm,
]
{
	\fbox{Finite-horizon $G$-BSDE}   &     &   \\
	&  & 	\fbox{\begin{varwidth}{\textwidth}
		\centering
		Long-term decomposition \\ of pricing kernels
		\end{varwidth}} \\
	\fbox{Infinite-horizon $G$-BSDE} &   	\fbox{Ergodic $G$-BSDE}  &   \\ 
};
\draw[->] (tree-1-1) -- (tree-2-3) node [midway,above] {};
\draw[->] (tree-3-1) -- (tree-3-2) node [midway,above] {};
\draw[->] (tree-3-2) -- (tree-2-3) node [midway,above] {};
\draw[->] (tree-1-1) -- (tree-3-1) node [midway,above] {};
\end{tikzpicture}
\end{equation}

Several authors have studied $G$-BSDEs in contexts closely related to our work.
\cite{hu2018quadratic} and \cite{hu2022quadratic} investigated the existence and uniqueness of solutions to finite-horizon   quadratic $G$-BSDEs.
\cite{hu2018ergodic} developed a framework for infinite-horizon and ergodic $G$-BSDEs with Lipschitz continuous drivers, while \cite{sun2024g} investigated infinite-horizon and ergodic $G$-BSDEs with quadratic drivers in connection with $G$-forward performance processes. 
Our results are  related to these works, especially to \cite{hu2018ergodic} and \cite{sun2024g}, and several parts of our theorems and proofs are inspired by and build upon their framework. 
Nonetheless, these existing results do not fully encompass our setting. Further comparisons and distinctions are provided in Remarks \ref{remark:finite_BSDEs} and \ref{remark:infinite_BSDEs}.

The long-term decomposition is a powerful tool for analyzing risk premia over extended horizons.
In equilibrium pricing, risk-averse investors require compensation for their risk exposure, which gives rise to risk premia. These premia depend on both risk exposure and the market price of that exposure. 
The decomposition  provides 
insight into how the
exposure of cash flows and the price of that exposure evolve over the long term.
In addition to the previously cited references,
 the 
literature on the long-term decomposition and its applications includes
\cite{borovivcka2011risk},
\cite{bakshi2012variance},
\cite{hansen2012pricing},
\cite{borovivcka2016term},
\cite{hansen2017stochastic},
\cite{backus2018term} and
\cite{qin2018blong}.

The remainder of this paper is organized as follows. Section \ref{section 2} presents the fundamental concepts of $G$-expectation and demonstrates key results related to Lipschitz $G$-BSDEs. In Section \ref{section 3}, we explore the existence, uniqueness, and regularity of solutions to finite-horizon quadratic $G$-BSDEs. These results are extended to infinite-horizon quadratic $G$-BSDEs in Section \ref{section 4} and to ergodic quadratic $G$-BSDEs in Section \ref{section 5}. Finally, Section \ref{section 6} investigates the existence and uniqueness of the long-term decomposition of pricing kernels and provides a PDE characterization of this decomposition.
The final section summarizes this study.

\section{Preliminary}\label{section 2}
\subsection{\textit{G}-expectation}\label{sec:G_preli}
In this section, we state basic notions and  results of $G$-expectation theory briefly, which are needed in the sequel. The readers may refer to  \cite{peng2007g}, \cite{peng2008multi} and \cite{peng2019nonlinear}. 
Let $\Omega=C_0([0,\infty),\mathbb{R}^d)$ be the space of all $\mathbb{R}^d$-valued continuous functions $(\omega_t)_{t\in[0,\infty) }$ with $\omega_0=0,$ endowed with the distance 
$$\rho(\omega^{(1)},\omega^{(2)}):=\sum_{n=1}^\infty \frac{1}{2^n} (1\wedge \max_{t\in [0,n]}|\omega^{(1)}_t-\omega^{(2)}_t|)$$
and $B$ be the canonical process, namely $B_t(\omega)=\omega(t)$.
For each $t>0$, define a function space 
\begin{align}
Lip(\Omega_t):=\{\varphi(B_{t_1}-B_{t_0},B_{t_2}-B_{t_1},\cdots,B_{t_n}-B_{t_{n-1}})\,|&
\,n\in \mathbb{N},
0\leq t_0 \leq \cdots \le t_n\leq t,\\ &\hspace{20mm}\varphi\in C_{l,Lip}(\mathbb{R}^{d\times n})\} \,,
\end{align}
where
$C_{l,Lip}(\mathbb{R}^{d\times n})$ is the space of all continuous functions $\varphi$ such that there are constants   $C>0$ and $k\in \mathbb{N}$   satisfying $|\varphi(x)-\varphi(y)|\leq C(1+|x|^k+|y|^k)|x-y|$ for all $x,y\in \mathbb{R}^{d\times n}$. 
It is easy to show that $Lip(\Omega_t) \subseteq Lip(\Omega_T)$  for $t\leq T$. We define 
$Lip(\Omega):= \bigcup\limits_{t\geq 0}Lip(\Omega_t).$

Let $\mathbb{S}^d$ be the collection of all $d\times d$ symmetric matrices.
For any given monotonic and sublinear function   $G:\mathbb{S}^d\to \mathbb{R}$,
there is a sublinear expectation space 
$(\Omega, Lip(\Omega), \hat{\mathbb{E}})$ such that
the canonical process $B$ is a $G$-Brownian motion satisfying
$\textit{G}(A)=\frac{1}{2}\hat{\mathbb{E}}[ \langle AB_1, B_1\rangle].$
We say $G$ is the generator of the $G$-Brownian motion.
It can be shown that there exists a bounded and closed set $\Sigma$ of $d\times d$ symmetric positive definite matrices such that
\begin{equation}\begin{aligned}\label{gamma}
G(A)=\frac{1}{2}\sup_{Q\in\Sigma}\text{tr}(AQ)\,.
\end{aligned}\end{equation}
Occasionally, for $A = (a_{ij})_{ i, j =1}^d$, we write $G(a_{ij})$ instead of $G(A)$ for notational simplicity.
We define a  sublinear conditional expectation $\hat{\mathbb{E}}_t$ of $X\in Lip(\Omega)$   as
$$\hat{\mathbb{E}}_t[X] =\hat{\mathbb{E}}[\varphi(x_1,\cdots,x_j,B_{t_{j+1}}-B_{t_j} ,\cdots,B_{t_n}-B_{t_{n-1}})]|_{x_1=B_{t_1}-B_{t_0},\cdots, x_j=B_{t_{j}}-B_{t_{j-1}}}\,,$$
where $X$ is expressed as $X=\varphi(B_{t_1}-B_{t_0},B_{t_2}-B_{t_1},\cdots,B_{t_{j+1}}-B_{t_j} ,\cdots,B_{t_n}-B_{t_{n-1}})$
with $\varphi\in C_{l,Lip}(\mathbb{R}^{d\times n})$ and $t=t_j$ for some $j=0,\cdots, n-1.$
The sublinear expectation space $(\Omega, Lip(\Omega), \hat{\mathbb{E}})$,
the sublinear expectation $\hat{\mathbb{E}}$ 
and the sublinear conditional expectation $\hat{\mathbb{E}}_t$ 
are called a $G$-expectation space, a  $G$-expectation, a conditional $G$-expectation  respectively.

For $p\geq 1$, we define $\mathbb{L}_G^p(\Omega_t)$ (respectively, $\mathbb{L}_G^p(\Omega)$) as the completion of $Lip(\Omega_t)$ (respectively, $Lip(\Omega)$) with respect to the norm $|\!|\xi|\!|_{p,G}:= (\hat{\mathbb{E}}[|\xi|^p])^{\frac{1}{p}}$. For each $t\geq 0,$ a sublinear conditional expectation can be extended continuously to $\hat{\mathbb{E}_t}:\mathbb{L}^p_G(\Omega)\to \mathbb{L}^p_G(\Omega_t)$.
The product space of $m$ copies of 
$\mathbb{L}_G^p(\Omega_t)$
is denoted as 
$\mathbb{L}_G^p(\Omega_t;\mathbb{R}^m).$
In this study, we shall only
consider   $G$-Brownian motions satisfying the strong ellipticity condition, that is, 
there exist  strictly positive constants $\overline{\sigma}, \underline{\sigma}$ such that
\begin{equation}\label{ellip}
\frac{1}{2}\underline{\sigma}^2\text{tr}(A-B)\leq G(A)-G(B)\leq \frac{1}{2}\overline{\sigma}^2\text{tr}(A-B)\text{ for } A\geq B\,.
\end{equation}

The conditional $G$-expectation has a probabilistic representation.
Refer to \cite{hu2009representation}, \cite{denis2011function}, and \cite{hu2021extended}       
for the following representation theorem.
Let $(\mathcal{F}_t)_{t\ge0}$ be the natural filtration on the canonical space $\Omega$  and  $\mathcal{F}$ be the $\sigma$-algebra generated by  $\cup_{t\ge0}\mathcal{F}_t.$

\begin{thm}\label{conditonal sublinear representation} 
	There exists a weakly compact set $\mathcal{P}$ of probability measures   on $(\Omega,\mathcal{F})$  such that 
	\begin{equation}
	\hat{\mathbb{E}}[X]=\sup_{P\in \mathcal{P}}E^P[X] 
    \text{\, for all } X \in \mathbb{L}^1_G(\Omega). 
	\end{equation}
	We say $\mathcal{P}$ is a set that represents $\hat{\mathbb{E}}$. Moreover, the sublinear conditional expectation $\hat{\mathbb{E}}_s$ satisfies
	\begin{equation} 
	\begin{aligned}
	\hat{\mathbb{E}}_s[X]=\operatorname*{ess\,\sup}_{Q \in \mathcal{P}(s,P)}E^Q[X|\mathcal{F}_s]\hspace{1cm} P\textnormal{-almost surely\,,}   
	\end{aligned}	
	\end{equation}
	where $\mathcal{P}(s,P):= \{Q\in \mathcal{P}\,|\, E^Q[X]=E^P[X]\text{\, for all } X\in Lip(\Omega_s)\}$.
\end{thm}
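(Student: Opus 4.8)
The plan is to establish the unconditional representation first and then bootstrap to the conditional one. For the unconditional statement I would regard $\hat{\mathbb{E}}$ as a sublinear functional on the Banach space $\mathbb{L}^1_G(\Omega)$ and invoke the abstract representation of sublinear functionals: by a Hahn--Banach separation argument, $\hat{\mathbb{E}}[X]=\sup_{\ell\in\mathcal{L}}\ell(X)$, where $\mathcal{L}$ is the collection of all linear functionals dominated by $\hat{\mathbb{E}}$. Since $\hat{\mathbb{E}}$ is monotone and preserves constants, each such $\ell$ is automatically a positive, normalized linear functional: for $X\ge 0$ one has $\ell(-X)\le\hat{\mathbb{E}}[-X]\le 0$, so $\ell(X)\ge 0$, while $\ell(1)=1$ follows from $\ell(\pm 1)\le\hat{\mathbb{E}}[\pm 1]=\pm 1$.

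The crucial analytic input is then to upgrade each $\ell$ to a countably additive probability measure on $(\Omega,\mathcal{F})$, since the abstract argument alone only yields finitely additive set functions. For this I would use the regularity of the $G$-expectation, namely its continuity from above on $C_b(\Omega)$: whenever $X_n\in C_b(\Omega)$ decrease pointwise to $0$, one has $\hat{\mathbb{E}}[X_n]\downarrow 0$. This property, which stems from the explicit stochastic-control construction of $G$-Brownian motion together with the tightness of the induced laws, permits a Daniell--Stone/Riesz representation, so that each $\ell$ corresponds to a Radon probability measure $P$ with $\ell(X)=E^P[X]$. Collecting these measures and taking the weak closure produces $\mathcal{P}$. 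Its weak compactness I would obtain from Prokhorov's theorem: the uniform moment bounds $\sup_{P\in\mathcal{P}}E^P[\sup_{t\le T}|B_t|^p]<\infty$, available from the $G$-expectation moment estimates, yield tightness, while the defining domination $E^P\le\hat{\mathbb{E}}$ is preserved under weak limits, making $\mathcal{P}$ weakly closed; together these give weak compactness.

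For the conditional representation, which I expect to be the main obstacle, the key structural fact is stability of $\mathcal{P}$ under pasting at time $s$: given $P,Q\in\mathcal{P}$, the measure obtained by following $P$ on the trajectory up to time $s$ and $Q$ thereafter again belongs to $\mathcal{P}$. This reflects the time-consistency built into the construction of $\hat{\mathbb{E}}$. Granting it, I would prove the two inequalities separately. The bound $\hat{\mathbb{E}}_s[X]\ge E^Q[X\mid\mathcal{F}_s]$ for every $Q\in\mathcal{P}(s,P)$ follows from the tower property of the conditional $G$-expectation and the domination $E^Q\le\hat{\mathbb{E}}$. For the reverse inequality I would show that the family $\{E^Q[X\mid\mathcal{F}_s]:Q\in\mathcal{P}(s,P)\}$ is upward directed, using pasting to replace the post-$s$ behaviour of one measure by that of a near-optimal one without altering its restriction to $Lip(\Omega_s)$, so that membership in $\mathcal{P}(s,P)$ is retained; this guarantees that the essential supremum is attained as an increasing limit and coincides with $\hat{\mathbb{E}}_s[X]$ $P$-almost surely.

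The principal technical difficulties are concentrated in this last stage: verifying stability under pasting within the $G$-framework, and executing the measurable-selection and upward-directedness argument that converts a pointwise essential supremum into the conditional $G$-expectation. By contrast, the unconditional part reduces to standard functional analysis once continuity from above is established.
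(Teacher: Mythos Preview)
Your outline is correct and follows the standard route taken in the cited literature. Note, however, that the paper does not supply its own proof of this theorem: it is stated as a preliminary result with the sentence ``Refer to \cite{hu2009representation}, \cite{denis2011function}, and \cite{hu2021extended} for the following representation theorem,'' and no argument is given. Your sketch---Hahn--Banach plus continuity from above for the unconditional part, stability under pasting and upward directedness for the conditional part---is precisely the strategy underlying those references, so there is nothing to contrast.
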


We define the capacity as 
\begin{equation}
c(A):= \sup_{P\in \mathcal{P}} P(A),\hspace{0.2cm} A\in \mathcal{F}\,.
\end{equation}
A set $A\in \mathcal{F}$ is called polar if $c(A)=0$. We say a property holds ``quasi-surely" if it holds outside a polar set. Unless explicitly stated otherwise, all subsequent inequalities and equations hold quasi-surely.
We say two sublinear expectations
	$\hat{\mathbb{E}}$ and $\tilde{\mathbb{E}}$ on $(\Omega, Lip(\Omega))$ are equivalent if $\hat{\mathbb{E}}[\mathds{1}_A]=0$ if and only if $\tilde{\mathbb{E}}[\mathds{1}_A]=0$ for all $A\in \mathcal{F}$. 
A random variable $X:\Omega \to \mathbb{R}$ is said  to  be quasi-continuous if for any $\epsilon>0$, there exists an open set $O$ such that $c(O)<\epsilon$ and $X|_{O^c}$ is continuous.
Let $\mathcal{B}_b(\Omega)$ be the set of all bounded $\mathcal{F}$-measurable real-valued functions, and define $\mathbb{L}^p_b(\Omega)$ as the completion of $\mathcal{B}_b(\Omega)$ with respect to the norm $|\!| \cdot|\!|_{p,G}$. \cite{denis2011function}  proved that  
\begin{equation} 
\begin{aligned}
\mathbb{L}^p_b(\Omega)=\{X:\Omega\to\mathbb{R}\,|\,X\text{ is }\mathcal{F}\text{-measurable, } \hat{\mathbb{E}}[|X|^p]<\infty\text{ and } \lim_{n\rightarrow\infty}\hat{\mathbb{E}}[|X|^p\mathds{1}_{\{|X|>n\}}]=0\}.
\end{aligned}
\end{equation}
This implies that to demonstrate $Y\in \mathbb{L}_b^1(\Omega)$,  it suffices to show that  $Y\in \mathbb{L}_G^p(\Omega)$ for $p>1$.  We now present several properties of random variables, which are used later.
Refer to \cite{hu2018quadratic} and \cite{hu2019convergences}
for the proofs of the following two propositions.

\begin{prop}\label{relation P and sublinear} Let $X$ and $Y$ be random variables with $E^P[|X|+|Y|]<\infty$ for all $P\in\mathcal{P}.$ Then  following statements are equivalent.
	\begin{enumerate}[label=(\roman*)]
		\item $X=Y$ $P$-almost surely for all $P\in \mathcal{P}$. 
		\item  $\hat{\mathbb{E}}[|X-Y|]=0$.
		\item $c(\{X\neq Y\})=0$.
	\end{enumerate}
\end{prop}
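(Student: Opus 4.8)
The plan is to reduce all three conditions to the single statement ``for every $P\in\mathcal{P}$, $X=Y$ holds $P$-almost surely,'' which is literally condition (i). First I would set $Z:=|X-Y|$, a nonnegative $\mathcal{F}$-measurable random variable, and record that $\{X\neq Y\}=\{Z>0\}$. The integrability hypothesis $E^P[|X|+|Y|]<\infty$ guarantees $E^P[Z]<\infty$ for each $P\in\mathcal{P}$, so every $E^P[Z]$ is a well-defined element of $[0,\infty)$. Throughout I would use the representation of the sublinear expectation as an upper expectation, $\hat{\mathbb{E}}[Z]=\sup_{P\in\mathcal{P}}E^P[Z]$, and of the capacity as $c(\{Z>0\})=\sup_{P\in\mathcal{P}}P(Z>0)$, both taken over the representing set $\mathcal{P}$ of Theorem \ref{conditonal sublinear representation}.

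For the equivalence (i) $\Leftrightarrow$ (ii), I would invoke the classical measure-theoretic fact that, for a fixed $P$ and a nonnegative integrable $Z$, one has $E^P[Z]=0$ if and only if $Z=0$ $P$-almost surely. Condition (ii) asserts $\sup_{P\in\mathcal{P}}E^P[Z]=0$; since each term $E^P[Z]$ is nonnegative, this supremum vanishes precisely when $E^P[Z]=0$ for every $P\in\mathcal{P}$, which by the fixed-$P$ equivalence is exactly condition (i).

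For (i) $\Leftrightarrow$ (iii) I would again argue one measure at a time: for a fixed $P$, the statement $Z=0$ $P$-almost surely is the same as $P(Z>0)=0$. Hence (i) holds if and only if $P(Z>0)=0$ for every $P\in\mathcal{P}$, that is, $\sup_{P\in\mathcal{P}}P(Z>0)=0$, which is the definition of $c(\{X\neq Y\})=0$, namely (iii). Chaining the two equivalences closes the loop among (i), (ii), and (iii).

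The only genuinely delicate point, and the step I would treat most carefully, is that $Z=|X-Y|$ need not belong to $\mathbb{L}^1_G(\Omega)$, so the symbol $\hat{\mathbb{E}}[|X-Y|]$ in (ii) cannot be read off the $G$-expectation on $\mathbb{L}^1_G$ directly; it must be understood as the upper expectation $\sup_{P\in\mathcal{P}}E^P[\,\cdot\,]$, and one should verify that this is the consistent extension under the stated integrability hypothesis. Once that interpretation is fixed, the remaining argument is purely measure-theoretic and uses neither the weak compactness of $\mathcal{P}$ nor attainment of the suprema.
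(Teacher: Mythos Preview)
Your argument is correct and is precisely the standard elementary route: reduce everything to the fixed-$P$ measure-theoretic equivalences via the upper-expectation representation $\hat{\mathbb{E}}[\,\cdot\,]=\sup_{P\in\mathcal{P}}E^P[\,\cdot\,]$ and the definition $c(A)=\sup_{P\in\mathcal{P}}P(A)$. The paper does not supply its own proof of this proposition but simply refers the reader to \cite{hu2018quadratic} and \cite{hu2019convergences}; your write-up is the natural self-contained justification, and your caveat about interpreting $\hat{\mathbb{E}}[|X-Y|]$ as the upper expectation (since $|X-Y|$ need not lie in $\mathbb{L}^1_G(\Omega)$) is exactly the right point to flag.
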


\begin{prop}\label{in sublinear DCT} 
	\sloppy Let $(X_n)_{n=1}^\infty$ be a sequence of random variables satisfying $\sup_{P\in\mathcal{P}}\mathbb{E}^P[|X_n|]<\infty$ for all $n\ge1.$ Suppose there exists $Y\in \mathbb{L}^1_b(\Omega)$ such that $|X_n|\leq Y$ for all $n\geq 1$ and $X_n\rightarrow X$ as $n\to\infty$ quasi-surely. Then $X_n$ converges to $X$ in $\mathbb{L}_{b}^1(\Omega).$
\end{prop}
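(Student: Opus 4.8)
The plan is to show two things that together give convergence in the Banach space $\mathbb L^1_b(\Omega)$: that the quasi-sure limit $X$ belongs to $\mathbb L^1_b(\Omega)$, and that $\hat{\mathbb E}[|X_n-X|]\to 0$. First I would pass the quasi-sure bound $|X_n|\le Y$ to the limit to obtain $|X|\le Y$ quasi-surely, so that $\{|X|>N\}\subseteq\{Y>N\}$ quasi-surely. Then $\hat{\mathbb E}[|X|]\le\hat{\mathbb E}[Y]<\infty$ and $\hat{\mathbb E}[|X|\mathds 1_{\{|X|>N\}}]\le\hat{\mathbb E}[Y\mathds 1_{\{Y>N\}}]\to 0$, using $Y\in\mathbb L^1_b(\Omega)$. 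By the Denis--Hu--Peng characterization of $\mathbb L^1_b(\Omega)$ quoted above, this yields $X\in\mathbb L^1_b(\Omega)$; the identical domination argument shows each $X_n\in\mathbb L^1_b(\Omega)$, so the assertion ``$X_n\to X$ in $\mathbb L^1_b(\Omega)$'' is well posed.

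Next I would set $W_n:=|X_n-X|$, so that $0\le W_n\le 2Y$ quasi-surely and $W_n\to 0$ quasi-surely, and split off a tail by truncation. For a threshold $N>0$, subadditivity of $\hat{\mathbb E}$ gives $\hat{\mathbb E}[W_n]\le\hat{\mathbb E}[W_n\wedge N]+\hat{\mathbb E}[W_n\mathds 1_{\{W_n>N\}}]$, and the tail term is dominated uniformly in $n$ by $\hat{\mathbb E}[2Y\mathds 1_{\{2Y>N\}}]$, which tends to $0$ as $N\to\infty$ because $2Y\in\mathbb L^1_b(\Omega)$. Thus, given $\varepsilon>0$, I fix $N$ making this tail at most $\varepsilon$ for all $n$, reducing matters to the bounded sequence $W_n\wedge N\le N$, which still converges to $0$ quasi-surely.

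To handle the bounded piece I would pass to the decreasing envelope $V_n:=\sup_{m\ge n}(W_m\wedge N)$, which satisfies $0\le V_n\le N$ and $V_n\downarrow 0$ quasi-surely, and use $\hat{\mathbb E}[W_n\wedge N]\le\hat{\mathbb E}[V_n]$. It then remains to prove $\hat{\mathbb E}[V_n]\downarrow 0$, and this is the crux of the argument. For each fixed $P\in\mathcal P$ the quasi-sure convergence is $P$-almost sure, so the classical dominated convergence theorem gives $E^P[V_n]\to 0$; the difficulty is that $\hat{\mathbb E}[V_n]=\sup_{P\in\mathcal P}E^P[V_n]$ need not inherit convergence from this pointwise-in-$P$ statement. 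The required uniformity is precisely what the weak compactness of the representing set $\mathcal P$ from Theorem \ref{conditonal sublinear representation} supplies: the downward continuity $\hat{\mathbb E}[\xi_n]\downarrow 0$ for bounded $\xi_n\downarrow 0$ is a Dini-type consequence of weak compactness, reducing (through the regularity of the capacity $c$ and approximation of bounded measurable functions by quasi-continuous ones) to applying Dini's theorem to the weakly continuous maps $P\mapsto E^P[\,\cdot\,]$ on the compact set $\mathcal P$. Granting this, $\hat{\mathbb E}[V_n]\to 0$, whence $\limsup_n\hat{\mathbb E}[W_n]\le\varepsilon$ for every $\varepsilon>0$, giving $\hat{\mathbb E}[|X_n-X|]\to 0$.

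I expect the genuine obstacle to be exactly this final uniformity over $\mathcal P$: the passage from dominated convergence under each individual $P$ to convergence of the supremum over the whole family. It is here that the weak compactness of $\mathcal P$ (equivalently, the downward continuity of $\hat{\mathbb E}$), rather than any single-probability argument, is indispensable.
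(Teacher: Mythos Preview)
The paper does not give its own proof of this proposition; it simply refers the reader to \cite{hu2018quadratic} and \cite{hu2019convergences}. So there is no ``paper's proof'' to compare your approach against, only the question of whether your argument is complete.

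Your reduction is sound: passing the bound to the limit to get $X\in\mathbb L^1_b(\Omega)$, truncating to split off the uniformly small tail via $Y\in\mathbb L^1_b(\Omega)$, and replacing $W_n\wedge N$ by its decreasing envelope $V_n$. The problem is exactly where you flag it, and your proposed resolution does not work. The downward continuity $\hat{\mathbb E}[V_n]\downarrow 0$ for bounded measurable $V_n\downarrow 0$ quasi-surely is \emph{not} a consequence of weak compactness of $\mathcal P$ alone. Dini's theorem requires the maps $P\mapsto E^P[V_n]$ to be continuous on $\mathcal P$, but for a merely measurable $V_n$ these maps are at best semicontinuous; weak convergence $P_k\to P$ gives $E^{P_k}[\xi]\to E^P[\xi]$ only for bounded continuous $\xi$. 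Your suggested fix---approximating $V_n$ by quasi-continuous functions---is not available, because $V_n=\sup_{m\ge n}(W_m\wedge N)$ is just a bounded $\mathcal F$-measurable function and need not lie in $\mathbb L^1_G(\Omega)$; the completion $\mathbb L^1_b(\Omega)$ is strictly larger than $\mathbb L^1_G(\Omega)$ precisely because it contains such non-quasi-continuous objects.

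To see concretely why weak compactness is insufficient, note that even in the $G$-framework one can build bounded $V_n\downarrow 0$ everywhere with $c(\{V_n>0\})=1$ for all $n$: take a sequence $\sigma_n\in[\underline\sigma,\overline\sigma]$ of distinct constants and let $V_n$ be the indicator of $\{\langle B\rangle_1\in\{\sigma_m^2:m\ge n\}\}$; under $P_{\sigma_n}$ this event has probability one. This already shows the envelope step cannot be closed by Dini-type reasoning. What actually makes the proposition work in the cited references is additional structure---typically that the relevant random variables are in (or can be reduced to) $\mathbb L^1_G$, where quasi-continuity is available---or a direct argument exploiting the specific $G$-expectation construction rather than abstract weak compactness. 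You should consult \cite{hu2019convergences} for the precise mechanism; your sketch identifies the right obstacle but does not overcome it.
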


\begin{thm}\label{similar doob maximal}  For any $\alpha\geq1$, $\delta>0$, $1<\gamma <\beta:=\frac{\alpha+\delta}{\alpha}$, $\gamma \leq 2$, we have 
	\begin{equation}
	\hat{\mathbb{E}}[\sup_{0\leq s\leq T}\hat{\mathbb{E}}_s[|\xi|^{\alpha}]]\leq \frac{\gamma}{\gamma-1} ((\hat{\mathbb{E}}[|\xi|^{\alpha+\delta}])^\frac{\alpha}{\alpha+\delta} +14^{\frac{1}{\gamma}}C_{\beta,\gamma}(\hat{\mathbb{E}}[|\xi|^{\alpha+\delta}])^\frac{1}{\gamma})  
	\end{equation}
	for all   $\xi \in \mathbb{L}^{\alpha+\delta}_G(\Omega_T)$, where $C_{\beta,\gamma}=\sum_{j=1}^{\infty}j^{-\frac{\beta}{\gamma}}$. 
\end{thm}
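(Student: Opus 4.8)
The plan is to transfer the classical $L^\gamma$ Doob maximal inequality to the sublinear setting through the representation $\hat{\mathbb{E}}[\,\cdot\,]=\sup_{P\in\mathcal{P}}E^P[\,\cdot\,]$ of Theorem \ref{conditonal sublinear representation}, paying a separate, explicit price for the loss of additivity. Write $\eta:=|\xi|^\alpha$ and recall $\beta=\tfrac{\alpha+\delta}{\alpha}$, so that $\hat{\mathbb{E}}[\eta^\beta]=\hat{\mathbb{E}}[|\xi|^{\alpha+\delta}]<\infty$ and $\eta\in\mathbb{L}^\beta_G(\Omega_T)$; set $M_t:=\hat{\mathbb{E}}_t[\eta]$. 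By the tower property $M$ is a nonnegative $G$-martingale, and since $\hat{\mathbb{E}}_s[\,\cdot\,]$ dominates $E^P[\,\cdot\mid\mathcal{F}_s]$ it is a $P$-supermartingale for every $P\in\mathcal{P}$. I would first establish the bound on a finite grid: along the dyadic times $t^n_k=kT/2^n$ the maxima $\max_{0\le k\le 2^n}M_{t^n_k}$ increase quasi-surely to $\sup_{0\le t\le T}M_t$ (using the continuity of $t\mapsto\hat{\mathbb{E}}_t[\eta]$), so it suffices to bound the grid maxima uniformly in $n$ and then let $n\to\infty$ by monotone convergence, which also avoids any appeal to an optional-sampling theorem for $\hat{\mathbb{E}}$.

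For the leading term I would isolate, under each fixed $P\in\mathcal{P}$, the genuine $P$-martingale $t\mapsto E^P[\eta\mid\mathcal{F}_t]$, whose terminal value is $\eta$. The classical $L^\gamma$ Doob inequality gives $E^P[\sup_t(E^P[\eta\mid\mathcal{F}_t])^\gamma]\le(\tfrac{\gamma}{\gamma-1})^\gamma E^P[\eta^\gamma]$; taking $\gamma$-th roots, invoking $\|\eta\|_{\gamma,G}\le\|\eta\|_{\beta,G}$ (Jensen under each $P$), and taking the supremum over $P$ yields the first summand $\tfrac{\gamma}{\gamma-1}(\hat{\mathbb{E}}[|\xi|^{\alpha+\delta}])^{\alpha/(\alpha+\delta)}$. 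The only $G$-expectation tools here are the tower property and the conditional Jensen inequality for the convex increasing map $x\mapsto x^\gamma$.

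The remaining, genuinely new contribution comes from the nonnegative gap $R^P_t:=\hat{\mathbb{E}}_t[\eta]-E^P[\eta\mid\mathcal{F}_t]$, equivalently from the decreasing part of the $G$-martingale decomposition of $M$; since $\sup_t M_t\le\sup_t E^P[\eta\mid\mathcal{F}_t]+\sup_t R^P_t$, I would bound $\sup_{P}E^P[\sup_t R^P_t]$ by a level decomposition of $\eta$ at the unit thresholds $j=1,2,\dots$, writing $\eta$ as a sum of pieces supported on $\{j-1<\eta\le j\}$ and using subadditivity of $\hat{\mathbb{E}}_t$. This is a fixed (non-homogeneous) scale, which is exactly what makes the resulting estimate non-scale-invariant. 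Each level is controlled through the Chebyshev tail bound $c(\eta>j)\le j^{-\beta}\hat{\mathbb{E}}[\eta^\beta]$, and after taking $\gamma$-th roots level by level via the concavity inequality $\hat{\mathbb{E}}[Y^{1/\gamma}]\le(\hat{\mathbb{E}}[Y])^{1/\gamma}$, the $j$-th contribution is at most $(14\,\hat{\mathbb{E}}[\eta^\beta]\,j^{-\beta})^{1/\gamma}=14^{1/\gamma}(\hat{\mathbb{E}}[|\xi|^{\alpha+\delta}])^{1/\gamma}\,j^{-\beta/\gamma}$. Summing over $j$ produces the convergent series $C_{\beta,\gamma}=\sum_{j\ge1}j^{-\beta/\gamma}$, finite precisely because $\gamma<\beta$, and reproduces the second summand $\tfrac{\gamma}{\gamma-1}\,14^{1/\gamma}C_{\beta,\gamma}(\hat{\mathbb{E}}[|\xi|^{\alpha+\delta}])^{1/\gamma}$; the constant $14$ and the restriction $\gamma\le2$ are tracked through the moment/first-index bookkeeping in this discrete step.

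The principal obstacle is that the capacity $c=\sup_{P\in\mathcal{P}}P$ is merely subadditive: neither the layer-cake identity $\hat{\mathbb{E}}[Z]=\int_0^\infty c(Z>\lambda)\,d\lambda$ nor the additivity $\sum_i c(A_i)=c(\bigsqcup_i A_i)$ is available. Thus the usual passage from the weak-type inequality $\lambda\,c(\sup_t M_t>\lambda)\le\hat{\mathbb{E}}[\eta\,\mathds{1}_{\{\sup_t M_t>\lambda\}}]$ to the strong bound by integrating in $\lambda$ fails, and the first-index decomposition that closes the classical argument leaves an uncontrollable sum $\sum_i\hat{\mathbb{E}}[\eta\,\mathds{1}_{A_i}]$ over the partition points. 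The crux is therefore to arrange the summation to run over the \emph{value} of the maximum, so that the non-additivity condenses into the single convergent series $C_{\beta,\gamma}$ together with the honest, non-homogeneous second term, rather than blowing up over the time grid; this is also what forces the strict inequality $\gamma<\beta$.
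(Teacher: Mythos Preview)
The paper does not prove this inequality; it simply quotes \cite[Theorem 3.4]{song2011some} for $\xi\in Lip(\Omega_T)$ and then extends to $\mathbb{L}^{\alpha+\delta}_G(\Omega_T)$ by density. Your proposal, by contrast, attempts a self-contained argument, so the two are not comparable as proofs of the same thing: the paper's ``proof'' is really a citation together with a routine approximation step.

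That said, your sketch has a genuine gap in the second half. The first summand is fine: for fixed $P$, the classical $L^\gamma$ Doob inequality applied to the $P$-martingale $t\mapsto E^P[\eta\mid\mathcal{F}_t]$, followed by Jensen and a supremum over $P$, indeed produces $\tfrac{\gamma}{\gamma-1}(\hat{\mathbb{E}}[\eta^\beta])^{1/\beta}$. The problem is the treatment of the remainder $R^P_t=\hat{\mathbb{E}}_t[\eta]-E^P[\eta\mid\mathcal{F}_t]$. You assert that a level decomposition of $\eta$ at integer thresholds, Chebyshev for the capacity, and ``taking $\gamma$-th roots level by level'' yield that the $j$-th contribution is at most $(14\,\hat{\mathbb{E}}[\eta^\beta]j^{-\beta})^{1/\gamma}$, but you never say what quantity this is a contribution \emph{to}, nor how the capacity bound $c(\eta>j)\le j^{-\beta}\hat{\mathbb{E}}[\eta^\beta]$ is converted into a bound on $E^P[\sup_t R^{P,j}_t]$. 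The constant $14$ and the restriction $\gamma\le 2$ appear out of thin air; in Song's argument they arise from a specific computation that you have not reproduced. As written, this step is an assertion, not a derivation.

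There is also a technical obstruction you have not addressed: the pieces $\eta\,\mathds{1}_{\{j-1<\eta\le j\}}$ need not lie in $\mathbb{L}^1_G(\Omega_T)$, since indicators of general Borel sets are typically not quasi-continuous. The conditional $G$-expectation $\hat{\mathbb{E}}_t[\,\cdot\,]$ is defined on $\mathbb{L}^1_G$, so $\hat{\mathbb{E}}_t[\eta_j]$ may not make sense and the subadditivity you invoke cannot be applied directly. Song's proof avoids this by working with carefully chosen approximations in $Lip(\Omega_T)$; you would need to do the same. Similarly, your appeal to ``the continuity of $t\mapsto\hat{\mathbb{E}}_t[\eta]$'' is nontrivial and is in fact part of what the cited paper establishes.
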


\begin{proof}
	The inequality was proven in  \cite[Theorem 3.4]{song2011some} for all 
	$\xi\in Lip(\Omega_T),$
	and it can be easily extended for all $\xi \in \mathbb{L}^{\alpha+\delta}_G(\Omega_T)$.
\end{proof}

We now consider several process spaces and norms. 
For $T>0$ and $p\ge1,$ define
\begin{equation} 
\begin{aligned}
\mathbb{M}^{p,0}(0,T)&=\big\{ \eta_t(\omega)=\sum_{k=1}^{n}\xi_k(\omega) \mathds{1}_{[t_k,t_{k+1})}(t) \,|\,  n \in \mathbb{N},\,k=1,2,\cdots,n , \\
&\hspace{60mm} 0\le t_k< t_{k+1}\leq T,\; \xi_k\in \mathbb{L}_G^p(\Omega_{t_k}) \big\}\,,\\
\mathbb{S}^0(0,T)&=\big\{\varphi(\,\cdot\,,B_{t_1\wedge \cdot},\cdots,B_{t_n\wedge \cdot})\,|\, n\in\mathbb{N}, \,  t_1,\cdots,t_n\in[0,T], \varphi\in C_{b,Lip}(\mathbb{R}^{d\times n+1}) \big\}
\end{aligned}
\end{equation}
with norms   
\begin{equation}\begin{aligned}
|\!|\eta|\!|_{\mathbb{M}_G^p}&=(\hat{\mathbb{E}}[\int_0^T|\eta_u|^p\,du])^{\frac{1}{p}},\,
|\!|\eta|\!|_{\mathbb{H}_G^p}=(\hat{\mathbb{E}}[(\int_0^T|\eta_u|^2\,du)^{\frac{p}{2}}])^{\frac{1}{p}},\,
|\!|\eta|\!|_{\mathbb{S}_G^p}=(\hat{\mathbb{E}}[\sup_{0\leq u\leq T}|\eta_u|^p])^{\frac{1}{p}} \,.
\end{aligned}\end{equation}
Here, $C_{b,Lip}(\mathbb{R}^{d\times n+1})$ is the space of all bounded Lipschitz continuous functions on $\mathbb{R}^{d\times n+1}.$ 
Denote the  completions of $\mathbb{M}^{p,0}(0,T)$ with respect to norms $|\!|\cdot|\!|_{\mathbb{M}_G^p}$ and  $|\!|\cdot|\!|_{\mathbb{H}_G^p}$ by  $\mathbb{M}_G^p(0,T)$ and  $\mathbb{H}_G^p(0,T)$ respectively and  the  completion of $\mathbb{S}^{0}(0,T)$ with respect to norm $|\!|\cdot|\!|_{\mathbb{S}_G^p}$  by $\mathbb{S}_G^p(0,T)$. 
We can also define the spaces $\mathbb{M}_G^p(t,T),$  $\mathbb{H}_G^p(t,T),$ $\mathbb{S}_G^{p}(t,T)$ for $0\le t<T$ similarly.
The product spaces of $m$ copies of 
$\mathbb{M}_G^p(t,T),$  $\mathbb{H}_G^p(t,T),$ $\mathbb{S}_G^{p}(t,T)$
are denoted as $\mathbb{M}_G^p(t,T;\mathbb{R}^m),$  $\mathbb{H}_G^p(t,T;\mathbb{R}^m),$ $\mathbb{S}_G^{p}(t,T;\mathbb{R}^m),$ respectively.
Moreover, we define 
$
\mathbb{M}^p_G(0,\infty;\mathbb{R}^m)=\cap_{T>0}\mathbb{M}^p_G(0,T;\mathbb{R}^m),$
$\mathbb{H}^p_G(0,\infty;\mathbb{R}^m)=\cap_{T>0}\mathbb{H}^p_G(0,T;\mathbb{R}^m),$
$\mathbb{S}^p_G(0,\infty;\mathbb{R}^m)=\cap_{T>0}\mathbb{S}^p_G(0,T;\mathbb{R}^m)
$.
For a $G$-Brownian motion   $B=(B^1,\cdots, B^d)$,  the quadratic covariation process 
of $B^i$ and $B^j$ is 
defined as  $\langle B^i,B^j\rangle=(\langle B^i,B^j\rangle_s)_{s\ge0}$, where
	\begin{align}
	\langle B^i,B^j\rangle_s:=\lim_{|\pi^n|\to 0}\sum_{k=0}^{\ell(n)-1}(B^i_{s^n_{k+1}}-B^i_{s^n_k})(B^j_{s^n_{k+1}}-B^j_{s^n_k})\textnormal{ in } \mathbb{L}^2_G(\Omega)\,.
	\end{align}
Here, $\pi^n = \{ 0 = s^n_0 < s^n_1 < \cdots < s^n_{\ell(n)} = s \}$ is a sequence of partitions of the interval $[0, s]$, and the mesh size is defined as
$
|\pi^n| := \max\{ |s^n_{k+1} - s^n_k| \;|\; k = 0, 1, \dots, \ell(n) - 1 \}.$
From  \cite{peng2019nonlinear}, for  $p\geq 1$ and $\eta\in \mathbb{H}_G^p(t,T)$, we can define the  integrals    $\int_t^T\eta_u\,dB^i_u$ and $\int_t^T\eta_u\,d \langle B^i,B^j\rangle_u$ for  $i,j=1,2,\cdots d$.
The following proposition states the  Burkholder-Davis-Gundy inequality under the $G$-expectation.

\begin{prop}\label{BDG}
	For $p\geq1$, there are positive constants $c_p$ and $C_p$ such that
	\begin{equation}
	\underline{\sigma}^pc_p\hat{\mathbb{E}}\Big[\Big(\int_0^T|\eta_u|^2\,du\Big)^{\frac{p}{2}}\Big]\leq \hat{\mathbb{E}}\Big[\sup_{0\leq s\leq T}\Big|\int_0^s\eta_u\,dB_u\Big|^p\Big]\leq \overline{\sigma}^pC_p\hat{\mathbb{E}}\Big[\Big(\int_0^T|\eta_u|^2\,du\Big)^{\frac{p}{2}}\Big]  
	\end{equation}
	for all $\eta \in \mathbb{H}_G^p(0,T;\mathbb{R}^d).$
\end{prop}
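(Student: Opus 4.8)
The plan is to lift the classical Burkholder--Davis--Gundy inequality, which holds under each single probability measure, to the sublinear setting through the representation $\hat{\mathbb{E}}[\,\cdot\,]=\sup_{P\in\mathcal{P}}E^P[\,\cdot\,]$ supplied by Theorem \ref{conditonal sublinear representation}. First I would prove the two-sided estimate for elementary integrands $\eta\in\mathbb{M}^{p,0}(0,T)$, the dense class for which the $G$-stochastic integral is defined explicitly, and then extend to a general $\eta\in\mathbb{H}_G^p(0,T;\mathbb{R}^d)$ by density. The extension uses that both $\sup_{0\le s\le T}\bigl|\int_0^s\eta_u\,dB_u\bigr|$ and $\bigl(\int_0^T|\eta_u|^2\,du\bigr)^{1/2}$ depend continuously on $\eta$ in the $\mathbb{H}_G^p$-norm, so the inequality on the dense class passes to the limit.

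The heart of the argument fixes a representing measure $P\in\mathcal{P}$ and regards $M_s:=\int_0^s\eta_u\,dB_u$ as a continuous $P$-martingale, invoking the standard fact that the $G$-stochastic integral coincides $P$-almost surely with the Itô integral of $\eta$ against $B$ under $P$. Under $P$ the quadratic variation of $B$ is absolutely continuous, $d\langle B^i,B^j\rangle_u=\gamma^{ij}_u\,du$, with the density matrix $\gamma_u=(\gamma^{ij}_u)$ taking values in the set $\Sigma$ of \eqref{gamma} for $du\times P$-almost every $(u,\omega)$. Since the strong ellipticity condition \eqref{ellip} is equivalent to $\underline{\sigma}^2 I\le Q\le\overline{\sigma}^2 I$ for every $Q\in\Sigma$, the $P$-bracket of $M$ is sandwiched as
\[
\underline{\sigma}^2\int_0^T|\eta_u|^2\,du\;\le\;\langle M\rangle_T=\int_0^T\langle\eta_u,\gamma_u\eta_u\rangle\,du\;\le\;\overline{\sigma}^2\int_0^T|\eta_u|^2\,du\,.
\]

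Applying the classical BDG inequality under $P$, with universal constants $c_p,C_p$ depending only on $p$, gives $c_pE^P[\langle M\rangle_T^{p/2}]\le E^P[\sup_s|M_s|^p]\le C_pE^P[\langle M\rangle_T^{p/2}]$, and combining this with the bracket sandwich yields, for every fixed $P$,
\[
c_p\underline{\sigma}^pE^P\Big[\Big(\int_0^T|\eta_u|^2\,du\Big)^{p/2}\Big]\le E^P\Big[\sup_{s}|M_s|^p\Big]\le C_p\overline{\sigma}^pE^P\Big[\Big(\int_0^T|\eta_u|^2\,du\Big)^{p/2}\Big]\,.
\]
For the upper bound I would bound each $E^P[\sup_s|M_s|^p]$ by $C_p\overline{\sigma}^p\hat{\mathbb{E}}[(\int_0^T|\eta_u|^2\,du)^{p/2}]$ and then take the supremum over $P$ on the left. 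For the lower bound I would use $\hat{\mathbb{E}}[\sup_s|M_s|^p]\ge E^{P}[\sup_s|M_s|^p]$ for every $P$, bound the right side below by $c_p\underline{\sigma}^pE^P[(\int_0^T|\eta_u|^2\,du)^{p/2}]$, and take the supremum over $P$ of this lower bound; this converts each $E^P$ into $\hat{\mathbb{E}}$ and produces precisely the claimed estimate.

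I expect the main obstacle to be the density/closure step rather than the per-measure estimate: one must know a priori that $\int_0^{\cdot}\eta_u\,dB_u$ admits a version in $\mathbb{S}_G^p(0,T)$ and that $\eta\mapsto\sup_s\bigl|\int_0^s\eta_u\,dB_u\bigr|$ is continuous from $\mathbb{H}_G^p$ into $\mathbb{L}_G^p$, so that the estimate survives the limit and the representation $\hat{\mathbb{E}}=\sup_P E^P$ remains applicable (on $\mathbb{L}_b^1(\Omega)$, via \cite{denis2011function}). A secondary technical point is the justification that the density matrix $\gamma_u$ genuinely lies in $\Sigma$ under every representing measure $P$, which is exactly where the structure theory of $G$-Brownian motion and the equivalence between \eqref{ellip} and the bounds on $\Sigma$ are needed.
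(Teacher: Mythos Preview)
The paper does not prove this proposition; it is stated in the preliminaries as a known fact from the $G$-expectation literature (the standard reference being Peng's monograph \cite{peng2019nonlinear}), so there is no ``paper's own proof'' to compare against. Your approach---fixing $P\in\mathcal{P}$, identifying $\int_0^\cdot\eta_u\,dB_u$ with a continuous $P$-martingale whose bracket is controlled via $\underline{\sigma}^2 I\le\gamma_u\le\overline{\sigma}^2 I$, applying classical BDG under $P$, and then taking the supremum over $\mathcal{P}$---is exactly the standard proof, and the density extension from $\mathbb{M}^{p,0}$ to $\mathbb{H}_G^p$ is handled precisely as you describe.
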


We now state the definition and a useful property of
$G$-martingales. 
The proof of Lemma \ref{int with respect to K is martingale} is stated in \cite{hu2014backward}.
\begin{defi}
	We say a process $M=(M_t)_{t\ge0}$ is a $G$-martingale if
	$M_t\in \mathbb{L}_G^1(\Omega_t)$ for   $t\ge0$ and  $\hat{\mathbb{E}}_s[M_t]=M_s$ for  $0\le s\leq t.$ A process $M$ is called a symmetric $G$-martingale if both $M$ and $-M$ are $G$-martingales.
\end{defi}

\begin{lemma} \label{int with respect to K is martingale}
	Let $X\in \mathbb{S}_G^{\alpha}(0,T)$ for $\alpha> 1$ and $\beta:=\frac{\alpha}{\alpha-1}$. If    $K^i$ is a decreasing $G$-martingale with $K^i_0=0$ and $K_T^i\in \mathbb{L}^{\beta}_G(\Omega_T)$  for $i=1,2$, then the   process  
	\begin{equation}
	\int_0^s X_u^{+}\,dK_u^1+\int_0^s X_u^{-}\,dK_u^2\,,\;0\le s\le T
	\end{equation}
	is a decreasing $G$-martingale, where $X^{+}=\max(0,X)$ and $X^{-}=\max(0,-X)$.
\end{lemma}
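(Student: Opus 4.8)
The plan is to split the claim into the two assertions that the process
$N_s:=\int_0^s X_u^{+}\,dK_u^1+\int_0^s X_u^{-}\,dK_u^2$ is decreasing and that it is a $G$-martingale. The decreasing part is immediate: since $X^{+},X^{-}\ge 0$ and each $K^i$ is decreasing, so that $-K^i$ is an increasing finite-variation process, both pathwise Lebesgue--Stieltjes integrals are nonincreasing in $s$, and $N_0=0$. For the martingale part I would first reduce to simple integrands and then pass to the limit. A delicate point to keep in mind throughout is that one \emph{cannot} argue that $\int_0^\cdot X_u^{+}\,dK_u^1$ and $\int_0^\cdot X_u^{-}\,dK_u^2$ are separately $G$-martingales and then add them, because under a sublinear expectation the sum of two $G$-martingales need not be a $G$-martingale. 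The two integrals must be handled jointly, and the mechanism that makes this work is that $X^{+}$ and $X^{-}$ have disjoint supports.

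First I would prove the martingale property for simple integrands. Suppose $X_u=\sum_{k}\eta_k\mathds{1}_{[t_k,t_{k+1})}(u)$ with bounded $\eta_k\in\mathbb{L}_G^{\alpha}(\Omega_{t_k})$, so that $N_s=\sum_k\big(\eta_k^{+}(K^1_{t_{k+1}\wedge s}-K^1_{t_k\wedge s})+\eta_k^{-}(K^2_{t_{k+1}\wedge s}-K^2_{t_k\wedge s})\big)$, which lies in $\mathbb{L}_G^1(\Omega_s)$ since $\eta_k$ is bounded and $K^i_{t_{k+1}}\in\mathbb{L}_G^{\beta}(\Omega_{t_{k+1}})\subseteq\mathbb{L}_G^1(\Omega_{t_{k+1}})$. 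By a telescoping argument using the tower property it suffices to show that each increment has vanishing conditional $G$-expectation, i.e. $\hat{\mathbb{E}}_{t_k}[\eta_k^{+}(K^1_{t_{k+1}}-K^1_{t_k})+\eta_k^{-}(K^2_{t_{k+1}}-K^2_{t_k})]=0$. Since $\{\eta_k\ge0\}$ and $\{\eta_k<0\}$ are disjoint $\mathcal{F}_{t_k}$-measurable events on which respectively only the first or only the second summand is nonzero, I can pull out the corresponding indicators and the nonnegative factors $\eta_k^{+},\eta_k^{-}$, reducing the claim to $\hat{\mathbb{E}}_{t_k}[K^i_{t_{k+1}}-K^i_{t_k}]=\hat{\mathbb{E}}_{t_k}[K^i_{t_{k+1}}]-K^i_{t_k}=0$, which holds because each $K^i$ is a $G$-martingale. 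This is precisely where the disjoint-support structure removes the obstruction flagged above: no single increment ever combines a nonzero $dK^1$-term with a nonzero $dK^2$-term, so sublinearity never forces a strict inequality.

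Finally I would remove the simplicity assumption by approximation. Using $X\in\mathbb{S}_G^{\alpha}(0,T)$, choose simple (piecewise-constant, adapted) processes $X^n\to X$ in $\|\cdot\|_{\mathbb{S}_G^{\alpha}}$; since $x\mapsto x^{\pm}$ is $1$-Lipschitz, $(X^n)^{\pm}\to X^{\pm}$ in the same norm. Writing $N^n$ for the corresponding integrals, the pathwise bound $|\int_0^s((X^n_u)^{+}-X_u^{+})\,dK^1_u|\le(\sup_{0\le u\le T}|(X^n_u)^{+}-X^{+}_u|)\,(-K^1_T)$ together with Hölder's inequality for $\hat{\mathbb{E}}$ with conjugate exponents $\alpha,\beta$ yields $\hat{\mathbb{E}}[\sup_{0\le s\le T}|N^n_s-N_s|]\le\|X^n-X\|_{\mathbb{S}_G^{\alpha}}(\|K^1_T\|_{\beta,G}+\|K^2_T\|_{\beta,G})\to0$, where $K^i_T\in\mathbb{L}_G^{\beta}(\Omega_T)$ is used crucially. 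In particular $N^n_t\to N_t$ in $\mathbb{L}_G^1(\Omega_t)$, so $N_t\in\mathbb{L}_G^1(\Omega_t)$, and since $|\hat{\mathbb{E}}_s[N_t]-N^n_s|=|\hat{\mathbb{E}}_s[N_t]-\hat{\mathbb{E}}_s[N^n_t]|\le\hat{\mathbb{E}}_s[|N_t-N^n_t|]$ with $\hat{\mathbb{E}}[\hat{\mathbb{E}}_s[|N_t-N^n_t|]]=\hat{\mathbb{E}}[|N_t-N^n_t|]\to0$, the identity $\hat{\mathbb{E}}_s[N_t]=N_s$ passes to the limit. The main obstacle is the one already isolated: sublinearity forbids splitting the two integrals into independent $G$-martingales, so the proof hinges on treating them jointly through the disjoint supports of $X^{+}$ and $X^{-}$; the remaining approximation is routine once the integrability $K^i_T\in\mathbb{L}_G^{\beta}$ is exploited through Hölder's inequality to control the Stieltjes integrals.
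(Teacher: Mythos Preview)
The paper does not give its own proof of this lemma; it simply cites \cite{hu2014backward}. Your sketch is precisely the standard argument used there: reduce to simple integrands, verify the $G$-martingale property increment by increment, and pass to the limit via H\"older's inequality with conjugate exponents $\alpha,\beta$ (this is exactly where $K^i_T\in\mathbb{L}^\beta_G(\Omega_T)$ enters). Your diagnosis of the sublinearity obstruction---that one cannot treat the two integrals as separate $G$-martingales and add them---and of the disjoint-support mechanism that removes it is on the mark.

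One technical point deserves to be made explicit. The step where you ``pull out the indicators'' to get
\[
\hat{\mathbb{E}}_{t_k}\big[\eta_k^{+}\Delta K^1+\eta_k^{-}\Delta K^2\big]=\eta_k^{+}\,\hat{\mathbb{E}}_{t_k}[\Delta K^1]+\eta_k^{-}\,\hat{\mathbb{E}}_{t_k}[\Delta K^2]
\]
is \emph{not} a consequence of positive homogeneity and subadditivity alone: those only give the inequality $\le$. The equality uses the local property of $\hat{\mathbb{E}}_{t_k}$ (for $a,b\ge0$ with $ab=0$ and $a,b$ $\mathcal{F}_{t_k}$-measurable, $\hat{\mathbb{E}}_{t_k}[aX+bY]=a\hat{\mathbb{E}}_{t_k}[X]+b\hat{\mathbb{E}}_{t_k}[Y]$), which in turn rests on the probabilistic representation of Theorem~\ref{conditonal sublinear representation} and the stability of $\mathcal{P}(t_k,P)$ under pasting along $\mathcal{F}_{t_k}$-events. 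This is well known in the $G$-framework, so the gap is one of citation rather than substance.
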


\subsection{Lipschitz G-BSDE}

This section states existing results on Lipschitz $G$-BSDEs. 
Given $0\le t\le T$ and $\xi\in \mathbb{L}_G^2(\Omega_t;\mathbb{R}^m)$,
consider the $G$-SDE and $G$-BSDE
\begin{align}\label{FKformula SDE}
X_s^{t,\xi}&=\xi+\int_t^s b(u,X_u^{t,\xi})\,du+\int_t^s h_{ij}(u,X_u^{t,\xi})\,d\langle B^i,B^j\rangle_u+\int_t^s \sigma(u,X_u^{t,\xi})\,dB_u\,,
\\
\label{FKformula BSDE}
Y_s^{t,\xi}&=\Phi(X_T^{t,\xi})+\int_s^Tf(u,X_u^{t,\xi},Y_u^{t,\xi},Z_u^{t,\xi})\,du\\&
\quad+\int_s^Tg_{ij}(u,X_u^{t,\xi},Y_u^{t,\xi},Z_u^{t,\xi})\,d\langle B^i,B^j\rangle_u-\int_s^TZ_u^{t,\xi}\,dB_u-(K_T^{t,\xi}-K_s^{t,\xi}) 
\end{align}
for $t\le s\le T$,
where  $b,h_{ij}:[0,T]\times\mathbb{R}^m\to\mathbb{R}^m$, $\sigma:[0,T]\times\mathbb{R}^m\to \mathbb{R}^{m\times d},$ 
$\Phi:\mathbb{R}^m\to \mathbb{R},$   $f,g_{ij}:[0,T]\times \mathbb{R}^m\times \mathbb{R}\times \mathbb{R}^d\to
\mathbb{R}.$
We also write $X^{t,\xi},Y^{t,\xi},Z^{t,\xi},K^{t,\xi}$ as $X,Y,Z,K,$ respectively, omitting the superscripts $t,\xi.$ 
The following theorem 
demonstrates the existence and uniqueness of solutions to the above $G$-SDE and $G$-BSDE with the  Feynman-Kac formula. The proof is stated in 
\cite{hu2014backward} and \cite{hu2014comparison}.

\begin{thm}\label{linear FeyKac formula}
	Suppose $h_{ij}=h_{ji},$ $g_{ij}=g_{ji}$ and $b,h_{ij},\sigma,g_{ij},f,\Phi$ are continuous in time variable and uniformly Lipschitz in variables $(x,y,z)$. Then  there exists a unique solution $(X,Y,Z,K)=(X^{t,\xi},Y^{t,\xi},Z^{t,\xi},K^{t,\xi})$ to \eqref{FKformula SDE} and \eqref{FKformula BSDE} such that $(X,Y, Z)\in\mathbb{S}_G^2(t,T;\mathbb{R}^m)\times\mathbb{S}_G^2(t,T) \times \mathbb{H}_G^2(t,T;\mathbb{R}^d)$ and $K$ is a decreasing $G$-matingale with $K_0=0$ and $K_T\in \mathbb{L}^2_G(\Omega_T).$
	Moreover, if we define $u(t,x)=Y_t^{t,x}$ for $(t,x)\in [0,T]\times\mathbb{R}^m$, then $u$ is a unique viscosity solution to the PDE
	\begin{equation}\begin{aligned}
	\begin{cases}\; 
	\partial_t u+F(t,x,u,D_xu,D_x^2u)=0\,, \\
	\; u(T,x)=\Phi(x)\,,
	\end{cases}
	\end{aligned}\end{equation}
	such that for some constant $L>0$   
	\begin{equation}\begin{aligned}\label{eqn:reg_growth}
	|u(t,x)-u(t,x')|\leq L|x-x'|\;,\quad
	|u(t,x)|\leq L(1+|x|)
	\end{aligned}\end{equation}
	for all $t\in [0,T]$, $x,x'\in \mathbb{R}^m$,	
	where 
	\begin{equation}\begin{aligned}
	F(t,x,u,D_xu,D_x^2u)&=G(H(t,x,u,D_xu,D_x^2u))+\langle b(t,x),D_xu\rangle\\&
    \quad+f(t,x,u,\langle\sigma^1(t,x),D_xu\rangle,\cdots,\langle\sigma^d(t,x),D_xu\rangle)\,,\\
	H_{ij}(t,x,u,D_xu,D_x^2u)&=\langle D^2_xu\sigma^i(t,x),\sigma^j(t,x)\rangle+2\langle D_xu,h_{ij}(t,x)\rangle\\
	&\quad+2 g_{ij}(t,x,u,\langle\sigma^1(t,x),D_xu\rangle,\cdots,\langle\sigma^d(t,x),D_xu\rangle)\,,
	\end{aligned}\end{equation}
	and $\sigma^i$  is the $i$-th column vector of $\sigma$ for $i=1,2,\cdots,d.$
\end{thm}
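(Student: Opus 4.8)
The plan is to treat the theorem's four assertions in sequence — well-posedness of the $G$-SDE \eqref{FKformula SDE}, well-posedness of the Lipschitz $G$-BSDE \eqref{FKformula BSDE}, identification of $u(t,x)=Y_t^{t,x}$ as a viscosity solution, and uniqueness of that solution — since each step feeds the next. First I would establish existence and uniqueness of $X^{t,\xi}\in\mathbb{S}_G^2(t,T;\mathbb{R}^m)$ by a Picard iteration in the Banach space $\mathbb{S}_G^2$. Because $b,h_{ij},\sigma$ are continuous in time and uniformly Lipschitz in the state variable, the contraction estimate follows by bounding the $dB$ integral with the $G$-Burkholder--Davis--Gundy inequality (Proposition \ref{BDG}), bounding the $d\langle B^i,B^j\rangle$ integral using the boundedness of $\Sigma$ implied by strong ellipticity \eqref{ellip}, and closing with Gronwall's inequality in time. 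The same a priori estimates deliver the linear-growth bound on $X^{t,x}$, its Lipschitz dependence on $(t,x)$, and the flow property $X_s^{t,x}=X_s^{\tau,X_\tau^{t,x}}$ for $t\le\tau\le s$, all of which are needed downstream.

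Next I would solve the $G$-BSDE. The growth bound on $X$ places the terminal value $\Phi(X_T)$ in $\mathbb{L}_G^2(\Omega_T)$, and $f,g_{ij}$ are Lipschitz, so the standard two-stage route applies: one first treats drivers independent of $(Y,Z)$, where the triple $(Y,Z,K)$ — including the decreasing $G$-martingale $K$ with $K_0=0$ — is produced by the nonlinear martingale decomposition for $G$-expectations, and then passes to the general Lipschitz driver by a fixed-point argument on $(Y,Z)\in\mathbb{S}_G^2(t,T)\times\mathbb{H}_G^2(t,T;\mathbb{R}^d)$. The contraction is verified through a priori estimates obtained by the $G$-Itô formula applied to $|Y|^2$, combined with Proposition \ref{BDG} and the Doob-type maximal inequality (Theorem \ref{similar doob maximal}); the integrability $K_T\in\mathbb{L}_G^2(\Omega_T)$ and the monotonicity of $K$ are preserved under the iteration, and Lemma \ref{int with respect to K is martingale} controls the contribution of $K$. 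This yields the unique $(X,Y,Z,K)$ with the stated integrability.

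I would then define $u(t,x)=Y_t^{t,x}$ and extract its properties. The bounds \eqref{eqn:reg_growth} — Lipschitz continuity and linear growth in $x$ — follow from stability estimates for the BSDE with respect to the initial datum $(t,x)$, again via the $G$-Itô formula, Proposition \ref{BDG}, and Theorem \ref{similar doob maximal}. The decisive structural fact is the Markov identity $Y_s^{t,x}=u(s,X_s^{t,x})$, obtained from the flow property of $X$ and uniqueness of the BSDE, which recasts \eqref{FKformula BSDE} as a dynamic-programming relation. From it I would prove the viscosity-solution property: fixing a smooth test function touching $u$ from above (respectively below) at $(t,x)$, I would run the BSDE dynamics over a shrinking time interval, divide by its length, and pass to the limit; the sublinear conditional expectation forces the nonlinear generator $G(H(\cdot))$ to appear, yielding the subsolution (respectively supersolution) inequality for $F$. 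Uniqueness in the class \eqref{eqn:reg_growth} then follows from the comparison principle for this fully nonlinear parabolic equation established in \cite{hu2014comparison}.

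I expect the main obstacle to lie in the second and third steps, both of which are genuinely $G$-specific. Unlike the classical case, the martingale part of $Y$ splits into a symmetric $G$-martingale (the $\int Z\,dB$ term) plus the non-symmetric decreasing component $K$, and constructing $K$ with the correct integrability requires the nonlinear decomposition theorem rather than a simple orthogonal projection. Correspondingly, the generator $F$ carries $G$ acting on the second-order term $H_{ij}$, so in the passage to the PDE the sublinear expectation must be respected at every estimate; the usual quadratic-variation and maximal bounds must be replaced by their $G$-versions (Proposition \ref{BDG}, Theorem \ref{similar doob maximal}), which is precisely where the volatility-uncertainty nonlinearity enters the argument.
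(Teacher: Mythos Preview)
Your sketch is correct and follows the standard route taken in the references the paper cites. Note, however, that the paper does not actually prove this theorem: it is stated in the preliminary section as a known result, with the proof deferred entirely to \cite{hu2014backward} and \cite{hu2014comparison}. Your outline is essentially a faithful summary of the arguments in those two papers, so there is no discrepancy in approach---you have simply supplied what the paper chose to omit.
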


As a special case, we consider linear $G$-BSDEs.  Suppose $f(s,x,y,z)=a_sy+b_sz+m_s$, 
$g_{ij}(s,x,y,z)=c_s^{ij}y+d_s^{ij}z+n^{ij}_s$, where $a, c^{ij},m,n^{ij}\in \mathbb{M}^2_G(0,T)$, $b,d^{ij}\in \mathbb{M}^2_G(0,T;\mathbb{R}^d).$
Moreover, we assume that the processes $a,b,c^{ij},d^{ij}$ are bounded and $\Phi(X_T^{t,\xi})\in\mathbb{L}^2_G(\Omega_T)$.
Then,
a solution $Y^{t,\xi}$
to \eqref{FKformula BSDE}  can be expressed as follows.
Construct an auxiliary extended $\Tilde{\text{G}}$-expectation space $(\Tilde{\Omega}, {\mathbb{L}}_G^1(\tilde{\Omega}_T), \Tilde{\mathbb{E}})$  with  $\tilde{\Omega}=C_0([0,\infty),\mathbb{R}^{2d}) $ and 
\begin{equation}\begin{aligned}
\Tilde{G}(A)=\frac{1}{2}\sup_{Q\in \Sigma}\text{tr}\Big(A\Big[\begin{matrix}
Q&I_d\\I_d&Q^{-1} 
\end{matrix}\Big]\Big)\,,\;A \in \mathbb{S}^{2d} \label{tildeG}\,,
\end{aligned}\end{equation}
where $\Sigma$ is the set stated in \eqref{gamma}.
Then the canonical process $(B,\Tilde{B})$ is a $\Tilde{G}$-Brownian motion and $\langle B, \Tilde{B} \rangle_s =s I_d$ for $s\ge0.$
Let $\Gamma=(\Gamma_s)_{0\le s\le T}$ be a solution to the $\tilde{G}$-SDE 
\begin{equation}\begin{aligned}\label{linear SDE}
&d\Gamma_s=a_s\Gamma_s\,ds+c^{ij}_s\Gamma_s\,d\langle B^i,B^j\rangle _s+d^{ij}_s\Gamma_s\,dB_s+b_s\Gamma_s\,d\tilde{B}_s\,, \; 0\le s\le T \,,\\
&\Gamma_0=1\,.
\end{aligned}\end{equation}
Then we have
\begin{equation}\label{linear BSDE soln}
Y_s^{t,\xi}=(\Gamma_s)^{-1}\tilde{\mathbb{E}}_s\Big[\Gamma_T\Phi(X_T^{t,\xi})+\int_s^Tm_u\Gamma_u\,du+\int_s^Tn_u^{ij}\Gamma_u\,d\langle B^i,B^j \rangle _u\Big]\,,\;t\le s\le T\,.
\end{equation}
It is noteworthy that  if $\xi \in \mathbb{L}_G^1(\Omega_T)$ then $\tilde{\mathbb{E}}[\xi]= 
\hat{\mathbb{E}}[\xi]$.

Moreover, we can define a new sublinear expectation and a sublinear conditional expectation on the $\tilde{G}$-expectation space $(\Tilde{\Omega}, {\mathbb{L}}_G^1(\tilde{\Omega}_T), \Tilde{\mathbb{E}})$. Let $M^{\tilde{b},\tilde{d}^{ij}}$ be a solution to the linear $\tilde{G}$-SDE
\begin{equation}\begin{aligned}\label{b,d induced Radon density}
M^{\tilde{b},\tilde{d}^{ij}}=1+\int_0^\cdot\,\tilde{d}_u^{ij}M^{\tilde{b},\tilde{d}^{ij}}_u\,dB_u+\int_0^\cdot \tilde{b}_uM^{\tilde{b},\tilde{d}^{ij}}_u\,d\tilde{B}_u\,,
\end{aligned}\end{equation}
where $\tilde{b},\tilde{d}^{ij}\in \mathbb{M}_G^2(0,T;\mathbb{R}^d)$ for given $T>0$ and are bounded. Then the process $M^{\tilde{b},\tilde{d}^{ij}}$
is a symmetric $G$-martingale. Define a sublinear expectation  $\tilde{\mathbb{E}}^{\tilde{b},\tilde{d}^{ij}}$ and a sublinear conditional expectation 
$\tilde{\mathbb{E}}_s^{\tilde{b},\tilde{d}^{ij}}$ as
\begin{equation}\begin{aligned}\label{M_z}
\tilde{\mathbb{E}}^{\tilde{b},\tilde{d}^{ij}}[\xi]=\tilde{\mathbb{E}}[M^{\tilde{b},\tilde{d}^{ij}}_T \xi] \,,\;\tilde{\mathbb{E}}^{\tilde{b},\tilde{d}^{ij}}_s[\xi]=(M_s^{\tilde{b},\tilde{d}^{ij}})^{-1}\tilde{\mathbb{E}}_s[M^{\tilde{b},\tilde{d}^{ij}}_T \xi] 
\end{aligned}\end{equation} 
for $\xi\in Lip(\tilde{\Omega}_T)$.  
We say  $\tilde{\mathbb{E}}^{\tilde{b},\tilde{d}^{ij}}$ (respectively, $\tilde{\mathbb{E}}^{\tilde{b},\tilde{d}^{ij}}_s$) is the sublinear expectation (respectively, the sublinear conditional expectation)   induced by $\tilde{b}$ and $\tilde{d}^{ij}.$
The sublinear expectations
$\tilde{\mathbb{E}}$ and $\tilde{\mathbb{E}}^{\tilde{b},\tilde{d}^{ij}}$ are equivalent, moreover
the process
\begin{equation}\label{new_GBM}
B_s^{\tilde{b},\tilde{d}^{ij}}:=B_s-\int_0^s \tilde{b}_u\, du-\int_0^s\tilde{d}^{ij}_u\,d\langle B^i,B^j\rangle _u\,,\;0\le s\le T
\end{equation}  is a $G$-Brownian motion under the sublinear expectation $\tilde{\mathbb{E}}^{\tilde{b},\tilde{d}^{ij}}$. It can be easily verified that the quadratic variations of $B$ and $B^{\tilde{b},\tilde{d}^{ij}}$ coincide under both $\tilde{\mathbb{E}}$ and $\tilde{\mathbb{E}}^{\tilde{b},\tilde{d}^{ij}}.$
We can extend consistently the   sublinear expectation
$\tilde{\mathbb{E}}^{\tilde{b},\tilde{d}^{ij}}$
and the sublinear conditional expectation 
$\tilde{\mathbb{E}}^{\tilde{b},\tilde{d}^{ij}}_s$ induced by $\tilde{b}$ and $\tilde{d}^{ij}$
to the $\tilde{G}$-expectation space
$(\Tilde{\Omega}, {\mathbb{L}}_G^1(\tilde{\Omega}), \Tilde{\mathbb{E}})$ 
when     $\tilde{b},\tilde{d}^{ij}\in \mathbb{M}_G^2(0,\infty;\mathbb{R}^d)$  are bounded.
This can be  achieved by the tower property 
$$ \tilde{\mathbb{E}}_t[\tilde{\mathbb{E}}_s[\xi]]=\tilde{\mathbb{E}}_{t}[\xi]\text{ for all } 0\le t<s, \,\xi\in \mathbb{L}_G^1(\tilde{\Omega})$$
of the sublinear conditional expectation.
Refer to \cite{hu2014comparison} and \cite{hu2018stochastic} for more details.

The following lemma states that the decreasing $G$-martingale property remains invariant under the induced sublinear expectation.
We omit the proof of this lemma, as it is similar to that of \cite[Lemma 3.4]{hu2018quadratic}.
 
\begin{lemma}\label{K is MG in new sublinear}
	\sloppy Let $K$ be a decreasing $\tilde{G}$-martingale on the $\tilde{G}$-expectation space 
  $(\tilde{\Omega}, {\mathbb{L}}_G^1(\tilde{\Omega}_T), \tilde{\mathbb{E}})$ with $K_0=0$ and  $K_T\in \mathbb{L}_G^p(\tilde{\Omega}_T)$ for some $p>1,$ and let $\tilde{\mathbb{E}}^{\tilde{b},\tilde{d}^{ij}}$ be the sublinear expectation induced by bounded processes  $\tilde{b},\tilde{d}^{ij}\in \mathbb{M}_G^2(0,T;\mathbb{R}^d)$ satisfying $\tilde{d}^{ij}=\tilde{d}^{ji}$. Then $K$ is a decreasing $\tilde{G}$-martingale under the sublinear expectation $\tilde{\mathbb{E}}^{\tilde{b},\tilde{d}^{ij}}$.
\end{lemma}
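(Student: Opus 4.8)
The plan is to separate the two requirements in the conclusion, namely the pathwise \emph{decreasing} property and the \emph{$\tilde{G}$-martingale} property, and to argue that only the latter requires genuine work. Since the induced sublinear expectation $\tilde{\mathbb{E}}^{\tilde{b},\tilde{d}^{ij}}$ is equivalent to $\tilde{\mathbb{E}}$ (as recorded in the paragraph preceding \eqref{new_GBM}), the two share the same polar sets; consequently the quasi-sure statements $K_0=0$ and ``$u\mapsto K_u$ is non-increasing'' continue to hold under $\tilde{\mathbb{E}}^{\tilde{b},\tilde{d}^{ij}}$, and the integrability $K_T\in\mathbb{L}_G^p$ transfers because the density $M_T^{\tilde{b},\tilde{d}^{ij}}$ is a stochastic exponential of bounded integrands and hence lies in every $\mathbb{L}_G^q$. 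Thus the substance of the lemma is the single identity $\tilde{\mathbb{E}}^{\tilde{b},\tilde{d}^{ij}}_s[K_t]=K_s$ for $0\le s\le t\le T$.

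To establish it I would first reduce the terminal density appearing in \eqref{M_z} to the running density. Writing $M$ for $M^{\tilde{b},\tilde{d}^{ij}}$ and using the tower property together with the fact that $M$ is a \emph{symmetric} $\tilde{G}$-martingale, for any $\mathcal{F}_t$-measurable $\eta$ one has $\tilde{\mathbb{E}}_t[\eta M_T]=\eta^+\tilde{\mathbb{E}}_t[M_T]+\eta^-\tilde{\mathbb{E}}_t[-M_T]=\eta M_t$; applying this with $\eta=K_t$ gives
\begin{equation}
\tilde{\mathbb{E}}^{\tilde{b},\tilde{d}^{ij}}_s[K_t]=(M_s)^{-1}\tilde{\mathbb{E}}_s[M_T K_t]=(M_s)^{-1}\tilde{\mathbb{E}}_s[M_t K_t].
\end{equation}
Hence it suffices to prove that the product $MK$ is a $\tilde{G}$-martingale under $\tilde{\mathbb{E}}$, that is $\tilde{\mathbb{E}}_s[M_tK_t]=M_sK_s$.

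For this I would apply the $G$-It\^o product rule on the extended space. Because $K$ is continuous and of finite variation, $\langle M,K\rangle\equiv0$, so
\begin{equation}
M_tK_t-M_sK_s=\int_s^t M_u\,dK_u+\int_s^t K_u\,dM_u.
\end{equation}
The second term is a stochastic integral against the $\tilde{G}$-Brownian motions $B,\tilde{B}$, since $dM_u=\tilde{d}^{ij}_uM_u\,dB_u+\tilde{b}_uM_u\,d\tilde{B}_u$ by \eqref{b,d induced Radon density}; it is therefore a symmetric $\tilde{G}$-martingale increment $C$ with $\tilde{\mathbb{E}}_s[C]=\tilde{\mathbb{E}}_s[-C]=0$. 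The first term is of the form covered by Lemma \ref{int with respect to K is martingale} with $X=M\ge0$ (so $X^+=M$, $X^-=0$): the process $\int_0^\cdot M_u\,dK_u$ is a decreasing $\tilde{G}$-martingale, whence its increment $A:=\int_s^t M_u\,dK_u$ satisfies $\tilde{\mathbb{E}}_s[A]=0$. Combining these via subadditivity, $\tilde{\mathbb{E}}_s[A+C]\le\tilde{\mathbb{E}}_s[A]+\tilde{\mathbb{E}}_s[C]=0$ and $\tilde{\mathbb{E}}_s[A]\le\tilde{\mathbb{E}}_s[A+C]+\tilde{\mathbb{E}}_s[-C]=\tilde{\mathbb{E}}_s[A+C]$, so $\tilde{\mathbb{E}}_s[M_tK_t-M_sK_s]=\tilde{\mathbb{E}}_s[A+C]=0$; since $M_sK_s$ is $\mathcal{F}_s$-measurable this yields $\tilde{\mathbb{E}}_s[M_tK_t]=M_sK_s$, and dividing by $M_s>0$ gives the claim.

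The routine but essential bookkeeping I expect to be the main obstacle is integrability: I must check that $M_tK_t\in\mathbb{L}_G^1(\tilde{\Omega}_t)$, that $M\in\mathbb{S}_G^\alpha(0,T)$ with $\alpha=\tfrac{p}{p-1}$ so that Lemma \ref{int with respect to K is martingale} applies with the conjugate exponent, and that $K_u M_u\tilde{d}_u^{ij}$ and $K_u M_u\tilde{b}_u$ lie in the spaces making the stochastic integrals symmetric $\tilde{G}$-martingales. These follow from $K_T\in\mathbb{L}_G^p$ with $p>1$, the bound $\sup_{0\le u\le t}|K_u|=|K_t|\le|K_T|$ coming from monotonicity, the boundedness of $\tilde{b},\tilde{d}^{ij}$, and H\"older's inequality against the higher moments of the stochastic exponential $M$; the symmetry $\tilde{d}^{ij}=\tilde{d}^{ji}$ ensures $M$ is well-defined and symmetric. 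The only other points demanding care are the two sublinear-conditional-expectation manipulations, the reduction of $M_T$ to $M_t$ and the dropping of the symmetric increment $C$, both of which rest solely on positive homogeneity, subadditivity, and the symmetric-martingale property already available in the excerpt.
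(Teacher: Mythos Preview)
Your proposal is correct and follows the same approach as the paper's intended argument. The paper omits the proof and refers to \cite[Lemma~3.4]{hu2018quadratic}, whose argument is precisely the It\^o product decomposition $d(MK)=M\,dK+K\,dM$ that you outline: the first integral is a decreasing $\tilde{G}$-martingale by Lemma~\ref{int with respect to K is martingale} applied with $X=M\ge 0\in\mathbb{S}_G^{p/(p-1)}$, the second is a symmetric $\tilde{G}$-martingale, and the reduction from $M_T$ to $M_t$ via the symmetric-martingale factorization $\tilde{\mathbb{E}}_t[\eta M_T]=\eta M_t$ is exactly what is needed. Your integrability bookkeeping (that $M$ lies in every $\mathbb{S}_G^q$ since $\tilde{b},\tilde{d}^{ij}$ are bounded, and $|K_u|\le |K_T|\in\mathbb{L}_G^p$) is the routine verification the paper leaves implicit.
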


\section{Finite-horizon quadratic $G$-BSDEs}\label{section 3}

Given 
$0\le t\le T,$ $\xi\in \mathbb{L}_G^2(\Omega_t;\mathbb{R}^m)$
and measurable functions
$b,h_{ij}:[0,T]\times\mathbb{R}^m\to \mathbb{R}^m$,  $\sigma:[0,T]\times \mathbb{R}^m\to \mathbb{R}^{m\times d} $,
$\Phi:\mathbb{R}^m\to \mathbb{R},$   $f,g_{ij}:[0,T]\times \mathbb{R}^m\times \mathbb{R}\times \mathbb{R}^d\to 
\mathbb{R},$ 
consider the $G$-SDE and 
$G$-BSDE 
\begin{equation}\begin{aligned}
\label{SDE} X_s^{t,\xi}&=\xi+\int_t^s b(u,X_u^{t,\xi})\,du+\int_t^s h_{ij}(u,X_u^{t,\xi})\,d\langle B^i,B^j\rangle_u+\int_t^s \sigma(u,X_u^{t,\xi})\,dB_u\,,
\end{aligned}\end{equation} 
\begin{equation}\begin{aligned}
\label{QBSDE}Y_s^{t,\xi}&=\Phi(X_T^{t,\xi})+\int_s^Tf(u,X_u^{t,\xi},Y_u^{t,\xi},Z_u^{t,\xi})\,du\\
&\quad+\int_s^Tg_{ij}(u,X_u^{t,\xi},Y_u^{t,\xi},Z_u^{t,\xi})\,d\langle B^i,B^j\rangle_u-\int_s^TZ_u^{t,\xi}\,dB_u-(K_T^{t,\xi}-K_s^{t,\xi}) 
\end{aligned}\end{equation}
for $t\le s\le T.$
Occasionally, we write $X^{t,\xi},Y^{t,\xi},Z^{t,\xi},K^{t,\xi}$ as $X,Y,Z,K,$ respectively, omitting the superscripts $t,\xi.$

\begin{assume}\label{assumption} Assume the functions $b,h_{ij},  \sigma, \Phi, f,g_{ij}$ satisfy the following properties.  
	\begin{enumerate}[label=(\roman*)]
		\item For $1\leq i,j \leq d$, $h_{ij}=h_{ji}$ and $g_{ij}=g_{ji}.$    \label{finite assumption first}
		\item The functions $b,h_{ij}, \sigma, f,$ $g_{ij}$ are continuous in $s$.\label{finite assumption second}
		\item There are positive constants $C_1,  C_2,C_3, C_{\sigma}, C_{\Phi}, M_\sigma$ such that 
		\begin{equation}\begin{aligned}
		&|b(s,x)-b(s,x')|+\sum_{i,j=1}^d|h_{ij}(s,x)-h_{ij}(s,x')|\leq C_1|x-x'|\,,\\
		&|\sigma(s,x)-\sigma(s,x')|\leq C_\sigma|x-x'|\,,\\
		&|\sigma(s,x)|\leq M_{\sigma}\,,\\
		&|\Phi(x)-\Phi(x')|\leq C_{\Phi}|x-x'|\,,\\
		&|f(s,x,y,z)-f(s,x',y',z')|+\sum_{i,j=1}^d|g_{ij}(s,x,y,z)-g_{ij}(s,x',y',z')|\\ &\hspace{2cm}\leq C_1|x-x'|+C_2|y-y'|+C_3(1+|z|+|z'|)|z-z'|
		\end{aligned}\end{equation}
		for $s\in[t,T],$ $x,x'\in \mathbb{R}^m$, $y,y'\in \mathbb{R}$,   $z,z'\in \mathbb{R}^d.$
		\label{finite assmption third}
		\item There are some constants $\mu$ and $\eta$ such that
		\begin{equation} 
		\begin{aligned}
		&(f(s,x,y,z)-f(s,x,y',z))(y-y')+2G((g_{ij}(s,x,y,z)-g_{ij}(s,x,y',z))(y-y'))\\
		&\quad \leq -\mu|y-y'|^2\,,\\
		&G\Big( \sum_{j=1}^m(\sigma_j(s,x)-\sigma_j(s,x'))^{\top} (\sigma_j(s,x)-\sigma_j(s,x'))  \\&\quad +2(\langle x-x',h_{ij}(s,x)-h_{ij}(s,x')\rangle)_{i,j=1}^d\Big)+\langle x-x',b(s,x)-b(s,x')\rangle\leq -\eta|x-x'|^2\,,\\
		&\mu+\eta>(1+\overline{\sigma}^2)\Bigg(C_\sigma C_3+4C_\Phi C_\sigma C_3\frac{\overline{\sigma} M_\sigma}{\underline{\sigma}}+4\sqrt{C_\sigma C_1C_3\frac{\overline{\sigma}M_{\sigma}}{\underline{\sigma}}}\,\Bigg)
		\end{aligned}		
		\end{equation}
		for $s\in[t,T],$ $x,x'\in \mathbb{R}^m$, $y,y'\in\mathbb{R},z\in\mathbb{R}^d$, where $\sigma_j$ is the $j$-th row vector of $\sigma$ for $j=1,2,\cdots,m.$
	\end{enumerate}
\end{assume}

\begin{defi}
	We say a process $X$ is a solution to the $G$-SDE \eqref{SDE} if 
	$X\in \mathbb{S}_G^2(t,T;\mathbb{R}^m)$ and the process $X$ satisfies \eqref{SDE} quasi-surely.
\end{defi}

The following theorem states the existence, uniqueness and regularity of solutions to the $G$-SDE \eqref{SDE}.
Compared to \cite[Proposition 5.3.1, Corollary 5.3.2]{peng2019nonlinear}, the following theorem provides more detailed estimates illuminating the dependence of the parameters $C_1,C_\sigma,\overline{\sigma},M_{\sigma},C_4,|\!|b(s,0)|\!|,|\!|h_{ij}(s,0)|\!|,T.$
 These detailed estimates are necessary for our analysis later.
We extend the  $G$-expectation space $(\Omega,\mathbb{L}_G^1(\Omega_T),\hat{\mathbb{E}})$ 
\sloppy to the $\tilde{G}$-expectation space $(\tilde{\Omega},\mathbb{L}_{\tilde{G}}(\tilde{\Omega}_T),\tilde{\mathbb{E}})$, where  $\tilde{\Omega}=C_0([0,\infty),\mathbb{R}^{2d})$  and $\tilde{G}$ is given as in  \eqref{tildeG}.
Then the canonical process $(B,\Tilde{B})$ is a $\Tilde{G}$-Brownian motion.
For any bounded processes $b,d^{ij}\in \mathbb{M}^2_G(0,T;\mathbb{R}^d),$
let $M^{b,d^{ij}}$ be the unique solution to \begin{equation}\begin{aligned}\label{b,d density}
M^{b,d^{ij}}=1+\int_0^\cdot\,d^{ij}_uM^{b,d^{ij}}_u\,dB_u+\int_0^\cdot b_uM^{b,d^{ij}}_u\,d\tilde{B}_u
\end{aligned}\end{equation} 
and define  
the sublinear conditional expectation $\tilde{\mathbb{E}}_t^{b,d^{ij}}$ by the process $M^{b,d^{ij}}$ as presented in \eqref{M_z}.

\begin{thm}\label{SDE sublinear property} 
	Let Assumptions \ref{assumption}  hold. Then for $t>0,$ $p\geq 1$ and $\xi\in \mathbb{L}_G^2(\Omega_t;\mathbb{R}^m)$,
	there exists a unique solution $X^{t,\xi}$ to \eqref{SDE}.   For any processes $b,d^{ij}\in \mathbb{M}^2_G(0,T;\mathbb{R}^d)$ satisfying $d^{ij}=d^{ji}$ and $|b|,|d^{ij}|\le C_4$ for some constant $C_4>0,$
	there exists a constant $L>0$ depending only on $C_1,C_\sigma,\overline{\sigma},M_{\sigma},C_4,|\!|b(s,0)|\!|,|\!|h_{ij}(s,0)|\!|$ and $T$ 
	such that 
    \begin{equation}
    \begin{aligned}
	&\tilde{\mathbb{E}}^{b,d^{ij}}_t[|X_{s}^{t,\xi}-X_{s}^{t,\xi'}|]\leq 
	e^{-(\eta-(1+\overline{\sigma}^2)C_\sigma C_4)(s-t)}|\xi-\xi'|\,,\\
	&\tilde{\mathbb{E}}^{b,d^{ij}}_t[|X_{s}^{t,\xi}|]\leq L(1+|\xi|)\,,\\
	&\tilde{\mathbb{E}}^{b,d^{ij}}_t[|X_s^{t,\xi}-\xi|]\leq L(1+|\xi|)(s-t)+M_{\sigma}\overline{\sigma}(s-t)^{\frac{1}{2}}\,
	\end{aligned}\label{eqn:x esi finite}
    \end{equation}
	for all $\xi,\xi'\in  \mathbb{L}_G^p(\Omega_t;\mathbb{R}^m)$
	and $t\le s\le T.$ 
\end{thm}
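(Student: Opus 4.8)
The plan is to dispatch existence and uniqueness by the standard theory and then to extract all three estimates from a single It\^o computation carried out under the induced expectation $\tilde{\mathbb{E}}^{b,d^{ij}}$. Since the continuity and Lipschitz conditions in Assumption \ref{assumption} make $b,h_{ij},\sigma$ continuous in $s$, uniformly Lipschitz in $x$, with $\sigma$ bounded, \cite[Proposition 5.3.1, Corollary 5.3.2]{peng2019nonlinear} yields a unique $X^{t,\xi}\in\mathbb{S}^2_G(t,T;\mathbb{R}^m)$. The key preliminary step is to rewrite the dynamics under $\tilde{\mathbb{E}}^{b,d^{ij}}$: substituting $dB_s=dB^{b,d^{ij}}_s+b_s\,ds+d^{ij}_s\,d\langle B^i,B^j\rangle_s$ from \eqref{new_GBM} into \eqref{SDE} shows that, when driven by the $\tilde{\mathbb{E}}^{b,d^{ij}}$-Brownian motion $B^{b,d^{ij}}$, the process $X$ solves an SDE with the same diffusion $\sigma$ but shifted coefficients $\tilde b(s,x)=b(s,x)+\sigma(s,x)b_s$ and $\tilde h_{ij}(s,x)=h_{ij}(s,x)+\sigma(s,x)d^{ij}_s$. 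All three estimates become conditional-$\tilde{\mathbb{E}}^{b,d^{ij}}_t$ statements about this transformed equation.

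For the contraction estimate I set $\hat X=X^{t,\xi}-X^{t,\xi'}$ and apply It\^o's formula to $e^{2\kappa(s-t)}|\hat X_s|^2$ with $\kappa=\eta-(1+\overline{\sigma}^2)C_\sigma C_4$. Taking $\tilde{\mathbb{E}}^{b,d^{ij}}_t$, the $dB^{b,d^{ij}}$-integral is a symmetric $G$-martingale and drops out, while each $d\langle B^i,B^j\rangle$-integral is controlled using that $\int\psi^{ij}_u\,d\langle B^i,B^j\rangle_u-\int 2G(\psi_u)\,du$ is a decreasing $G$-martingale (invoking the change-of-expectation construction together with Lemma \ref{K is MG in new sublinear}) and the positive homogeneity of $G$. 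This reduces matters to showing the drift is nonpositive, i.e.
\begin{equation}
2\kappa|\hat X_u|^2+2\langle \hat X_u,\tilde b(u,X^{t,\xi}_u)-\tilde b(u,X^{t,\xi'}_u)\rangle+2G(\tilde A_u)\le 0,
\end{equation}
where $\tilde A_u$ is the symmetric matrix with entries $2\langle\hat X_u,\tilde h_{ij}(u,X^{t,\xi}_u)-\tilde h_{ij}(u,X^{t,\xi'}_u)\rangle+\sum_k(\hat\sigma_k)^i_u(\hat\sigma_k)^j_u$ and $\hat\sigma_u=\sigma(u,X^{t,\xi}_u)-\sigma(u,X^{t,\xi'}_u)$. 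Splitting off the Girsanov shifts by subadditivity of $G$, the unshifted part $\langle\hat X_u,b(u,X^{t,\xi}_u)-b(u,X^{t,\xi'}_u)\rangle+G(A_u)\le-\eta|\hat X_u|^2$ is exactly the dissipativity condition in Assumption \ref{assumption}, while the two shift contributions $\langle\hat X_u,\hat\sigma_u b_u\rangle$ and $G((2\langle\hat X_u,\hat\sigma_u d^{ij}_u\rangle)_{ij})$ are estimated using $|\hat\sigma_u|\le C_\sigma|\hat X_u|$, the bounds $|b_u|,|d^{ij}_u|\le C_4$, and the upper ellipticity bound $G\le\tfrac12\overline{\sigma}^2\mathrm{tr}(\cdot)$ of \eqref{ellip}, yielding the correction $(1+\overline{\sigma}^2)C_\sigma C_4|\hat X_u|^2$. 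With the drift nonpositive, $\tilde{\mathbb{E}}^{b,d^{ij}}_t[e^{2\kappa(s-t)}|\hat X_s|^2]\le|\xi-\xi'|^2$, and the first estimate follows after taking square roots via the conditional sublinear H\"older inequality $\tilde{\mathbb{E}}^{b,d^{ij}}_t[|\hat X_s|]\le(\tilde{\mathbb{E}}^{b,d^{ij}}_t[|\hat X_s|^2])^{1/2}$.

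The growth and increment estimates reuse the transformed SDE. For the growth bound, an It\^o estimate on $|X_s|^2$ together with the linear growth $|\tilde b(s,x)|\le|b(s,0)|+C_1|x|+M_\sigma C_4$ (and the analogous bound for $\tilde h_{ij}$, which is where $|\!|b(s,0)|\!|$ and $|\!|h_{ij}(s,0)|\!|$ enter) and Gronwall's inequality give $\tilde{\mathbb{E}}^{b,d^{ij}}_t[|X_s|^2]\le L^2(1+|\xi|)^2$, with $L$ absorbing the exponential $T$-dependence. For the increment bound I decompose $X_s-\xi=\int_t^s\tilde b\,du+\int_t^s\tilde h_{ij}\,d\langle B^i,B^j\rangle_u+\int_t^s\sigma\,dB^{b,d^{ij}}_u$: the first two integrals are bounded by $L(1+|\xi|)(s-t)$ using linear growth and the growth bound, while the diffusion part is estimated by the $G$-It\^o isometry (the $p=2$ case of Proposition \ref{BDG}) and $|\sigma|\le M_\sigma$, giving $\tilde{\mathbb{E}}^{b,d^{ij}}_t[|\int_t^s\sigma\,dB^{b,d^{ij}}_u|]\le(\overline{\sigma}^2 M_\sigma^2(s-t))^{1/2}=M_\sigma\overline{\sigma}(s-t)^{1/2}$, which is exactly the final term.

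I expect the main obstacle to be the contraction step, specifically the precise bookkeeping of the $G$-expectation manipulations under $\tilde{\mathbb{E}}^{b,d^{ij}}$ and the extraction of exactly the rate $\eta-(1+\overline{\sigma}^2)C_\sigma C_4$. The delicate points are, first, rigorously justifying that under $\tilde{\mathbb{E}}^{b,d^{ij}}$ the $B^{b,d^{ij}}$-integral is a symmetric $G$-martingale and the $d\langle B\rangle$-term admits the correct decreasing $G$-martingale compensator, and second, bounding the $G$-term arising from the $d^{ij}$-shift by precisely $\overline{\sigma}^2 C_\sigma C_4|\hat X_u|^2$ through subadditivity and the upper ellipticity constant, so that the constant in the exponent matches the statement exactly.
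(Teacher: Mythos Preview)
Your approach is correct and essentially the same as the paper's: the paper outsources the contraction and growth estimates to \cite[Lemma 4.1]{hu2018ergodic} and \cite[Proposition 4.1]{hu2014comparison} respectively (whose proofs are precisely the dissipativity-via-It\^o and Gronwall arguments you sketch), and for the increment estimate it carries out exactly your decomposition---rewrite under the $\tilde{\mathbb{E}}^{b,d^{ij}}$-Brownian motion $B^{b,d^{ij}}$, bound the drift and $d\langle B\rangle$ parts by linear growth, and bound the martingale part by the It\^o isometry to get $M_\sigma\overline{\sigma}(s-t)^{1/2}$. Your identification of the contraction constant bookkeeping as the delicate point is apt; the paper handles the $d\langle B\rangle$-integral bounds via the diagonal-majorant trick $2G(\zeta)\le 2G(D)\le \overline{\sigma}^2\sum_{i,j}|\zeta^{ij}|$ rather than subadditivity alone, which you may find cleaner when extracting the exact factor $(1+\overline{\sigma}^2)C_\sigma C_4$.
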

\begin{proof}
The existence and uniqueness of solutions to the SDE \eqref{SDE} can be established by the standard arguments (see, for example, \cite[Proposition 4.1]{hu2014comparison} and \cite[Theorem 5.1.3]{peng2019nonlinear}). 
The proof of the first inequality in \eqref{eqn:x esi finite} is similar to that of \cite[Lemma 4.1]{hu2018ergodic} and the second inequality follows from \cite[Proposition 4.1]{hu2014comparison}.

For the last inequality, we assume for simplicity  that $\xi:=x\in \mathbb{R}^m$ is a constant vector.
Observe that \eqref{SDE} yields
	\begin{align}
	\tilde{\mathbb{E}}_t^{b,d^{ij}}[|X_s^{t,x}-X_t^{t,x}|]	&\leq \tilde{\mathbb{E}}_t^{b,d^{ij}}\Big[\int_t^s|b(u,X_u^{t,x})+\sigma(u,X_u^{t,x})b_u|\,du\Big]\\
    &\quad
    +\tilde{\mathbb{E}}_t^{b,d^{ij}}\Big[\Big|\int_t^s h_{k\ell}(u,X_u^{t,x})+\sigma(u,X_u^{t,x})d^{k\ell}_u\,d\langle B^{b,d^{ij},k}, B^{b,d^{ij},\ell}\rangle_u\Big|\Big]\\
    &\quad+\tilde{\mathbb{E}}_t^{b,d^{ij}}\Big[\Big|\int_t^s\sigma(u,X_u^{t,x})\,dB^{b,d^{ij}}_u\Big|\Big]\,,
	\end{align} 
   where $B^{b,d^{ij}}=B-\int_0^\cdot b_u\,du-\int_0^\cdot d^{ij}_u\langle B^i,B^j\rangle _u$.
  Since $b$ and $h_{k\ell}$ have linear growth in $x$ and $\sigma$ is bounded, it follows that 
   \begin{align}
       &\tilde{\mathbb{E}}_t^{b,d^{ij}}\Big[\int_t^s|b(u,X_u^{t,x})+\sigma(u,X_u^{t,x})b_u|\,du\Big]\le  L(1+|X_t^{t,x}|)(s-t) \,,\\
       &\tilde{\mathbb{E}}_t^{b,d^{ij}}\Big[\Big|\int_t^s h_{k\ell}(u,X_u^{t,x})+\sigma(u,X_u^{t,x})d^{k\ell}_u\,d\langle B^{b,d^{ij},k}, B^{b,d^{ij},\ell}\rangle_u\Big|\Big]
       \le L(1+|X_t^{t,x}|)(s-t)       
         \end{align}
       and
       \begin{align}
     \tilde{\mathbb{E}}_t^{b,d^{ij}}\Big[\Big|\int_t^s\sigma(u,X^{t,x}_u)\,dB^{b,d^{ij}}_u\Big|\Big]&\le \Big(\tilde{\mathbb{E}}_t^{b,d^{ij}}\Big[\Big|\int_t^s\sigma(u,X_u^{t,x})\,dB^{b,d^{ij}}_u\Big|^2\Big]\Big)^{\frac{1}{2}}\\
    &\le\Big(\tilde{\mathbb{E}}_t^{b,d^{ij}}\Big[\int_t^s\Big((\sigma^{\top}\sigma)(u,X^{t,x}_u)\Big)_{k\ell}\,d\langle B^{b,d^{ij},k},B^{b,d^{ij},\ell}\rangle_u\Big]\Big)^{\frac{1}{2}}\\
       &\le \Big(\tilde{\mathbb{E}}_t^{b,d^{ij}}\Big[\int_t^s\overline{\sigma}^2|\sigma(u,X_u^{t,x})|^2\,du \Big]\Big)^{\frac{1}{2}}\le M_\sigma\overline{\sigma}(s-t)^{\frac{1}{2}}\,.
   \end{align}
Here we have used that for each component of $(\zeta^{k\ell v}_u)_{1\le v\le m}=h_{k\ell}(u,X_u^{t,x})+\sigma(u,X_u^{t,x})d^{k\ell}_u\in \mathbb{H}^2_G(0,T;\mathbb{R}^m)$,
        \begin{align}
        \sum_{1\le k,\ell\le d}\int_t^s \zeta^{k\ell v}_u\,d\langle B^{b,d^{ij},k},B^{b,d^{ij},\ell}\rangle_u
        &\le\int_t^s2G(\zeta^{v}_u)\,du\\
        &\le\int_t^s2 G(D^v_u)\,du\le \sum_{1\le k,\ell\le d}\int_t^s \overline{\sigma}^2|\zeta^{k\ell v}_u|\,du\,,
        \end{align}
        and
        \begin{align}   
        \sum_{1\le k,\ell\le d} \int_t^s \zeta^{k\ell v}_u\,d\langle B^{b,d^{ij},k},B^{b,d^{ij},\ell}\rangle_u\ge - \sum_{1\le k,\ell\le d}\int_t^s \overline{\sigma}^2|\zeta^{k\ell v}_u|\,du\,,
    \end{align}
    where $D_u^v$ is a diagonal matrix whose $(k,k)$-entry is $\sum_{\ell=1}^d|\zeta^{k\ell v}_u|$ for $k=1,2,\cdots d$. 
Then
	\begin{equation}\begin{aligned}
	\tilde{\mathbb{E}}_t^{b,d^{ij}}[|X_s^{t,x}-X_t^{t,x}|] \leq  L(1+|X_t^{t,x}|)(s-t)+M_{\sigma}\overline{\sigma}(s-t)^{\frac{1}{2}}\,,
	\end{aligned}\end{equation}
	where $L$ depends only on $C_1,C_4,M_{\sigma},\overline{\sigma},|\!|b(s,0)|\!|,|\!|h_{ij}(s,0)|\!|$ and $T.$
\end{proof}

We now focus on the $G$-BSDE \eqref{QBSDE} and demonstrate several properties of solutions to the $G$-BSDE.

\begin{defi} We say a triplet $(Y,Z,K)$ is a solution to the $G$-BSDE \eqref{QBSDE} if 
	\begin{enumerate}[label=(\roman*)]
		\item the tuple $(Y,Z)\in \mathbb{S}_G^{2}(t,T)\times \mathbb{H}_G^{2}(t,T;\mathbb{R}^d)$,
		\item the process $K$ is a decreasing $G$-martingale with $K_0=0$ and $K_T\in \mathbb{L}_G^2(\Omega_T)$,
		\item the triplet $(Y,Z,K)$ satisfies \eqref{QBSDE} quasi-surely.
	\end{enumerate}
\end{defi}

The following theorem 
states the existence, uniqueness and regularity of solutions to the $G$-BSDE \eqref{QBSDE}.
The independence of the parameter 
$T$
on the Lipschitz constant $L_1$ 
in \eqref{eqn:Lip} is crucial for our analysis later.

\begin{thm}\label{Finite QBSDE exis and unique}
	Let Assumptions \ref{assumption} hold. For any $0\le t\le T$ and $\xi\in \mathbb{L}_G^p(\Omega_t;\mathbb{R}^m),$ there exists a unique solution $(Y^{t,\xi},Z^{t,\xi},K^{t,\xi})$ to the $G$-BSDE \eqref{QBSDE} such that $Z^{t,\xi}$ is  bounded. Moreover if we define $u(t,x)=Y_t^{t,x}$ for $(t,x)\in[0,T]\times \mathbb{R}^m,$ then $u$ satisfies the inequalities
	\begin{equation}\begin{aligned}\label{eqn:Lip}
	|u(t,x)-u(t,x')|&\leq L_1|x-x'|\,,\\
	|u(t+\delta,x)-u(t,x)|&\leq L_2(1+|x|)\delta+L_1\delta^{\frac{1}{2}}\,,
	\end{aligned}\end{equation}
	where $L_1$ is a constant  depending only on  $\mu,\eta, C_1,C_3,C_\Phi,C_\sigma,M_{\sigma},\overline{\sigma}$ and 
	$L_2$ is a constant  depending only on  $\mu,\eta, C_1,C_3,C_\Phi,C_\sigma,M_{\sigma},\overline{\sigma},|\!|b(s,0)|\!|,|\!|h_{ij}(s,0)|\!|,$ $T$.
\end{thm}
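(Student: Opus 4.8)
The plan is to build on the existing theory of Lipschitz and quadratic $G$-BSDEs, upgrading it to accommodate the unbounded-in-$x$ terminal value $\Phi(X_T^{t,\xi})$ and, crucially, to extract a spatial Lipschitz constant $L_1$ that does not deteriorate as $T\to\infty$. The first step is an a priori, $T$-uniform bound on $Z^{t,\xi}$. Since $\Phi$ is $C_\Phi$-Lipschitz and $|\sigma|\le M_\sigma$, the Markovian representation $Y_s^{t,\xi}=u(s,X_s^{t,\xi})$ together with the gradient identity $Z_s^{t,\xi}\sim D_xu(s,X_s^{t,\xi})\sigma(s,X_s^{t,\xi})$ suggests $|Z^{t,\xi}|\le M_Z$ with $M_Z$ of order $C_\Phi M_\sigma\overline{\sigma}/\underline{\sigma}$, independently of $T$. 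Once $Z^{t,\xi}$ is so bounded, the quadratic term in Assumption \ref{assumption}(iii) becomes genuinely Lipschitz with constant $C_3(1+2M_Z)$, and existence and uniqueness of $(Y^{t,\xi},Z^{t,\xi},K^{t,\xi})$ with bounded $Z$ then follow by adapting the construction of \cite{hu2018quadratic} and \cite{hu2022quadratic}: one approximates the unbounded terminal value by bounded truncations and passes to the limit using the a priori estimates. Uniqueness reduces to the linear case, since the difference of two solutions with bounded $Z$ solves a linear $G$-BSDE with bounded coefficients and vanishing data, which is zero by Theorem \ref{linear FeyKac formula}.

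The heart of the argument is the $T$-independent estimate for $|u(t,x)-u(t,x')|$. I would set $\hat X=X^{t,x}-X^{t,x'}$, $\hat Y=Y^{t,x}-Y^{t,x'}$, $\hat Z=Z^{t,x}-Z^{t,x'}$ and linearize \eqref{QBSDE}: splitting each increment of $f$ and $g_{ij}$ into its $x$-, $y$-, and $z$-parts and using Assumption \ref{assumption}(iii) with $Z$ bounded by $M_Z$, the triple $(\hat X,\hat Y,\hat Z)$ is governed by a linear $G$-BSDE whose $y$-coefficients $(a,c^{ij})$ inherit the dissipativity $a+2G(c^{ij}\,\cdot\,)\le-\mu$ of Assumption \ref{assumption}(iv), whose $z$-coefficients $(b,d^{ij})$ are bounded by $C_4:=C_3(1+2M_Z)$, and whose source is dominated by $C_1|\hat X_u|$. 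Passing to the induced sublinear expectation $\tilde{\mathbb{E}}_t^{b,d^{ij}}$ of \eqref{M_z} absorbs the $\hat Z$-term into a new $G$-Brownian motion and preserves the decreasing $G$-martingale structure of the two $K$-parts by Lemma \ref{K is MG in new sublinear}; multiplying by the positive discount $\Gamma$ and invoking Lemma \ref{int with respect to K is martingale} keeps $\int\Gamma\,dK^{t,x}$ and $\int\Gamma\,dK^{t,x'}$ decreasing $G$-martingales, so they contribute with the correct sign after conditioning. Applying the $G$-Itô formula to $\Gamma_s\hat Y_s$ and taking $\tilde{\mathbb{E}}_t^{b,d^{ij}}$, in the spirit of the representation \eqref{linear BSDE soln}, yields schematically
\begin{equation}
|u(t,x)-u(t,x')|=|\hat Y_t|\le \tilde{\mathbb{E}}_t^{b,d^{ij}}\Big[\Gamma_T C_\Phi|\hat X_T|+\int_t^T\Gamma_u C_1|\hat X_u|\,(du+d\langle B^i,B^j\rangle_u)\Big],
\end{equation}
where $\Gamma_u\le e^{-\mu(u-t)}$. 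Inserting the flow estimate $\tilde{\mathbb{E}}_t^{b,d^{ij}}[|\hat X_u|]\le e^{-(\eta-(1+\overline{\sigma}^2)C_\sigma C_4)(u-t)}|x-x'|$ from Theorem \ref{SDE sublinear property}, every term decays at rate $\mu+\eta-(1+\overline{\sigma}^2)C_\sigma C_4$, which is strictly positive by Assumption \ref{assumption}(iv). Hence the $T$-integral is bounded uniformly in $T$, giving the first inequality of \eqref{eqn:Lip} with $L_1$ independent of $T$.

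For the temporal estimate I would use the Markov property $Y_{t+\delta}^{t,x}=u(t+\delta,X_{t+\delta}^{t,x})$ to split
\begin{equation}
u(t,x)-u(t+\delta,x)=\big(Y_t^{t,x}-Y_{t+\delta}^{t,x}\big)+\big(u(t+\delta,X_{t+\delta}^{t,x})-u(t+\delta,x)\big).
\end{equation}
The second bracket is controlled by the spatial estimate just proved together with the third inequality of Theorem \ref{SDE sublinear property}, producing $L_1\big(L(1+|x|)\delta+M_\sigma\overline{\sigma}\,\delta^{1/2}\big)$ and, in particular, the $\delta^{1/2}$ contribution with a $T$-independent coefficient. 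The first bracket is handled by taking $\hat{\mathbb{E}}_t$ in \eqref{QBSDE} over $[t,t+\delta]$: the stochastic integral and the $K$-increment drop out nonpositively after conditioning, and the linear growth of $f,g_{ij}$ combined with $|Y|\le L(1+|x|)$ and $|Z|\le M_Z$ bounds this bracket by a term of order $(1+|x|)\delta$. Adding the two brackets gives the second inequality of \eqref{eqn:Lip}, with $L_2$ depending additionally on $\|b(s,0)\|$, $\|h_{ij}(s,0)\|$ and $T$ through the flow constant $L$ of Theorem \ref{SDE sublinear property}.

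The main obstacle is the circular dependence between the bound $M_Z$ on $Z$ and the Lipschitz constant $L_1$: the linearization producing $L_1$ presupposes $Z$ bounded, while $M_Z$ is itself governed by $L_1$ through the gradient identity $Z\sim D_xu\,\sigma$. I expect to break it by a self-consistency argument in which $M_Z$ is expressed as a multiple of $L_1 M_\sigma\overline{\sigma}/\underline{\sigma}$ and $L_1$ in turn through $C_4=C_3(1+2M_Z)$; the square-root term $4\sqrt{C_\sigma C_1C_3\,\overline{\sigma}M_\sigma/\underline{\sigma}}$ in Assumption \ref{assumption}(iv) is precisely what a Young-inequality splitting of this feedback demands so that the effective decay rate $\mu+\eta-(1+\overline{\sigma}^2)C_\sigma C_4$ remains positive. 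Carefully tracking the $K$-increments through the change of measure, and verifying that none of the constants entering $L_1$ secretly depend on $T$, are the remaining delicate points.
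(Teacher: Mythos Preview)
Your linearization strategy for the spatial Lipschitz estimate is exactly what the paper does, and you have correctly identified the circular dependence between the bound on $Z$ and the constant $L_1$ as the crux of the argument. The gap is in how you propose to break that circle. You suggest adapting \cite{hu2018quadratic,hu2022quadratic} by truncating the \emph{terminal value} $\Phi(X_T)$, but those quadratic $G$-BSDE results require either a bounded terminal condition together with a pathwise modulus-of-continuity hypothesis on the driver, or convexity/concavity in $z$; none of these are available here (this is precisely the point of Remark~\ref{remark:finite_BSDEs}). Truncating $\Phi$ does nothing to tame the quadratic growth of $f,g_{ij}$ in $z$, so that route does not produce a solution to start from.

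The paper instead truncates the \emph{driver in $z$}: it fixes an explicit level
\[
M=\frac{\eta+\mu-(1+\overline{\sigma}^2)C_\sigma C_3+4C_\Phi(1+\overline{\sigma}^2)C_\sigma C_3\,\overline{\sigma}M_\sigma/\underline{\sigma}}{4(1+\overline{\sigma}^2)C_\sigma C_3},
\]
sets $f^M(s,x,y,z)=f(s,x,y,z^M)$ with $z^M=(|z|\wedge M)z/|z|$, and solves the resulting \emph{Lipschitz} $G$-BSDE via Theorem~\ref{linear FeyKac formula}. Your linearization argument is then run on this truncated solution to obtain $L_1=C_\Phi+C_1(1+\overline{\sigma}^2)/(\eta+\mu-(1+\overline{\sigma}^2)C_\sigma C_3(1+2M))$, and the Hölder-in-time estimate together with the quadratic-variation identity $\underline{\sigma}^2\int|Z^M|^2\,du\le\int Z^{M,i}Z^{M,j}\,d\langle B^i,B^j\rangle$ gives $|Z^M|\le 2L_1\overline{\sigma}M_\sigma/\underline{\sigma}$. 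Assumption~\ref{assumption}(iv) is exactly what guarantees both $\eta+\mu-(1+\overline{\sigma}^2)C_\sigma C_3(1+2M)>0$ and $2L_1\overline{\sigma}M_\sigma/\underline{\sigma}\le M$, closing the loop a posteriori: the truncated solution never sees the truncation, hence solves the original quadratic $G$-BSDE. This is the concrete ``self-consistency'' you allude to; without the $z$-truncation you have no object on which to run the estimates. A minor further point: your temporal estimate via the decomposition $(Y_t-Y_{t+\delta})+(u(t+\delta,X_{t+\delta})-u(t+\delta,x))$ requires two-sided control of the $K$-increment in the first bracket, which a naive conditional expectation does not give; the paper avoids this by re-running the linearization (with processes $\tilde a^\epsilon,\tilde b^\epsilon,\ldots$) on the interval $[s,s+\delta]$, which handles the $K$-terms through Lemma~\ref{int with respect to K is martingale} just as in the spatial case.
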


\begin{proof}
    For simplicity, assume $\xi=x\in\mathbb{R}^m$ is a constant vector.
    We first consider the case with $\mu<0.$
	Let $$M:=\frac{\eta+\mu-(1+\overline{\sigma}^2)C_\sigma C_3+4C_\Phi(1+\overline{\sigma}^2)C_{\sigma}C_3\frac{\overline{\sigma}M_\sigma }{\underline{\sigma}}   }{4(1+\overline{\sigma}^2)C_\sigma C_3}\,,$$
	then  $M$ is a positive constant satisfying $\eta+\mu-(1+\overline{\sigma}^2)C_{\sigma}C_3(1+2M)>0.$
	Define $f^M(s,x,y,z):=f(s,x,y,z^M)$, $g_{ij}^M(s,x,y,z):=g_{ij}(s,x,y,z^M)$, where $z^{M}:=\frac{|z|\wedge M}{|z|}z$ with the convention $\frac{\,0\,}{0}=0.$  Then the Lipschitz  $G$-BSDE
	\begin{equation}\begin{aligned}\label{eqn: truncation M}
	Y_s^{M,t,x}=\Phi(X^{t,x}_T)&+\int_s^Tf^M(u,X^{t,x}_u,Y^{M,t,x}_u,Z_u^{M,t,x}) \,du\\&+\int_s^Tg_{ij}^M(u,X^{t,x}_u,Y^{M,t,x}_u,Z_u^{M,t,x}) \,d\langle B^i,B^j\rangle_u \\&-\int_s^TZ_u^{M,t,x}\,dB_u-(K^{M,t,x}_T-K^{M,t,x}_s)\,,\;t\le s\le T
	\end{aligned}\end{equation}
	has
	a unique solution $(Y^{M,t,x},Z^{M,t,x},K^{M,t,x})$
	by Theorem \ref{linear FeyKac formula}.
	Define $u^M(t,x)=Y_t^{M,t,x}$ for $(t,x)\in [0,T]\times\mathbb{R}^m.$

	First we show that the function $u^{M}(t,\cdot)$ is  Lipschitz uniformly in $t$ with Lipschitz constant  depending only on 
	$\mu,\eta, C_1,C_3,C_\Phi,C_\sigma,M_{\sigma},\overline{\sigma}.$ 
	We denote $(X^{t,x},Y^{M,t,x},Z^{M,t,x},K^{M,t,x})$ as $(X^x,Y^{M,x},Z^{M,x},K^{M,x})$ for convenience.
	Let $(\hat{Y},\hat{Z},\hat{K}):=(Y^{M,x}-Y^{M,x'},Z^{M,x}-Z^{M,x'},K^{M,x}-K^{M,x'})$, $\hat{\Phi}:=\Phi(X_T^x)-\Phi(X_T^{x'})$ and  $l^\epsilon(x):=\mathds{1}_{\{|x|\geq \epsilon\}}+\frac{|x|}{\epsilon}\mathds{1}_{\{|x|< \epsilon\}}$ for $\epsilon>0.$  We define the  auxiliary processes
	\begin{equation} 
	\begin{aligned}
	a_s^{\epsilon}&=l^\epsilon(\hat{Y_s})\frac{f^M(s,X_s^x,Y^{M,x}_s,Z^{M,x}_s)-f^M(s,X_s^x,Y_s^{M,x'},Z^{M,x}_s)}{\hat{Y}_s} -\frac{\mu}{1+\underline{\sigma}^2}(1-l^\epsilon(\hat{Y_s})) \,,
	\\
	b_s^{\epsilon}&=l^\epsilon(\hat{Z}_s)\frac{f^M(s,X_s^x,Y^{M,x'}_s,Z^{M,x}_s)-f^M(s,X_s^x,Y_s^{M,x'},Z_s^{M,x'})}{|\hat{Z}_s|^2}\hat{Z}_s\\
    &\quad+C_3(1+2M)(1-l^\epsilon(\hat{Z}_s))\,,\\
	m_s^{\epsilon}&=f^M(s,X_s^x,Y^{M,x}_s,Z^{M,x}_s)-f^M(s,X_s^x,Y_s^{M,x'},Z_s^{M,x'})-a^{\epsilon}_s\hat{Y}_s-b_s^{\epsilon}\hat{Z_s}\,,\\
	h_s&=f^M(s,X_s^x,Y_s^{M,x'},Z_s^{M,x'})-f^M(s,X_s^{x'},Y_s^{M,x'},Z_s^{M,x'})\,,\\
	c_s^{\epsilon,ij}&=l^\epsilon(\hat{Y_s})\frac{g_{ij}^M(s,X_s^x,Y^{M,x}_s,Z^{M,x}_s)-g_{ij}^M(s,X_s^x,Y_s^{M,x'},Z^{M,x}_s)}{\hat{Y}_s} \\&\quad-\frac{\mu}{(1+\underline{\sigma}^2)d}(1-l^\epsilon(\hat{Y_s})) \,,\\
	d_s^{\epsilon,ij}&=l^\epsilon(\hat{Z}_s)\frac{g_{ij}^M(s,X_s^x,Y^{M,x'}_s,Z^{M,x}_s)-g_{ij}^M(s,X_s^x,Y_s^{M,x'},Z_s^{M,x'})}{|\hat{Z}_s|^2}\hat{Z}_s\\
    &\quad+\frac{C_3(1+2M)}{d^2}(1-l^\epsilon(\hat{Z}_s))\,,\\
	n_s^{\epsilon,ij}&=g_{ij}^M(s,X_s^x,Y^{M,x}_s,Z^{M,x}_s)-g_{ij}^M(s,X_s^x,Y_s^{M,x'},Z_s^{M,x'})-c^{\epsilon,ij}_s\hat{Y}_s-d_s^{\epsilon,ij}\hat{Z_s}\,,\\
	k_s^{ij}&=g_{ij}^M(s,X_s^x,Y_s^{M,x'},Z_s^{M,x'})-g_{ij}^M(s,X_s^{x'},Y_s^{M,x'},Z_s^{M,x'})\,.
	\end{aligned}	
	\end{equation}
	It can be easily verified that $ a^\epsilon,  c^{\epsilon,ij}, m^\epsilon,h, n^{\epsilon,ij},k^{ij}$ are in $\mathbb{M}^2_G(0,T)$ and 
    $b^\epsilon,d^{\epsilon,ij}$ are in $\mathbb{M}^2_G(0,T;\mathbb{R}^d).$ By simple calculations, we obtain $|m^\epsilon|\leq (2C_2+2C_3(1+2M))\epsilon,$ 
	$|n^{\epsilon,ij}|\leq (C_2(1+d)+2C_3(1+2M))\epsilon$, $|b^\epsilon|\leq C_3(1+2M)$, $|d^{\epsilon,ij}|\leq C_3(1+2M)$ and $ a^\epsilon+2G(c^{\epsilon,ij})\leq -\mu\,.$ Let $\tilde{\mathbb{E}}^{b^\epsilon,d^{\epsilon,ij}}$ be the sublinear expectation induced by $b^{\epsilon}$ and $d^{\epsilon,ij}$. Then $B^{\epsilon}:=B-\int_0^\cdot b_u^{\epsilon}\,du-\int_0^\cdot d^{\epsilon,ij}_u\,d\langle B^i,B^j\rangle_u$ is a $G$-Brownian motion under $\tilde{\mathbb{E}}^{b^\epsilon,d^{\epsilon,ij}}$
	by \eqref{new_GBM}.
	It follows that
	\begin{equation}\begin{aligned}
	\hat{Y_s}+K_s^{M,x'}&=\hat{\Phi}+K_T^{M,x'}+\int_s^T (a_u^{\epsilon}\hat{Y}_u+m^{\epsilon}_u+h_u)\,du\\
    &\quad+\int_s^T(c_u^{\epsilon,ij}\hat{Y}_u+n^{\epsilon,ij}_u+k_u^{ij})\,d\langle B^{\epsilon,i},B^{\epsilon,j}\rangle_u\\
    &\quad-\int_s^T\hat{Z}_u\,dB^{\epsilon}_u-(K_T^{M,x}-K^{M,x}_s)\,.
	\end{aligned}\end{equation}    
	Let $\Gamma$ be a solution to the $G$-SDE
	\begin{equation}\begin{aligned}
	\Gamma_\cdot=1+\int_0^\cdot a_u^{\epsilon}\Gamma_u\,du+\int_0^\cdot c_u^{\epsilon,ij}\Gamma_u\,d\langle B^{\epsilon,i},B^{\epsilon,j}\rangle_u\,.
	\end{aligned}\end{equation}
	Applying It\^{o}'s formula to $(\hat{Y}+K^{M,x'})\Gamma$, we have
	\begin{equation} 
	\begin{aligned}
	\Gamma_s(\hat{Y}_s+K_s^{M,x'})&=\Gamma_T(\hat{\Phi}+K_T^{M,x'})+\int_s^T\Gamma_u(m_u^{\epsilon}+h_u)\,du-\int_s^Ta_u^{\epsilon} \Gamma_u K_u^{M,x'}\,du\\
    &\quad+\int_s^T \Gamma_u(n_u^{\epsilon,ij}+k_u^{ij})\,d\langle B^{\epsilon,i},B^{\epsilon,j}\rangle_u-\int_s^T\Gamma_uc_u^{\epsilon,ij}K_u^{M,x'}\,d\langle B^{\epsilon,i},B^{\epsilon,j}\rangle_u\\
    &\quad-\int_s^T\Gamma_u\hat{Z}_u\,dB^{\epsilon}_u-\int_s^T\Gamma_u\,dK_u^{M,x}\,.
	\end{aligned}
	\end{equation}
	Then
	$\int_0^\cdot \Gamma_u\hat{Z}_u\,dB^{\epsilon}_u$ is a symmetric $G$-martingale
	and $\int_0^\cdot \Gamma_u\,dK_u^{M,x}$ is a decreasing $G$-martingale
	by Lemmas \ref{int with respect to K is martingale} and \ref{K is MG in new sublinear}
	since $\Gamma\in \mathbb{S}^2_G(0,T)$ is positive and $K^{M,x}_T\in \mathbb{L}^p_G(\Omega_T)$ for every $p\geq 1$.
	It follows that
	\begin{equation}\begin{aligned}\label{eqn:gamma_K}
	&\quad\Gamma_s\hat{Y}_s+\Gamma_sK_s^{M,x'}\\&\leq\tilde{\mathbb{E}}_s^{b^\epsilon,d^{\epsilon,ij}}\Big[\Gamma_T\hat{\Phi}+\int_s^T\Gamma_u(m_u^{\epsilon}+h_u)\,du+\int_s^T \Gamma_u(n_u^{\epsilon,ij}+k_u^{ij})\,d\langle B^{\epsilon,i},B^{\epsilon,j}\rangle_u\Big]\\
	&\quad +\tilde{\mathbb{E}}_s^{b^\epsilon,d^{\epsilon,ij}}\Big[\Gamma_TK_T^{M,x'}-\int_s^Ta_u^{\epsilon} \Gamma_u K_u^{M,x'}\,du-\int_s^T\Gamma_uc_u^{\epsilon,ij}K_u^{M,x'}\,d\langle B^{\epsilon,i},B^{\epsilon,j}\rangle_u\Big]\,.
	\end{aligned}\end{equation}
	Observe that
	\begin{equation}\begin{aligned}
	\Gamma_sK_s^{M,x'}=\tilde{\mathbb{E}}_s^{b^\epsilon,d^{\epsilon,ij}}\Big[\Gamma_TK_T^{M,x'}-\int_s^Ta_u^{\epsilon} \Gamma_u K_u^{M,x'}\,du-\int_s^T\Gamma_uc_u^{\epsilon,ij}K_u^{M,x'}\,d\langle B^{\epsilon,i},B^{\epsilon,j}\rangle_u\Big]\,,
	\end{aligned}\end{equation}
	which is obtained from \eqref{linear BSDE soln} and the fact that
	a triplet $(K^{M,x'},0,K^{M,x'})$ is the unique solution to the linear  $G$-BSDE
	\begin{equation} 
	\begin{aligned}
	\tilde{Y}_s&=K_T^{M,x'}+\int_s^Ta_u^{\epsilon}\tilde{Y}_u-a_u^{\epsilon}K_u^{M,x'}\,du+\int_s^Tc_u^{\epsilon,ij}\tilde{Y}_u-c_u^{\epsilon,ij}K_u^{M,x'}\,d\langle B^{\epsilon,i},B^{\epsilon,j}\rangle_u\\&\quad -\int_s^T\tilde{Z}_u\,dB^{\epsilon}_u-(\tilde{K}_T-\tilde{K}_s)\,.
	\end{aligned}
	\end{equation}
	Thus, \eqref{eqn:gamma_K} becomes
	\begin{equation}
	\begin{aligned}
	\hat{Y}_s\leq\tilde{\mathbb{E}}_s^{b^\epsilon,d^{\epsilon,ij}}\Big[\frac{\Gamma_T}{\Gamma_s}\hat{\Phi}+\int_s^T\frac{\Gamma_u}{\Gamma_s}(m_u^{\epsilon}+h_u)\,du+\int_s^T \frac{\Gamma_u}{\Gamma_s}(n_u^{\epsilon,ij}+k_u^{ij})\,d\langle B^{\epsilon,i},B^{\epsilon,j}\rangle_u\Big]\,.    
	\end{aligned}
	\end{equation} Observe that
  \begin{align}
        \int_s^T \frac{\Gamma_u}{\Gamma_s}(n_u^{\epsilon,ij}+k_u^{ij})\,d\langle B^{\epsilon,i},B^{\epsilon,j}\rangle_u&\le\int_s^T2\frac{\Gamma_u}{\Gamma_s}G( n_u^{\epsilon,ij}+k_u^{ij})\,du\\
        &\le\int_s^T2 \frac{\Gamma_u}{\Gamma_s} G(D_u)\,du\\
        &\le \int_s^T \overline{\sigma}^2\frac{\Gamma_u}{\Gamma_s}  \Big(|n_u^{\epsilon,ij}|+|k_u^{ij}|\Big)\,du\,,
    \end{align}
    where $D_u$ is the diagonal matrix with entries $D_u^{ii}=\sum_{j=1}^d|n_u^{\epsilon,ij}|+|k_u^{ij}|$ for $i=1,2,\cdots d$.  
    From
	$|b^\epsilon|, |d^{\epsilon,ij}|\leq C_3(1+2M)$,
	Theorem \ref{SDE sublinear property}, the Lipschitz property of $\hat{\Phi},h,k^{ij}$ and the inequality   
	\begin{equation}\begin{aligned}
	\frac{\Gamma_u}{\Gamma_s}&=e^{\int_s^ua_v^{\epsilon}\,dv+\int_s^u c_v^{\epsilon,ij}\,d\langle B^{\epsilon,i},B^{\epsilon,j}\rangle_v}=e^{\int_s^ua_v^{\epsilon}+2G(c_v^{\epsilon,ij})\,dv+\int_s^u c_v^{\epsilon,ij}\,d\langle B^{\epsilon,i},B^{\epsilon,j}\rangle_v -\int_s^u2G(c_v^{\epsilon,ij})\,dv}\\
    &\leq e^{-\mu(u-s)}\textnormal{ for }s\leq u\,,
	\end{aligned}\end{equation}

	we have
	\begin{equation}\begin{aligned}
	\hat{Y_s}&\leq  \Big(C_{\Phi}+\frac{C_1(1+\overline{\sigma}^2)}{\eta+\mu-(1+\overline{\sigma}^2)C_{\sigma}C_3(1+2M)}\Big)|X_s^x-X_s^{x'}|\\&
    \quad+\tilde{\mathbb{E}}_s^{b^\epsilon,d^{\epsilon,ij}}\Big[\int_s^T\frac{\Gamma_u}{\Gamma_s}m_u^{\epsilon}\,du+\int_s^T \frac{\Gamma_u}{\Gamma_s}n_u^{\epsilon,ij}\,d\langle B^{\epsilon,i}, B^{\epsilon,j}\rangle_u\Big]\,.
	\end{aligned}\end{equation}
	From   $|m^\epsilon|,|n^{\epsilon,ij}|\leq (C_2(1+d)+2C_3(1+2M))\epsilon$, it follows that  
	\begin{equation}\begin{aligned}
	&\tilde{\mathbb{E}}_s^{b^\epsilon,d^{\epsilon,ij}}\Big[\int_s^T\frac{\Gamma_u}{\Gamma_s}m_u^{\epsilon}\,du+\int_s^T \frac{\Gamma_u}{\Gamma_s}n_u^{\epsilon,ij}\,d\langle B^{\epsilon,i},B^{\epsilon,j}\rangle_u\Big]\\
    &\leq  (1+\overline{\sigma}^2)(C_2(1+d)+2C_3(1+2M))\epsilon\int_s^Te^{-\mu (u-s)}\,du\,.
	\end{aligned}
    \end{equation}
	Letting $\epsilon \rightarrow 0$, we have $Y_s^{M,x}-Y_s^{M,x'}\leq L_1|X_s^x-X_s^{x'}|$, 
	where $$L_1:=C_{\Phi}+\frac{C_1(1+\overline{\sigma}^2)}{\eta+\mu-(1+\overline{\sigma}^2)C_{\sigma}C_3(1+2M)}\,.$$
	Using the same method, $Y_s^{M,x'}-Y_s^{M,x}\leq L_1|X_s^x-X_s^{x'}|,$
	and therefore 
	\begin{equation}\begin{aligned}\label{eqn: lip BSDE }
	|u^M(t,x)-u^M(t,x')|=|Y_t^{M,t,x'}-Y_t^{M,t,x}|=|Y_t^{M,x'}-Y_t^{M,x}|\leq L_1|x-x'| 
	\end{aligned}\end{equation} 
	for all $x,x'\in \mathbb{R}^m.$
	
    Second we show that the function $u^M=u^M(t,x)$ is $\frac{1}{2}$-H\"{o}lder continuous with respect to $t$.   We define the   auxiliary processes  
	\begin{equation} 
	\begin{aligned}
	\tilde{a}_s^{\epsilon}&=l^\epsilon({Y_s^{M,x}})\frac{f^M(s,X^x_s,Y^{M,x}_s,Z^{M,x}_s)-f^M(s,X^x_s,0,Z^{M,x}_s)}{{Y}^{M,x}_s} -\frac{\mu}{1+\underline{\sigma}^2}(1-l^\epsilon({Y_s}^{M,x})) \,,\\
	\tilde{b}_s^{\epsilon}&=l^\epsilon({Z}^{M,x}_s)\frac{f^M(s,X^x_s,0,Z^{M,x}_s)-f^M(s,X^x_s,0,0)}{|{Z}^{M,x}_s|^2}{Z}^{M,x}_s\\
    &\quad+C_3(1+2M)(1-l^\epsilon({Z}_s^{M,x}))\,,\\
	\tilde{m}_s^{\epsilon}&=f^M(s,X^x_s,Y^{M,x}_s,Z^{M,x}_s)-f^M(s,X^x_s,0,0)-\tilde{a}^{\epsilon}_s{Y}^{M,x}_s-\tilde{b}_s^{\epsilon}{Z_s^{M,x}}\,,\\
	\tilde{c}_s^{\epsilon,ij}&=l^\epsilon({Y_s^{M,x}})\frac{g_{ij}^M(s,X^x_s,Y^{M,x}_s,Z^{M,x}_s)-g_{ij}^M(s,X^x_s,0,Z^{M,x}_s)}{{Y}^{M,x}_s} \\&\quad-\frac{\mu}{(1+\underline{\sigma}^2)d}(1-l^\epsilon({Y_s}^{M,x})) \,,\\
	\tilde{d}_s^{\epsilon,ij}&=l^\epsilon({Z}^{M,x}_s)\frac{g_{ij}^M(s,X^x_s,0,Z^{M,x}_s)-g_{ij}^M(s,X^x_s,0,0)}{|{Z}^{M,x}_s|^2}{Z}^{M,x}_s\\
	&\quad+\frac{C_3(1+2M)}{d^2}(1-l^\epsilon({Z}_s^{M,x}))\,,\\
	\tilde{n}_s^{\epsilon,ij}&=g_{ij}^M(s,X^x_s,Y^{M,x}_s,Z^{M,x}_s)-g_{ij}^M(s,X^x_s,0,0)-\tilde{c}^{\epsilon,ij}_s{Y}^{M,x}_s-\tilde{d}_s^{\epsilon,ij}{Z_s^{M,x}}\,.\\
	\end{aligned}	
	\end{equation}
	It can be easily proven that $ \tilde{a}^\epsilon,  \tilde{c}^{\epsilon,ij}, \tilde{m}^{\epsilon},\tilde{n}^{\epsilon,ij}\in\mathbb{M}^2_G(0,T)$, $\tilde{b}^{\epsilon},\tilde{d}^{\epsilon,ij} \in\mathbb{M}^2_G(0,T;\mathbb{R}^d)$
	and  
	$ |\tilde{m}^\epsilon|\leq (2C_2+2C_3(1+2M))\epsilon$,
	$|\tilde{n}^{\epsilon,ij}|\leq (C_2(1+d)+2C_3(1+2M))\epsilon$, $|\tilde{b}^\epsilon|\leq C_3(1+2M)$, $|\tilde{d}^{\epsilon,ij}|\leq C_3(1+2M)$, $ \tilde{a}^\epsilon+2G(\tilde{c}^{\epsilon,ij})\leq -\mu.$ Let $\tilde{\mathbb{E}}^{\tilde{b}^\epsilon,\tilde{d}^{\epsilon,ij}}$ be the sublinear expectation induced by $\tilde{b}^{\epsilon}$ and $\tilde{d}^{\epsilon,ij}$. By \eqref{new_GBM}, $\tilde{B}^{\epsilon}:=B-\int_0^\cdot \tilde{b}_u^{\epsilon}\,du-\int_0^\cdot \tilde{d}^{\epsilon,ij}_u\,d\langle B^i,B^j\rangle_u$ is a $G$-Brownian motion under $\tilde{\mathbb{E}}^{\tilde{b}^\epsilon,\tilde{d}^{\epsilon,ij}}$. 
	Since $Y^{M,x}_{s+\delta}=u^M(s+\delta,X^x_{s+\delta})$ for any $0\leq \delta\leq T-s$,
	\eqref{eqn: truncation M} can be written as
	\begin{equation}
    \begin{aligned}
	u^M(s,X^x_s)&=u^M(s+\delta,X^x_{s+\delta})+\int_s^{s+\delta}  \tilde{a}_u^{\epsilon}Y_u^{M,x}+\tilde{m}^{\epsilon}_u+f^M(u,X^x_u,0,0)\,du\\
    &\quad+\int_s^{s+\delta}  \tilde{c}_u^{\epsilon,ij}Y_u^{M,x}+\tilde{n}^{\epsilon,ij}_u+g_{ij}^M(u,X^x_u,0,0)\,d\langle \tilde{B}^{\epsilon,i},\tilde{B}^{\epsilon,j}\rangle_u\\
    &\quad-\int_s^{s+\delta} Z_u^{M,x}\,d\tilde{B}^\epsilon_u-(K_{s+\delta}^{M,x}-K_s^{M,x})\,.
	\end{aligned}
    \end{equation}
	By \eqref{linear BSDE soln}, it follows that
	\begin{equation}\begin{aligned}
	u^M(s,X^x_s)=\tilde{\mathbb{E}}_s^{\tilde{b}^{\epsilon},\tilde{d}^{\epsilon,ij}}\Big[&\frac{\tilde{\Gamma}_{s+\delta}}{\tilde{\Gamma}_s}u^M({s+\delta},X^x_{s+\delta})+\int_s^{s+\delta}   \frac{\tilde{\Gamma}_u}{\tilde{\Gamma}_s}(\tilde{m}_u^\epsilon+f^M(u,X^x_u,0,0))\,du\\&+\int_s^{s+\delta}\frac{\tilde{\Gamma}_u}{\tilde{\Gamma}_s}(\tilde{n}_u^{\epsilon,ij}+g_{ij}^M(u,X^x_u,0,0))\,d\langle \tilde{B}^{\epsilon,i},\tilde{B}^{\epsilon,j} \rangle_u\Big]\,,
	\end{aligned}\end{equation}
	where $\tilde{\Gamma}$ is a solution to the $G$-SDE $\tilde{\Gamma}_\cdot=1+\int_0^\cdot\tilde{a}_u^{\epsilon}\tilde{\Gamma}_u\,du+\int_0^\cdot\tilde{c}_u^{\epsilon,ij}\tilde{\Gamma}_u\,d\langle \tilde{B}^{\epsilon,i},\tilde{B}^{\epsilon,j}\rangle_u$. Then
	\begin{equation}\begin{aligned}
	&u^M(s,X^x_s)-u^M(s+\delta,X^x_s)\\&=\tilde{\mathbb{E}}_s^{\tilde{b}^{\epsilon},\tilde{d}^{\epsilon,ij}}\Big[\Big(\frac{\tilde{\Gamma}_{s+\delta}}{\tilde{\Gamma}_s}-1\Big)u^M({s+\delta},X^x_{s+\delta})+\Big(u^M({s+\delta},X^x_{s+\delta})-u^M(s+\delta,X^x_s)\Big)\\
	&\quad\quad\quad\quad\quad+\int_s^{s+\delta}   \frac{\tilde{\Gamma}_u}{\tilde{\Gamma}_s}(\tilde{m}_u^\epsilon+f^M(u,X^x_u,0,0))\,du\\
    &\quad\quad\quad\quad\quad+\int_s^{s+\delta}\frac{\tilde{\Gamma}_u}{\tilde{\Gamma}_s}(\tilde{n}_u^{\epsilon,ij}+g_{ij}^M(u,X^x_u,0,0))\,d\langle \tilde{B}^{\epsilon,i},\tilde{B}^{\epsilon,j} \rangle_u\Big]\,.
	\end{aligned}\end{equation}
	We  obtain
	\begin{equation}\begin{aligned}\label{eqn: 1/2 hodler conti }
	|u^M(s,X^x_s)-u^M(s+\delta,X^x_s)|\leq L_1M_\sigma\overline{\sigma}\delta^{\frac{1}{2}}+L_2(1+|X^x_s|)\delta
	\end{aligned}\end{equation}
	with a constant $L_2$ depending only on  $\mu, C_1,C_3,M,M_{\sigma},\overline{\sigma},|\!|b(s,0)|\!|,|\!|h_{ij}(s,0)|\!|,T$ 
	by estimating the terms 
	\begin{equation}\begin{aligned}
	&\tilde{\mathbb{E}}_s^{\tilde{b}^{\epsilon},\tilde{d}^{\epsilon,ij}}\Big[|u^M({s+\delta},X^x_{s+\delta})-u^M(s+\delta,X^x_s)|\Big]
	\leq L_1M_\sigma\overline{\sigma}\delta^{\frac{1}{2}}+L_1L(1+|X^x_s|)\delta \,,\\
	& \tilde{\mathbb{E}}_s^{\tilde{b}^{\epsilon},\tilde{d}^{\epsilon,ij}}\Big[\Big(\frac{\tilde{\Gamma}_{s+\delta}}{\tilde{\Gamma}_s}-1\Big)|u^M({s+\delta},X^x_{s+\delta})|\Big]\leq L(1+|X^x_s|)\delta\,,\\
	&\tilde{\mathbb{E}}_s^{\tilde{b}^{\epsilon},\tilde{d}^{\epsilon,ij}}\Big[\int_s^{s+\delta}  \Big| \frac{\tilde{\Gamma}_u}{\tilde{\Gamma}_s}(\tilde{m}_u^\epsilon+f^M(u,X^x_u,0,0))\Big|\,du\Big]\leq L(1+|X^x_s|)\delta \,,\\
	&\tilde{\mathbb{E}}_s^{\tilde{b}^{\epsilon},\tilde{d}^{\epsilon,ij}}\Big[\Big|\int_s^{s+\delta}\frac{\tilde{\Gamma}_u}{\tilde{\Gamma}_s}(\tilde{n}_u^{\epsilon,ij}+g_{ij}^M(u,X^x_u,0,0))\,d\langle \tilde{B}^{\epsilon,i},\tilde{B}^{\epsilon,j} \rangle_u\Big|\Big]\leq L(1+|X^x_s|)\delta\,,
	\end{aligned}\end{equation}
	which are derived from Theorem \ref{SDE sublinear property}, \eqref{eqn: lip BSDE } and \cite[Proposition 4.2]{hu2014comparison}.  
	
We now prove that 
	the triplet $(Y^M,Z^M,K^M):=(Y^{M,x},Z^{M,x},K^{M,x})$ is a solution to \eqref{QBSDE}. 
	From \eqref{eqn: lip BSDE } and \eqref{eqn: 1/2 hodler conti },  it follows that
	\begin{equation}\begin{aligned}
	\underline{\sigma}^2\int_s^t|Z_u^M|^2\,du&\leq \int_s^t Z_u^{M,i} Z_u^{M,j}\,d\langle B^i,B^j\rangle_u =\lim_{|\!|\pi|\!|\rightarrow 0}\sum_{i=0}^{n-1}(u^M(t_{i+1},X_{t_{i+1}})-u^M(t_{i},X_{t_i}))^2\\
	&\leq 2\lim_{|\!|\pi|\!|\rightarrow 0}\sum_{i=0}^{n-1}(u^M(t_{i+1},X_{t_{i+1}})-u^M(t_{i+1},X_{t_i}))^2\\
	&\quad+2\lim_{|\!|\pi|\!|\rightarrow 0}\sum_{i=0}^{n-1}(u^M(t_{i+1},X_{t_{i}})-u^M(t_{i},X_{t_i}))^2  \\
	&\leq 4L_1^2M_\sigma^2\overline{\sigma}^2(t-s)\,,
	\end{aligned}\end{equation}
    where $Z^{M,i}$ is the $i$-th component of $Z^M$ and $\pi$ is a partition of $[s,t]$ with mesh size $|\!|\pi|\!|$.
	Thus, $Z^M$ is bounded by $M,$ that is, 
	\begin{equation}\begin{aligned}
	|Z^M|\leq 2L_1\frac{\overline{\sigma}M_\sigma}{\underline{\sigma}}\leq M\,.
	\end{aligned}\end{equation}
	Since 
	$f(s,X_s,Y_s^M,Z_s^M)=f^M(s,X_s,Y_s^M,Z_s^M)$ and 
	$g_{ij}(s,X_s,Y_s^M,Z_s^M)=g_{ij}^M(s,X_s,Y_s^M,Z_s^M),$
	combining with
	\eqref{eqn: truncation M}, we conclude that  
	$(Y^M,Z^M,K^M)$ is a solution to \eqref{QBSDE}, moreover $u(t,x):=Y_t^{M,t,x}=u^M(t,x)$ satisfies \eqref{eqn:Lip} by \eqref{eqn: lip BSDE } and \eqref{eqn: 1/2 hodler conti }.
	The case with $\mu>0$ can be proven similarly.

	The uniqueness of solutions can be proven as follows. Suppose $(Y,Z,K)$ and $(Y',Z',K')$ are two solutions to \eqref{QBSDE} such that $Z$ and $Z'$ are bounded. 
	Choosing $M>|Z|+|Z'|$ and defining
    $f^M(s,x,y,z):=f(s,x,y,z^M)$, $g_{ij}^M(s,x,y,z):=g_{ij}(s,x,y,z^M)$, where $z^{M}:=\frac{|z|\wedge M}{|z|}z$ with the convention $\frac{\,0\,}{0}=0$, 
	both
	$(Y,Z,K)$ and $(Y',Z',K')$
	are also solutions to 
	the Lipschitz  $G$-BSDE
	\begin{equation}\begin{aligned}
	Y_s=\Phi(X_T)&+\int_s^Tf^M(u,X_u,Y_u,Z_u) \,du+\int_s^Tg_{ij}^M(u,X_u,Y_u,Z_u) \,d\langle B^i,B^j\rangle_u \\&-\int_s^TZ_u\,dB_u-(K_T-K_s)\,,\;0\le s\le T\,.
	\end{aligned}\end{equation}
	By Theorem \ref{linear FeyKac formula}, the Lipschitz  $G$-BSDE has a unique solution, leading to the conclusion that $(Y,Z,K)=(Y',Z',K').$
\end{proof}

We now present the corresponding Feynman-Kac formula.
A PDE representation of
$u(t,x)=Y_t^{t,x}$ is an important issue.
\cite{hu2014comparison}
have studied this
for Lipschitz $f$ and $g_{ij}$. The following theorem  describes
the Feynman-Kac formula for quadratic $f$ and $g_{ij}$.

\begin{thm}\label{QBSDE feykac formula}
	Let Assumption \ref{assumption} hold and $(Y^{t,x},Z^{t,x},K^{t,x})$ be a solution to \eqref{QBSDE} such that $Z^{t,x}$ is bounded. Define $u(t,x)=Y_t^{t,x}$ for $(t,x)\in [0,T]
	\times\mathbb{R}^m.$ Then $u$ is a viscosity solution to the PDE
	\begin{equation}\begin{aligned}
	\begin{cases}
	\partial_t u+F(t,x,u,D_xu,D_x^2u)=0\,, \\
	u(T,x)=\Phi(x)\,,
	\end{cases}
	\end{aligned}\end{equation}
	where 
	\begin{equation}\begin{aligned}
	F(t,x,u,D_xu,D_x^2u)&=G(H(t,x,u,D_xu,D_x^2u))+\langle b(t,x),D_xu\rangle\\
	&\quad+f(t,x,u,\langle\sigma^1(t,x),D_xu\rangle,\cdots,\langle\sigma^d(t,x),D_xu\rangle)\,,\\
	H_{ij}(t,x,u,D_xu,D_x^2u)&=\langle D^2_xu\sigma^i(t,x),\sigma^j(t,x)\rangle+2\langle D_xu,h_{ij}(t,x)\rangle\\
	&\quad+2g_{ij}(t,x,u,\langle\sigma^1(t,x),D_xu\rangle,\cdots,\langle\sigma^d(t,x),D_xu\rangle) \,,
	\end{aligned}\end{equation}
	and $\sigma^i$  is the $i$-th column vector of $\sigma$ for $i=1,2,\cdots,d.$
\end{thm}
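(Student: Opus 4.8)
The plan is to reduce the quadratic problem to the Lipschitz Feynman--Kac result of Theorem \ref{linear FeyKac formula} by a truncation that is inactive along the solution, and then to transfer the viscosity property from the truncated PDE back to the original one. The two structural facts I would exploit are that $Z^{t,x}$ is bounded by hypothesis and that, by Theorem \ref{Finite QBSDE exis and unique}, the map $u(t,\cdot)$ is Lipschitz uniformly in $t$ with constant $L_1$, so $|D_xu|\le L_1$ wherever $u$ is differentiable.

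First I would fix a truncation level $M>0$ large enough that $|Z^{t,x}|\le M$ and $\sqrt{d}\,M_\sigma L_1\le M$, and set $f^M(s,x,y,z):=f(s,x,y,z^M)$ and $g_{ij}^M(s,x,y,z):=g_{ij}(s,x,y,z^M)$ with $z^M:=\frac{|z|\wedge M}{|z|}z$, exactly as in the proof of Theorem \ref{Finite QBSDE exis and unique}. Since the truncation map $z\mapsto z^M$ is $1$-Lipschitz and satisfies $|z^M|\le M$, Assumption \ref{assumption}(iii) shows that $f^M,g_{ij}^M$ are uniformly Lipschitz in $(x,y,z)$, so they fall under the hypotheses of Theorem \ref{linear FeyKac formula}. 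Because $|Z^{t,x}|\le M$, one has $f(s,X_s,Y_s,Z_s)=f^M(s,X_s,Y_s,Z_s)$ and $g_{ij}(s,X_s,Y_s,Z_s)=g_{ij}^M(s,X_s,Y_s,Z_s)$ along the solution, so $(Y^{t,x},Z^{t,x},K^{t,x})$ also solves the truncated Lipschitz $G$-BSDE. By the uniqueness part of Theorem \ref{linear FeyKac formula}, $u(t,x)=Y^{t,x}_t$ coincides with the value function of the truncated problem, and hence $u$ is a viscosity solution of
\begin{equation}
\begin{cases}
\partial_t u+F^M(t,x,u,D_xu,D_x^2u)=0\,,\\
u(T,x)=\Phi(x)\,,
\end{cases}
\end{equation}
where $F^M$ is obtained from $F$ by replacing $f,g_{ij}$ with $f^M,g_{ij}^M$.

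It remains to upgrade this to a viscosity solution of the untruncated equation, and this matching step is the only delicate point. For the subsolution property, let $\varphi$ be smooth with $u-\varphi$ attaining a local maximum at $(t_0,x_0)$; then $D_x\varphi(t_0,x_0)$ lies in the spatial superdifferential $D^+_x u(t_0,x_0)$, which for an $L_1$-Lipschitz function is contained in the closed ball of radius $L_1$, so $|D_x\varphi(t_0,x_0)|\le L_1$. Consequently the argument $z=(\langle\sigma^i(t_0,x_0),D_x\varphi(t_0,x_0)\rangle)_{i=1}^d$ satisfies $|z|\le\sqrt{d}\,M_\sigma L_1\le M$, whence $z^M=z$ and therefore $F^M(t_0,x_0,\cdot)=F(t_0,x_0,\cdot)$ at this test point; the supersolution case is symmetric using the subdifferential $D^-_x u$. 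Since the truncation is inactive at every touching point, the viscosity subsolution and supersolution inequalities for $F^M$ are identical to those for $F$, so $u$ is a viscosity solution of the stated PDE. I expect this final gradient-control argument---showing that the Lipschitz bound on $u$ forces every admissible test gradient below the truncation threshold---to be the main obstacle, as it is precisely what guarantees that passing through the Lipschitz theory does not distort the nonlinearity $F$ along the directions that matter.
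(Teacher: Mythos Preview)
Your approach is correct and genuinely different from the paper's. You fix a single global truncation level $M$, invoke Theorem~\ref{linear FeyKac formula} once to conclude that $u$ solves the truncated PDE in the viscosity sense, and then use the uniform Lipschitz bound $|u(t,x)-u(t,x')|\le L_1|x-x'|$ from Theorem~\ref{Finite QBSDE exis and unique} to show that at any touching point the test gradient satisfies $|D_x\varphi(t_0,x_0)|\le L_1$, hence $|(\langle\sigma^i,D_x\varphi\rangle)_i|\le M$ and the truncation is inactive. The paper instead works test function by test function: for each smooth $\psi\ge u$ with $\psi(t,x)=u(t,x)$ it chooses $M$ depending on $D_x\psi$, builds an auxiliary truncated $G$-BSDE on $[t,t+\delta]$ with terminal datum $\psi(t+\delta,X_{t+\delta}^{t,x})$, applies It\^o's formula to $\psi(s,X_s^{t,x})$, uses the comparison principle, and lets $\delta\to0$ to derive the viscosity inequality directly. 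Your route is shorter and leans on the black-box Lipschitz Feynman--Kac result plus the elementary fact that super/subdifferentials of an $L_1$-Lipschitz function lie in the $L_1$-ball; the paper's route is more self-contained in that it reproves the viscosity property from scratch and never needs a uniform-in-$(t,x)$ bound on $Z^{t,x}$ or on the test gradient, since $M$ is tailored to each $\psi$. Both are valid; yours is the cleaner argument once Theorem~\ref{Finite QBSDE exis and unique} is in hand.
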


\begin{proof}
	It is evident that the function  
	$u$ is continuous by Theorem \ref{Finite QBSDE exis and unique}.
	We only prove $u$ is a viscosity subsolution.     
    For fixed $(t,x)\in(0,T)\times \mathbb{R}^m$, let $\psi$ be a smooth function on $(0,T)\times \mathbb{R}^m$ such that $\psi\geq u$, $\psi(t,x)=u(t,x)$ and $|\partial_{t,x_i}^2 \psi|+|\partial_{x_i} \psi|+|\partial_{x_i,x_j}^2 \psi|+|\partial_{x_i,x_j,x_k}^3 \psi|$ is bounded. Choose $M>|Z^{t,x}|
	+|(\langle \sigma^1(\cdot,{\cdot}),D_x\psi(\cdot,{\cdot})\rangle, \cdots ,\langle \sigma^d(\cdot,{\cdot}),D_x\psi(\cdot,{\cdot})\rangle)|$ and define 
	\begin{equation}\begin{aligned}
	f^M(s,x,y,z):=f(s,x,y,z^M),\quad g_{ij}^M(s,x,y,z):=g_{ij}(s,x,y,z^M)\,,
	\end{aligned}\end{equation}
    where $z^M:= \frac{|z|\wedge M}{|z|}z$ with the convention $\frac{\,0\,}{0}=0$.
	Consider the $G$-BSDE  
	\begin{equation}
    \begin{aligned}	\tilde{Y}_s&=\psi(t+\delta,X_{t+\delta}^{t,x})+\int_s^{t+\delta}f^M(u,X_u^{t,x},\tilde{Y}_u,\tilde{Z}_u)\,du\\
    &\quad+\int_s^{t+\delta} g^M_{ij}(u,X_u^{t,x},\tilde{Y}_u,\tilde{Z}_u)\,d\langle B^i,B^j\rangle_u
	-\int_s^{t+\delta}\tilde{Z}_u\,dB_u-(\tilde{K}_{t+\delta}-\tilde{K}_s)
	\end{aligned}
    \end{equation}
	for $t\le s\le t+\delta$. Since this is a Lipschitz $G$-BSDE, by Theorem \ref{linear FeyKac formula}, there exists a unique solution $(\tilde{Y},\tilde{Z},\tilde{K})$.
	Define $\hat{Y}_s:=\tilde{Y}_s-\psi(s,X_s^{t,x}),$
	$\hat{Z}_s:=\tilde{Z}_s-(\langle \sigma^1(s,X_s^{t,x}),D_x\psi(s,X_s^{t,x})\rangle, \cdots, \langle \sigma^d(s,X_s^{t,x}),D_x\psi(s,X_s^{t,x})\rangle),$ $\hat{K}_s:=\tilde{K}_s$ for $t\le s \le t+\delta$. Applying It\^{o}'s formula, we have
	\begin{align}
	\hat{Y}_s&=\int_s^{t+\delta} F_1(u,X^{t,x}_u,\hat{Y}_u,\hat{Z}_u)\,du+\int_s^{t+\delta} F_2^{ij}(u,X^{t,x}_u,\hat{Y}_u,\hat{Z}_u)\,d\langle B^i,B^j\rangle_u-\int_s^{t+\delta} \hat{Z}_u\,dB_u\\
    &\quad-(\hat{K}_{t+\delta}-\hat{K}_s)\,,
	\end{align}
where   \begin{align}
	&\quad\; F_1(s,x,y,z)\\
	&:=f^M(s,x,y+\psi(s,x),z+(\langle \sigma^1(s,x),D_x\psi(s,x)\rangle, \cdots ,\langle \sigma^d(s,x),D_x\psi(s,x)\rangle)) \\&\quad 
	+\partial_t\psi(s,x)+\langle b(s,x),D_x\psi(s,x)\rangle	
	\end{align}
and
	\begin{align} 
	&\quad\; F_2^{ij}(s,x,y,z)\\
	&:=g^M_{ij}(s,x,y+\psi(s,x),z+(\langle \sigma^1(s,x),D_x\psi(s,x)\rangle, \cdots ,\langle \sigma^d(s,x),D_x\psi(s,x)\rangle))\\
	&\quad +\langle D_x\psi(s,x),h_{ij}(s,x)\rangle+\frac{1}{2}\langle D_x^2\psi(s,x)\sigma^i(s,x),\sigma^j(s,x)\rangle\,.
	\end{align} 
	We compare $(\hat{Y},\hat{Z},\hat{K})$ with a solution $(\overline{Y},\overline{Z},\overline{K})$ to the $G$-BSDE
	
	\begin{align}
	\overline{Y}_s&=\int_s^{t+\delta}F_1(u,x,\overline{Y}_u,\overline{Z}_u)\,du+\int_s^{t+\delta}F_2^{ij}(u,x,\overline{Y}_u,\overline{Z}_u)\,d\langle B^i,B^j\rangle_u-\int_s^{t+\delta}\overline{Z}_u\,dB_u\\
    &\quad-(\overline{K}_{t+\delta}-\overline{K}_s)
	\end{align}
	for $t\le s \le t+\delta$.
	It can be easily verified that 
	\begin{align}
	\overline{Y}_s&=\int_s^{t+\delta}F_1(u,x,\overline{Y}_u,0)+2G(F_2(u,x,\overline{Y}_u,0))\,du\,,\\
	\overline{K}_s&=\int_t^s F_2^{ij}(u,x,\overline{Y}_u,0)\,d\langle B^i,B^j\rangle_u-\int_t^s2G(F_2(u,x,\overline{Y}_u,0))\,du\,,\\
	\overline{Z}_s&=0
	\end{align}
	with $F_2(u,x,y,0):=(F_2^{ij}(u,x,y,0))_{i,j=1}^d$ is a solution to the above $G$-BSDE. In the following, $L>0$ is a generic constant depending only on $C_1,C_2,C_3,C_\sigma, \psi, M_\sigma$  and may differ line by line. Since the functions $F_1,F_2^{ij}$ are Lipschitz with respect to $z$ by \cite[Proposition 3.8]{hu2014backward}, 
    \begin{align}
	&\quad|\hat{Y}_t-\overline{Y}_t|^2
	\le \hat{\mathbb{E}}[\sup_{t\le s\le t+\delta}|\hat{Y}_s-\overline{Y}_s|^2]\\
	&\le L\Big( \Big(\hat{\mathbb{E}}\Big[\sup_{t\le s\le t+\delta} \hat{\mathbb{E}}_s\Big[\Big(\int_t^{t+\delta} \hat{F}_u\,du\Big)^4\Big]\Big] \Big)^{\frac{1}{2}} + \hat{\mathbb{E}}\Big[\sup_{t\le s\le t+\delta} \hat{\mathbb{E}}_s\Big[\Big(\int_t^{t+\delta} \hat{F}_u\,du\Big)^4\Big]\Big]   \Big)\,,
	\end{align}
	where $\hat{F}_u=|F_1(u,X^{t,x}_u,\overline{Y}_u,0)-F_1(u,x,\overline{Y}_u,0)|+|F_2^{ij}(u,X^{t,x}_u,\overline{Y}_u,0)-F_2^{ij}(u,x,\overline{Y}_u,0)|$.
	Since $\hat{F}_u\le L|X^{t,x}_u-x|$ for $t\le u\le t+\delta$, we obtain $|\hat{Y}_t-\overline{Y}_t|\le L(1+|x|^{12})\delta^{\frac{3}{2}}$ for $\delta\leq 1$ by estimating the term
	\begin{equation}\begin{aligned}
	\hat{\mathbb{E}}\left[\sup_{t\leq s\leq t+\delta}\hat{\mathbb{E}}_s\left[\left(\int_t^{t+\delta} \hat{F}_u\,du\right)^4\right]\right]     &\leq\delta \hat{\mathbb{E}} \left[\sup_{t\le s\le t+\delta} \hat{\mathbb{E}}_s\left[ 
	\sup_{t\leq u\leq t+\delta}L|X_u^{t,x}-x|^4\right]\right]\\
    &\leq L(1+|x|^{12})\delta^3\,, 
	\end{aligned}\end{equation}
	which is derived from Theorems \ref{similar doob maximal} and \ref{SDE sublinear property}.
	Since the terminal conditions satisfy $\psi(t+\delta,X_{t+\delta}^{t,x})\ge u(t+\delta,X_{t+\delta}^{t,x}),$ by the comparison principle, 
	we have 
	$\tilde{Y}_t\geq u(t,x)$, and thus $\hat{Y}_t\geq 0$. It follows that
	\begin{equation}\begin{aligned}
	-L(1+|x|^{12})\delta^{\frac{1}{2}}\leq \frac{1}{\delta}(\overline{Y}_t-\hat{Y}_t)\leq\frac{1}{\delta} \overline{Y}_t= \frac{1}{\delta}\int_t^{t+\delta}F_1(u,x,\overline{Y}_u,0)+2G(F_2(u,x,\overline{Y_u},0))\,du\,.
	\end{aligned}\end{equation}
	Sending $\delta\rightarrow 0$, we obtain $u=u(t,x)$ is a viscosity subsolution.
\end{proof}

\begin{remark}\label{remark:finite_BSDEs}
	Theorems \ref{Finite QBSDE exis and unique} and \ref{QBSDE feykac formula} differ from existing results in the literature in several aspects. To apply the results of \cite{hu2018quadratic} to our setting, a boundedness condition is required: the terminal condition $\omega \mapsto \xi(\omega)$ and the drivers
	$$
	(s, \omega) \mapsto f(s, X_s(\omega), 0, 0), \quad (s, \omega) \mapsto g_{ij}(s, X_s(\omega), 0, 0)
	$$
	must be bounded. Additionally, a pathwise continuity condition must hold: there exists a modulus of continuity $w$ such that
	$$
	|f(s, X_s(\omega), y, z) - f(s', X_{s'}(\omega'), y, z)| \le w\Big( |s - s'| + \sup_{u \in [0, T]} |\omega_{u\wedge s} - \omega_{u'\wedge s'}| \Big)
	$$
	for all $s, s' \in [0,T]$, $y \in \mathbb{R}$, and $z \in \mathbb{R}^d$. This requirement is  quite  restrictive and excludes even simple cases such as $f(s, x, y, z) = x$ since the solution map $\omega \mapsto X(\omega)$ of an SDE is typically not continuous under the supremum norm topology.
	In contrast, our result does not require these boundedness or pathwise continuity assumptions.
	Similarly, the framework in \cite{hu2022quadratic} imposes the same pathwise continuity condition and further assumes that the driver $f(s, x, y, \cdot)$ is either convex or concave. Our analysis, however, remains valid without such structural constraints.
	
\end{remark}

\section{Infinite-horizon quadratic $G$-BSDEs}\label{section 4}

This section studies  infinite-horizon quadratic $G$-BSDEs.
We present the existence, uniqueness of solutions and the  corresponding Feyman-Kac formula.
Given $x\in \mathbb{R}^m$ and continuous functions 
$b,h_{ij}: \mathbb{R}^m\to \mathbb{R}^m$, $\sigma:\mathbb{R}^m\to \mathbb{R}^{m\times d}$,  $f,g_{ij}: \mathbb{R}^m\times \mathbb{R}\times \mathbb{R}^d\to
\mathbb{R}$,
consider the $G$-SDE and $G$-BSDE  
\begin{align} \label{ergodic SDE2}
X_s^x&=x+\int_0^s b(X_u^x)\,du+\int_0^s h_{ij}(X_u^x)\,d\langle B^i,B^j\rangle_u+\int_0^s\sigma(X_u^x) \,dB_u\,, 
\\
\label{ergodic QBSDE2} Y_s^x&=Y_T^x+\int_s^Tf(X_u^x,Y_u^x,Z_u^x)\,du+\int_s^Tg_{ij}(X_u^x,Y_u^x,Z_u^x)\,d\langle B^i,B^j\rangle_u\\&
\quad-\int_s^TZ_u^x\,dB_u-(K_T^x-K_s^x) 
\end{align}
for $0\leq s \leq T<\infty$. Occasionally, we write $X^x, Y^x, Z^x, K^x$ as $X,Y,Z,K$ respectively, omitting the superscripts $x$.
\begin{assume}\label{ergodic assumption1} Assume $b,h_{ij},\sigma,g_{ij}$ satisfy the following properties.
	\begin{enumerate}[label=(\roman*)]
		\item \label{ergodic assumption A1}For $1\leq i,j \leq d$, $h_{ij}=h_{ji}$ and $g_{ij}=g_{ji}$.
		\item \label{ergodic assumption A2} There exist  constants $C_1, C_{\sigma}, M_{\sigma}$ such that 
		\begin{equation}\begin{aligned}
		&|b(x)-b(x')|+\sum_{i,j=1}^d|h_{ij}(x)-h_{ij}(x')|\leq C_1|x-x'|\,,\\
		&|\sigma(x)-\sigma(x')|\leq C_{\sigma}|x-x'|\,,\\
		&|\sigma(x)|\le M_{\sigma}
		\end{aligned}\end{equation}
		for $x,x'\in \mathbb{R}^m$. 
	\end{enumerate}
\end{assume}

\begin{assume}\label{ergodic assumption2} Assume $b,h_{ij}, \sigma,f,g_{ij}$ satisfy the following properties.
	\begin{enumerate}[label=(\roman*)]
		\item \label{ergodic assumption A3} There exist  constants $C_1, C_2,C_3$ such that 
		\begin{equation}\begin{aligned}
		&|f(x,y,z)-f(x',y',z')|+\sum_{i,j=1}^d|g_{ij}(x,y,z)-g_{ij}(x',y',z')|\\ &\hspace{1cm}\leq C_1|x-x'|+C_2|y-y'|+C_3(1+|z|+|z'|)|z-z'|
		\end{aligned}\end{equation}
		for $x,x'\in \mathbb{R}^m$, $y,y'\in \mathbb{R}, $ $z,z'\in \mathbb{R}^d$.
		\item There exist constants $\mu>0$ and $\eta\in\mathbb{R}$ such that
		\begin{equation}\begin{aligned}
		&(f(x,y,z)-f(x,y',z))(y-y')+2G((g_{ij}(x,y,z)-g_{ij}(x,y',z))(y-y'))\\
		&\quad \leq -\mu|y-y'|^2\,,\\
		&G\big( \sum_{j=1}^m(\sigma_j(x)-\sigma_j(x'))^{\top} (\sigma_j(x)-\sigma_j(x')) +2(\langle x-x',h_{ij}(x)-h_{ij}(x')\rangle)_{i,j=1}^d\big)\\
		&+\langle x-x',b(x)-b(x')\rangle\leq -\eta|x-x'|^2\,
		\end{aligned}\end{equation}
		for $x,x'\in \mathbb{R}^m$, $y,y'\in\mathbb{R},z\in\mathbb{R}^d$,
		where $\sigma_j$ is the $j$-th row vector of $\sigma$ for $j=1,2,\cdots,m.$
		\item 
		$\mu+\eta>(1+\overline{\sigma}^2)\Bigg(C_\sigma C_3+4\sqrt{C_\sigma C_1C_3\frac{\overline{\sigma}M_\sigma}{\underline{\sigma}}} \Bigg)$.
	\end{enumerate}
\end{assume}

\begin{defi}
	We say a triplet $(Y,Z,K)=(Y_s^{x},Z_s^{x},K_s^{x})_{s\ge0}$ is a solution to the $G$-BSDE \eqref{ergodic QBSDE2} if
	\begin{enumerate}[label=(\roman*)]
		\item  $(Y,Z)\in\mathbb{S}_G^2(0,\infty)\times\mathbb{H}_G^{2}(0,\infty;\mathbb{R}^d)$,
		\item the process $K$ is a decreasing $G$-martingale with $K_0=0$ and $K_T\in \mathbb{L}^2_G(\Omega_T)$ for  all $T>0$,
		\item the triplet $(Y,Z,K)$ satisfies \eqref{ergodic QBSDE2} quasi-surely for all $0\leq s\leq T<\infty$.
	\end{enumerate}
\end{defi}

The following lemma is useful in extending the results in \cite{hu2018ergodic} to our cases. 
We recall 
the sublinear conditional expectation $\tilde{\mathbb{E}}_t^{b,d^{ij}}$  in Theorem \ref{SDE sublinear property}.

\begin{lemma}\label{infinite mean is finite} Let Assumptions \ref{ergodic assumption1}-\ref{ergodic assumption2} hold
	and  $X^{x}$ be a solution to \eqref{ergodic SDE2} for $x\in \mathbb{R}^m$.	
    Suppose  $b,d^{ij}\in \mathbb{M}^2_G(0,\infty;\mathbb{R}^d)$ are processes satisfying $d^{ij}=d^{ji}$ and  $|b|,|d^{ij}|\le C_4$ for some constant $C_4>0.$
For any constant $k>0$ with $k+\eta-(1+\overline{\sigma}^2)C_\sigma C_4>0,$ 
	there exists a constant
	$L>0$ depending only on $\underline{\sigma},$ $\overline{\sigma},$ $C_{\sigma},$ $C_4$, $k$, $\eta$ such that	
	\begin{equation}\begin{aligned}
	\tilde{\mathbb{E}}_t^{b,d^{ij}}[e^{-ks}|X_s^{x }|]\le L e^{-kt}(1+|{X_t}^x|)
	\end{aligned}\end{equation}
	for all $0\le t\le s<\infty$ and $x\in \mathbb{R}^m.$ 
\end{lemma}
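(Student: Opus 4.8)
The plan is to strip off the deterministic discount, use the Markovian flow property of the autonomous $G$-SDE to re-centre the estimate at the $\mathcal{F}_t$-measurable datum $X_t^x$, and then reduce the whole statement to the two finite-horizon bounds already contained in \eqref{eqn:x esi finite}. Since $e^{-ks}$ is a positive deterministic constant, positive homogeneity of $\tilde{\mathbb{E}}_t^{b,d^{ij}}$ lets me pull it out, so it suffices to prove $\tilde{\mathbb{E}}_t^{b,d^{ij}}[|X_s^x|]\le L\,e^{k(s-t)}(1+|X_t^x|)$. By uniqueness of solutions to \eqref{ergodic SDE2} the cocycle identity $X_s^x=X_s^{t,X_t^x}$ holds quasi-surely, and since $\tilde{\mathbb{E}}^{b,d^{ij}}$ is equivalent to $\tilde{\mathbb{E}}$ this identity is preserved inside the induced conditional expectation. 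Writing $X_s^x=(X_s^{t,X_t^x}-X_s^{t,0})+X_s^{t,0}$ and using subadditivity of $\tilde{\mathbb{E}}_t^{b,d^{ij}}$ gives
\[
\tilde{\mathbb{E}}_t^{b,d^{ij}}[|X_s^x|]\le \tilde{\mathbb{E}}_t^{b,d^{ij}}\big[|X_s^{t,X_t^x}-X_s^{t,0}|\big]+\tilde{\mathbb{E}}_t^{b,d^{ij}}\big[|X_s^{t,0}|\big].
\]

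Next I would estimate the two terms separately using Theorem \ref{SDE sublinear property}, applied on the interval $[t,s]$. Put $\theta:=\eta-(1+\overline{\sigma}^2)C_\sigma C_4$. The first (contraction) bound in \eqref{eqn:x esi finite}, taken with $\xi=X_t^x$ and $\xi'=0$, yields $\tilde{\mathbb{E}}_t^{b,d^{ij}}[|X_s^{t,X_t^x}-X_s^{t,0}|]\le e^{-\theta(s-t)}|X_t^x|$, and the third (displacement) bound, taken with $\xi=0$, yields $\tilde{\mathbb{E}}_t^{b,d^{ij}}[|X_s^{t,0}|]\le L_0(s-t)+M_\sigma\overline{\sigma}(s-t)^{1/2}$. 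The point I would stress is that, although \eqref{eqn:x esi finite} is a finite-horizon statement, in the present autonomous setting the constants are genuinely independent of the horizon: the quantities $|\!|b(s,0)|\!|$ and $|\!|h_{ij}(s,0)|\!|$ entering $L_0$ collapse to the fixed numbers $|\!|b(0)|\!|$ and $|\!|h_{ij}(0)|\!|$, so $L_0$ depends only on $C_1,C_4,M_\sigma,\overline{\sigma},|\!|b(0)|\!|,|\!|h_{ij}(0)|\!|$ and in particular not on $s$. Combining,
\[
\tilde{\mathbb{E}}_t^{b,d^{ij}}[|X_s^x|]\le e^{-\theta(s-t)}|X_t^x|+L_0(s-t)+M_\sigma\overline{\sigma}(s-t)^{1/2}.
\]

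Finally I would reinstate the discount and invoke the hypotheses $k+\theta>0$ and $k>0$. Multiplying by $e^{-ks}$ and factoring out $e^{-kt}$ gives
\[
\tilde{\mathbb{E}}_t^{b,d^{ij}}[e^{-ks}|X_s^x|]\le e^{-kt}\Big(e^{-(k+\theta)(s-t)}|X_t^x|+e^{-k(s-t)}\big(L_0(s-t)+M_\sigma\overline{\sigma}(s-t)^{1/2}\big)\Big).
\]
Because $k+\theta>0$, the first inner term is bounded by $|X_t^x|$; because $k>0$, the map $r\mapsto e^{-kr}\big(L_0 r+M_\sigma\overline{\sigma}\,r^{1/2}\big)$ is bounded on $[0,\infty)$ by a finite constant $C$ (exponential decay dominates polynomial growth), so the second inner term is at most $C$. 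Hence $\tilde{\mathbb{E}}_t^{b,d^{ij}}[e^{-ks}|X_s^x|]\le e^{-kt}(|X_t^x|+C)\le L\,e^{-kt}(1+|X_t^x|)$ with $L:=\max\{1,C\}$, a constant independent of $t,s$ and $x$, which is the assertion.

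The main obstacle is conceptual rather than computational: it is the competition between the two rates. The dissipation exponent $\theta$ need not be positive, so the undiscounted flow may grow exponentially, and the role of the standing assumption $k+\theta>0$ is precisely to let the discount $e^{-ks}$ absorb this growth; this is the only place the lower bound on $k$ is used, and it is used essentially. The second delicate point, which must be checked carefully, is that the finite-horizon estimates \eqref{eqn:x esi finite} upgrade to a bound uniform in $s$, which is legitimate only because \eqref{ergodic SDE2} has autonomous coefficients, purging the hidden horizon dependence from $L_0$. The remaining steps — the quasi-sure cocycle identity and the subadditivity and positive-homogeneity manipulations under $\tilde{\mathbb{E}}_t^{b,d^{ij}}$ — are routine, resting on the equivalence of $\tilde{\mathbb{E}}^{b,d^{ij}}$ with $\tilde{\mathbb{E}}$ and the tower property recorded in Section \ref{section 2}.
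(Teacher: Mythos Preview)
Your argument has a genuine gap at the step where you claim that the constant $L_0$ from the third estimate in \eqref{eqn:x esi finite} is horizon-independent. You correctly note that autonomy makes $\|b(\cdot,0)\|$ and $\|h_{ij}(\cdot,0)\|$ constant, but the statement of Theorem~\ref{SDE sublinear property} lists $T$ as a \emph{separate} dependency, and that dependence is real: it enters through the second inequality $\tilde{\mathbb{E}}_t^{b,d^{ij}}[|X_u^{t,\xi}|]\le L(1+|\xi|)$, whose constant comes from a Gr\"onwall argument and typically grows like $e^{C(s-t)}$. Concretely, take $m=d=1$, $b(x)=cx$ with $c>0$, $h_{ij}=0$, $\sigma\equiv 1$ (so $C_\sigma=0$, $\eta=-c$, $\theta=-c$). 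Then $X_s^{t,0}=\int_t^s e^{c(s-u)}\,dB_u$ and $\tilde{\mathbb{E}}_t[|X_s^{t,0}|]$ grows like $e^{c(s-t)}$, which cannot be dominated by $L_0(s-t)+M_\sigma\overline{\sigma}(s-t)^{1/2}$ with any fixed $L_0$. The hypothesis $k+\theta>0$ (here $k>c$) is exactly what makes the \emph{discounted} expectation bounded, but your intermediate undiscounted bound on $\tilde{\mathbb{E}}_t^{b,d^{ij}}[|X_s^{t,0}|]$ fails before the discount is reinstated.

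The paper avoids this trap by never separating the discount from the process: it applies It\^o's formula directly to $e^{-2ks}|X_s|^2$, so the decay $-2k$ and the dissipation $-\eta$ and the perturbation $(1+\overline{\sigma}^2)C_\sigma C_4$ all appear together in a single differential inequality, and the sign condition $k+\eta-(1+\overline{\sigma}^2)C_\sigma C_4>0$ is used \emph{inside} the integral to kill the $|X_u|^2$ term. Your decomposition $X_s^x=(X_s^{t,X_t^x}-X_s^{t,0})+X_s^{t,0}$ is natural, and the first piece is handled correctly by the contraction estimate; but for the second piece you would need an exponential bound of the form $\tilde{\mathbb{E}}_t^{b,d^{ij}}[|X_s^{t,0}|]\le C\,e^{(-\theta)^+(s-t)}$ uniform in $s$, and establishing that is essentially equivalent to proving the lemma itself by the paper's energy method. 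A secondary issue is that your constant $L$ inherits dependence on $|b(0)|$, $|h_{ij}(0)|$ and $C_1$, which are not among the parameters listed in the lemma's statement.
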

\begin{proof}
Denote as $X=X^x$ for convenience.  Recall that $B_s^{{b},{d}^{ij}}=B_s-\int_0^s {b}_u\, du-\int_0^s{d}^{ij}_u\,d\langle B^i,B^j\rangle _u,s\ge0$ is a $G$-Brownian motion under the sublinear expectation $\tilde{\mathbb{E}}^{{b},{d}^{ij}}$.  By Jensen's inequality, we have
	\begin{equation}\begin{aligned}
	\tilde{\mathbb{E}}_t^{b,d^{ij}}[e^{-ks}|X_s|]
	\leq  \Big(\tilde{\mathbb{E}}^{b,d^{ij}}_t\Big[e^{-2ks}|X_s|^2\Big]\Big)^{\frac{1}{2}}\,.
	\end{aligned}\end{equation}
To estimate the term $e^{-2ks}|X_s|^2$, define $\tilde{L}=k+\eta-(1+\overline{\sigma}^2)C_\sigma C_4>0$ and  $\tilde{\psi}(s)=\psi(X_s)-\psi(0)$ for $\psi=b,h_{ij},\sigma_j$. Applying It\^{o}'s formula, we obtain
	\begin{equation}\begin{aligned}
	e^{-2ks}|X_s|^2
	\leq&\, e^{-2kt}|X_t|^2-2\int_t^s ke^{-2ku}|X_u|^2\,du+2\int_t^s e^{-2ku} \langle X_u,\tilde{b}(u)\rangle \,du\\
    &+2\int_t^se^{-2ku}G\Big(\Big(\sum_{j=1}^m \tilde{\sigma}_j^{\top}(u)\tilde{\sigma}_j(u)\Big) +2 \langle X_u,\tilde{h}_{ij}(u)\rangle \Big)\,du+\tilde{\Lambda}_s+\tilde{M}_s\,,
	\end{aligned}\end{equation}
	where
	\begin{equation} 
	\begin{aligned}
	\tilde{M}_s&=2\int_t^se^{-2ku}X_u^{\top}\sigma(X_u)\,d B^{{b},{d}^{ij}}_u\,,\\
	\tilde{\Lambda}_s&=2\int_t^s e^{-2ku}\langle X_u,\sigma(X_u)b_u\rangle + e^{-2ku}\langle X_u,b(0)\rangle\,du\\
    &\quad+4\int_t^se^{-2ku}G(\langle X_u,\sigma(X_u)d_u^{ij}\rangle )+e^{-2ku}G(\langle X_u,h_{ij}(0)\rangle)\,du\\
    &\quad+2\int_t^s e^{-2ku}G\Big(\sum_{j=1}^m\sigma_j^{\top}(X_u)\sigma_j(0)+\sigma_j^{\top}(0)\sigma_j(X_u)-\sigma_j^{\top}(0)\sigma_j(0)\Big) \,du\,.
	\end{aligned}
	\end{equation}  
   Observe that 
	\begin{equation}\begin{aligned}
	\tilde{\Lambda}_s&\leq 2\int_t^s e^{-2ku}|\langle X_u,\sigma(X_u)b_u\rangle|+e^{-2ku}|\langle X_u,b(0)\rangle|\,du\\
    &\quad+2\int_t^s \overline{\sigma}^2e^{-2ku} |\langle X_u,\sigma(X_u)d_u^{ij}\rangle|+\overline{\sigma}^2e^{-2ku}|\langle X_u,h_{ij}(0)\rangle|\,du\\
    &\quad+\int_t^s\overline{\sigma}^2e^{-2ku}\Big|\Big(\sum_{\ell =1}^m\sigma_\ell^{\top}(X_u)\sigma_\ell(0)+\sigma_\ell^{\top}(0)\sigma_\ell(X_u)\Big)_{ij}\Big|\,du\\
    &\le \int_t^s L_1e^{-2ku}(|X_u|+1)\,du\,,
	\end{aligned}\end{equation}
	where $L_1$ is a constant depending only on $\overline{\sigma},M_{\sigma},C_4$.
    From the inequality $ab\leq \frac{ca^2}{2}+\frac{b^2}{2c}$ for $c\geq 0$ with $a=X_u$, $b=L_1$, $c=4C_\sigma C_4$,
    we have
    \begin{equation}\begin{aligned}
	\tilde{\Lambda}_s\le2\int_t^sC_\sigma C_4 e^{-2ku}|X_u|^2\,du+\int_t^s\tilde{L}_1e^{-2ku}\,du\,
	\end{aligned}\end{equation}
 for some constant $\tilde{L}_1$  depending only on $\overline{\sigma},C_{\sigma},M_\sigma,C_4$.
    Thus,
	\begin{equation}\begin{aligned}
	e^{-2ks}|X_s|^2\leq e^{-2kt}|X_t|^2+\int_t^s\tilde{L}_1e^{-2ku}\,du+\tilde{M}_s\,.
	\end{aligned}\end{equation}
	Since $\tilde{M}$ is a symmetric ${G}$-martingale, we obtain
	\begin{equation}\begin{aligned}
	\tilde{\mathbb{E}}^{b,d^{ij}}_t[e^{-2ks}|X_s|^2]\le e^{-2kt}(|X_t|^2+\frac{\tilde{L}_1}{2k})\,,
	\end{aligned}\end{equation}
	which gives the desired result.
\end{proof}

\begin{thm}\label{infinite horizon Quadartic BSDE exist and unique}
	Suppose Assumptions \ref{ergodic assumption1}-\ref{ergodic assumption2} hold. Then for any $x\in \mathbb{R}^m$, there exists a unique solution $(Y^{x},Z^{x},K^{x})$ to the $G$-BSDE \eqref{ergodic QBSDE2}    such that 
	\begin{equation}\label{eqn:inq}
	\begin{aligned}
	&|Y^x|\leq L(1+|X^x|)\,,\\
	&|Z^x|\leq \frac{\mu+\eta-(1+\overline{\sigma}^2)C_\sigma C_3}{4(1+\overline{\sigma}^2)C_\sigma C_3}\,,\\
	&|Y_s^x-Y_s^{x'}|\leq L|X_s^x-X_s^{x'}|
	\end{aligned}
	\end{equation} 
	for a positive constant $L$. 
\end{thm}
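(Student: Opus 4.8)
The plan is to realize the infinite-horizon solution as the limit of finite-horizon solutions with vanishing terminal data. For each $n\in\mathbb{N}$ I would apply Theorem \ref{Finite QBSDE exis and unique} on $[0,n]$ with terminal function $\Phi\equiv0$. Because the coefficients here are autonomous and $C_\Phi=0$, Assumptions \ref{ergodic assumption1}--\ref{ergodic assumption2} reduce exactly to Assumption \ref{assumption}: the terminal Lipschitz constant vanishes, and the structural inequality in Assumption \ref{assumption} collapses, upon setting $C_\Phi=0$, to the last condition of Assumption \ref{ergodic assumption2}. This yields a unique triplet $(Y^{x,n},Z^{x,n},K^{x,n})$ with $Z^{x,n}$ bounded. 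The decisive point is that the constants $L_1$ and $M$ produced by Theorem \ref{Finite QBSDE exis and unique} do not depend on the horizon $n$; consequently $|Y^{x,n}_s-Y^{x',n}_s|\le L_1|X^x_s-X^{x'}_s|$ and $|Z^{x,n}|\le 2L_1\overline{\sigma}M_\sigma/\underline{\sigma}\le M$ for every $n$, where $M=\frac{\mu+\eta-(1+\overline{\sigma}^2)C_\sigma C_3}{4(1+\overline{\sigma}^2)C_\sigma C_3}$ is precisely the bound claimed for $|Z^x|$ in \eqref{eqn:inq}.

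The next step is a uniform-in-$n$ growth bound $|Y^{x,n}_s|\le L(1+|X^x_s|)$. Linearizing the driver of \eqref{ergodic QBSDE2} around $(y,z)=(0,0)$ as in the proof of Theorem \ref{Finite QBSDE exis and unique}, $Y^{x,n}_s$ acquires a linear representation under an induced sublinear expectation whose discount factor obeys $\Gamma_u/\Gamma_s\le e^{-\mu(u-s)}$, and the free terms are controlled by $|f(X_u,0,0)|+\sum_{i,j}|g_{ij}(X_u,0,0)|\le C(1+|X^x_u|)$. Combining the exponential discount with the growth control of Lemma \ref{infinite mean is finite} --- choosing a decay rate $k\in(0,\mu)$ admissible there, which is possible since $\mu+\eta>(1+\overline{\sigma}^2)C_\sigma C_3(1+2M)$ --- the integral $\int_s^n e^{-\mu(u-s)}\tilde{\mathbb{E}}_s[|X^x_u|]\,du$ converges and is dominated by $L(1+|X^x_s|)$ uniformly in $n$, using $\mu>0$.

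For convergence, fix $m>n$ and compare $Y^{x,n}$ with $Y^{x,m}$ on $[0,n]$: both solve the same truncated Lipschitz $G$-BSDE but with terminal values $0$ and $Y^{x,m}_n$. The linearized representation together with $\Gamma_n/\Gamma_s\le e^{-\mu(n-s)}$ and the growth bound gives
\[
|Y^{x,n}_s-Y^{x,m}_s|\le L\,e^{-\mu(n-s)}(1+|X^x_s|)\,.
\]
Hence $(Y^{x,n})$ is Cauchy in $\mathbb{S}^2_G(0,T)$ for every $T$ (the factor is at most $e^{-\mu(n-T)}$ on $[0,T]$), and its limit $Y^x$ inherits the growth and Lipschitz estimates of \eqref{eqn:inq}. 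To recover $Z^x$ and $K^x$ I would apply It\^o's formula to $|Y^{x,n}-Y^{x,m}|^2$ on $[0,T]$ and invoke the Burkholder--Davis--Gundy inequality (Proposition \ref{BDG}) with the uniform bound $|Z^{x,n}|\le M$ to show $(Z^{x,n})$ is Cauchy in $\mathbb{H}^2_G(0,T;\mathbb{R}^d)$ and $(K^{x,n})$ is Cauchy in $\mathbb{S}^2_G(0,T)$. Passing to the limit in \eqref{ergodic QBSDE2} on each $[0,T]$ then identifies $(Y^x,Z^x,K^x)$ as a solution, the quadratic driver terms converging because $Z^{x,n}\to Z^x$ in $\mathbb{H}^2_G$ under a uniform bound, while $K^x$ remains a decreasing $G$-martingale as a quasi-sure limit of decreasing $G$-martingales.

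For uniqueness, let $(Y,Z,K)$ and $(Y',Z',K')$ be two solutions obeying \eqref{eqn:inq}. Since $Z$ and $Z'$ are bounded, both solve a common Lipschitz $G$-BSDE on every $[0,T]$ after truncating the driver, and the linearized representation yields $|Y_s-Y'_s|\le L\,e^{-\mu(T-s)}(1+|X^x_s|)$ from the growth bound; letting $T\to\infty$ forces $Y=Y'$, whence $Z=Z'$ and $K=K'$ by uniqueness of the $G$-martingale decomposition. The step I expect to be the main obstacle is the passage to the limit for $(Z^{x,n},K^{x,n})$: securing $\mathbb{H}^2_G$-convergence of the martingale integrands against a quadratic-growth driver, and verifying that the limiting $K^x$ keeps the decreasing $G$-martingale property on every finite horizon, rely essentially on the uniform $Z$-bound $M$ and a delicate It\^o/BDG estimate, in contrast to the softer pointwise-in-time contraction that settles the convergence of $Y$.
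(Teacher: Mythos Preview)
Your proposal is correct and follows essentially the same route as the paper: finite-horizon approximation with zero terminal data via Theorem \ref{Finite QBSDE exis and unique}, a linearization/change-of-expectation argument combined with Lemma \ref{infinite mean is finite} for the uniform growth bound and the Cauchy estimate, and then passage to the limit for $(Z,K)$ (the paper cites \cite[Proposition 3.8]{hu2014backward} and \cite[Theorem 3.1]{hu2018ergodic} for this step, which encapsulate precisely the It\^o/BDG computation you describe). One minor slip: in your Cauchy bound the factor should be $e^{-(\mu-k)(n-s)}(1+|X^x_s|)$ rather than $e^{-\mu(n-s)}(1+|X^x_s|)$, since controlling $\tilde{\mathbb{E}}_s[|X^x_n|]$ via Lemma \ref{infinite mean is finite} costs a factor $e^{k(n-s)}$; this does not affect the argument.
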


\begin{proof} Define $$L_1:=\frac{\mu+\eta-(1+\overline{\sigma}^2)C_\sigma C_3}{4(1+\overline{\sigma}^2)C_\sigma C_3}\,.$$ We first  prove the uniqueness of solutions. Suppose $(Y^1,Z^1,K^1)$ and $(Y^2,Z^2,K^2)$ are two solutions to \eqref{ergodic QBSDE2}. Define $(\hat{Y},\hat{Z},\hat{K})=(Y^1-Y^2,Z^1-Z^2,K^1-K^2)$. Through a similar argument in the proof of Theorem \ref{Finite QBSDE exis and unique}, for each $\epsilon>0,$ we can construct processes  $ a^{\epsilon},c^{\epsilon,ij},m^{\epsilon},n^{\epsilon,ij}\in \mathbb{M}^2_G(0,T)$
    and
    $b^{\epsilon},d^{\epsilon,ij}\in \mathbb{M}^2_G(0,T;\mathbb{R}^d)$ 
    for $T\geq 0$ such that 
    \begin{align}
        \hat{Y}_s=\hat{Y}_T+\int_s^T (a_u^\epsilon \hat{Y}_u+b^\epsilon_u \hat{Z}_u+m_u^\epsilon)\,du&+\int_s^T (c_u^{\epsilon,ij} \hat{Y}_u+d^{\epsilon,ij}_u \hat{Z}_u+n_u^{\epsilon,ij})\,d\langle B^i,B^j\rangle_u\\&-\int_s^T\hat{Z}_u\,dB_u-(\hat{K}_T-\hat{K}_s)
    \end{align}
	and $ |m^\epsilon|\leq (2C_2+2C_3(1+2L_1))\epsilon$,
	$|n^{\epsilon,ij}|\leq (C_2(1+d)+2C_3(1+2L_1))\epsilon$, $|b^\epsilon|\leq C_3(1+2L_1)$, $|d^{\epsilon,ij}|\leq C_3(1+2L_1)$, $ a^\epsilon+2G(c^{\epsilon,ij})\leq -\mu.$ Let $\tilde{\mathbb{E}}^{b^\epsilon,d^{\epsilon,ij}}$ be the sublinear expectation  induced by $b^{\epsilon}$ and $d^{\epsilon,ij}$. By \eqref{new_GBM}, $B^{\epsilon}:=B-\int_0^\cdot b_u^{\epsilon}\,du-\int_0^\cdot d^{\epsilon,ij}_u\,d\langle B^i,B^j\rangle_u$ is a $G$-Brownian motion under $\tilde{\mathbb{E}}^{b^\epsilon,d^{\epsilon,ij}}$. Let $\Gamma^{\epsilon}$ be a solution to the $G$-SDE   $$\Gamma_\cdot^{\epsilon}=1+\int_0^\cdot a_u^{\epsilon}\Gamma_u^{\epsilon}\,du+\int_0^{\cdot} c_u^{\epsilon,ij}\Gamma_u^\epsilon\,d\langle B^{\epsilon,i},B^{\epsilon,j} \rangle_u \,.$$  We then  have
	\begin{equation}\begin{aligned}
	\hat{Y}_s\leq \tilde{\mathbb{E}}_s^{b^\epsilon,d^{\epsilon,ij}}\Big[\frac{\Gamma_T^{\epsilon}}{\Gamma_s^{\epsilon}}\hat{Y}_T+\int_s^T\frac{\Gamma_u^{\epsilon}}{\Gamma_s^{\epsilon}}m_u^\epsilon\,du+\int_s^T\frac{\Gamma_u^{\epsilon}}{\Gamma_s^{\epsilon}}n_u^{\epsilon,ij}\,d\langle B^{\epsilon,i},B^{\epsilon,j} \rangle _u\Big]\,.
	\end{aligned}\end{equation}
	Since $$\mu+\eta>(1+\overline{\sigma}^2)\Bigg(C_\sigma C_3+4\sqrt{C_\sigma C_1C_3\frac{\overline{\sigma}M_\sigma}{\underline{\sigma}}} \,\Bigg)\,,$$
	we can choose a constant $k$ such that $0<k<\mu$ and $k+\eta-(1+\overline{\sigma}^2)C_\sigma C_3(1+2L_1)>0$. By Lemma \ref{infinite mean is finite}, we have 
	   \begin{equation}
	\begin{aligned}
	\hat{Y}_s&\leq e^{-(\mu-k)(T-s)}\tilde{\mathbb{E}}_s^{b^\epsilon,d^{\epsilon,ij}}[e^{-k(T-s)}\hat{Y}_T]\\
	&\quad +
	\tilde{\mathbb{E}}_s^{b^\epsilon,d^{\epsilon,ij}}\Big[\int_s^Te^{-\mu(u-s)}m_u^\epsilon\,du+\int_s^Te^{-\mu(u-s)}n_u^{\epsilon,ij}\,d\langle B^{\epsilon,i},B^{\epsilon,j} \rangle _u\Big]\\
	&\leq 2Le^{-\mu(T-s)}+2\tilde{L}Le^{-(\mu-k)(T-s)}(1+|X_s|)\\
	&\quad +\tilde{\mathbb{E}}_s^{b^\epsilon,d^{\epsilon,ij}}\Big[\int_s^Te^{-\mu(u-s)}m_u^\epsilon\,du+\int_s^Te^{-\mu(u-s)}n_u^{\epsilon,ij}\,d\langle B^{\epsilon,i},B^{\epsilon,j} \rangle _u\Big]
	\end{aligned}   
	\end{equation} 
	for some constant $\tilde{L}>0.$ 
	Letting $\epsilon \rightarrow 0$, we have
	\begin{equation}\begin{aligned}
	Y^1_s-Y^2_s=\hat{Y}_s\leq2Le^{-\mu(T-s)}+2\tilde{L}Le^{-(\mu-k)(T-s)}(1+|X_s|)\,.
	\end{aligned}\end{equation}
	Thus, $Y^1_s-Y^2_s\leq 0$ by letting $T\rightarrow \infty$, and a similar procedure provides $Y^1_s-Y^2_s\geq 0$, which implies $Y^1_s-Y^2_s=0.$ 
	Using the continuity of $Y^1$ and $Y^2$, we obtain that both are indistinguishable under all $P\in \mathcal{P},$ which implies that $Y^1=Y^2$ by Proposition \ref{relation P and sublinear}. From \cite[Proposition 3.8]{hu2014backward}, we have $Z^1=Z^2,$ and thus  $K^1=K^2.$

	We now prove the existence of solutions. For each $\ell \in\mathbb{N}$,
	there exists a unique solution  $(Y^\ell,Z^\ell,K^\ell)=(Y_s^\ell,Z_s^\ell,K_s^\ell)_{0\le s\le \ell}$  to
	the finite horizon $G$-BSDE 
	\begin{equation}\begin{aligned}\label{eqn}
	Y_s^\ell=\int_s^\ell f(X_u,Y_u^\ell,Z_u^\ell)\,du&+\int_s^\ell g_{ij}(X_u,Y_u^\ell,Z_u^\ell)\,d\langle B^i,B^j \rangle _u\\&-\int_s^\ell Z^\ell_u\,dB_u-(K^{\ell}_\ell-K^\ell_s)\,,\;0\le s\le \ell
	\end{aligned}\end{equation}
	such that $|Z^\ell |\leq L_1$
	by Theorem \ref{Finite QBSDE exis and unique}.
	We extend the solution $(Y_s^\ell  ,Z_s^\ell ,K_s^\ell )_{0\le s\le \ell }$ to $(Y_s^\ell ,Z_s^\ell ,K_s^\ell )_{0\le s< \infty}$ by defining 
	\begin{equation}\begin{aligned}
	Y_s^\ell =0, \;Z_s^\ell =0, \; K_s^\ell =K_\ell ^\ell \; \textnormal{  for all }  s>\ell \,.
	\end{aligned}\end{equation}
	Then we can rewrite \eqref{eqn} as
	\begin{equation}\begin{aligned}
	Y^\ell _s&=Y_T^\ell +\int_s^T f(X_u,Y_u^\ell ,Z_u^\ell )-\mathds{1}_{\{u\geq \ell \}}f(X_u,0,0)\,du\\&
    \quad+\int_s^Tg_{ij}(X_u,Y_u^\ell ,Z_u^\ell )-\mathds{1}_{\{u\geq \ell \}}g_{ij}(X_u,0,0)\,d\langle B^i,B^j \rangle _u\\
    &\quad-\int_s^TZ^\ell _u\,dB_u-(K^\ell _T-K^\ell _s)\,
	\end{aligned}\end{equation}
	for all $T\geq 0$.
	For $\ell \leq r$, let $(\hat{Y},\hat{Z},\hat{K}):=(Y^\ell -Y^r,Z^\ell -Z^r,K^\ell -K^r)$. We define processes $a^{\ell ,r,\epsilon},$ $b^{\ell ,r,\epsilon},$
	$m^{\ell ,r,\epsilon},$
	$c^{\ell ,r,\epsilon,ij},$ $d^{\ell ,r,\epsilon,ij},$
	$n^{\ell ,r,\epsilon,ij},$ ${B}^{\ell ,r,\epsilon},$ ${\Gamma}^{\ell ,r,\epsilon}$ and the sublinear expectation $\tilde{\mathbb{E}}^{b^{\ell ,r,\epsilon},d^{\ell ,r,\epsilon,ij}}$ similarly as above.
	Then
	\begin{equation}\begin{aligned}
	\hat{Y_s}&=\int_s^r a^{\ell ,r,\epsilon}_u\hat{Y}_u+b_u^{\ell ,r,\epsilon}\hat{Z}_u+m_u^{\ell ,r,\epsilon}-\mathds{1}_{\{u>\ell \}}f(X_u,0,0)\,du \\
	&\quad+\int_s^rc_u^{\ell ,r,\epsilon,ij}\hat{Y}_u+d_u^{\ell ,r,\epsilon,ij}\hat{Z_u}+n_u^{\ell ,r,\epsilon,ij}-\mathds{1}_{\{u>\ell \}}g_{ij}(X_u,0,0)\,d\langle B^i,B^j\rangle_u\\
    &\quad-\int_s^r\hat{Z}_u\,dB_u-(\hat{K}_r-\hat{K}_s)\,,
	\end{aligned}\end{equation}
	which gives
	\begin{equation}\begin{aligned}
	\hat{Y}_s&\leq \frac{(1+\overline{\sigma}^2)\max(|f(0,0,0)|,|g_{ij}(0,0,0)|)} {\mu}e^{\mu s}(e^{-\mu \ell }-e^{-\mu r})\\&
    \quad+\frac{(1+\overline{\sigma}^2)C_1L}{\mu-k}e^{(\mu-k) s}(e^{-(\mu-k) \ell }-e^{-(\mu-k) r})(1+|X_s|)
	\end{aligned}\end{equation}
	for some constant $k>0$ from a similar argument in the proof of uniqueness above.
	Repeating this procedure to $-\hat{Y}$, we have
	\begin{equation}\begin{aligned}
	|\hat{Y}_s|&\leq \frac{(1+\overline{\sigma}^2)\max(|f(0,0,0)|,|g_{ij}(0,0,0)|)} {\mu}e^{\mu s}(e^{-\mu \ell }-e^{-\mu r})\\&\quad+\frac{(1+\overline{\sigma}^2)C_1L}{\mu-k}e^{(\mu-k) s}(e^{-(\mu-k) \ell }-e^{-(\mu-k) r})(1+|X_s|)\,.\label{eqn: Y=0}
	\end{aligned}\end{equation}
	Therefore, for $0\leq T \leq \ell \leq r$, 
	\begin{equation}\begin{aligned}
	\lim_{\ell ,r\rightarrow \infty}\hat{\mathbb{E}}[\sup_{0\leq s\leq T}|Y_s^\ell -Y_s^r|^2 ]= \lim_{\ell ,r\rightarrow \infty}\hat{\mathbb{E}}[\sup_{0\leq s\leq T}|\hat{Y}_s|^2 ]=0\,.
	\end{aligned}\end{equation}
	Since the sequence $(Y^\ell )_{\ell \in\mathbb{N}}$ is Cauchy in   $\mathbb{S}^2_G(0,T) $ for each $T\geq 0$, we define 
	$    Y:=\lim_{\ell \rightarrow \infty} Y^\ell .$

	We now construct two processes $Z$ and $K$ so that 
	$(Y,Z,K)$ is a solution to \eqref{ergodic QBSDE2}.
	By \cite[Proposition 3.8]{hu2014backward} and \cite[Theorem 3.1]{hu2018ergodic}, for each $T>0,$ 
	the limit 
	$Z^T:=\lim_{\ell \rightarrow \infty}Z^\ell $
 exists	in $\mathbb{H}^2_G(0,T;\mathbb{R}^d)$ and
	the limit
	$K_s^T:=\lim_{\ell \rightarrow \infty}K_s^\ell $
	exists in $\mathbb{L}^2_G(\Omega_s)$ for all $s\leq T$.
	Moreover,  the process $K^T$ is a decreasing $G$-martingale, and  $(Y,Z^T,K^T)$ satisfies 
	\begin{equation}
    \begin{aligned}
	Y_s&=Y_T+\int_s^Tf(X_u,Y_u,Z_u^T)\,du+\int_s^Tg_{ij}(X_u,Y_u,Z_u^T)\,d\langle B^i,B^j\rangle_u\\&
    \quad-\int_s^TZ_u^T\,dB_u-(K_T^T-K_s^T)\,.
	\end{aligned}
    \end{equation}
	By Theorem \ref{linear FeyKac formula}, it is evident that $(Z_s^T,K_s^T)=(Z_s^S,K_s^S)$ for $s\le T<S$. Finally,  we define $(Z,K)=(Z_s,K_s)_{s\ge0}$ as $Z_s=Z_s^T\,,\,K_s=K_s^T$ for $s\leq T,$
	then the triplet $(Y,Z,K)$ is a solution to \eqref{ergodic QBSDE2}.

	We now prove the  inequalities for $Y$ and $Z$ in \eqref{eqn:inq}.
	Since  $|Z^\ell |\leq L_1$ and
	$Z=\lim_{\ell \rightarrow \infty}Z^\ell ,$
	it is evident that
	$|Z|\leq L_1.$
	To prove $|Y|\leq L(1+|X|)$ for some constant $L>0,$ since
	$Y=\lim_{\ell \rightarrow \infty} Y^\ell ,$
	it suffices to
	prove $|Y^\ell |\leq L(1+|X|).$ 
	Through a similar argument in Theorem \ref{Finite QBSDE exis and unique}, there exist processes $a^{\ell ,\epsilon}$, $b^{\ell ,\epsilon}$, $c^{\ell ,\epsilon,ij}$, $d^{\ell ,\epsilon,ij}$, $m^{\ell ,\epsilon}$, $n^{\ell ,\epsilon,ij}, \Gamma^{\ell ,\epsilon}, B^{\ell ,\epsilon}$ and sublinear expectation $\tilde{\mathbb{E}}^{b^{\ell ,\epsilon},d^{\ell ,\epsilon,ij}}$ such that 
	\begin{equation}\begin{aligned}
	Y^\ell _s&=\int_s^\ell  (a_u^{\ell ,\epsilon}Y_u^\ell +b_u^{\ell ,\epsilon}Z_u^\ell +m_u^{\ell ,\epsilon}+f(X_u,0,0))\,du\\
	&\quad+\int_s^\ell (c_u^{\ell ,\epsilon,ij}Y_u^\ell +d_u^{\ell ,\epsilon,ij}Z_u^\ell +n_u^{\ell ,\epsilon,ij}+g_{ij}(X_u,0,0))\,d\langle B^i,B^j\rangle_u\\
    &\quad
	-\int_s^\ell Z_u^\ell \,dB_u-(K_\ell ^\ell -K_s^\ell )\,.  
	\end{aligned}\end{equation}
	It follows that
	 \begin{equation}\begin{aligned}
	Y_s^\ell &=\tilde{\mathbb{E}}^{b^{\ell ,\epsilon},d^{\ell ,\epsilon,ij}}_s\Big[ \int_s^\ell  \frac{\Gamma_u^{\ell ,\epsilon}}{\Gamma_s^{\ell ,\epsilon}}\big(m_u^{\ell ,\epsilon}+f(X_u,0,0)\big)\,du \Big]\\&
    \quad+\tilde{\mathbb{E}}^{b^{\ell ,\epsilon},d^{\ell ,\epsilon,ij}}_s\Big[\int_s^\ell  \frac{\Gamma_u^{\ell ,\epsilon}}{\Gamma_s^{\ell ,\epsilon}}\big(n_u^{\ell ,\epsilon,ij}+g_{ij}(X_u,0,0)\big)\,d\langle B^{\ell ,\epsilon,i},B^{\ell ,\epsilon,j}\rangle_u  \Big]\\
	&\leq \tilde{\mathbb{E}}^{b^{\ell ,\epsilon},d^{\ell ,\epsilon,ij}}_s\Big[ \int_s^\ell  \frac{\Gamma_u^{\ell ,\epsilon}}{\Gamma_s^{\ell ,\epsilon}}\big(C_1|X_u|+|f(0,0,0)|\big)\,du\Big]\\
    &\quad +\tilde{\mathbb{E}}^{b^{\ell ,\epsilon},d^{\ell ,\epsilon,ij}}_s\Big[\int_s^\ell  \frac{\overline{\sigma}^2\Gamma_u^{\ell ,\epsilon}}{\Gamma_s^{\ell ,\epsilon}}\big(C_1|X_u|+|g_{ij}(0,0,0)|\big)\,du\Big]\\
	&\quad+\tilde{\mathbb{E}}^{b^{\ell ,\epsilon},d^{\ell ,\epsilon,ij}}_s\Big[ \int_s^\ell  \frac{\Gamma_u^{\ell ,\epsilon}}{\Gamma_s^{\ell ,\epsilon}}m_u^{\ell ,\epsilon}\,du+\int_s^\ell  \frac{\Gamma_u^{\ell ,\epsilon}}{\Gamma_s^{\ell ,\epsilon}}n_u^{\ell ,\epsilon,ij}\,d\langle B^{\ell ,\epsilon,i},B^{\ell ,\epsilon,j}\rangle_u\Big]\,.
	\end{aligned}\end{equation}
	Incorporating this inequality with Lemma \ref{infinite mean is finite}, 
	by the similar argument in the proof of uniqueness,
	we have
	\begin{equation}\begin{aligned}
	|Y_s^\ell |\leq \frac{(1+\overline{\sigma}^2)\max(|f(0,0,0)|,|g_{ij}(0,0,0)|)}{\mu}+\frac{(1+\overline{\sigma}^2)C_1\tilde{L}(1+|X_s|)}{\mu-k}\,
	\end{aligned}\end{equation}
	for some positive constants $\tilde{L}$ and $k.$

	We now prove the inequality
	$|Y_s^x-Y_s^{x'}|\leq L|X_s^x-X_s^{x'}|
	$ in \eqref{eqn:inq}.
	Let $(\tilde{Y},\tilde{Z},\tilde{K}):=(Y^x-Y^{x'},Z^x-Z^{x'},K^x-K^{x'})$. By a similar argument in the proof of Theorem \ref{Finite QBSDE exis and unique}, for each $\epsilon>0,$ we define   processes $\tilde{a}^{\epsilon},\tilde{b}^{\epsilon},\tilde{c}^{\epsilon,ij},\tilde{d}^{\epsilon,ij},\tilde{m}^{\epsilon},\tilde{n}^{\epsilon,ij}, \tilde{\Gamma}^\epsilon$,  
  a sublinear expectation $\tilde{\mathbb{E}}^{\tilde{b}^\epsilon, \tilde{d}^{\epsilon,ij}}$ and a $G$-Brownian motion $\tilde{B}^\epsilon $ such that 
	\begin{equation}\begin{aligned}
	\tilde{Y}_s&=\tilde{Y}_T+\int_s^T(\tilde{a}^\epsilon_u\tilde{Y}_u+\tilde{b}^\epsilon_u\tilde{Z}_u+\tilde{m}_u^\epsilon+h_u)\,du\\
   & \quad +\int_s^T(\tilde{c}^{\epsilon,ij}_u\tilde{Y}_u+\tilde{d}^{\epsilon,ij}_u\tilde{Z}_u+\tilde{n}_u^{\epsilon,ij}+k^{ij}_u)\,d\langle B^i,B^j\rangle_u-\int_s^T\tilde{Z}_u\,dB_u-(\tilde{K}_T-\tilde{K}_s)\,,
	\end{aligned}\end{equation}
	where 
	$   h_s:=f(X_s^x,Y_s^{x'},Z_s^{x'})-f(X_s^{x'},Y_s^{x'},Z_s^{x'})$ and $k^{ij}_s:=g_{ij}(X_s^x,Y_s^{x'},Z_s^{x'})-g_{ij}(X_s^{x'},Y_s^{x'},Z_s^{x'}).$
	Then \begin{equation}\begin{aligned}
	\tilde{Y}_s
	\leq \tilde{\mathbb{E}}_s^{\tilde{b}^\epsilon,\tilde{d}^{\epsilon,ij}}\Big[\frac{\tilde{\Gamma}_T^\epsilon}{\tilde{\Gamma}_s^\epsilon}\tilde{Y}_T+\int_s^T\frac{\tilde{\Gamma}_u^\epsilon}{\tilde{\Gamma}_s^\epsilon}(\tilde{m}_u^\epsilon+h_u)\,du+\int_s^T\frac{\tilde{\Gamma}_u^\epsilon}{\tilde{\Gamma}_s^\epsilon}(\tilde{n}_u^{\epsilon,ij}+k^{ij}_u)\,d\langle \tilde{B}^{\epsilon,i},\tilde{B}^{\epsilon,j} \rangle _u\Big]\,.\label{eqn:4}
	\end{aligned}\end{equation}
	Observe that 
	\begin{equation}
    \begin{aligned} &\quad\tilde{\mathbb{E}}_s^{\tilde{b}^\epsilon,\tilde{d}^{\epsilon,ij}}\Big[\int_s^T\frac{\tilde{\Gamma}_u^\epsilon}{\tilde{\Gamma}_s^\epsilon} h_u\,du+\int_s^T\frac{\tilde{\Gamma}_u^\epsilon}{\tilde{\Gamma}_s^\epsilon}k^{ij}_u\,d\langle \tilde{B}^{\epsilon,i},\tilde{B}^{\epsilon,j} \rangle _u\Big]\\
	&\leq\tilde{\mathbb{E}}_s^{\tilde{b}^\epsilon,\tilde{d}^{\epsilon,ij}}\Big[\int_s^TC_1\frac{\tilde{\Gamma}_u^\epsilon}{\tilde{\Gamma}_s^\epsilon}|X_u^x-X_u^{x'}|\,du+\int_s^TC_1\overline{\sigma}^2\frac{\tilde{\Gamma}_u^\epsilon}{\tilde{\Gamma}_s^\epsilon}|X_u^x-X_u^{x'}|\,du\Big]\\
	&\leq \frac{C_1(1+\overline{\sigma}^2)}{\mu+\eta-(1+\overline{\sigma}^2)C_\sigma C_3(1+2L_1)}|X_s^x-X_s^{x'}|\label{eqn:1}\,.
	\end{aligned}\end{equation}
    By a similar argument in the proof of uniqueness, choosing a constant $k$ such that $0<k<\mu$ and $k+\eta-(1+\overline{\sigma}^2)C_\sigma C_3(1+2L_1)>0$,
	we have
	\begin{equation}
	\begin{aligned}
	\tilde{\mathbb{E}}_s^{\tilde{b}^\epsilon,\tilde{d}^{\epsilon,ij}}\Big[\frac{\tilde{\Gamma}_T^\epsilon}{\tilde{\Gamma}_s^\epsilon}\tilde{Y}_T\Big]&\leq e^{(-\mu+k)(T-s)}\tilde{\mathbb{E}}_s^{\tilde{b}^\epsilon,\tilde{d}^{\epsilon,ij}}\Big[e^{-k(T-s)}\tilde{Y}_T\Big]\\
    &\leq e^{(-\mu+k)(T-s)}\tilde{\mathbb{E}}_s^{\tilde{b}^\epsilon,\tilde{d}^{\epsilon,ij}}\Big[e^{-k(T-s)}L(2+|X_T^x|+|X_T^{x'}|)\Big]\,.\\
	\end{aligned}
	\end{equation}
	Combining this inequality with Lemma \ref{infinite mean is finite}, 
	\begin{equation}\begin{aligned}
	\lim_{T\rightarrow\infty}\tilde{\mathbb{E}}_s^{\tilde{b}^\epsilon,\tilde{d}^{\epsilon,ij}}\Big[\frac{\tilde{\Gamma}_T^\epsilon}{\tilde{\Gamma}_s^\epsilon}\tilde{Y}_T\Big]=0\,.\label{eqn:2}
	\end{aligned}\end{equation}
	Since $ |\tilde{m}^\epsilon|,|\tilde{n}^{\epsilon,ij}|\leq (C_2(1+d)+2C_3(1+2L_1))\epsilon$, 
	\begin{equation}\begin{aligned}
	&\quad\tilde{\mathbb{E}}_s^{\tilde{b}^\epsilon,\tilde{d}^{\epsilon,ij}}\Big[\int_s^T\frac{\tilde{\Gamma}_u^\epsilon}{\tilde{\Gamma}_s^\epsilon} \tilde{m}_u^\epsilon\,du+\int_s^T\frac{\tilde{\Gamma}_u^\epsilon}{\tilde{\Gamma}_s^\epsilon}\tilde{n}_u^{\epsilon,ij}\,d\langle \tilde{B}^{\epsilon,i},\tilde{B}^{\epsilon,j} \rangle _u\Big]\\
    &\leq \frac{(1+\overline{\sigma}^2)(C_2(1+d)+2C_3(1+2L_1))\epsilon}{\mu}\label{eqn:3}\,.
	\end{aligned}\end{equation}
	Substituting \eqref{eqn:1}, \eqref{eqn:2}, \eqref{eqn:3}
	to \eqref{eqn:4},  we have   
	$\tilde{Y}_s\leq L|X_s^x-X_s^{x'}|$,
	where
	$$L:=\frac{C_1(1+\overline{\sigma}^2)}{\mu+\eta-(1+\overline{\sigma}^2)C_\sigma C_3(1+2L_1)}\,.$$
	Using a similar method, it can be verified that $-\tilde{Y}_s\leq L|X_s^x-X_s^{x'}|$.
	Therefore, we obtain the inequality  
	$|Y_s^x-Y_s^{x'}|=|\tilde{Y}_s|\leq L|X_s^x-X_s^{x'}|.$ 
\end{proof}

\begin{thm}\label{ergodic fey kac formula}
	Suppose Assumptions \ref{ergodic assumption1}-\ref{ergodic assumption2} hold and
	let $(Y^x,Z^x,K^x)$ be the unique solution to \eqref{ergodic QBSDE2} satisfying \eqref{eqn:inq}. Define a function $u$ on $\mathbb{R}^m$ as $u(x):=Y_0^x.$    Then $Y_t^x=u(X_t^x)$ for $(t,x)\in [0,\infty)\times\mathbb{R}^m $
	and $u$ is a viscosity solution to the PDE
	\begin{equation}\begin{aligned}
	G(H(x, u, D_xu, D_x^2u ))+\langle b(x),D_xu\rangle+f(x,u,\langle\sigma^1(x),D_xu\rangle,\cdots,\langle\sigma^d(x),D_xu\rangle)=0\label{infinite horizon elliptic PDE}\,,
	\end{aligned}\end{equation}
	where
	\begin{equation}\begin{aligned}
	H_{ij}(x, u, D_xu, D_x^2u )=&\;\langle D^2_xu\sigma^i(x),\sigma^j(x)\rangle+2\langle D_xu,h_{ij}(x)\rangle\\&+2g_{ij}(x,u,\langle\sigma^1(x),D_xu\rangle,\cdots,\langle\sigma^d(x),D_xu\rangle)\,,
	\end{aligned}\end{equation}
	and $\sigma^i$  is the $i$-th column vector of $\sigma$ for $i=1,2,\cdots,d.$    
\end{thm}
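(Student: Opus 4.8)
The plan is to establish the statement in two stages: first the Markov-type identity $Y_t^x = u(X_t^x)$, and then the viscosity characterisation, which I would obtain by transferring the finite-horizon Feynman--Kac result of Theorem \ref{QBSDE feykac formula} to the stationary setting. For the identity I would exploit that every coefficient in \eqref{ergodic SDE2}--\eqref{ergodic QBSDE2} is autonomous. Consequently the forward $G$-SDE has the flow property $X_{t+v}^x = X_v^{X_t^x}$ when driven by the shifted canonical process $B_{t+\cdot}-B_t$, which is again a $G$-Brownian motion. Substituting $u\mapsto t+v$ into \eqref{ergodic QBSDE2} shows that the shifted triple $(Y_{t+v}^x, Z_{t+v}^x, K_{t+v}^x-K_t^x)_{v\ge0}$ solves the infinite-horizon $G$-BSDE \eqref{ergodic QBSDE2} with the forward process started from $X_t^x$. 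By the uniqueness in Theorem \ref{infinite horizon Quadartic BSDE exist and unique} this triple must coincide with $(Y_v^{X_t^x}, Z_v^{X_t^x}, K_v^{X_t^x})$, and evaluating at $v=0$ gives $Y_t^x = Y_0^{X_t^x} = u(X_t^x)$. The Lipschitz bound $|Y_0^x-Y_0^{x'}|\le L|x-x'|$ from \eqref{eqn:inq} guarantees that $u$ is Lipschitz and lets me make sense of $u$ evaluated at the random state $X_t^x$ and pass from deterministic to random initial data.

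With the identity secured, I would prove the viscosity property by adapting, almost verbatim, the local comparison argument in the proof of Theorem \ref{QBSDE feykac formula}, the only change being the use of a \emph{time-independent} test function. To show that $u$ is a viscosity subsolution of \eqref{infinite horizon elliptic PDE} at a point $x_0$, take a smooth $\phi$ with $\phi\ge u$ and $\phi(x_0)=u(x_0)$, and set $\psi(s,x):=\phi(x)$. Solving the $G$-BSDE on $[0,\delta]$ with terminal value $\phi(X_\delta^{x_0})\ge u(X_\delta^{x_0})=Y_\delta^{x_0}$, the comparison principle forces its initial value to dominate $Y_0^{x_0}=u(x_0)=\phi(x_0)$; applying It\^{o}'s formula to $s\mapsto\psi(s,X_s^{x_0})$, dividing by $\delta$, and invoking the estimates of Theorem \ref{SDE sublinear property} exactly as in Theorem \ref{QBSDE feykac formula}, one obtains in the limit $\delta\to0$ the inequality
\begin{equation}
\begin{aligned}
&G(H(x_0,u(x_0),D\phi(x_0),D^2\phi(x_0)))+\langle b(x_0),D\phi(x_0)\rangle\\
&\quad+f(x_0,u(x_0),\langle\sigma^1(x_0),D\phi(x_0)\rangle,\dots,\langle\sigma^d(x_0),D\phi(x_0)\rangle)\ge0.
\end{aligned}
\end{equation}
Because $\psi$ carries no time dependence, the term $\partial_t\psi$ present in the parabolic argument vanishes, so the limiting inequality is exactly the elliptic subsolution inequality for \eqref{infinite horizon elliptic PDE}; the supersolution inequality follows symmetrically with $\phi\le u$. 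Conceptually this reflects that a stationary solution is precisely a time-independent solution of the parabolic equation of Theorem \ref{QBSDE feykac formula}, for which the $\partial_t$ term drops out.

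The main obstacle I expect is the rigorous justification of the first stage in the $G$-expectation framework. Establishing the flow/Markov property of the forward $G$-SDE and identifying the shifted backward solution with the one started from the random state $X_t^x$ is delicate, because conditioning and shifting interact nontrivially with the sublinear (nonlinear) nature of $\hat{\mathbb{E}}$. I would address this by first proving the identity for deterministic initial data, then approximating $X_t^x$ by simple $\mathcal{F}_t$-measurable random variables and passing to the limit using the Lipschitz estimate \eqref{eqn:inq} together with the stability estimates of Theorem \ref{SDE sublinear property} and Proposition \ref{in sublinear DCT}; Proposition \ref{relation P and sublinear} then upgrades the resulting equality to a quasi-sure identity. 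Once $Y_t^x=u(X_t^x)$ is in place, the remaining derivation of the elliptic PDE is essentially a transcription of the finite-horizon proof and should present no new difficulty.
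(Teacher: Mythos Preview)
Your proposal is correct and follows essentially the same two-stage strategy as the paper: establish $Y_t^x=u(X_t^x)$ via the flow property and uniqueness, then run the local comparison argument of Theorem~\ref{QBSDE feykac formula} with a time-independent test function to obtain the elliptic viscosity inequality. The only methodological difference is that the paper first truncates the driver using the a~priori bound $|Z^x|\le L_1$ from \eqref{eqn:inq}, thereby reducing \eqref{ergodic QBSDE2} to a Lipschitz $G$-BSDE and invoking \cite[Lemma~4.2]{hu2018ergodic} for the Markov identity rather than spelling out the shift-and-uniqueness argument directly; this sidesteps precisely the $G$-framework subtleties you flag as the main obstacle, so you may find it a cleaner route.
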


\begin{proof}
	Define $$L_1:=\frac{\mu+\eta-C_{\sigma}C_3(1+\overline{\sigma}^2)} {4(1+\overline{\sigma}^2)C_{\sigma}C_3} $$   and $f^{L_1}(x,y,z):=f(x,y,z^{L_1})$, $g_{ij}^{L_1}(x,y,z):=g_{ij}(x,y,z^{L_1})$, where $z^{L_1}:=\frac{|z|\wedge L_1}{|z|}z$ with the convention $\frac{\,0\,}{0}=0.$ Since $|Z^x|\leq L_1$,  the $G$-BSDE \eqref{ergodic QBSDE2} can be written 
    as 
    \begin{equation}\begin{aligned}
	Y_s^x&=Y_T^x+\int_s^Tf^{L_1}(X_u^x,Y_u^x,Z_u^x)\,du+\int_s^Tg^{L_1}_{ij}(X_u^x,Y_u^x,Z_u^x)\,d\langle B^i,B^j\rangle_u\\&\quad-\int_s^TZ_u^x\,dB_u-(K_T^x-K_s^x)\,.
	\end{aligned}\end{equation} 
	Similar to the proof of \cite[Lemma 4.2]{hu2018ergodic}, one can easily show that $Y_t^x=u(X_t^x)$ for $(t,x)\in [0,\infty)\times\mathbb{R}^m.$

	The continuity of $u$ is evident from \eqref{eqn:inq}.
	Observe that
	\begin{equation}\begin{aligned}
	u(x)&=u(X_\delta^x)+\int_0^\delta f(X_u^x,u(X_u^x),Z_u^x)\,du+\int_0^\delta g_{ij}(X_u^x,u(X_u^x),Z_u^x)\,d\langle B^i,B^j\rangle_u\\&
    \quad-\int_0^\delta Z_u^x\,dB_u-K_\delta^x
	\end{aligned}\end{equation}
	for $\delta>0$.
	From this equation, a similar argument in the proof of Theorem \ref{QBSDE feykac formula}
	shows that
	$u$ is a viscosity solution to \eqref{infinite horizon elliptic PDE}.

\end{proof}

\section{Ergodic $G$-BSDEs}\label{section 5}

Consider the $G$-SDE and $G$-BSDE
\begin{align} \label{ergodic SDE3}
X_s^x&=x+\int_0^sb(X_u^x)\,du+\int_0^sh_{ij}(X_u^x)\,d\langle B^i,B^j\rangle_u+\int_0^s\sigma(X_u^x)\,dB_u\,,\\
\label{ergodic QBSDE3} Y_s^{x}&=Y_T^{x}+\int_s^T(f(X_u^{x},Z_u^{x})+\gamma^1\lambda) \,du+\int_s^T(g_{ij}(X_u^{x},Z_u^{x})+\gamma^2_{ij}\lambda)\, d\langle B^i,B^j\rangle_u\\&
\quad-\int_s^TZ_u^{x}\,dB_u-(K_T^{x}-K_s^{x}) 
\end{align}
for  $0\leq s \leq T<\infty$, where $b,h_{ij}: \mathbb{R}^m\to \mathbb{R}^m$, $\sigma:\mathbb{R}^m\to \mathbb{R}^{m\times d}$, $f,g_{ij}: \mathbb{R}^m\times \mathbb{R}^d\to
\mathbb{R}$ are functions, $\gamma^1$ is a constant and  $\gamma^2=(\gamma_{ij}^2)_{i,j=1}^d$ is a symmetric matrix. Occasionally, we write $X^x,Y^x,Z^x,K^x$ as $X,Y,Z,K$, repectively, omitting the superscripts $x$.
\begin{defi}
	A quartet $(Y,Z,K,\lambda)$ is a solution to the $G$-BSDE \eqref{ergodic QBSDE3} if
	\begin{enumerate}[label=(\roman*)]
		\item  $(Y,Z)\in\mathbb{S}_G^2(0,\infty)\times\mathbb{H}_G^{2}(0,\infty;\mathbb{R}^d)\,,$ 
		\item  the process $K$ is a decreasing $G$-martingale with $K_0=0$ and $K_T\in \mathbb{L}^2_G(\Omega_T)$ for each $T>0$\,,
		\item $\lambda$ is a real number\,,
		\item  the quartet $(Y,Z,K,\lambda)$ satisfies \eqref{ergodic QBSDE3} quasi-surely for $0\leq s\leq T<\infty$\,.
	\end{enumerate}
\end{defi}

\begin{assume}\label{ergodic assumption3} Assume $b,h_{ij},\sigma,f,g_{ij},\gamma^1,\gamma^2$ satisfy the following properties.
	\begin{enumerate}[label=(\roman*)]
		\item   $\gamma^1+2G(\gamma^2)<0$. 
		\item  There exist  constants $C_1,C_3$ such that 
		\begin{equation}
        \begin{aligned}
		|f(x,z)-f(x'&,z')|+\sum_{i,j=1}^d|g_{ij}(x,z)-g_{ij}(x',z')|\\&\leq C_1|x-x'|+C_3(1+|z|+|z'|)|z-z'|\,,
		\end{aligned}
        \end{equation}
		for $x,x'\in \mathbb{R}^m$, $z,z'\in \mathbb{R}^d$.
		\item\label{eta ergodic} There is a constant $\eta>0$ such that
		\begin{equation}\begin{aligned}
		&G\big( \sum_{j=1}^m(\sigma_j(x)-\sigma_j(x'))^{\top} (\sigma_j(x)-\sigma_j(x')) +2(\langle x-x',h_{ij}(x)-h_{ij}(x')\rangle)_{i,j=1}^d\big)\\
		&+\langle x-x',b(x)-b(x')\rangle\leq -\eta|x-x'|^2\,
		\end{aligned}\end{equation}
		for $x,x'\in \mathbb{R}^m$,
		where $\sigma_j$ is the $j$-th row vector of $\sigma$ for $j=1,2,\cdots,m.$
		\item $
		\eta>(1+\overline{\sigma}^2)\Bigg(C_\sigma C_3+4\sqrt{C_\sigma C_1C_3\frac{\overline{\sigma}M_\sigma}{\underline{\sigma}}} \Bigg)\,.$
	\end{enumerate}
\end{assume}

The following lemma is crucial for our upcoming discussion as it can further refine the estimation of $|Y|$.
We recall 
the sublinear conditional expectation $\tilde{\mathbb{E}}_t^{b,d^{ij}}$  in Theorem \ref{SDE sublinear property}.

\begin{lemma}\label{Ergodic mean is finite}   Let Assumption \ref{ergodic assumption1} and \ref{eta ergodic} in Assumption \ref{ergodic assumption3} hold and  $X^{x}$ be a solution to \eqref{ergodic SDE3} for $x\in \mathbb{R}^m$. Suppose  $p\ge1$ and $b,d^{ij}\in \mathbb{M}^2_G(0,\infty;\mathbb{R}^d)$ are  processes satisfying $d^{ij}=d^{ji}$ and $|b|,|d^{ij}|\le C_4$ for some constant $C_4>0$.
	Then there exists a  constant $L>0$ depending only on $\underline{\sigma},\overline{\sigma}, C_{\sigma}, C_4,\eta, p$
	such that	
	\begin{equation}\begin{aligned}
	\tilde{\mathbb{E}}_t^{b,d^{ij}}[|X_s^x|^p]\le L(1+|X_t^x|^p) 
	\end{aligned}\end{equation}
	for all $0\le t\le s<\infty$ and $x\in \mathbb{R}^m.$ 
\end{lemma}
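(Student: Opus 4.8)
The plan is to run the Lyapunov argument of Lemma~\ref{infinite mean is finite}, but with the smooth test function $\phi(x)=(1+|x|^2)^{p/2}$ (which is smooth for every $p\ge1$) and with an exponential weight tuned to the dissipativity rate itself rather than to an external discount. Writing $X=X^x$ and $v_s=1+|X_s|^2$, and recalling from \eqref{new_GBM} that $B^{b,d^{ij}}=B-\int_0^\cdot b_u\,du-\int_0^\cdot d^{ij}_u\,d\langle B^i,B^j\rangle_u$ is a $G$-Brownian motion under $\tilde{\mathbb{E}}^{b,d^{ij}}$ with the same quadratic variation as $B$, I first rewrite \eqref{ergodic SDE3} as
\[
dX_s=\big(b(X_s)+\sigma(X_s)b_s\big)\,ds+\big(h_{ij}(X_s)+\sigma(X_s)d^{ij}_s\big)\,d\langle B^i,B^j\rangle_s+\sigma(X_s)\,dB^{b,d^{ij}}_s.
\]
Applying It\^{o}'s formula to $\phi(X_s)$ produces a $ds$-drift, a family of $d\langle B^i,B^j\rangle_s$ terms coming both from the first-order part $\langle D\phi,\,\cdot\,\rangle$ and from the Hessian part $\tfrac12(\sigma^\top D^2\phi\,\sigma)_{ij}$, and a stochastic integral $M_s=\int_t^s pv_u^{p/2-1}X_u^\top\sigma(X_u)\,dB^{b,d^{ij}}_u$ against $B^{b,d^{ij}}$.

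Next I gather all the $d\langle B^i,B^j\rangle_s$ integrands into one symmetric $d\times d$ matrix $\Theta(s)$ and use the pathwise domination $\int \Theta_{ij}(u)\,d\langle B^i,B^j\rangle_u\le\int 2G(\Theta(u))\,du$ together with the positive homogeneity and subadditivity of $G$. Splitting each coefficient as $\psi(X_s)=\tilde\psi(s)+\psi(0)$ with $\tilde\psi(s)=\psi(X_s)-\psi(0)$, the genuinely dissipative contribution equals
\[
pv_s^{p/2-1}\Big[\langle X_s,\tilde b(s)\rangle+G\big(\textstyle\sum_k\tilde\sigma_k(s)^\top\tilde\sigma_k(s)+2(\langle X_s,\tilde h_{ij}(s)\rangle)_{i,j=1}^d\big)\Big],
\]
and, after factoring out $pv_s^{p/2-1}>0$, Assumption~\ref{eta ergodic} in Assumption~\ref{ergodic assumption3} applied with $x'=0$ bounds this by $-p\eta\,v_s^{p/2-1}|X_s|^2=-p\eta\,(v_s^{p/2}-v_s^{p/2-1})$. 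Every remaining contribution — those carrying $b(0),h_{ij}(0),\sigma(0)$, those carrying $b_s,d^{ij}_s$, and the Hessian cross term $p(p-2)v_s^{p/2-2}(\sigma^\top X_s)(\sigma^\top X_s)^\top$ — involves only the bounded $\sigma$, the bounds $|b_u|,|d^{ij}_u|\le C_4$ and the constants $C_\sigma,\overline\sigma$, and is dominated by a constant multiple of $v_s^{(p-1)/2}+v_s^{p/2-1}$, strictly lower powers of $v_s$ than $v_s^{p/2}$ (here one uses $|X_s|^2\le v_s$). Young's inequality absorbs these into $\tfrac{p\eta}{2}v_s^{p/2}$ plus a constant, giving the quasi-sure differential inequality $d\phi(X_s)\le(-\tfrac{p\eta}{2}\phi(X_s)+C')\,ds+dM_s$ for a constant $C'$ with the dependence asserted in the statement.

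Multiplying by $e^{(p\eta/2)s}$ cancels the $\phi$ term, $d\big(e^{(p\eta/2)s}\phi(X_s)\big)\le C'e^{(p\eta/2)s}\,ds+e^{(p\eta/2)s}\,dM_s$. Integrating over $[t,s]$ and applying $\tilde{\mathbb{E}}_t^{b,d^{ij}}$, the integral $\int_t^\cdot e^{(p\eta/2)u}\,dM_u$ is a symmetric $G$-martingale under $\tilde{\mathbb{E}}^{b,d^{ij}}$, so its conditional expectation vanishes, while the other terms are $\mathcal{F}_t$-measurable or deterministic; dividing by $e^{(p\eta/2)s}$ yields
\[
\tilde{\mathbb{E}}_t^{b,d^{ij}}[\phi(X_s)]\le e^{-(p\eta/2)(s-t)}\phi(X_t)+\tfrac{2C'}{p\eta}\le\phi(X_t)+\tfrac{2C'}{p\eta},
\]
uniformly in $s\ge t$. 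Since $|X_s|^p\le\phi(X_s)$ and $\phi(X_t)\le C_p(1+|X_t|^p)$ for a constant $C_p$ depending only on $p$, the claim follows.

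The principal obstacle is twofold. For $p>2$ the martingale integrand $pv^{p/2-1}X^\top\sigma(X)$ grows like $|X|^{p-1}$, which need not lie in $\mathbb{H}_G^2$ solely from $X\in\mathbb{S}_G^2$; I therefore run the computation on the stopped interval $[t,s\wedge\tau_n]$ with $\tau_n=\inf\{u\ge t:|X_u|\ge n\}$, where the integrand is bounded and the stopped stochastic integral is a genuine symmetric $G$-martingale of zero conditional expectation, and then let $n\to\infty$ (using $\tau_n\to\infty$ quasi-surely by path continuity) via Fatou's lemma for $\tilde{\mathbb{E}}_t^{b,d^{ij}}$. The remaining delicate point is the bookkeeping that keeps every $d\langle B^i,B^j\rangle_s$ contribution inside a single $G(\cdot)$, so that the one dissipativity inequality can be invoked after the factor $pv_s^{p/2-1}$ is pulled out, together with the verification that the Hessian cross term is genuinely of lower order; this is exactly where the positive homogeneity and monotonicity of $G$, the bound $|X_s|^2\le v_s$, and the boundedness of $\sigma$ are used in concert.
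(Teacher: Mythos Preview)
Your proposal is correct. Both your argument and the paper's proceed by applying It\^{o}'s formula to a Lyapunov function with an exponential weight and then invoking the dissipativity hypothesis with $x'=0$ to make the leading drift strictly negative. The difference lies in the choice of test function and in how the lower-order terms are absorbed. The paper works with $|X_s|^{2p}$ for integer $p$ (after reducing general exponents to even integers by Jensen) and runs an induction in $p$: at each step the bound on $\tilde{\mathbb{E}}_t^{b,d^{ij}}[|X_u|^{2p-2}]$ from the previous stage is fed back into the estimate for $\tilde{\mathbb{E}}_t^{b,d^{ij}}[|X_u|^{2p}]$. You instead take the globally smooth Lyapunov function $(1+|X_s|^2)^{p/2}$, which lets you handle all $p\ge1$ in a single pass and replaces the induction by one Young-inequality absorption of the $v^{(p-1)/2}$ and $v^{p/2-1}$ remainders into $\tfrac{p\eta}{2}\phi$. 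Your explicit localization by $\tau_n$ is also more careful than the paper's, which silently relies on $X\in\mathbb{S}_G^q$ for every $q$ to treat the stochastic integral as a genuine symmetric $G$-martingale. In short, your route is a bit more streamlined; the paper's is the classical moment-by-moment induction.
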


\begin{proof}
Denote as $X=X^x$ for convenience.
 Recall that $B_s^{b,d^{ij}}=B_s-\int_0^s b_u\,du-\int_0^s d^{ij}_u\,d\langle B^i,B^j\rangle_u,s\ge0$ is the $G$-Brownian motion under the sublinear expectation $\tilde{\mathbb{E}}^{b,d^{ij}}$.  Hereafter, $L$ is a generic constant depending only on $\underline{\sigma},\overline{\sigma}, C_{\sigma}, C_4,\eta,p$ and may differ line by line. 
	By Jensen's inequality, it suffices to prove 
	\begin{equation}\begin{aligned}
	\tilde{\mathbb{E}}^{b,d^{ij}}_t[|X_s|^{2p}]\leq L(1+|X_t|^{2p})
	\end{aligned}\end{equation} for positive even integers $2p$.  
	For $p\in \mathbb{N}$, applying the It\^{o} formula, we obtain
	\begin{equation}\begin{aligned}
	e^{p\eta s}|X_s|^{2p}&\le e^{p\eta t}|X_t|^{2p}+\int_t^s p\eta e^{p\eta u}|X_u|^{2p}\,du\\
    &\quad+\int_t^s 2pe^{p\eta u}|X_u|^{2p-2}\langle X_u,b(X_u)-b(0)\rangle\,du\\
	&\quad+\int_t^s 2pe^{p\eta u}|X_u|^{2p-2}G(\langle 2X_u,h_{ij}(X_u)-h_{ij}(0)\rangle)\,du+\Lambda_s+M_s\,,
	\end{aligned}\end{equation}
	where 
	\begin{equation}\begin{aligned}
	M_s&=\int_t^s 2pe^{p\eta u}|X_u|^{2p-2}X_u^{\top}\sigma(X_u)\,d B^{b,d^{ij}}_u\,,\\
	\Lambda_s&=\int_t^s 2pe^{p\eta u}|X_u|^{2p-2}\langle X_u, b(0)+\sigma(X_u)b_u\rangle \,du\\
    &\quad+\int_t^s 4pe^{p\eta u}|X_u|^{2p-2}G(\langle X_u,h_{ij}(0)\rangle)\,du\\
    &\quad+\int_t^s 4pe^{p\eta u}|X_u|^{2p-2}G(\langle X_u,\sigma(X_u)d_u^{ij}\rangle)\,du\\
    &\quad+\int_t^s2pe^{p\eta u}|X_u|^{2p-2} G(\sigma^{\top}(X_u)\sigma(X_u))\,du\\
    &\quad+\int_t^s4p(p-1)e^{p\eta u}|X_u|^{2p-4}G(\sigma^{\top}(X_u)X_uX_u^{\top}\sigma(X_u))\,du\,.
	\end{aligned}\end{equation}
	Since $G$ is a monotone increasing function, $\langle x,b(x)-b(0)\rangle+G(2\langle x,h_{ij}(x)-h_{ij}(0)\rangle )\leq -\eta|x|^2$. It follows that
	\begin{equation}\begin{aligned}
	e^{p\eta s}|X_s|^{2p}&\leq e^{p\eta t}|X_t|^{2p}-\int_t^s p\eta e^{p\eta u}|X_u|^{2p}\,du+\Lambda_s+M_s\,.
	\end{aligned}\end{equation} Observe that 
    \begin{align}
        \Lambda_s\le &\int_t^s 2pe^{p\eta u}|X_u|^{2p-2}|\langle X_u, b(0)+\sigma(X_u)b_u\rangle|\,du\\
    &+\int_t^s 2p\overline{\sigma}^2e^{p\eta u}|X_u|^{2p-2}|\langle X_u,h_{ij}(0)\rangle|\,du\\
    &+\int_t^s 2p\overline{\sigma}^2e^{p\eta u}|X_u|^{2p-2}|\langle X_u,\sigma(X_u)d_u^{ij}\rangle|\,du\\
    &+\int_t^sp\overline{\sigma}^2e^{p\eta u}|X_u|^{2p-2} |\sigma(X_u)|^2\,du\\
    &+\int_t^s2p(p-1)\overline{\sigma}^2e^{p\eta u}|X_u|^{2p-4}X_u^{\top}\sigma(X_u)\sigma^{\top}(X_u)X_u\,du\,\\
    \le& \int_t^s L_1 e^{p\eta u}|X_u|^{2p-2}(|X_u|+1)\,du\,,
    \end{align}
    where $L_1$ is a constant depending only on $\overline{\sigma}, M_\sigma,C_4,p.$
    From the inequality $ab\leq \frac{ca^2}{2}+\frac{b^2}{2c}$ for $c\geq 0$ with $a=X_u$, $b=L_1$, $c=2p\eta$,
     we have
	\begin{equation}\begin{aligned}
	\Lambda_s\leq p\eta\int_t^s e^{p\eta u}|X_u|^{2p}\,du+L\int_t^se^{p\eta u}|X_u|^{2p-2}\,du\,.
	\end{aligned}\end{equation}
	Since $M$ is a $G$-martingale, 
	\begin{equation}
    \begin{aligned}
	\tilde{\mathbb{E}}^{b,d^{ij}}_t[e^{p\eta s}|X_s|^{2p}]&\leq e^{p\eta t}|X_t|^{2p}+L\tilde{\mathbb{E}}^{b,d^{ij}}_t\Big[\int_t^s e^{p\eta u}|X_u|^{2p-2}\,du\Big]\\
    &\leq e^{p\eta t}|X_t|^{2p}+L\int_t^s e^{p\eta u}\tilde{\mathbb{E}}^{b,d^{ij}}_t[|X_u|^{2p-2}]\,du\,.
	\end{aligned}
    \end{equation}
	
	We now apply the mathematical induction on $p.$ For $p=1$,
	\begin{equation}\begin{aligned}
	\tilde{\mathbb{E}}^{b,d^{ij}}_t[e^{\eta s}|X_s|^{2}]\leq e^{\eta t}|X_t|^{2}+Le^{\eta s}\,,
	\end{aligned}\end{equation}
	thus $\tilde{\mathbb{E}}^{b,d^{ij}}_t[|X_s|^2]\leq L(1+|X_t|^2).$
	By the induction hypothesis for $p-1,$
	$$\tilde{\mathbb{E}}^{b,d^{ij}}_t[|X_u|^{2p-2}]\le L(1+|X_t|^{2p-2})\leq L(2+|X_t|^{2p})\,,$$
	which gives 
	$\tilde{\mathbb{E}}^{b,d^{ij}}_t[|X_s|^{2p}]\leq L(1+|X_t|^{2p}).$ This completes the proof.
\end{proof}

\begin{thm}\label{QBSDE with lambda sol exist}
	Let Assumptions \ref{ergodic assumption1} and \ref{ergodic assumption3} hold. Then,
	there exists a solution $(Y^x,Z^x,K^x,\lambda)$ to the ergodic $G$-BSDE
	\eqref{ergodic QBSDE3}    such that $|Y^x|\leq L|X^x|$ and $|Z^x|\le L$ for a constant $L>0$ depending only on $\overline{\sigma},  C_1,C_3,C_{\sigma},\eta$.
\end{thm}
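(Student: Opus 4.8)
The plan is to use the vanishing-discount (penalization) method: I reduce the ergodic problem to a family of infinite-horizon quadratic $G$-BSDEs from Section \ref{section 4}, indexed by a discount rate $\alpha>0$, and then let $\alpha\downarrow0$.

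\textbf{Step 1 (Penalization).} For each $\alpha>0$ consider the infinite-horizon $G$-BSDE with drivers $f^\alpha(x,y,z):=f(x,z)+\alpha\gamma^1 y$ and $g^\alpha_{ij}(x,y,z):=g_{ij}(x,z)+\alpha\gamma^2_{ij}y$. Since $f,g_{ij}$ are $y$-independent and $G$ is positively homogeneous, for $y\neq y'$ one has $(f^\alpha(x,y,z)-f^\alpha(x,y',z))(y-y')+2G\big((g^\alpha_{ij}(x,y,z)-g^\alpha_{ij}(x,y',z))(y-y')\big)=\alpha(\gamma^1+2G(\gamma^2))|y-y'|^2=-\mu|y-y'|^2$ with $\mu:=-\alpha(\gamma^1+2G(\gamma^2))>0$ by Assumption \ref{ergodic assumption3}(i). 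Together with Assumptions \ref{ergodic assumption1} and \ref{ergodic assumption3}, the pair $(f^\alpha,g^\alpha_{ij})$ satisfies Assumptions \ref{ergodic assumption1}--\ref{ergodic assumption2}; in particular the dissipativity condition holds since $\mu+\eta>\eta$. Theorem \ref{infinite horizon Quadartic BSDE exist and unique} then yields a unique solution $(Y^{x,\alpha},Z^{x,\alpha},K^{x,\alpha})$ with $|Z^{x,\alpha}|\le\frac{\mu+\eta-(1+\overline{\sigma}^2)C_\sigma C_3}{4(1+\overline{\sigma}^2)C_\sigma C_3}$ and $|Y^{x,\alpha}_s-Y^{x',\alpha}_s|\le L_\alpha|X^x_s-X^{x'}_s|$, while Theorem \ref{ergodic fey kac formula} applied to $(f^\alpha,g^\alpha_{ij})$ gives the Markov representation $Y^{x,\alpha}_s=v^\alpha(X^x_s)$, where $v^\alpha(x):=Y^{x,\alpha}_0$ is a (deterministic) constant.

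\textbf{Step 2 (Uniform estimates and extraction).} Restricting to $\alpha\in(0,\alpha_0]$, both the $Z$-bound and the Lipschitz constant $L_\alpha$ are uniform in $\alpha$, and as $\alpha\downarrow0$ they converge to $L:=\frac{\eta-(1+\overline{\sigma}^2)C_\sigma C_3}{4(1+\overline{\sigma}^2)C_\sigma C_3}$ and $\frac{2C_1(1+\overline{\sigma}^2)}{\eta-(1+\overline{\sigma}^2)C_\sigma C_3}$, both depending only on $\overline{\sigma},C_1,C_3,C_\sigma,\eta$. Setting $\bar v^\alpha:=v^\alpha-v^\alpha(0)$, the family $\{\bar v^\alpha\}$ is uniformly Lipschitz with $\bar v^\alpha(0)=0$, hence locally equibounded. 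Representing $v^\alpha(0)=Y^{0,\alpha}_0$ through the linear $G$-BSDE formula \eqref{linear BSDE soln} as in the proof of Theorem \ref{infinite horizon Quadartic BSDE exist and unique}, the exponential factor $e^{-\mu u}$ combined with the weighted moment bound of Lemma \ref{infinite mean is finite} gives $|v^\alpha(0)|\le C/\mu$, so that $\alpha v^\alpha(0)$ is bounded uniformly in $\alpha\in(0,\alpha_0]$. By Arzelà--Ascoli there is a sequence $\alpha_n\downarrow0$ with $\bar v^{\alpha_n}\to v$ locally uniformly, where $v$ is Lipschitz, $v(0)=0$, $|v(x)|\le L|x|$, and $\alpha_n v^{\alpha_n}(0)\to\lambda$ for some $\lambda\in\mathbb{R}$. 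Since $\alpha_n\bar v^{\alpha_n}(x)\to0$, it follows that $\alpha_n v^{\alpha_n}(x)\to\lambda$ for every $x$, and hence $\alpha_n\gamma^1 Y^{x,\alpha_n}_u\to\gamma^1\lambda$, $\alpha_n\gamma^2_{ij}Y^{x,\alpha_n}_u\to\gamma^2_{ij}\lambda$.

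\textbf{Step 3 (Passage to the limit).} Define $Y^x_s:=v(X^x_s)$, so that $|Y^x_s|\le L|X^x_s|$ by $v(0)=0$. The shifted process $\tilde Y^{\alpha}_s:=Y^{x,\alpha}_s-v^\alpha(0)=\bar v^\alpha(X^x_s)$ has the same integrand $Z^{x,\alpha}$ and decreasing part $K^{x,\alpha}$ as $Y^{x,\alpha}$, and $\tilde Y^{\alpha_n}\to Y^x$. On each interval $[0,T]$, applying It\^o's formula to $|\tilde Y^{\alpha_n}-\tilde Y^{\alpha_m}|^2$, using the uniform $Z$-bound, the quadratic Lipschitz estimate of Assumption \ref{ergodic assumption3}(ii), and the Burkholder--Davis--Gundy inequality (Proposition \ref{BDG}), one shows that $(Z^{x,\alpha_n})$ is Cauchy in $\mathbb{H}^2_G(0,T;\mathbb{R}^d)$ and $(K^{x,\alpha_n})$ is Cauchy in $\mathbb{L}^2_G(\Omega_T)$; denote the limits $Z^x$ and $K^x$. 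Passing to the limit in the penalized equation identifies $(Y^x,Z^x,K^x,\lambda)$ as a solution of \eqref{ergodic QBSDE3}, with $|Z^x|\le L$, and $K^x$ remains a decreasing $G$-martingale as a limit of such processes.

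The main obstacle lies in Step 3. Unlike the infinite-horizon construction, where the approximating family is indexed monotonically by the horizon, here the family $\{Z^{x,\alpha}\}$ is indexed by the discount rate and its convergence must be extracted from a Cauchy estimate; moreover the convergence $\bar v^{\alpha_n}\to v$ is only \emph{locally} uniform while $X^x$ is unbounded. Controlling $\hat{\mathbb{E}}[\sup_{s\le T}|\tilde Y^{\alpha_n}_s-\tilde Y^{\alpha_m}_s|^2]$ therefore requires dominating the contribution from large values of $X^x$ by combining the local-uniform convergence with the uniform linear growth $|\bar v^\alpha(x)|\le L|x|$ and the moment estimates of Lemma \ref{Ergodic mean is finite}; this tail control, together with verifying that the limiting $K^x$ preserves the decreasing $G$-martingale property under the sublinear expectation, is the technical heart of the proof.
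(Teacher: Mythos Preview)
Your vanishing-discount scheme is exactly the paper's: penalize by $\alpha\gamma^1 y$ and $\alpha\gamma^2_{ij}y$, apply Theorem \ref{infinite horizon Quadartic BSDE exist and unique}, center by $v^\alpha(0)$, extract via Arzel\`a--Ascoli, and pass to the limit (the paper delegates your Step 3 to the argument of \cite[Theorem 5.1]{hu2018ergodic}, which is the same Cauchy-in-$\mathbb{S}^2_G\times\mathbb{H}^2_G$ estimate you sketch, with the tail handled exactly as you describe).

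There is, however, one real gap in Step 2. The key bound $|v^\alpha(0)|\le C/\mu$ with $C$ \emph{independent of $\alpha$} does \emph{not} follow from Lemma \ref{infinite mean is finite}. That lemma controls $\tilde{\mathbb{E}}[e^{-ku}|X_u|]$, but its constant behaves like $k^{-1/2}$ as $k\downarrow 0$ (the term $\tilde L_1/(2k)$ in its proof). Since the linearized representation produces the integral $\int_0^\infty e^{-\mu u}\,\tilde{\mathbb{E}}^{b^\epsilon,d^{\epsilon,ij}}[|X_u|]\,du$ and one is forced to choose $k<\mu\to 0$, optimizing over $k$ gives only $|v^\alpha(0)|\lesssim \mu^{-3/2}$, hence $\alpha v^\alpha(0)\sim \mu^{-1/2}\to\infty$, and the compactness argument collapses. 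The correct tool is the \emph{uniform-in-time} moment bound of Lemma \ref{Ergodic mean is finite}, which uses $\eta>0$ from Assumption \ref{ergodic assumption3}\ref{eta ergodic} to get $\tilde{\mathbb{E}}^{b^\epsilon,d^{\epsilon,ij}}[|X_u|]\le L(1+|x|)$ with $L$ independent of $u$ and of $\mu$; then the integral is bounded by $L/\mu$ directly, and the paper's sharper claim $|Y^{x,\alpha}|\le \frac{L}{\mu}(1+|X^x|)$ (with $L$ independent of $\alpha$) follows. This is precisely why the paper reproves this estimate rather than quoting Theorem \ref{infinite horizon Quadartic BSDE exist and unique}. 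With this one substitution your outline goes through.
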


\begin{proof}
	For simplicity,  $\gamma^1+2G(\gamma^2)=-1$. Let $\delta$ be a positive constant satisfying 
	\begin{equation}\begin{aligned}
	\eta-(1+\overline{\sigma}^2)C_\sigma C_3\Big(1+\frac{\delta+\eta-C_{\sigma}C_3(1+\overline{\sigma}^2)}{2(1+\overline{\sigma}^2)C_{\sigma}C_3}\Big)>0\,.    
	\end{aligned}\end{equation} 
	According to Theorem \ref{infinite horizon Quadartic BSDE exist and unique}, there exists a unique solution $(Y^{\delta},Z^{\delta},K^{\delta})=(Y^{x,\delta},Z^{x,\delta},K^{x,\delta})$ to the infinite-horizon $G$-BSDE
	\begin{equation}\begin{aligned}
	Y_s^{\delta}&=Y_T^{\delta}+\int_s^T(f(X_u,Z_u^{\delta})+\gamma^1\delta Y_u^{\delta})\,du\\&
    \quad+\int_s^T(g_{ij}(X_u,Z_u^{\delta})+\gamma^2_{ij}\delta Y_u^{\delta})\, d\langle B^i,B^j\rangle_u-\int_s^TZ_u^{\delta}\,dB_u-(K_T^{\delta}-K_s^{\delta}) \,.
	\end{aligned}\end{equation}
	We claim that there exists a constant $L>0,$ which does not depend on $\delta,$ such that 
	\begin{equation}\begin{aligned}\label{eqn: 1/delta}
	|Y^{\delta}|\leq \frac{L}{\delta}(1+|X|)\,.
	\end{aligned}\end{equation}
	Through a similar argument in the proof of Theorem \ref{Finite QBSDE exis and unique}, for each $\epsilon>0,$ we can construct processes $  b^{\delta,\epsilon},d^{\delta,\epsilon,ij},m^{\delta,\epsilon},n^{\delta,\epsilon,ij},\Gamma^{\delta,\epsilon},$  a sublinear expectation $\tilde{\mathbb{E}}^{b^{\delta,\epsilon},d^{\delta,\epsilon,ij}}$ and a $G$-Brownian motion $B^{\delta,\epsilon}$ such that
	\begin{equation}\begin{aligned}
	Y_s^{\delta} &=Y_T^\delta+\int_s^T (\gamma^1\delta Y_u^{\delta}+b_u^{\delta,\epsilon}Z_u^\delta+m_u^{\delta,\epsilon}+f(X_u,0))\,du\\&
    \quad+\int_s^T(\gamma^2_{ij}\delta Y_u^\delta+d_u^{\delta,\epsilon,ij}Z_u^\delta+n_u^{\delta,\epsilon,ij}+g_{ij}(X_u,0))\,d\langle B^i,B^j\rangle_u\\
    &\quad-\int_s^TZ_u^\delta\,dB_u-(K_T^\delta-K_s^\delta)\,.
	\end{aligned}\end{equation} 
	Then
\begin{equation}
	\begin{aligned}
	Y_s^{\delta}&\le \tilde{\mathbb{E}}^{b^{\delta,\epsilon},d^{\delta,\epsilon,ij}}_s\Big[\frac{\Gamma_T^{\delta,\epsilon}}{\Gamma_s^{\delta,\epsilon}}Y_T^\delta+ \int_s^T \frac{\Gamma_u^{\delta,\epsilon}}{\Gamma_s^{\delta,\epsilon}}\big(m_u^{\delta,\epsilon}+f(X_u,0)\big)\,du \Big]\\
    &\quad+\tilde{\mathbb{E}}^{b^{\delta,\epsilon},d^{\delta,\epsilon,ij}}_s\Big[\int_s^T \frac{\Gamma_u^{\delta,\epsilon}}{\Gamma_s^{\delta,\epsilon}}\big(n_u^{\delta,\epsilon,ij}+g_{ij} (X_u,0)\big)\,d\langle B^{\delta,\epsilon,i},B^{\delta, \epsilon,j}\rangle_u  \Big]\,.
	\end{aligned}
	\end{equation}
	It follows that
	\begin{equation}
	\begin{aligned}
	Y_s^{\delta} 
	&\leq\tilde{\mathbb{E}}^{b^{\delta,\epsilon},d^{\delta,\epsilon,ij}}_s\Big[\frac{\Gamma_T^{\delta,\epsilon}}{\Gamma_s^{\delta,\epsilon}}Y_T^\delta \Big]+\tilde{\mathbb{E}}^{b^{\delta,\epsilon},d^{\delta,\epsilon,ij}}_s\Big[ \int_s^T \frac{\Gamma_u^{\delta,\epsilon}}{\Gamma_s^{\delta,\epsilon}}m_u^{\delta,\epsilon}\,du+\int_s^T \frac{2\Gamma_u^{\delta,\epsilon}}{\Gamma_s^{\delta,\epsilon}}G(n_u^{\delta,\epsilon,ij})\,du\Big]\\
    &\quad+ \tilde{\mathbb{E}}^{b^{\delta,\epsilon},d^{\delta,\epsilon,ij}}_s\Big[ \int_s^T \frac{\Gamma_u^{\delta,\epsilon}}{\Gamma_s^{\delta,\epsilon}}|f(X_u,0)|\,du+\int_s^T \frac{2\Gamma_u^{\delta,\epsilon}}{\Gamma_s^{\delta,\epsilon}}G(g_{ij}(X_u,0))\,du\Big]\,.	\end{aligned}
	\end{equation}
	From \eqref{eqn:inq} and Lemma \ref{Ergodic mean is finite}, it can be shown that  
	\begin{equation}
	\begin{aligned}
	&\quad\tilde{\mathbb{E}}^{b^{\delta,\epsilon},d^{\delta,\epsilon,ij}}_s\Big[ \int_s^T \frac{\Gamma_u^{\delta,\epsilon}}{\Gamma_s^{\delta,\epsilon}}|m_u^{\delta,\epsilon}|\,du+\int_s^T \frac{2\Gamma_u^{\delta,\epsilon}}{\Gamma_s^{\delta,\epsilon}}G(n_u^{\delta,\epsilon,ij})\,du\Big]\leq \frac{L\epsilon}{\delta}\,,\\    
	&\quad\tilde{\mathbb{E}}^{b^{\delta,\epsilon},d^{\delta,\epsilon,ij}}_s\Big[\frac{\Gamma_T^{\delta,\epsilon}}{\Gamma_s^{\delta,\epsilon}}|Y_T^\delta | \Big]\leq e^{-\delta(T-s)}\tilde{\mathbb{E}}^{b^{\delta,\epsilon},d^{\delta,\epsilon,ij}}_s[|Y_T^\delta |]\leq e^{-\delta(T-s)}\tilde{L}(1+|X_s|)\,,\\
	&\quad\tilde{\mathbb{E}}^{b^{\delta,\epsilon},d^{\delta,\epsilon,ij}}_s\Big[ \int_s^T \frac{\Gamma_u^{\delta,\epsilon}}{\Gamma_s^{\delta,\epsilon}}|f(X_u,0)|\,du+\int_s^T \frac{2\Gamma_u^{\delta,\epsilon}}{\Gamma_s^{\delta,\epsilon}}G(g_{ij}(X_u,0))\,du\,\Big] 
	\\
    &\leq\frac{(1+\overline{\sigma}^2)L}{\delta}(1+|X_s|)\,,
	\end{aligned}
	\end{equation}
	where $L>0$ is a constant which does not depend on $\delta$ and $\tilde{L}>0$ is a constant which may depend on $\delta$. Therefore, letting $\epsilon \to 0$ and $T\to \infty,$ we obtain the inequality
	\eqref{eqn: 1/delta}.
	
We now construct a  solution $(Y,Z,K,\lambda)$ to the BSDE \eqref{ergodic QBSDE3}.
	Define $u^\delta(x)=Y_0^{\delta,x}$. According to Theorem \ref{infinite horizon Quadartic BSDE exist and unique} and \eqref{eqn: 1/delta}, we have
	\begin{equation}\begin{aligned}
	|u^{\delta}(x)|&\leq \frac{L}{\delta}(1+|x|)\,,\\
	|u^{\delta}(x)-u^{\delta}(x')|&\leq \frac{C_1(1+\overline{\sigma}^2)}{\eta-(1+\overline{\sigma}^2)C_\sigma C_3(1+2L_1)} |x-x'|\,,\\
	|Z^{\delta}|&\leq \frac{\delta+\eta-C_{\sigma}C_3(1+\overline{\sigma}^2)}{4(1+\overline{\sigma}^2)C_{\sigma}C_3}\,,
	\end{aligned}\end{equation}
	where $L>0$ is a constant which does not depend on $\delta$.
   Let $\overline{u}^\delta(x):=u^{\delta}(x)-u^{\delta}(0)$ and  $\overline{Y}^{\delta}:=Y^{\delta}-u^{\delta}(0)$. 
By the standard diagonal procedure, there exists a sequence $(\overline{u}^{\delta_n})_{n\in\mathbb{N}}$ such that $\overline{u}^{\delta_n}\to u$ locally uniformly and  $\delta_n u^{\delta_n}(0)\rightarrow \lambda$ as $\delta_n \rightarrow 0$ for some continuous function $u$ and some real number $\lambda$.
	We define a process $Y_s=u(X_s)$ for $s\ge0$.
Following the proof of
	\cite[Theorem 5.1]{hu2018ergodic}, for all $T>0,$  the limit 
 $Y:=\lim_{\delta_n \rightarrow 0}\overline{Y}^{\delta_n}$ exists in $\mathbb{S}^2_G(0,T)$,  $Z:=\lim_{\delta_n \rightarrow 0}Z^{\delta_n}$ exists in $\mathbb{H}^2_G(0,T;\mathbb{R}^d)$, and $K_s:=\lim_{\delta_n \rightarrow 0}K_s^{\delta_n}$ exists
 in $\mathbb{L}^2_G(\Omega_s)$ for all $s\le T.$
Moreover, the quartet $(Y,Z,K,\lambda)$ is a solution to \eqref{ergodic QBSDE3}.
Since  $(Y,Z,K)$ is the limit of $(\overline{Y}^{\delta_n},Z^{\delta_n},K^{\delta_n})$ as $\delta_n\to0$, it follows that  
	\begin{equation}\begin{aligned}
	|Y|&\leq \frac{C_1(1+\overline{\sigma}^2)}{\eta-(1+\overline{\sigma}^2)C_\sigma C_3(1+2L_1)} |X|\,,\\
	|Z|&\leq \frac{\eta-C_{\sigma}C_3(1+\overline{\sigma}^2)}{4(1+\overline{\sigma}^2)C_{\sigma}C_3}\,.
	\end{aligned}\end{equation}
	This completes the proof.
\end{proof}

A solution to \eqref{ergodic QBSDE3} is not unique. 
If $(Y,Z,K,\lambda)$ is a solution, then  $(Y+c,Z,K,\lambda)$ is also a solution
for any $c\in \mathbb{R}$. However,   $\lambda$ is unique when the solution space is restricted. 
The following theorem is derived directly by combining Lemma \ref{Ergodic mean is finite} with the proof of \cite[Theorem 5.3]{hu2018ergodic}, therefore we omit the proof.

\begin{thm}\label{lambda unique}
	Suppose Assumptions \ref{ergodic assumption1} and \ref{ergodic assumption3} hold.
	For $i=1,2,$ let   $(Y^i,Z^i,K^i,\lambda^i)$ be a solution to \eqref{ergodic QBSDE3} such that $|Y^i|\leq L_i(1+|X|^{p})$ and $|Z^i|\leq L_i$ for some constant $L_i>0$, $p\geq 1$.
	Then $\lambda^1=\lambda^2.$  
\end{thm}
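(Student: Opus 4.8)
The plan is to isolate $\lambda^1-\lambda^2$ by subtracting the two copies of \eqref{ergodic QBSDE3} and then to exploit the sign condition $\gamma^1+2G(\gamma^2)<0$ from Assumption \ref{ergodic assumption3}\,(i) together with the uniform moment estimate of Lemma \ref{Ergodic mean is finite}. First I would set $(\hat{Y},\hat{Z},\hat{K}):=(Y^1-Y^2,Z^1-Z^2,K^1-K^2)$, noting that both solutions are driven by the \emph{same} forward process $X=X^x$. Since the drivers $f,g_{ij}$ in \eqref{ergodic QBSDE3} depend only on $(x,z)$ and not on $y$, the linearization is simpler than in Theorem \ref{Finite QBSDE exis and unique}: because $|Z^i|\le L_i$, I may fix $M\ge L_1+L_2\ge\sup(|Z^1|+|Z^2|)$, and for each $\epsilon>0$ construct, exactly as in the earlier proofs, bounded processes $b^\epsilon,d^{\epsilon,ij}$ with $|b^\epsilon|,|d^{\epsilon,ij}|\le C_3(1+2M)$ and remainders $m^\epsilon,n^{\epsilon,ij}$ of order $\epsilon$ such that the driver difference reads $b^\epsilon_u\hat{Z}_u+m^\epsilon_u$ and $d^{\epsilon,ij}_u\hat{Z}_u+n^{\epsilon,ij}_u$. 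No $a^\epsilon,c^{\epsilon,ij}$ (hence no $\Gamma$) appear because there is no $y$-dependence.

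With the induced sublinear expectation $\tilde{\mathbb{E}}^{b^\epsilon,d^{\epsilon,ij}}$ and the associated $G$-Brownian motion $B^\epsilon$ from \eqref{new_GBM}, and using that $\int\hat{Z}\,dB^\epsilon$ is a symmetric $G$-martingale while $K^1,K^2$ remain decreasing $G$-martingales under $\tilde{\mathbb{E}}^{b^\epsilon,d^{\epsilon,ij}}$ by Lemma \ref{K is MG in new sublinear}, the same a priori estimate as in the uniqueness part of Theorem \ref{infinite horizon Quadartic BSDE exist and unique} yields the one-sided bound
\[
\hat{Y}_s\le\tilde{\mathbb{E}}_s^{b^\epsilon,d^{\epsilon,ij}}\Big[\hat{Y}_T+\int_s^T\big(m^\epsilon_u+(\lambda^1-\lambda^2)\gamma^1\big)\,du+\int_s^T\big(n^{\epsilon,ij}_u+(\lambda^1-\lambda^2)\gamma^2_{ij}\big)\,d\langle B^{\epsilon,i},B^{\epsilon,j}\rangle_u\Big].
\]
Assuming without loss of generality $\lambda^1\ge\lambda^2$, I would then let $\epsilon\to0$ for fixed $T$, which annihilates the $m^\epsilon,n^{\epsilon,ij}$ contributions. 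For the quadratic-variation term I would use $\int_s^T\gamma^2_{ij}\,d\langle B^{\epsilon,i},B^{\epsilon,j}\rangle_u\le 2G(\gamma^2)(T-s)$ (valid for $B^\epsilon$ since its quadratic variation coincides with that of $B$), so the $\lambda$-contribution is bounded above by $(\lambda^1-\lambda^2)\big(\gamma^1+2G(\gamma^2)\big)(T-s)$.

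For the terminal term I would use $|\hat{Y}_T|\le(L_1+L_2)(1+|X_T|^p)$ together with Lemma \ref{Ergodic mean is finite}, whose constant is controlled by $C_4:=C_3(1+2M)$ \emph{uniformly in $\epsilon$}, to obtain $\tilde{\mathbb{E}}_s^{b^\epsilon,d^{\epsilon,ij}}[|X_T|^p]\le L(1+|X_s|^p)$ \emph{uniformly in $T$}. Combining these gives
\[
\hat{Y}_s\le C(1+|X_s|^p)+(\lambda^1-\lambda^2)\big(\gamma^1+2G(\gamma^2)\big)(T-s),
\]
with $C$ independent of $T$. Sending $T\to\infty$: if $\lambda^1-\lambda^2>0$, then since $\gamma^1+2G(\gamma^2)<0$ the right-hand side tends to $-\infty$, contradicting that $\hat{Y}_s=Y^1_s-Y^2_s$ is a fixed finite random variable; hence $\lambda^1\le\lambda^2$. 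Exchanging the roles of the two solutions gives $\lambda^2\le\lambda^1$, so $\lambda^1=\lambda^2$. I expect the main obstacle to be securing the $T$-uniform control of $\tilde{\mathbb{E}}_s^{b^\epsilon,d^{\epsilon,ij}}[\hat{Y}_T]$: this is precisely where the polynomial moment bound of Lemma \ref{Ergodic mean is finite} is indispensable, and one must verify its constant depends on $b^\epsilon,d^{\epsilon,ij}$ only through the uniform bound $C_4$, so that the terminal term stays $O(1+|X_s|^p)$ and does not grow with $T$ and overwhelm the divergent $\lambda$-term. The remaining ingredients, namely the $z$-linearization and the decreasing-$G$-martingale bookkeeping, are routine consequences of the results established earlier.
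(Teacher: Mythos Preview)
Your proposal is correct and follows essentially the same route as the paper, which omits the proof by pointing to \cite[Theorem 5.3]{hu2018ergodic} combined with Lemma \ref{Ergodic mean is finite}. The linearization in $z$ only (no $\Gamma$ because the drivers are $y$-independent), the one-sided estimate under the induced expectation, the bound $\int_s^T\gamma^2_{ij}\,d\langle B^{\epsilon,i},B^{\epsilon,j}\rangle_u\le 2G(\gamma^2)(T-s)$, and the use of Lemma \ref{Ergodic mean is finite} to control $\tilde{\mathbb{E}}_s^{b^\epsilon,d^{\epsilon,ij}}[|\hat{Y}_T|]\le C(1+|X_s|^p)$ uniformly in $T$ are exactly the ingredients the paper has in mind; your identification of the last point as the crux, and of its dependence on $b^\epsilon,d^{\epsilon,ij}$ only through $C_4=C_3(1+2M)$, is precisely why Lemma \ref{Ergodic mean is finite} is singled out.
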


We now investigate the uniqueness of solutions to the ergodic $G$-BSDE.
Let
 \begin{equation}
\begin{aligned}
\mathbb{L}^1(\Omega)&:=\{X:\Omega\to [-\infty,\infty]\,|\,  X \text{ is } \mathcal{F}\text{-measurable} \}\,,\\
\mathbb{L}^{1^\ast}_G(\Omega)&:=\{X\in \mathbb{L}^1(\Omega)\,|\, \hat{\mathbb{E}}[|X|]<\infty,\, \text{ there exists } (X_n)_{n\in\mathbb{N}}\subseteq \mathbb{L}^1_G(\Omega) \\&\quad\quad\quad\quad\quad\quad\quad\quad\quad\quad\text{ such that } X_n\searrow X \text{ quasi-surely} \}\,,\\
\end{aligned}
\end{equation}
and define two spaces
$\mathbb{L}^1(\Omega_t)$ and $
\mathbb{L}^{1^\ast}_G(\Omega_t)$ similarly.
We define a sublinear conditional expectation $\hat{\mathbb{E}}_t:\mathbb{L}^{1^\ast}_G(\Omega)\to \mathbb{L}^{1^\ast}_G(\Omega_t)$
as 
\begin{equation}\begin{aligned}
\hat{\mathbb{E}}_t[X]=\lim_{n\to \infty}\hat{\mathbb{E}}_t[X_n] \text{  quasi-surely}   
\end{aligned}\end{equation}
for $X\in \mathbb{L}^{1^{\ast}}_G(\Omega)$\,,
where $(X_n)_{n\in\mathbb{N}}\subseteq \mathbb{L}_G^1(\Omega)$ and $X_n \searrow X$ quasi-surely. For more details, refer to  \cite{hu2021extended}. 
We say a stopping time $\tau:\Omega\to [0,\infty)$ is a $\ast$-stopping time if $\mathds{1}_{\{\tau\geq t\}}\in \mathbb{L}^{1^\ast}_G(\Omega_t)$ for all $t\geq 0$.

\begin{assume}\label{Assumption ergodic unique}
	Assume that the function $\sigma$ satisfies 
	\begin{equation}\begin{aligned}
	v^\top\sigma(x)\sigma^{\top}(x)v>0
	\end{aligned}\end{equation}
	for all $v \in \mathbb{R}^m\setminus{\{0\}}$ and $x\in \mathbb{R}^m.$
\end{assume}

\begin{lemma}\label{ergodic recurrent}
	Let Assumptions \ref{ergodic assumption1},  \ref{Assumption ergodic unique} and \ref{eta ergodic} in Assumption \ref{ergodic assumption3} hold and $X^{x}$ be a solution to \eqref{ergodic SDE3} for $x\in \mathbb{R}^m.$
 Suppose $b,d^{ij}\in \mathbb{M}^2_G(0,\infty;\mathbb{R}^d)$ are processes satisfying $d^{ij}=d^{ji}$ and $|b|,|d^{ij}|\le C_4$ for some constant $C_4>0$.
 Then for any bounded open subset $U$ in $\mathbb{R}^m$ and a stopping time $\tau_U:=\inf\{t\geq0\, |\, X_t^x\in U\}$,  we have
	\begin{equation}\begin{aligned}
	\lim_{T\to\infty}\tilde{\mathbb{E}}^{b,d^{ij}}[\mathds{1}_{\{\tau_U\geq T\}}]=0\,.
	\end{aligned}\end{equation} 
\end{lemma}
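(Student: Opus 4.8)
The plan is to reduce the statement to a finite expected hitting time and then run a Khasminskii-type recurrence argument, with the dissipativity of the coefficients controlling the far field and the nondegeneracy of $\sigma$ ensuring that the bounded open set $U$ is genuinely reached. Since $\mathds{1}_{\{\tau_U\ge T\}}\le \tau_U/T$ holds pathwise (the left side vanishes unless $\tau_U\ge T$), the monotonicity and positive homogeneity of $\tilde{\mathbb{E}}^{b,d^{ij}}$ give
\begin{equation}
\tilde{\mathbb{E}}^{b,d^{ij}}[\mathds{1}_{\{\tau_U\ge T\}}]\le \frac{1}{T}\,\tilde{\mathbb{E}}^{b,d^{ij}}[\tau_U]\,.
\end{equation}
Hence it suffices to prove $\tilde{\mathbb{E}}^{b,d^{ij}}[\tau_U]<\infty$ for every starting point, and the limit $T\to\infty$ then follows at once.

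First I would establish a finite expected entry time into a large ball. Applying It\^o's formula to $|X^x|^2$ exactly as in the proof of Lemma \ref{Ergodic mean is finite}, the dissipativity in \ref{eta ergodic} of Assumption \ref{ergodic assumption3} together with the boundedness $|b|,|d^{ij}|\le C_4$ yields a drift estimate of the form $\mathcal{L}|X^x_u|^2\le -1$ whenever $|X^x_u|$ exceeds some radius $R_0$, where $\mathcal{L}$ denotes the nonlinear generator associated with $\tilde{\mathbb{E}}^{b,d^{ij}}$. Consequently $|X^x_{s}|^2+s$ is a $\tilde{\mathbb{E}}^{b,d^{ij}}$-supermartingale up to the entry time $\sigma_R:=\inf\{s\ge0\,|\,X^x_s\in B_R\}$ of a ball $B_R$ with $R\ge R_0$ and $\bar U\subset B_R$, so that $\tilde{\mathbb{E}}^{b,d^{ij}}[\sigma_R]\le |x|^2$. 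The same bound, applied from any point of $\bar B_R$, shows that the expected time to re-enter $B_R$ after an excursion is uniformly bounded.

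The main obstacle is a robust hitting estimate for $U$ from inside $B_R$. Using Theorem \ref{conditonal sublinear representation}, $\tilde{\mathbb{E}}^{b,d^{ij}}$ is a supremum over a family of probability measures under each of which $X$ is a diffusion whose infinitesimal covariance lies in the set $\Sigma$ of \eqref{gamma} and whose drift is perturbed by the bounded processes $b,d^{ij}$. By the strong ellipticity \eqref{ellip} and the nondegeneracy in Assumption \ref{Assumption ergodic unique}, every such diffusion is uniformly elliptic with coefficients bounded uniformly over the family. I would show that there is a constant $p_0=p_0(R,U)\in(0,1]$ such that, started from any $y\in\bar B_R$, each measure in the family assigns probability at least $p_0$ to hitting $U$ before time $1$; equivalently, the conditional capacity of \emph{not} hitting $U$ on a unit time interval, given that the current position lies in $\bar B_R$, is at most $1-p_0$. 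Establishing this lower bound \emph{uniformly over the entire volatility-uncertainty family} is the crux of the argument, since it is a lower bound on the lower probability rather than an upper bound on a capacity; it rests on an interior Krylov--Safonov / Harnack estimate for uniformly elliptic operators, the uniform constants being furnished by \eqref{ellip} and $|b|,|d^{ij}|\le C_4$.

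Finally I would splice these two facts together through the tower property of $\tilde{\mathbb{E}}^{b,d^{ij}}_s$. Alternating between waiting until $X$ returns to $B_R$ (finite expected duration, by the second step) and attempting to reach $U$ within one unit of time (success with lower probability at least $p_0$, by the third step) exhibits $\tau_U$ as dominated by a geometric number of cycles, each of uniformly bounded expected length. A standard Khasminskii renewal estimate, carried out with the conditional sublinear expectation in place of the classical one, then gives $\tilde{\mathbb{E}}^{b,d^{ij}}[\tau_U]\le C(1+|x|^2)<\infty$, which by the reduction in the first paragraph completes the proof. This is precisely the mechanism behind \cite[Theorem 5.3]{hu2018ergodic}, and the present argument adapts it to the induced expectation $\tilde{\mathbb{E}}^{b,d^{ij}}$ via Lemma \ref{Ergodic mean is finite}.
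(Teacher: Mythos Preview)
Your approach is genuinely different from the paper's and, if it can be made rigorous, would yield the stronger conclusion $\tilde{\mathbb{E}}^{b,d^{ij}}[\tau_U]<\infty$ (hence an $O(T^{-1})$ rate). The paper does \emph{not} prove a finite expected hitting time. Instead it works with the Lyapunov function $f(y)=2r^{-2p}-|y|^{-2p}$ on $\{|y|\ge r\}$: applying It\^o's formula under each representing measure, the dissipativity in \ref{eta ergodic} controls the far field while Assumption \ref{Assumption ergodic unique} furnishes a positive lower bound for $y^\top\sigma(y)\sigma^\top(y)y/|y|^2$ on the annulus $\{r\le|y|\le R'\}$, so that for $p$ large the drift of $f(X)$ is dominated by $-p|X_u|^{-2p-2}$ on $\{\tau>u\}$. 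A Cauchy--Schwarz (``reverse H\"older'') step $(E^{\tilde P}[\mathds{1}_{[0,\tau]}])^2\le E^{\tilde P}[|X_u|^{2p+2}]\,E^{\tilde P}[\mathds{1}_{[0,\tau]}|X_u|^{-2p-2}]$ combined with the moment bound of Lemma \ref{Ergodic mean is finite} yields $(\tilde{\mathbb{E}}^{b,d^{ij}}[T\wedge\tau])^2\le CT$, and then $\tilde{\mathbb{E}}^{b,d^{ij}}[\mathds{1}_{\{\tau\ge T\}}]\le T^{-1}\tilde{\mathbb{E}}^{b,d^{ij}}[T\wedge\tau]\le CT^{-1/2}$. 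This is entirely self-contained and avoids any hitting-probability estimates.

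Your sketch has two points that need real work before it goes through. First, the uniform lower bound $P(\text{hit }U\text{ by time }1\mid X\in\bar B_R)\ge p_0$ for every $P$ in the representing family is \emph{not} a Krylov--Safonov/Harnack statement: those estimates concern solutions of Markovian PDEs, whereas under each $P\in\tilde{\mathcal P}$ the process $X$ is a non-Markovian It\^o process (the volatility $\sigma(X)\alpha^P$ has $\alpha^P$ merely adapted), so its hitting probability as a function of the starting point does not solve an elliptic or parabolic equation. What you actually need is an Aronson-type two-sided density bound for It\^o processes with uniformly bounded, uniformly nondegenerate, adapted coefficients, or a direct stochastic argument (Girsanov to remove drift, then a controlled time-change or support-type estimate). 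This can be done but is not standard and is considerably heavier than what the paper uses. Second, the Khasminskii renewal under a \emph{sublinear} expectation requires the tower property at the successive random entry times; in the $G$-framework this demands that those times be $\ast$-stopping times and that the relevant indicator variables lie in the right spaces (cf.\ the optional sampling result \cite[Theorem 48]{hu2021extended}), which you have not checked. The paper's Lyapunov argument sidesteps both issues entirely.
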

\begin{proof}
 Let $\tau:=\tau_U$. 
	Without loss of generality, we may assume that  $U$ is an open ball of radius $r>0$ centered at the origin and $x\in \mathbb{R}^m\setminus U.$
	Recall that $B^{b,d^{ij}}=B-\int_0^\cdot b_u\,du-\int_0^\cdot d^{ij}_u\,d\langle B^i,B^j\rangle_u$ is a $G$-Brownian motion under the sublinear expectation $\tilde{\mathbb{E}}^{b,d^{ij}}.$
	Let $\tilde{\mathcal{P}}$ be a set that represents $\tilde{\mathbb{E}}^{b,d^{ij}}.$ 
	Define 
	$f(\cdot)=\frac{2}{r^{2p}}-\frac{1}{\,|\,\cdot\,|^{2p}}$  for a positive integer $p$ which will be chosen later.
	Applying It\^{o}'s formula to $f(X)$ for each $\tilde{P}\in\tilde{\mathcal{P}}$, we have
	\begin{equation}\begin{aligned}
	f(X_{T\wedge \tau})&\le f(x)+\int_0^{T\wedge\tau}2p|X_u|^{-2p-2} \langle X_u,b(X_u)-b(0)\rangle\,du\\
    &\quad+\int_0^{T\wedge\tau}2p|X_u|^{-2p-2}G(\langle 2X_u,h_{ij}(X_u)-h_{ij}(0) \rangle )  \,du\\
    &\quad-\int_0^{T\wedge \tau}2p(p+1)\underline{\sigma}^2|X_u|^{-2p-4}X_u^{\top}\sigma(X_u)\sigma^{\top}(X_u)X_u\,du\\ 
	&\quad+\Lambda_{T\wedge\tau}+M_{T\wedge\tau}\,\,\, \tilde{P}\text{-almost surely, }
	\end{aligned}\end{equation}
	where 
	\begin{equation}
	\begin{aligned}
	M&=\int_0^{\cdot}2p|X_u|^{-2p-2}X_u^{\top}\sigma(X_u)\,dB^{b,d^{ij}}_u\,,\\
	\Lambda&=\int_0^{\cdot}2p|X_u|^{-2p-2}\langle X_u,b(0)+\sigma(X_u)b_u\rangle\,du+\int_0^{\cdot}4p|X_u|^{-2p-2}G(\langle X_u,h_{ij}(0)\rangle )\,du\\
    &\quad+\int_0^{\cdot} 4p |X_u|^{-2p-2}G(\langle X_u, \sigma(X_u)d^{ij}_u \rangle) \,du+\int_0^{\cdot}2p|X_u|^{-2p-2}G(\sigma^{\top}(X_u)\sigma(X_u))\,du\,.
	\end{aligned}
	\end{equation}
	Since $G$ is monotone increasing, we know $\langle x,b(x)-b(0)\rangle+G(2\langle x,h_{ij}(x)-h_{ij}(0)\rangle )\leq -\eta|x|^2.$ Then 
	\begin{equation}\begin{aligned}
	f(X_{T\wedge\tau})&\leq f(x)-\int_0^{T\wedge\tau}2p\eta|X_u|^{-2p}\,du\\
    &\quad-\int_0^{T\wedge \tau}2p(p+1)\underline{\sigma}^2|X_u|^{-2p-4}X_u^{\top}\sigma(X_u)\sigma^{\top}(X_u)X_u\,du+\Lambda_{T\wedge\tau}+M_{T\wedge\tau}\,.
	\end{aligned}\end{equation}
	
	Recalling the inequality $ab\leq \frac{ca^2}{2}+\frac{b^2}{2c}$ for $c\geq0$, we can choose a constant $L,$ which depends only on $|b(0)|,|h_{ij}(0)|,C_4,M_\sigma, \eta,$ such that
	\begin{equation}\begin{aligned}
	f(X_{T\wedge\tau})&\leq f(x)+M_{T\wedge\tau}\\
	&\quad+\int_0^{T\wedge\tau}p|X_u|^{-2p-2}\Big(-\eta|X_u|^2+L-2(p+1)\underline{\sigma}^2 \frac{X_u^{\top}\sigma(X_u)\sigma^{\top}(X_u)X_u}{|X_u|^2}\Big)\,du\,.\\
	\end{aligned}\end{equation}
	By Assumption \ref{Assumption ergodic unique},
	we can choose a positive integer $p$ such that  $-\eta|y|^2+L-4(p+1)\underline{\sigma}^2\frac
    {y^\top\sigma(y)\sigma^{\top}(y)y}{|y|^2}\leq -1$ for all $|y|\ge r.$ 
	This yields that
	\begin{equation}\begin{aligned}\label{eqn:recur}
	E^{\tilde{P}}\Big[\int_0^{T}\mathds{1}_{[0,\tau]}(u)|X_u|^{-2p-2}\,du\Big]\leq \frac{f(x)-E^{\tilde{P}}[f(X_{T\wedge\tau})]}{p}\leq \frac{f(x)}{p}\,. 
	\end{aligned}\end{equation}
	Using Jensen's inequality, the reverse H\"{o}lder inequality  
	\begin{equation}\begin{aligned}\label{reverse holder}
	\Big(E^{\tilde{P}}[\mathds{1}_{[0,\tau]}(u)]\Big)^2\Big(E^{\tilde{P}}[|X_u|^{2p+2}]\Big)^{-1} \leq  E^{\tilde{P}}[\mathds{1}_{[0,\tau]}(u)|X_u|^{-2p-2}]
	\end{aligned}\end{equation}
	and $E^{\tilde{P}}[|X_u|^{2p+2}]\leq L(1+|x|^{2p+2})$ 
	for the constant $L$ in 
	Lemma \ref {Ergodic mean is finite},
	we obtain
	\begin{equation}\begin{aligned}
	\Big(E^{\tilde{P}}[T\wedge\tau]\Big)^2&=\Big(E^{\tilde{P}}\Big[\int_0^T\mathds{1}_{[0,\tau]}(u)\,du\Big]\Big)^2\leq T\int_0^T \Big(E^{\tilde{P}}[\mathds{1}_{[0,\tau]}(u)]\Big)^2\,du\\&\leq \frac{TL(1+|x|^{2p+2})f(x)}{p}\,.
	\end{aligned}\end{equation}
	Taking the supremum over $\tilde{\mathcal{P}}$,  
	\begin{equation}\begin{aligned}
	\tilde{\mathbb{E}}^{b,d^{ij}}[\mathds{1}_{\{\tau\geq T\}}]\leq \frac{1}{T}\tilde{\mathbb{E}}^{b,d^{ij}}[T\wedge\tau]\leq \sqrt{
		\frac{L(1+|x|^{2p+2})f(x)}{Tp}}\,.
	\end{aligned}\end{equation}
	This completes the proof.
\end{proof}

The following theorem establishes the uniqueness of solutions to \eqref{ergodic QBSDE3} under suitable conditions. While the uniqueness of  $\lambda$ has been addressed in the existing literature, the uniqueness of the processes $Y$ (up to an additive constant), $Z$, and $K$ has not been previously studied.

\begin{thm}\label{markovian ergodic solution is unique}
	Suppose Assumptions \ref{ergodic assumption1}, \ref{ergodic assumption3}, \ref{Assumption ergodic unique} hold. For $i=1,2$, let $(Y^i,Z^i,K^i,\lambda^i)$ be a solution to \eqref{ergodic QBSDE3} such that $Z^i$ is bounded. If there exists a continuous function  $u^i:\mathbb{R}^m\to\mathbb{R}$ with polynomial growth such that $u^i(X)=Y^i$ for $i=1,2,$  then $u^1=u^2+L$ for some constant $L$ and $Z^1=Z^2,$ $K^1=K^2,$ $\lambda^1=\lambda^2$.
\end{thm}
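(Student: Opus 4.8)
The plan is to reduce the whole statement to showing that $\hat u:=u^1-u^2$ is constant; the remaining claims then follow quickly. First, $\lambda^1=\lambda^2$ is immediate from Theorem \ref{lambda unique}, since a solution with continuous $u^i$ of polynomial growth and bounded $Z^i$ meets its hypotheses. Writing $\hat Y:=Y^1-Y^2=\hat u(X)$, $\hat Z:=Z^1-Z^2$ and $\hat K:=K^1-K^2$ and subtracting the two copies of \eqref{ergodic QBSDE3} (the $\lambda$-terms cancel), I would linearize the $z$-differences \emph{exactly}: because $Z^1,Z^2$ are bounded, Assumption \ref{ergodic assumption3} lets me write $f(X_u,Z^1_u)-f(X_u,Z^2_u)=b_u\hat Z_u$ and $g_{ij}(X_u,Z^1_u)-g_{ij}(X_u,Z^2_u)=d^{ij}_u\hat Z_u$ with $b,d^{ij}$ bounded (by roughly $C_3(1+2L)$). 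With $B^{b,d^{ij}}$ the $G$-Brownian motion of \eqref{new_GBM} under the induced expectation $\tilde{\mathbb{E}}^{b,d^{ij}}$, the difference equation becomes
\begin{equation}
\hat Y_s=\hat Y_T-\int_s^T\hat Z_u\,dB^{b,d^{ij}}_u-(K^1_T-K^1_s)+(K^2_T-K^2_s)\,.
\end{equation}

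The key step is to turn this into a clean submartingale bound, and here the difference $\hat K=K^1-K^2$ of two decreasing $G$-martingales (itself not monotone) is the main obstacle. Following the device used in the proofs of Theorems \ref{Finite QBSDE exis and unique} and \ref{infinite horizon Quadartic BSDE exist and unique}, I would move $K^2$ to the left so that only the single decreasing $G$-martingale $K^1$ survives:
\begin{equation}
\hat Y_s+K^2_s=(\hat Y_T+K^2_T)-\int_s^T\hat Z_u\,dB^{b,d^{ij}}_u-(K^1_T-K^1_s)\,.
\end{equation}
Since $K^2$ remains a decreasing $G$-martingale under $\tilde{\mathbb{E}}^{b,d^{ij}}$ by Lemma \ref{K is MG in new sublinear}, taking $\tilde{\mathbb{E}}^{b,d^{ij}}_s$ and using $\tilde{\mathbb{E}}^{b,d^{ij}}_s[K^2_T]=K^2_s$ yields the submartingale inequality $\hat Y_s\le\tilde{\mathbb{E}}^{b,d^{ij}}_s[\hat Y_T]$. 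I would establish the same bound on a \emph{random} horizon, replacing $T$ by $\tau_U\wedge T$ where $\tau_U$ is the hitting time of a bounded open ball $U$; this is legitimate because $\tau_U$ is a $\ast$-stopping time and $\tilde{\mathbb{E}}^{b,d^{ij}}_s$ extends to $\mathbb{L}^{1^\ast}_G$ as recalled before the theorem (cf. \cite{hu2021extended}). Thus $\hat u(x)=\hat Y_0\le\tilde{\mathbb{E}}^{b,d^{ij}}[\hat u(X_{\tau_U\wedge T})]$ for every $x$ and $T$.

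To conclude that $\hat u$ is constant I would exploit recurrence. Splitting $\hat u(X_{\tau_U\wedge T})=\hat u(X_{\tau_U})\mathds{1}_{\{\tau_U\le T\}}+\hat u(X_T)\mathds{1}_{\{\tau_U>T\}}$, the first contribution is controlled by $\max_{\overline U}\hat u$ (continuity of $\hat u$ on the compact set $\overline U$), while the second is estimated through the Cauchy--Schwarz inequality together with the moment bound $\tilde{\mathbb{E}}^{b,d^{ij}}[|\hat u(X_T)|^2]\le L(1+|x|^{2p})$ from Lemma \ref{Ergodic mean is finite} (using the polynomial growth of $\hat u$) and the recurrence estimate $\tilde{\mathbb{E}}^{b,d^{ij}}[\mathds{1}_{\{\tau_U\ge T\}}]\to0$ from Lemma \ref{ergodic recurrent}; hence it vanishes as $T\to\infty$. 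Letting $T\to\infty$ gives $\hat u(x)\le\max_{\overline U}\hat u$ for every $x\in\mathbb{R}^m$ and every bounded ball $U$. Fixing an arbitrary $x_0$, shrinking $U$ down to $\{x_0\}$ and invoking continuity forces $\hat u(x)\le\hat u(x_0)$ for all $x,x_0$; interchanging the roles of $x$ and $x_0$ gives $\hat u\equiv L$ for a constant $L$, i.e. $u^1=u^2+L$.

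Finally, with $\hat Y\equiv L$ constant the difference equation reduces to $0=-\int_s^T\hat Z_u\,dB^{b,d^{ij}}_u-(\hat K_T-\hat K_s)$, so the symmetric $G$-martingale $\int_0^\cdot\hat Z_u\,dB^{b,d^{ij}}_u$ coincides with the finite-variation process $\hat K=K^1-K^2$; matching the martingale and finite-variation parts (equivalently, applying \cite[Proposition 3.8]{hu2014backward} as in the uniqueness argument of Theorem \ref{infinite horizon Quadartic BSDE exist and unique}) forces $\hat Z=0$, whence $Z^1=Z^2$ and then $K^1=K^2$; together with $\lambda^1=\lambda^2$ this finishes the proof. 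I expect the genuinely delicate points to be the rigorous justification of the \emph{random-horizon} submartingale bound (optional sampling within the $\ast$-stopping-time framework, and the fact that $\tau_U$ is a $\ast$-stopping time) and the careful treatment of the non-monotone term $K^1-K^2$ via the single-$K$ reduction above.
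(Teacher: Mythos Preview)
Your proposal is correct and follows the paper's strategy: linearize in $z$ to obtain a Girsanov-shifted $G$-Brownian motion, stop at the hitting time $\tau_U$ of a small ball (a $\ast$-stopping time), and combine the recurrence estimate of Lemma~\ref{ergodic recurrent} with the polynomial moments of Lemma~\ref{Ergodic mean is finite} to kill the $\{\tau_U>T\}$ part. One technical caveat: your ``exact'' linearization $f(X_u,Z^1_u)-f(X_u,Z^2_u)=b_u\hat Z_u$ needs $b,d^{ij}\in\mathbb{M}^2_G(0,T;\mathbb{R}^d)$ for the induced expectation $\tilde{\mathbb{E}}^{b,d^{ij}}$ to be defined, and the naive ratio is discontinuous at $\hat Z_u=0$; this is precisely why the paper uses the $l^\epsilon$-mollification (paying $m^\epsilon,n^{\epsilon,ij}$ error terms of size $L\epsilon T$, sent to zero before $T\to\infty$), and you should too. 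The only other difference is cosmetic: the paper normalizes $u^i(0)=0$, picks $r$ so that $|\hat u|\le\delta$ on $B_r(0)$, and bounds $\hat Y_0$ by $\tilde{\mathbb{E}}^{b^\epsilon,d^{\epsilon,ij}}[\,|\hat Y_{T\wedge\tau}|\,]$ rather than by $\tilde{\mathbb{E}}^{b,d^{ij}}[\hat Y_{T\wedge\tau_U}]$, then sends $\delta\to0$; your variant of bounding by $\max_{\overline U}\hat u$ and shrinking $U$ to a point is an equivalent way to close the argument.
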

\begin{proof}
 It is evident that $\lambda^1=\lambda^2$ according to Theorem \ref{lambda unique}. We first prove $u^1=u^2+L$ for some constant $L$. Since $u^i(X)-u^i(0)$ is also a solution to \eqref{ergodic QBSDE3}, we may assume $u^i(0)=0$ for $i=1,2$.  For any $\delta> 0$, there exists $r>0$ such that $|u^1(x)-u^2(x)|\leq \delta$ for $|x|\leq r$.
	Let $\tau:=\inf\{t\geq0 \,|\, |X_t|<r\}$.
	Then $\tau\wedge T$ is a $\ast$-stopping time by \cite[Example 46]{hu2021extended}.
	Define $(\hat{Y},\hat{Z},\hat{K})=(Y^1-Y^2,Z^1-Z^2,K^1-K^2)$.   
	Hereafter, $L>0$ is a generic constant depending only on $C_1,C_3,C_\sigma, M_\sigma,\eta, x$ and may differ line by line.

	By a similar argument in the proof of Theorem \ref{infinite horizon Quadartic BSDE exist and unique}, we can construct four processes $b^{\epsilon}, d^{\epsilon,ij}, m^{\epsilon}, n^{\epsilon,ij}, $ a sublinear expectation $\hat{\mathbb{E}}^{b^\epsilon,d^{\epsilon,ij}}$  and a $G$-Brownian motion $B^\epsilon$ such that 
	\begin{equation}\begin{aligned}
	\hat{Y}_0=\hat{Y}_{T\wedge\tau}+\int_0^{T\wedge\tau}m_u^{\epsilon}\,du+\int_0^{T\wedge\tau}n_u^{\epsilon,ij}\,d\langle B^{\epsilon,i},B^{\epsilon,j}\rangle_u-\int_0^{T\wedge\tau}\hat{Z}_u\,dB^\epsilon_u-\hat{K}_{T\wedge\tau}\,.
	\end{aligned}\end{equation}
	Then     
	\begin{equation} 
	\begin{aligned}
	\hat{Y}_0+\tilde{\mathbb{E}}^{b^\epsilon,d^{\epsilon,ij}}[{K}^1_{T\wedge\tau}]&\le \tilde{\mathbb{E}}^{b^\epsilon,d^{\epsilon,ij}}[|\hat{Y}_{T\wedge\tau}|]+\tilde{\mathbb{E}}^{b^\epsilon,d^{\epsilon,ij}}\Big[\int_0^{T\wedge\tau}|m_u^{\epsilon}|\,du\Big]
	\\
	&\;\;+\tilde{\mathbb{E}}^{b^\epsilon,d^{\epsilon,ij}}\Big[\Big|\int_0^{T\wedge\tau}n_u^{\epsilon,ij}\,d\langle B^{\epsilon,i},B^{\epsilon,j}\rangle_u\Big|\Big]+\tilde{\mathbb{E}}^{b^\epsilon,d^{\epsilon,ij}}\Big[-\int_0^{T\wedge\tau}\hat{Z}_u\,dB^\epsilon_u\Big] \\
	&\le \tilde{\mathbb{E}}^{b^\epsilon,d^{\epsilon,ij}}[|\hat{Y}_{T\wedge\tau}|]
	+L\epsilon T
	+\tilde{\mathbb{E}}^{b^\epsilon,d^{\epsilon,ij}}\Big[-\int_0^{T\wedge\tau}\hat{Z}_u\,dB^\epsilon_u\Big] \,.
	\end{aligned}  
	\end{equation} 
	Using that $\tau\wedge T$ is a $\ast$-stopping time, by   \cite[Theorem 48]{hu2021extended}, we know
	$$\tilde{\mathbb{E}}^{b^\epsilon,d^{\epsilon,ij}}[{K}^1_{T\wedge\tau}]=\tilde{\mathbb{E}}^{b^\epsilon,d^{\epsilon,ij}}\Big[-\int_0^{T\wedge\tau}\hat{Z}_u\,dB^\epsilon_u\Big]=0\,.$$
	Since $u^1$ and $u^2$ have polynomial growth, we know 
	$|u^1(x)-u^2(x)|\leq L(1+|x|^p)$ for some  constants $p\in\mathbb{N}$ and $L>0.$ 
	By Lemma \ref{ergodic recurrent}, we have
	\begin{equation}\begin{aligned}
	\hat{Y}_0\leq \tilde{\mathbb{E}}^{b^\epsilon,d^{\epsilon,ij}}[|\hat{Y}_{T \wedge\tau}|]+L T\epsilon
	&\leq \delta +\tilde{\mathbb{E}}^{b^\epsilon,d^{\epsilon,ij}}[\mathds{1}_{\{\tau\geq T \}}|\hat{Y}_T|]+L T\epsilon\\
	&\leq \delta +\tilde{\mathbb{E}}^{b^\epsilon,d^{\epsilon,ij}}[\mathds{1}_{\{\tau\geq T \}}|u^1(X_T)-u^2(X_T)|]+L T\epsilon\\   
	&\leq \delta +L(\tilde{\mathbb{E}}^{b^\epsilon,d^{\epsilon,ij}}[\mathds{1}_{\{\tau\geq T \}}])^{\frac{1}{2}}(\tilde{\mathbb{E}}^{b^\epsilon,d^{\epsilon,ij}}[1+|X_T|^{2p}])^{\frac{1}{2}}+L T\epsilon\\
	&\leq \delta+L(T^{-\frac{1}{4}}(1+|x|^p)+ T\epsilon)\,.
	\end{aligned}\end{equation} 
	By sending $\epsilon\to 0$ and  $T\to\infty$,  we get $\hat{Y}_0\leq \delta$. 
	A similar argument gives $-\hat{Y}_0\leq \delta$ and thus $|Y_0^1-Y_0^2|=|\hat{Y}_0|\leq \delta$.
	Since $\delta>0$ is arbitrary, $u^1(x)=Y_0^1=Y_0^2=u^2(x)$.
	Because the quadratic variation of $\hat{Y}$ is zero, we obtain $Z^1=Z^2,$ and this induces $K^1=K^2$.
\end{proof}

Consider the PDE
\begin{equation}\begin{aligned}\label{ergodic PDE}
G(H(\lambda, x, D_xu, D_x^2u))+\langle b(x),D_xu\rangle+f(x,\langle\sigma^1(x),D_xu\rangle,\cdots,\langle\sigma^d(x),D_xu\rangle)+\gamma^1\lambda=0\,,
\end{aligned}\end{equation}
where
\begin{equation}\begin{aligned}
H_{ij}(\lambda, x, D_xu, D_x^2u)&=\langle D^2_xu\sigma^i(x),\sigma^j(x)\rangle+2\langle D_xu,h_{ij}(x)\rangle\\&\quad+2 g_{ij}(x,\langle\sigma^1(x),D_xu\rangle,\cdots,\langle\sigma^d(x),D_xu\rangle) +2\gamma_{ij}^2\lambda \,,
\end{aligned}\end{equation}
and  $\sigma^i$  is the $i$-th column vector of $\sigma$ for $i=1,2,\cdots,d.$
For a continuous 
function $u$ and a real number $\lambda,$ we say the pair $(u,\lambda)$ is a viscosity subsolution pair (respectively, viscosity supersolution pair)  to the PDE \eqref{ergodic PDE} if
\begin{equation}\begin{aligned}
&G(H( \lambda,\overline{x} ,D_x\overline{\varphi}(\overline{x}) ,D^2_x\overline{\varphi}(\overline{x})  )   )+\langle b(\overline{x}),D_x\overline{\varphi}(\overline{x})\rangle\\&+f(\overline{x},\langle\sigma^1,D_x\overline{\varphi}(\overline{x})\rangle,\cdots,\langle\sigma^d,D_x\overline{\varphi}(\overline{x})\rangle)+\gamma^1\lambda\geq 0\;\;(\text{respectively,}\leq 0)
\end{aligned}\end{equation}
for all $(\overline{\varphi},\overline{x})\in C^2(\mathbb{R}^m)\times \mathbb{R}^m$ satisfying
$\overline{\varphi}(\overline{x})=u(\overline{x})$ and $\overline{\varphi} -u\geq 0$ (respectively, $\overline{\varphi} -u\leq 0$).
If $(u,\lambda)$ 
is both a viscosity   subsolution pair and 
a viscosity   supersolution pair
to the PDE,
then
$(u,\lambda)$ is called a viscosity solution pair to the PDE.
The following theorem
can be proven by a 
similar argument in the proof of Theorem \ref{QBSDE feykac formula}.
\begin{thm}\label{ergo_pde}
	Let Assumptions \ref{ergodic assumption1} and  \ref{ergodic assumption3} hold. Then there exists a viscosity solution pair to the PDE \eqref{ergodic PDE}.  
\end{thm}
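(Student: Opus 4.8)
The plan is to use the solution produced in Theorem~\ref{QBSDE with lambda sol exist}. That theorem furnishes a quartet $(Y^x, Z^x, K^x, \lambda)$ together with a Lipschitz continuous function $u$ of linear growth satisfying $Y_s^x = u(X_s^x)$ for all $s \ge 0$, with $Z^x$ bounded; I would define the candidate pair to be exactly this $(u, \lambda)$. Since the argument for the supersolution property is entirely symmetric (touching $u$ from below and using the reverse comparison), I would only verify that $(u, \lambda)$ is a viscosity subsolution pair. The starting point is the dynamic programming relation obtained by taking $s = 0$, $T = \delta$ in \eqref{ergodic QBSDE3} and substituting $Y_s^{\overline{x}} = u(X_s^{\overline{x}})$:
\begin{equation}
u(\overline{x}) = u(X_\delta^{\overline{x}}) + \int_0^\delta \big(f(X_u^{\overline{x}}, Z_u^{\overline{x}}) + \gamma^1\lambda\big)\,du + \int_0^\delta \big(g_{ij}(X_u^{\overline{x}}, Z_u^{\overline{x}}) + \gamma^2_{ij}\lambda\big)\,d\langle B^i, B^j\rangle_u - \int_0^\delta Z_u^{\overline{x}}\,dB_u - K_\delta^{\overline{x}}\,.
\end{equation}

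Next, fixing $\overline{x} \in \mathbb{R}^m$ and a test function $\overline{\varphi} \in C^2(\mathbb{R}^m)$ with $\overline{\varphi}(\overline{x}) = u(\overline{x})$, $\overline{\varphi} - u \ge 0$ and bounded derivatives up to third order, I would proceed exactly as in the proof of Theorem~\ref{QBSDE feykac formula}. Choosing $M$ larger than $\sup|Z^{\overline{x}}|$ and than the sup-norm of $(\langle \sigma^i(\cdot), D_x\overline{\varphi}(\cdot)\rangle)_{i=1}^d$, I replace $f, g_{ij}$ by the $z$-truncations $f^M, g_{ij}^M$, which renders the drivers uniformly Lipschitz without altering the dynamics along $X^{\overline{x}}$ since $|Z^{\overline{x}}| \le M$. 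I then introduce the auxiliary finite-horizon Lipschitz $G$-BSDE on $[0, \delta]$ with terminal value $\overline{\varphi}(X_\delta^{\overline{x}})$ and drivers $f^M + \gamma^1\lambda$, $g_{ij}^M + \gamma^2_{ij}\lambda$, whose unique solution $\tilde Y$ exists by Theorem~\ref{linear FeyKac formula}. Because $\overline{\varphi}(X_\delta^{\overline{x}}) \ge u(X_\delta^{\overline{x}}) = Y_\delta^{\overline{x}}$, the comparison principle gives $\tilde Y_0 \ge u(\overline{x})$. Applying It\^{o}'s formula to $\overline{\varphi}(X^{\overline{x}})$ and subtracting, I compare $\hat Y := \tilde Y - \overline{\varphi}(X^{\overline{x}}) \ge 0$ with the solution $\overline Y$ of the $G$-BSDE whose coefficients are frozen at $\overline{x}$; the freezing error is controlled by a bound of order $L(1 + |\overline{x}|^q)\delta^{3/2}$ obtained from the maximal inequality of Theorem~\ref{similar doob maximal} together with the short-time moment estimates of Theorem~\ref{SDE sublinear property} and Lemma~\ref{Ergodic mean is finite}. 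Writing $\overline Y$ for the frozen solution, the quantity $\overline Y_0/\delta$ converges as $\delta \to 0$ to the left-hand side of \eqref{ergodic PDE} with $u$ replaced by $\overline{\varphi}$ and evaluated at $\overline{x}$, so the inequality $\hat Y_0 \ge 0$ passes to the limit as the desired subsolution inequality at $\overline{x}$.

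The main obstacle is the passage $\delta \to 0$: one must verify that the truncation and the freezing of coefficients at $\overline{x}$ contribute only $o(\delta)$ errors, so that dividing by $\delta$ leaves the frozen operator intact in the limit. This is handled exactly as in Theorem~\ref{QBSDE feykac formula}, relying on the Lipschitz dependence in $z$ of the frozen drivers, the estimate $\hat{\mathbb{E}}[\sup_{0 \le u \le \delta}|X_u^{\overline{x}} - \overline{x}|^4] \le L(1 + |\overline{x}|^q)\delta^2$, and the boundedness of $Z^{\overline{x}}$. The only genuinely new feature relative to the infinite-horizon Feynman--Kac result of Theorem~\ref{ergodic fey kac formula} is that the additive constants $\gamma^1\lambda$ and $\gamma^2_{ij}\lambda$ must be carried through the computation; since they are fixed scalars, they simply contribute the terms $\gamma^1\lambda$ and $2\gamma^2_{ij}\lambda$ to the limiting operator, and no additional estimates are needed.
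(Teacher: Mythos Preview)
Your proposal is correct and follows essentially the same route as the paper: the paper's own proof consists of the single sentence that the result ``can be proven by a similar argument in the proof of Theorem~\ref{QBSDE feykac formula}'', and what you have written is precisely that argument spelled out, taking the pair $(u,\lambda)$ from Theorem~\ref{QBSDE with lambda sol exist} and running the truncation, comparison, and coefficient-freezing scheme of Theorem~\ref{QBSDE feykac formula} with the constants $\gamma^1\lambda$, $\gamma^2_{ij}\lambda$ carried along.
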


\begin{remark}
	\label{remark:infinite_BSDEs}	
Assumptions \ref{ergodic assumption1} and \ref{ergodic assumption2}   for infinite-horizon $G$-BSDEs differ from those in the existing literature in several key aspects.  While \cite{hu2018ergodic} investigates infinite-horizon $G$-BSDEs under the assumption that the drivers $f(x, y, z)$ and $g_{ij}(x,y,z)$ are Lipschitz continuous in $z$, our framework allows for drivers with quadratic growth in $z$.  In \cite{sun2024g}, the analysis 
	requires the boundedness of the mappings $x \mapsto f(x, 0, 0)$ and $x\mapsto g_{ij}(x,0,0)$, whereas our results do not rely on this condition.  
Assumption \ref{ergodic assumption3}  for ergodic $G$-BSDEs similarly  departs from the corresponding assumptions in both \cite{hu2018ergodic} and \cite{sun2024g}.
\end{remark}

\section{Robust pricing kernels}\label{section 6}

This section studies the long-term decomposition of  pricing kernels.
Consider the $G$-SDE and the pricing kernel of the form
\begin{align}
\label{SDE appli}X_s&=x+\int_0^sb(X_u)\,du+\int_0^sh_{ij}(X_u)\,d\langle B^i,B^j\rangle_u+\int_0^s\sigma(X_u)\,dB_u\,,\\
\label{SDF}
D_s&=e^{-\int_0^sr(X_u)\,du-\int_0^sk_{ij}(X_u)\,d\langle B^j,B^j\rangle_u-\int_0^sv(X_u)\,dB_u}\,
\end{align}
for $s\ge0$,
where $b,h_{ij}:\mathbb{R}^m\to\mathbb{R}^m$, $\sigma:\mathbb{R}^m\to\mathbb{R}^{m\times d}$, $r,k_{ij}:\mathbb{R}^m\to \mathbb{R}$ and $v:\mathbb{R}^m\to\mathbb{R}^d.$
The process $D$ can be understood as a robust version of pricing kernels under volatility uncertainty.
We define a function $d_{ij}:\mathbb{R}^m\to \mathbb{R}^m$ as $d_{ij}(x)=(\frac{1}{2}(\sigma_{\ell,i}(x)v_j(x)+\sigma_{\ell,j}(x)v_i(x)))_{1\leq \ell \leq m}$, where $\sigma_{\ell,i}$ is the 
entry in the $\ell$-th row and $i$-th column of the matrix $\sigma$ and $v_j$ is the $j$-th component of $v$.

\begin{assume}\label{sdf assumption}
	Assume the functions $b,h_{ij},\sigma,r,k_{ij},v,d_{ij}$ satisfy the following properties. 
	\begin{enumerate}[label=(\roman*)]
		\item\label{sdf symm} For $1\leq i,j \leq d$, $h_{ij}=h_{ji}$, $k_{ij}=k_{ji}$.
		\item \label{aa} There exist constants $C_1,C_\sigma,M_\sigma$ such that 
		\begin{equation} 
		\begin{aligned}
		&|b(x)-b(x')|+|r(x)-r(x')|+\sum_{i,j=1}^d|k_{ij}(x)-k_{ij}(x')|\\
        &+\frac{1}{2}\sum_{i,j=1}^d|v_i(x)v_j(x)-v_i(x')v_j(x')|\\
        &+\sum_{i,j=1}^d|h_{ij}(x)-h_{ij}(x')|+\sum_{i,j=1}^d|d_{ij}(x)-d_{ij}(x')|\leq C_1|x-x'|\,,\\
		&|\sigma(x)-\sigma(x')|\leq C_\sigma |x-x'|\,,\\
		&|\sigma(x)|\leq M_\sigma
		\end{aligned}
		\end{equation}
		for all $x,x'\in \mathbb{R}^m$.
		\item \label{eta appli} There exists a constant $\eta>0$ such that \begin{equation}\begin{aligned}
		&G\Big(\sum_{j=1}^m(\sigma_j(x)-\sigma_j(x'))^\top(\sigma_j(x)-\sigma_j(x'))\\
		&+2(\langle x-x',h_{ij}(x)-d_{ij}(x)-h_{ij}(x')+d_{ij}(x') \rangle)_{i,j=1}^d\Big)\\
		&+\langle x-x',b(x)-b(x')\rangle \leq -\eta|x-x'|^2
		\end{aligned}\end{equation} 
		for all $x,x'\in \mathbb{R}^m$, where $\sigma_j$ is the $j$-th row vector of $\sigma$ for $j=1,2,\cdots,m.$
		\item $\eta >\frac{1+\overline{\sigma}^2}{2}\Big(C_\sigma d+4\sqrt{2C\sigma  C_1 d \frac{\overline{\sigma}M_\sigma}{\underline{\sigma}}}\Big)\,.$
	\end{enumerate}
\end{assume}

\begin{lemma}\label{application lemma for expoential growth}
	Suppose \ref{aa} in Assumption \ref{sdf assumption} holds and $X$ is a solution to \eqref{SDE appli}. Then 
	\begin{equation}\begin{aligned}
	\hat{\mathbb{E}}[\sup_{0\leq s\leq T}e^{p |X_s|}]<\infty
	\end{aligned}\end{equation}
	for all $p\geq 1$ and $T\geq 0$. Moreover 
	$e^{|X|}\in \mathbb{S}^p_G(0,T)$\,.
\end{lemma}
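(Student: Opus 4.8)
The plan is to reduce the statement to a single uniform exponential-moment bound for the $G$-stochastic integral $M_s:=\int_0^s\sigma(X_u)\,dB_u$ and to transfer it back to $X$ by a pathwise Gronwall estimate. First I would extract from \ref{aa} in Assumption \ref{sdf assumption} that $b$ and $h_{ij}$ are Lipschitz, hence of linear growth, and that $\sigma$ is bounded by $M_\sigma$. Since \eqref{ellip} forces every $Q\in\Sigma$ to satisfy $\underline{\sigma}^2 I\le Q\le\overline{\sigma}^2 I$, the densities of the quadratic variations $\langle B^i,B^j\rangle$ are bounded by $\overline{\sigma}^2$ quasi-surely. Substituting these bounds into \eqref{SDE appli} and writing $S_s:=\sup_{0\le u\le s}|X_u|$ and $M_T^\ast:=\sup_{0\le u\le T}|M_u|$ gives an inequality of the form
\begin{equation}
S_s\le |x|+L'T+M_T^\ast+L'\int_0^s S_u\,du,
\end{equation}
where $L'$ depends only on the Lipschitz constants, $\overline{\sigma}$ and $d$. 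Gronwall's inequality then produces a deterministic pathwise bound $\sup_{0\le s\le T}|X_s|\le A+B\,M_T^\ast$, with $A,B$ depending only on $|x|,T,\overline{\sigma},M_\sigma$ and the Lipschitz constants.

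The crux, and the main obstacle, is to show that $\hat{\mathbb{E}}[e^{\theta M_T^\ast}]<\infty$ for every $\theta>0$. Here I would invoke the probabilistic representation of Theorem \ref{conditonal sublinear representation}, so that it suffices to bound $E^P[e^{\theta M_T^\ast}]$ uniformly over $P\in\mathcal{P}$. Under each fixed $P$, every component $M^k$ is a continuous $P$-martingale whose quadratic variation obeys the deterministic, $P$-independent bound
\begin{equation}
\langle M^k\rangle_T=\int_0^T\sigma_k(X_u)\,\tfrac{d\langle B\rangle_u}{du}\,\sigma_k^\top(X_u)\,du\le\overline{\sigma}^2M_\sigma^2T=:c,
\end{equation}
because $\sigma$ is bounded and $d\langle B\rangle_u/du\le\overline{\sigma}^2 I$. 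The classical exponential martingale (Bernstein) inequality $P(\sup_{u\le T}(\pm M^k_u)\ge r)\le e^{-r^2/(2c)}$ then shows that each $\sup_{u\le T}|M^k_u|$ is sub-Gaussian with a parameter $c$ that does not depend on $P$; combining the $m$ coordinates through Hölder's inequality yields $\sup_{P\in\mathcal{P}}E^P[e^{\theta M_T^\ast}]<\infty$. The delicate point is precisely that the sub-Gaussian constant $c$ is uniform over the whole family $\mathcal{P}$, which is what upgrades the classical one-measure estimate to the sublinear setting.

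Combining the two steps gives the first assertion, since $t\mapsto e^{pt}$ is increasing:
\begin{equation}
\hat{\mathbb{E}}\Big[\sup_{0\le s\le T}e^{p|X_s|}\Big]=\hat{\mathbb{E}}\big[e^{pS_T}\big]\le e^{pA}\,\hat{\mathbb{E}}\big[e^{pB M_T^\ast}\big]<\infty
\end{equation}
for every $p\ge1$ and $T\ge0$. For the membership $e^{|X|}\in\mathbb{S}^p_G(0,T)$ I would argue by truncation and completeness. Set $\Psi_n(y):=e^{|y|\wedge n}$, which is bounded by $e^n$ and Lipschitz. Because $X\in\mathbb{S}^2_G(0,T;\mathbb{R}^m)$, choose cylindrical processes $X^{(k)}\in\mathbb{S}^0(0,T;\mathbb{R}^m)$ with $X^{(k)}\to X$ in $\mathbb{S}^2_G$; then $\Psi_n(X^{(k)})\in\mathbb{S}^0(0,T)$, and bounding $p-1$ factors of the increment by $2e^n$ and the remaining factor by the Lipschitz estimate shows $\Psi_n(X^{(k)})\to\Psi_n(X)=e^{|X|\wedge n}$ in $\mathbb{S}^p_G$, so $e^{|X|\wedge n}\in\mathbb{S}^p_G(0,T)$. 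Finally, for any $q>1$,
\begin{equation}
\big\|e^{|X|}-e^{|X|\wedge n}\big\|_{\mathbb{S}^p_G}^p\le\hat{\mathbb{E}}\big[e^{pS_T}\mathds{1}_{\{S_T>n\}}\big]\le e^{-p(q-1)n}\,\hat{\mathbb{E}}\big[e^{pqS_T}\big]\longrightarrow0
\end{equation}
as $n\to\infty$, the last expectation being finite by the first assertion. Completeness of $\mathbb{S}^p_G(0,T)$ then yields $e^{|X|}\in\mathbb{S}^p_G(0,T)$.
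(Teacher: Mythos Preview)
Your proof is correct and follows the same overall architecture as the paper: a pathwise Gronwall argument reduces the question to exponential moments of the bounded-integrand $G$-stochastic integral $M_s=\int_0^s\sigma(X_u)\,dB_u$, and these are obtained uniformly over $\mathcal{P}$ from the deterministic bound $\langle M^k\rangle_T\le\overline{\sigma}^2M_\sigma^2T$. The paper implements the second step differently: under each $P$ it represents $B=\int_0^\cdot\alpha_u^P\,dW_u^P$ and then applies the Dambis--Dubins--Schwarz theorem to each row martingale $M^j$, so that $\sup_{s\le T}|M^j_s|$ is dominated in law by $\sup_{s\le\overline{\sigma}^2M_\sigma^2T}|W^{P,1}_s|$, whose exponential moments are explicit and $P$-independent. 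Your direct Bernstein-inequality route extracts the same sub-Gaussian tail without the time change and is slightly more elementary. For the membership $e^{|X|}\in\mathbb{S}^p_G(0,T)$ the paper simply invokes \cite[Lemma~3.3]{hu2022quadratic}, whereas you supply an explicit truncation-and-approximation argument using $\Psi_n(y)=e^{|y|\wedge n}$; this makes your proof self-contained at the cost of a few extra lines, and the final Markov-type bound $\hat{\mathbb{E}}[e^{pS_T}\mathds{1}_{\{S_T>n\}}]\le e^{-p(q-1)n}\hat{\mathbb{E}}[e^{pqS_T}]$ is a clean way to close the loop using the first assertion.
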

\begin{proof}
	We first prove that $\hat{\mathbb{E}}[\sup_{0\leq s\leq T}e^{p |X_s|}]<\infty$ for all $p\geq 1$ and $T\ge0.$ In this proof, $L>0$ is a generic constant depending only on $\overline{\sigma},C_1, M_\sigma ,p ,x,T$ and may differ line by line.  Let $\mathcal{P}$ be a set that represents $\hat{\mathbb{E}}. $ For each $P\in \mathcal{P}$, the process $B$ is a continuous martingale and $\langle B\rangle$ is bounded, thus there exists a $d\times d$ matrix-valued process  $\alpha^P$ such that  $B=\int_0^\cdot \alpha_u^P\,dW^P_u $ and $\underline{\sigma}^2I_d\le (\alpha^P)^{\top}\alpha^P\leq \overline{\sigma}^2I_d$, where $W^P=(W^{P,i})_{1\le i\le d}$ is a $d$-dimensional   Brownian motion under $P$  and $I_d$ is the $d\times d$ identity matrix. Under each $P$,
	\begin{equation}\begin{aligned}
	\sup_{0\leq s\leq T}|X_s|\le|x|+LT+L\int_0^T\sup_{0\leq s\leq u}|X_s|\,du+\sum_{j=1}^m \sup_{0\leq s\leq T}\Big|\int_0^s\sigma_j(X_u)\alpha^P_u\,dW^P_u\Big|\,.
	\end{aligned}\end{equation}
	Using   Gr\"{o}nwall's inequality, 
	\begin{equation}\begin{aligned}
	\sup_{0\leq s\leq T}|X_s|\le \Big(|x|+LT+\sum_{j=1}^m \sup_{0\leq s\leq T}\Big|\int_0^s\sigma_j(X_u)\alpha^P_u\,dW^P_u\Big|\Big)e^{LT}\,.
	\end{aligned}\end{equation}
	Since,  for each $j=1,2,\cdots ,m$, the process $$M^j:=\int_0^\cdot \big(\sigma_j(X_u)\mathds{1}_{\{0\le u\le T\}}+e_1\mathds{1}_{\{u>T\}} \big)\alpha^P_u\,dW^P_u$$
	is a continuous martingale and $\langle M^j\rangle_{\infty}=\infty$, the Dambis-Dubins-Schwarz theorem 
	implies that the processes 
	$M^j$ and $W^{P,1}_{\langle M^j\rangle} $ have the same law,  where $e_1\in\mathbb{R}^d$ denotes the first standard basis vector.
	Since the quadratic variation $\langle M^j\rangle _T$ is less than or equal to $\overline{\sigma}^2M_\sigma^2  T,$     applying H\"{o}lder inequality, we have 
	\begin{equation}\begin{aligned}
	E^P\Big[\sup_{0\leq s \leq T}e^{p|X_s|}\Big]&\leq  L\prod_{j=1}^m \Big(E^P\Big[ \sup_{0\leq s\leq  \overline{\sigma}^2M_\sigma^2 T} e^{mpe^{LT}|W_s^{P,1}|}\Big]   \Big)^{\frac{1}{m}}\leq L_1\,
	\end{aligned}\end{equation} for some constant $L_1>0$.
	Thus, $\hat{\mathbb{E}}[\sup_{0\leq s\leq T}e^{p |X_s|}]<\infty$ because $L_1$ is independent of the measure $P.$ 
	Moreover, by \cite[Lemma 3.3]{hu2022quadratic}, we have   $e^{|X|}\in \mathbb{S}_G^p(0,T)$.
\end{proof}

We can define a  sublinear expectation and sublinear conditional expectation on the $G$-expectation space $(\Omega,Lip(\Omega),\hat{\mathbb{E}})$ 
like 
\eqref{M_z}. Consider the process  
$$M^v=e^{-\int_0^\cdot\frac{1}{2} v_i(X_u)v_j(X_u)\,d\langle B^i,B^j\rangle_u-\int_0^\cdot v(X_u)\,dB_u}\,.$$ 
By \ref{aa} in Assumption \ref{sdf assumption} and 
Lemma \ref{application lemma for expoential growth}, 
we obtain that for any $p\geq 1$ 
\begin{equation}\begin{aligned}
\hat{\mathbb{E}}[e^{p\int_0^Tv_i(X_u)v_j(X_u)\,d\langle B^i, B^j\rangle_u}]<\infty\,,
\end{aligned}\end{equation}
which implies that
$M^v$ is a symmetric $G$-martingale
by \cite[Proposition 5.10]{osuka2013girsanov}. Define a sublinear expectation $\hat{\mathbb{E}}^v$ and sublinear conditional expectation $\hat{\mathbb{E}}^v_s$ as 
\begin{equation}\begin{aligned}\label{new sublinear expectation appli}
\hat{\mathbb{E}}^v[\xi]=\hat{\mathbb{E}}[M^v_T \xi]\,,\,\,\,\hat{\mathbb{E}}^v_s[\xi]=(M^v_s)^{-1}\hat{\mathbb{E}}_s[M^v_T \xi] 
\end{aligned}\end{equation}
for $\xi\in Lip(\Omega_T)$. We say $\hat{\mathbb{E}}^v$(respectively, $\hat{\mathbb{E}}^v_s$) is the sublinear expectation (respectively, the sublinear conditional expectation) induced by $v(X)$. The process
\begin{equation}\begin{aligned}
B^v=B+\Big(\sum_{j=1}^d\int_0^\cdot v_j(X_u)\,d\langle B^i,B^j\rangle_u\Big)_{1\leq i\leq d}
\end{aligned}\end{equation}
is a $G$-Brownian motion under the sublinear expectation $\hat{\mathbb{E}}^v$. We can construct the spaces $\mathbb{L}^{p,v}_G(\Omega_T),$ $ \mathbb{M}^{p,v}_G(0,T),$ $\mathbb{H}^{p,v}_G(0,T),$ $\mathbb{S}^{p,v}_G(0,T)$ for each $p\geq 1$ in a way analogous to  Section \ref{section 2}. If there is no confusion, we will omit the superscript $v$. Refer to \cite{osuka2013girsanov} for more details.

\begin{lemma}\label{K is MG under v}
	Suppose \ref{aa} in Assumption \ref{sdf assumption} holds. Let $K$ be a decreasing $G$-martingale on the $G$-expectation space $(\Omega,\mathbb{L}^1_G(\Omega),\hat{\mathbb{E}})$ with $K_0=0$ and $K_T\in \mathbb{L}^p(\Omega_T)$ for some $p>1$, and let $\hat{\mathbb{E}}^v$ be the sublinear expectation induced by $v(X)$. Then $K$ is a decreasing $G$-martingale under the sublinear expectation $\hat{\mathbb{E}}^v.$ 
\end{lemma}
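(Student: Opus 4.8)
The plan is to follow the strategy of Lemma \ref{K is MG in new sublinear}, adapting it to the density $M^v$ on the original $G$-expectation space. Two things must be verified: that $K$ remains decreasing with $K_0=0$, and that $K$ is a $G$-martingale under $\hat{\mathbb{E}}^v$. The first is immediate. Since $M^v$ is a strictly positive symmetric $G$-martingale, the sublinear expectations $\hat{\mathbb{E}}$ and $\hat{\mathbb{E}}^v$ are equivalent, so the pathwise, quasi-sure properties $K_0=0$ and $s\mapsto K_s$ decreasing transfer verbatim to the $\hat{\mathbb{E}}^v$-world by Proposition \ref{relation P and sublinear}.

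For the martingale property I would reduce, via a Bayes-type identity, to showing that the product $M^vK$ is a $G$-martingale under $\hat{\mathbb{E}}$. Fix $0\le s\le t\le T$. Because $K$ is a decreasing $G$-martingale with $K_0=0$ we have $K_t\le 0$; combining the tower property with the symmetric $G$-martingale property of $M^v$ and the standard identity $\hat{\mathbb{E}}_t[\eta X]=\eta^+\hat{\mathbb{E}}_t[X]+\eta^-\hat{\mathbb{E}}_t[-X]$ for $\mathcal{F}_t$-measurable $\eta$ gives $\hat{\mathbb{E}}_t[M^v_TK_t]=(-K_t)(-M^v_t)=M^v_tK_t$, whence $\hat{\mathbb{E}}^v_s[K_t]=(M^v_s)^{-1}\hat{\mathbb{E}}_s[M^v_tK_t]$. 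Thus it suffices to prove $\hat{\mathbb{E}}_s[M^v_tK_t]=M^v_sK_s$.

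To establish this I would apply the product rule to $M^vK$. Since $K$ is continuous and decreasing, hence of finite variation, its covariation with the continuous process $M^v$ vanishes, so $d(M^v_uK_u)=M^v_u\,dK_u+K_u\,dM^v_u=M^v_u\,dK_u-K_uM^v_uv_i(X_u)\,dB^i_u$. The first term $\int_0^\cdot M^v_u\,dK_u$ is a decreasing $G$-martingale by Lemma \ref{int with respect to K is martingale} (with $M^v>0$, so only the $X^+$ contribution appears), and the second is a symmetric $G$-martingale. Writing $M^vK=\tilde K+N$ with $\tilde K$ decreasing and $N$ symmetric, both starting at $0$, the identity $\hat{\mathbb{E}}_s[\tilde K_t+N_t]=\hat{\mathbb{E}}_s[\tilde K_t]+N_s=\tilde K_s+N_s$, valid for symmetric $N$, yields $\hat{\mathbb{E}}_s[(M^vK)_t]=(M^vK)_s$. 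Tracing back the reduction then gives $\hat{\mathbb{E}}^v_s[K_t]=K_s$.

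The main obstacle is the integrability bookkeeping required to legitimize these steps. Applying Lemma \ref{int with respect to K is martingale} demands $M^v\in\mathbb{S}^\alpha_G(0,T)$ for some $\alpha>1$ together with $K_T\in\mathbb{L}^\beta_G(\Omega_T)$, $\beta=\alpha/(\alpha-1)$; choosing $\alpha=p/(p-1)$ so that $\beta=p$, the hypothesis $K_T\in\mathbb{L}^p(\Omega_T)$ for $p>1$ suffices, while $M^v\in\mathbb{S}^\alpha_G(0,T)$ follows from \ref{aa} in Assumption \ref{sdf assumption} and the exponential moment bounds of Lemma \ref{application lemma for expoential growth}. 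One must also check $KM^vv(X)\in\mathbb{H}^2_G(0,T;\mathbb{R}^d)$ so that $N$ is genuinely a symmetric $G$-martingale; since $\sup_{u\le T}|K_u|=|K_T|\in\mathbb{L}^p$, $M^v$ has all $G$-moments, and $v(X)$ has at most linear growth, this follows from H\"older's inequality. Justifying the vanishing covariation $\langle M^v,K\rangle=0$ from the finite-variation character of $K$ is the only genuinely delicate point, and it is handled exactly as in \cite{hu2018quadratic}.
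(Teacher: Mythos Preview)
Your proposal is correct and follows essentially the same route as the paper. The paper's proof first establishes the conditional moment bound $\hat{\mathbb{E}}_s[(M^v_T/M^v_s)^p]<\infty$ via Lemma \ref{application lemma for expoential growth} and a H\"older decomposition of the exponential, and then simply cites the proof of \cite[Lemma 3.4]{hu2018quadratic}; that cited argument is precisely the It\^o-product-rule decomposition $d(M^v_uK_u)=M^v_u\,dK_u+K_u\,dM^v_u$ together with Lemma \ref{int with respect to K is martingale} that you spell out. One minor correction: you only need $K M^v v(X)\in\mathbb{H}^1_G(0,T;\mathbb{R}^d)$, not $\mathbb{H}^2_G$, for the stochastic integral to be a symmetric $G$-martingale, which matters since $K_T$ is only assumed in $\mathbb{L}^p$ for some $p>1$.
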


\begin{proof}
	 From Lemma \ref{application lemma for expoential growth}, observe that for any $p>1$
	\begin{equation}\begin{aligned}
	&\;\quad\hat{\mathbb{E}}_s[(e^{-\int_s^T\frac{1}{2}v_i(X_u)v_j(X_u)\,d\langle B^i,B^j\rangle_u-\int_s^Tv(X_u)\,dB_u})^p]\\
	&=\hat{\mathbb{E}}_s[e^{-\int_s^T \frac{p^2q}{2}v_i(X_u)v_j(X_u)\,d\langle B^i,B^j\rangle_u-\int_s^Tpv(X_u)\,dB_u} e^{\int_s^T \frac{p(pq-1)}{2}v_i(X_u)v_j(X_u)\,d\langle B^i,B^j\rangle_u  }]\\
	&\leq \hat{\mathbb{E}}_s[e^{-\int_s^T\frac{p^2q^2}{2}v_i(X_u)v_j(X_u)\,d\langle B^i,B^j\rangle_u-\int_s^Tpqv(X_u)\,dB_u}]^{\frac{1}{q}}\\
    &\quad\cdot \hat{\mathbb{E}}_s[ e^{\int_s^T\frac{pq(pq-1)}{2(q-1)} v_i(X_u)v_j(X_u)\,d\langle B^i,B^j\rangle_u  }]^{\frac{q-1}{q}}<\infty\,, 
	\end{aligned}\end{equation}
	where $q>0$ is a constant satisfying $\frac{1}{p}+\frac{1}{q}<1$.  
	By combing this inequality with the proof of
	\cite[Lemma 3.4]{hu2018quadratic}, we obtain that 
	$K$ is a decreasing $G$-martingale under the sublinear expectation $\hat{\mathbb{E}}^v.$ 
\end{proof}

The following theorem is the main result of this study. 
The pricing kernel is decomposed into a discounting component with a long-term exponential rate $\lambda$, a transitory component $e^{u(X_0)-u(X)}$, a positive symmetric $G$-martingale $M$, and a decreasing process $e^{K}$ that captures the volatility uncertainty of the $G$-Brownian motion. Furthermore, the components $M$ and $e^K$ are represented in terms of $u$ when $u$ is twice continuously  differentiable. 
It is noteworthy that 
the function $u$ 	
is obtained from the ergodic $G$-BSDE \eqref{eqn:appli egbsde}.

\begin{thm} \label{sdf thm}
	Under Assumption \ref{sdf assumption}, the pricing kernel in \eqref{SDF}
	has the long-term  decomposition 
	\begin{equation}
	\label{fac}
	D_s=e^{\lambda s}e^{u(X_0)-u(X_s)}
	M_se^{K_s}\,,\;s\ge0
	\end{equation}
	satisfying
	\begin{enumerate}[label=(\roman*)] 	
		\item\label{sdf lambda} $\lim_{s\to\infty} \frac{1}{s}\ln\hat{\mathbb{E}}[{D_s}]=\lambda,$
		\item\label{sdf Z}  $(M_s)_{s\ge0}$ is a positive symmetric  $G$-martingale of  the form  $$M_s=e^{-\int_0^s\frac{1}{2}(Z^i_u-v_i(X_u))(Z^j_u-v_j(X_u))\,d\langle B^i,B^j\rangle_u+\int_0^s Z_u-v(X_u)\,dB_u}$$ for some bounded process  $Z=(Z^1,\cdots,Z^d)$,
		\item\label{sdf growth u} $u:\mathbb{R}^m\to\mathbb{R}$   is a continuous function with   polynomial growth,		
		\item  $u$  is a viscosity solution to  the PDE
		\begin{equation}\begin{aligned}\label{sdf pde}
		G(H(x,D_xu,D_x^2u))+\langle b(x),D_xu\rangle-r(x)-\lambda=0\,,
		\end{aligned}\end{equation}
		where 
		 \begin{equation}\begin{aligned}
		H_{ij}(x,D_xu,D_x^2u)&=\langle D^2_xu\sigma^i(x),\sigma^j(x)\rangle+2\langle D_xu,h_{ij}(x)-d_{ij}(x)\rangle-2k_{ij}(x)\\
        &\quad+v_i(x)v_j(x)+\langle\sigma^i(x),D_xu\rangle\langle\sigma^j(x),D_xu\rangle\,,
		\end{aligned}\end{equation}
		\item\label{sdf K} $(K_s)_{s\ge0}$ is a decreasing $G$-martingale  and $K_s\in \mathbb{L}^2_G(\Omega_s)$,
		\item\label{MeK} $(M_se^{K_s})_{s\ge0}$ is a $G$-martingale.
	\end{enumerate}
	In addition, if $u$ is twice continuously  differentiable, then 
	$$Z_s=(\langle\sigma^j(X_s),D_xu(X_s)\rangle)_{1\leq j\leq d}$$
	and
	 \begin{align}
	\label{K}
	K_s&=\int_0^s\frac{1}{2}H_{ij}(X_u,D_xu(X_u) ,D_x^2u(X_u)  )\,d\langle B^i,B^j\rangle_u\\
    &\quad-\int_0^sG(H(  X_u,D_xu(X_u) ,D_x^2u(X_u) )  )\,du\,.
	\end{align} 
\end{thm}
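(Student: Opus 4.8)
The plan is to recognize the log-pricing kernel as (minus) the solution of an ergodic $G$-BSDE, after a Girsanov-type change of expectation that linearizes the $v$-dependence. First I would pass to the sublinear expectation $\hat{\mathbb{E}}^v$ induced by $v(X)$, under which $B^v=B+(\sum_j\int_0^\cdot v_j(X_u)\,d\langle B^i,B^j\rangle_u)_i$ is a $G$-Brownian motion and, using $dB^i=dB^{v,i}-\sum_j v_j(X)\,d\langle B^i,B^j\rangle$, the state process solves \eqref{SDE appli} with $h_{ij}$ replaced by $h_{ij}-d_{ij}$. This is exactly why \ref{eta appli} in Assumption \ref{sdf assumption} imposes dissipativity on $b$ and $h_{ij}-d_{ij}$ rather than on $h_{ij}$. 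On this space I would run the ergodic $G$-BSDE \eqref{ergodic QBSDE3} with driving noise $B^v$, constants $\gamma^1=-1,\ \gamma^2=0$, and drivers $f(x,z)=-r(x)$, $g_{ij}(x,z)=-k_{ij}(x)+\tfrac12 z^iz^j+\tfrac12 v_i(x)v_j(x)$. The merit of working under $\hat{\mathbb{E}}^v$ is that the cross terms $Z^iv_j$ created by the change of noise cancel against those in the target exponent, leaving only the products $v_iv_j$; these are Lipschitz by \ref{aa} in Assumption \ref{sdf assumption}, whereas $v$ itself need not be.

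Next I would verify Assumptions \ref{ergodic assumption1} and \ref{ergodic assumption3} for this data. The only nontrivial checks are the quadratic-Lipschitz bound and the spectral-gap condition: the term $\tfrac12 z^iz^j$ gives $g$ a quadratic-growth constant $C_3=d/2$, and with this value the last condition of Assumption \ref{ergodic assumption3} reduces precisely to the last inequality of Assumption \ref{sdf assumption}. Theorems \ref{QBSDE with lambda sol exist}, \ref{lambda unique} and \ref{ergo_pde} then yield a solution $(Y,Z,K,\lambda)$ with $Y_s=u(X_s)$ for a continuous $u$ of (at most) linear growth, $|Z|$ bounded, $K$ a decreasing $G$-martingale, and $(u,\lambda)$ a viscosity solution pair of \eqref{ergodic PDE}; substituting the chosen drivers and the modified $h_{ij}-d_{ij}$ into $H_{ij}$ turns \eqref{ergodic PDE} into \eqref{sdf pde}, which gives \ref{sdf growth u} and item (iv).

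Then I would reassemble the decomposition. Writing the BSDE between $0$ and $s$ and substituting the drivers gives $u(X_0)-u(X_s)=-\lambda s-\int_0^s r(X_u)\,du+\int_0^s(-k_{ij}+\tfrac12 Z_u^iZ_u^j+\tfrac12 v_iv_j)\,d\langle B^i,B^j\rangle_u-\int_0^s Z_u\,dB^v_u-K_s$; converting the $dB^v$-integral back to $dB$ and matching coefficients shows the right-hand side equals $-\lambda s+\ln D_s-\ln M_s-K_s$ with $M$ exactly the process in \ref{sdf Z}, which proves \eqref{fac}. That $M$ is a positive symmetric $G$-martingale under $\hat{\mathbb{E}}$ follows from boundedness of $Z$, the exponential-moment bound of Lemma \ref{application lemma for expoential growth}, and the Girsanov result of \cite{osuka2013girsanov}; that $K$ remains a decreasing $G$-martingale under $\hat{\mathbb{E}}$ follows by transferring from $\hat{\mathbb{E}}^v$ through the argument of Lemmas \ref{K is MG in new sublinear} and \ref{K is MG under v}, giving \ref{sdf K}. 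For \ref{MeK} I would apply It\^o to $M_se^{K_s}$: since $K$ has finite variation, $d(Me^K)=Me^K(Z-v)\,dB+Me^K\,dK$, whose first term is a symmetric $G$-martingale increment and whose second, by Lemma \ref{int with respect to K is martingale} with the positive integrand $Me^K$, is a decreasing $G$-martingale, so the sum is a $G$-martingale. The closing $C^2$ identities come from applying It\^o to $u(X_s)$ and matching the $dB^v$-martingale part (giving $Z_s=(\langle\sigma^j(X_s),D_xu(X_s)\rangle)_j$) with the finite-variation part (giving the stated formula for $K$).

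The main obstacle is \ref{sdf lambda}. From \eqref{fac} and $\hat{\mathbb{E}}[M_se^{K_s}]=1$ one has $\hat{\mathbb{E}}[D_s]=e^{\lambda s+u(X_0)}\hat{\mathbb{E}}[e^{-u(X_s)}M_se^{K_s}]$, so it remains to sandwich $\hat{\mathbb{E}}[e^{-u(X_s)}M_se^{K_s}]$ between factors that are sub-exponential in $s$. Using $|u(X_s)|\le L(1+|X_s|)$ and $e^{K_s}\le1$, the upper bound reduces to a uniform-in-$s$ exponential moment $\hat{\mathbb{E}}^{Z-v}[e^{L|X_s|}]$ of $X$ under the expectation induced by $M$, which the dissipativity of the modified drift together with Lemmas \ref{Ergodic mean is finite} and \ref{application lemma for expoential growth} should render bounded uniformly in time rather than merely finite on each horizon. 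The delicate direction is the lower bound: because $\hat{\mathbb{E}}$ is a supremum over priors and $e^{K_s}\le1$, one cannot simply invoke Jensen, and I would argue via a prior $P^\ast$ nearly attaining $\hat{\mathbb{E}}[M_se^{K_s}]=1$ together with the change of measure $dQ^\ast=M_se^{K_s}\,dP^\ast$ under which the mean-reverting $X$ has bounded first moment, yielding $\hat{\mathbb{E}}[e^{-u(X_s)}M_se^{K_s}]\ge e^{-C}$ uniformly in $s$. Making this attainment and compactness step rigorous within the $G$-framework is where the real work lies.
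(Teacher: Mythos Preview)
Your overall strategy matches the paper's closely: you correctly identify the Girsanov shift to $\hat{\mathbb{E}}^v$, the resulting ergodic $G$-BSDE with $\gamma^1=-1$, $\gamma^2=0$, $f(x,z)=-r(x)$, $g_{ij}(x,z)=-k_{ij}(x)+\tfrac12 z^iz^j+\tfrac12 v_iv_j$, the verification that Assumption \ref{sdf assumption} reduces to Assumption \ref{ergodic assumption3} with $C_3=d/2$, the reassembly of \eqref{fac}, and the martingale arguments for \ref{sdf Z}, \ref{sdf K}, \ref{MeK}, and the $C^2$ case. All of this is essentially the paper's proof.

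The genuine gap is in \ref{sdf lambda}. Your route via $\hat{\mathbb{E}}[D_s]=e^{\lambda s+u(x)}\hat{\mathbb{E}}[e^{-u(X_s)}M_se^{K_s}]$ requires controlling $\hat{\mathbb{E}}[e^{-u(X_s)}M_se^{K_s}]$ both above and below by constants uniform in $s$. The upper bound already fails with the lemmas you cite: Lemma \ref{Ergodic mean is finite} gives only \emph{polynomial} moments uniformly in time, and Lemma \ref{application lemma for expoential growth} gives exponential moments only on \emph{finite} horizons, so a uniform-in-$s$ bound on $\hat{\mathbb{E}}^{Z-v}[e^{L|X_s|}]$ is not available without additional work. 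The lower bound is, as you concede, not rigorous in the $G$-framework; the near-attainment and change-of-measure step you sketch has no obvious justification here.

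The paper avoids both difficulties by a different device: it introduces the \emph{finite-horizon} quadratic $G$-BSDE on $[0,T]$ under $\hat{\mathbb{E}}^v$ with the same drivers but terminal value $0$ and no $\lambda$, solved via Theorem \ref{Finite QBSDE exis and unique} by $(\overline{Y},\overline{Z},\overline{K})$. Applying It\^o to $\exp\big(\overline{Y}_\cdot-\int_0^\cdot r\,du-\int_0^\cdot(k_{ij}-\tfrac12 v_iv_j)\,d\langle B^{v,i},B^{v,j}\rangle\big)$ and using that the $d\overline{K}$-integral is a decreasing $G$-martingale (Lemma \ref{int with respect to K is martingale}) shows that this process minus $e^{\overline{Y}_0}$ is a $G$-martingale, whence $e^{\overline{Y}_0}=\hat{\mathbb{E}}^v[\cdots]=\hat{\mathbb{E}}[D_T]$, i.e.\ $\overline{Y}_0=\ln\hat{\mathbb{E}}[D_T]$. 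Then the paper compares $\overline{Y}_0$ with the ergodic value $Y_0$ by the same linearization-plus-Girsanov technique used throughout Sections \ref{section 4}--\ref{section 5}, obtaining $|Y_0-\overline{Y}_0+\lambda T|\le L(1+|x|)$ via Lemma \ref{Ergodic mean is finite}; only polynomial moments are needed here. Since $|Y_0|\le L(1+|x|)$, this yields $|\overline{Y}_0/T-\lambda|\le L(1+|x|)/T\to 0$, which is \ref{sdf lambda}. This use of the finite-horizon BSDE as an \emph{exact} representation of $\ln\hat{\mathbb{E}}[D_T]$ is the missing idea.
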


\begin{proof}
	In this proof, $L>0$ is a generic constant depending only on $\overline{\sigma},C_1,C_{\sigma},\eta$ and may differ  line by line.
	Let $\hat{\mathbb{E}}^v$ be the sublinear expectation induced by $v(X)$. Then $B^v=B+(\sum_{j=1}^d\int_0^\cdot v_j(X_u)\,d\langle B^i,B^j\rangle_u)_{1\leq i\leq d}$ is a $G$-Brownian motion under the sublinear expectation $\hat{\mathbb{E}}^v$.
	It is evident that  $X$ satisfies the $G$-SDE
	\begin{equation}\begin{aligned}\label{eqn:X on v}
	X_s=x+\int_0^sb(X_u)\,du+\int_0^s (h_{ij}(X_u)-d_{ij}(X_u))\,d\langle B^{v,i},B^{v,j}\rangle_u+\int_0^s\sigma(X_u)\,dB^v_u\,.
	\end{aligned}\end{equation}       
	For any $T>0,$ consider the ergodic $G$-BSDE 
	\begin{equation}\begin{aligned}
	\label{eqn:appli egbsde}
	Y_s&=Y_T+\int_s^T(-r(X_u)-\lambda)\,du+\int_s^T\Big(-k_{ij}(X_u)+\frac{1}{2}v_i(X_u)v_j(X_u)\Big)\,d\langle B^{v,i},B^{v,j}\rangle_u\\
    &\quad+\int_s^T\frac{1}{2}Z^i_uZ^j_u\,d\langle B^{v,i},B^{v,j}\rangle_u-\int_s^TZ_u\,dB^v_u-(K_T-K_s)\,. 
	\end{aligned}\end{equation}
	By Theorem \ref{QBSDE with lambda sol exist}, there is a solution  
	$(Y,Z,K,\lambda)$ to \eqref{eqn:appli egbsde} such that
	$Z$ is bounded by $L$ and $Y$ is bounded by $L(1+|X|)$.
	Define a function $u$  as $u(x)=Y_0^x$. 
	Using $Y_s=u(X_s)$ and \eqref{eqn:appli egbsde}, 
	we have
	\begin{equation}\begin{aligned}\label{eqn:decompose}
	D_s&=e^{-\int_0^sr(X_u)\,du-\int_0^sk_{ij}(X_u)\,d\langle B^i,B^j\rangle_u-\int_0^sv(X_u)\,dB_u}\\
	&=e^{\lambda s+u(X_0)-u(X_s)-\int_0^s\frac{1}{2}(v_i(X_u)v_j(X_u)+Z_u^iZ_u^j)\,d\langle B^{v,i},B^{v,j}\rangle_u+\int_0^s Z_u\,dB^v_u-\int_0^s v(X_u)\,dB_u+K_s}\\
	&=e^{\lambda s+u(X_0) -u(X_s)-\int_0^s\frac{1}{2}(Z^i_u-v_i(X_u))(Z^j_u-v_j(X_u))\,d\langle B^i,B^j\rangle_u+\int_0^s Z_u-v(X_u)\,dB_u+K_s}\,.
	\end{aligned}\end{equation}
	Define $
	M_s=e^{-\int_0^s\frac{1}{2}(Z^i_u-v_i(X_u))(Z^j_u-v_j(X_u))\,d\langle B^i,B^j\rangle_u+\int_0^s Z_u-v(X_u)\,dB_u}.$
	By Lemma \ref{application lemma for expoential growth} and \cite[Proposition 5.10]{osuka2013girsanov}, $M$ is a symmetric $G$-martingale under $\hat{\mathbb{E}}$.
	Moreover, by Lemma \ref{K is MG under v}, $K$ is a decreasing $G$-martingale under $\hat{\mathbb{E}}$.  Since $u$ is constructed in the proof of Theorem \ref{QBSDE with lambda sol exist} as a continuous function with polynomial growth and is a viscosity solution to \eqref{sdf pde}, the composition in \eqref{fac} satisfies 
	\ref{sdf Z}-\ref{sdf K}. Moreover, \ref{MeK} is easily derived from It\^o's formula and Lemma \ref{int with respect to K is martingale}.

    The next step is to prove  \ref{sdf lambda}.
    Theorem \ref{Finite QBSDE exis and unique} ensures the existence of a solution $(\overline{Y},\overline{Z},\overline{K})$  to the $G$-BSDE
	\begin{equation}\begin{aligned}
	\label{eqn:appli fintie gbsde}Y_s&=-\int_s^Tr(X_u)\,du+\int_s^T\Big(-k_{ij}(X_u)+\frac{1}{2}v_i(X_u)v_j(X_u)+\frac{1}{2}Z_u^iZ^j_u\Big)\,d\langle B^{v,i},B^{v,j}\rangle_u\\&
    \quad-\int_s^TZ_u\,dB^v_u-(K_T-K_s)\,,\;0\le s\le T
	\end{aligned}\end{equation}
	such that
	$\overline{Z}$ is bounded.  
    Using a similar argument in the proof of Theorem \ref{Finite QBSDE exis and unique}, for each $\epsilon>0,$ we can construct two processes $  d^{\epsilon,ij},n^{\epsilon,ij}$, a sublinear expectation $\hat{\mathbb{E}}^{v,d^{\epsilon,ij}}$ and a $G$-Brownian motion $B^{v,\epsilon}$ such that 
	\begin{equation}\begin{aligned}
	Y_0-\overline{Y}_0+\lambda T
	&\leq \hat{\mathbb{E}}^{v,d^{\epsilon,ij}}[|Y_T|]+\hat{\mathbb{E}}^{v,d^{\epsilon,ij}}\Big[\int_0^T n_u^{\epsilon,ij}\,d\langle B^{v,\epsilon,i},B^{v,\epsilon,j}\rangle_u\Big]\\
	&\leq L(1+\hat{\mathbb{E}}^{v,d^{\epsilon,ij}}[|X_T|])+\hat{\mathbb{E}}^{v,d^{\epsilon,ij}}\Big[\int_0^T n_u^{\epsilon,ij}\,d\langle B^{v,\epsilon,i},B^{v,\epsilon,j}\rangle_u\Big] \,.
	\end{aligned}\end{equation}
	By letting $\epsilon\to 0$ and applying Lemma \ref{Ergodic mean is finite}, we obtain $Y_0-\overline{Y}_0+\lambda T\leq L(1+|x|)$. Similarly we can obtain $-Y_0+\overline{Y}_0-\lambda T\leq L(1+|x|)$, and thus $|-Y_0+\overline{Y}_0-\lambda T|\leq L(1+|x|)$. Observe from $|Y_0|\leq L(1+|x|)$ that 
	\begin{equation}\begin{aligned}
	\Big|\frac{\overline{Y}_0}{T}-\lambda\Big|\leq \frac{L(1+|x|)}{T}\,.\label{eqn: i}
	\end{aligned}\end{equation}
	
	From  It\^{o}'s formula, we obtain 
	 \begin{equation}\begin{aligned}
	e^{\overline{Y}_0}&=e^{-\int_0^Tr(X_u)\,du-\int_0^T k_{ij}(X_u)-\frac{1}{2}v_i(X_u)v_j(X_u)\,d\langle B^{v,i},B^{v,j}\rangle_u}\\
    &\quad-\int_0^T e^{-\int_0^ur(X_l)\,dl-\int_0^uk_{ij}(X_l)-\frac{1}{2}v_i(X_l)v_j(X_l)\,d\langle B^{v,i},B^{v,j}\rangle_l}e^{\overline{Y}_u}\overline{Z}_u\,dB^v_u\\&
    \quad-\int_0^T e^{-\int_0^ur(X_l)\,dl-\int_0^uk_{ij}(X_l)-\frac{1}{2}v_i(X_l)v_j(X_l)\,d\langle B^{v,i},B^{v,j}\rangle_l}e^{\overline{Y}_u}\,d\overline{K}_u\,.
	\end{aligned}\end{equation}
	Applying Lemma \ref{application lemma for expoential growth} to  the solution $X$ of \eqref{eqn:X on v},  we know that $$e^{-\int_0^Tr(X_u)\,du-\int_0^Tk_{ij}(X_u)-\frac{1}{2}v_i(X_u)v_j(X_u)\,d\langle B^{v,i},B^{v,j}\rangle_u}e^{\overline{Y}_T}\in \mathbb{S}^p_G(0,T)$$ for every $p\geq 1$, which implies that 
	$$\int_0^\cdot e^{-\int_0^ur(X_l)\,dl-\int_0^uk_{ij}(X_l)-\frac{1}{2}v_i(X_l)v_j(X_l)\,d\langle B^{v,i},B^{v,j}\rangle_l}e^{\overline{Y}_u}\,d\overline{K}_u$$
	is a decreasing $G$-martingale by Lemma \ref{int with respect to K is martingale}.
	Thus, $$e^{-\int_0^\cdot r(X_u)\,du-\int_0^\cdot k_{ij}(X_u)-\frac{1}{2}v_i(X_u)v_j(X_u)\,d\langle B^{v,i},B^{v,j}\rangle_u}-e^{\overline{Y}_0}$$ is a $G$-martingale, and we have \begin{equation}\begin{aligned}
	\hat{\mathbb{E}}[D_T]&=\hat{\mathbb{E}}[e^{-\int_0^T r(X_u)\,du-\int_0^T k_{ij}(X_u)\,d\langle B^i,B^j\rangle_u-\int_0^T v(X_u)\,dB_u}]\\
	&=\hat{\mathbb{E}}^{v}[e^{-\int_0^Tr(X_u)\,du-\int_0^Tk_{ij}(X_u)-\frac{1}{2}v_i(X_u)v_j(X_u)\,d\langle B^{v,i},B^{v,j}\rangle_u}]=   e^{\overline{Y}_0}\,.
	\end{aligned}\end{equation}
	Combining this with \eqref{eqn: i}, we obtain $\lim_{T\to\infty} \frac{1}{T}\ln\hat{\mathbb{E}}[{D_T}]=\lambda.$
	This gives 
	\ref{sdf lambda} by replacing  $T$ with $s.$

	Suppose $u$ is twice continuously  differentiable.   According to   It\^{o}'s formula and \eqref{eqn:appli egbsde}, we have
	\begin{equation}\begin{aligned}
	u(X_0)&=u(X_s)-\int_0^s \langle D_xu(X_u),b(X_u)\rangle \,du\\
    &\quad-\int_0^s\langle D_xu(X_u),h_{ij}(X_u)-d_{ij}(X_u)\rangle \,d\langle B^{v,i},B^{v,j}\rangle_u\\
    &\quad-\int_0^s \frac{1}{2} \langle D^2_xu (X_u)\sigma^i(X_u),\sigma^j(X_u)\rangle\,d\langle B^{v,i},B^{v,j}\rangle_u-\int_0^s D_xu^{\top}(X_u)\sigma(X_u)\,dB^v_u 
	\end{aligned}\end{equation}
	and
	\begin{equation}\begin{aligned}
	u(X_0) 
	&=u(X_s)+\int_0^s-r(X_u)-\lambda\,du
	\\
    &\quad+\int_0^s-k_{ij}(X_u)+\frac{1}{2}(v_i(X_u)v_j(X_u)+Z^i_uZ^j_u)\,d\langle B^{v,i},B^{v,j}\rangle_u-\int_0^s Z_u\,dB^v_u-K_s\,.
	\end{aligned}\end{equation}    
	Since $u(X)$ is a special semimartingale for each $P\in\mathcal{P}^v$, comparing these two equations, we know $Z_s=D_xu^{\top}(X_s)\sigma(X_s)$   and
	\begin{equation} 
	\begin{aligned}
	K_s&=\int_0^s \langle D_xu(X_u),b(X_u)\rangle-r(X_u)-\lambda\,du\\
	&\quad+\int_0^s\frac{1}{2}\langle D^2_xu(X_u)\sigma^i(X_u),\sigma^j(X_u)\rangle+\langle D_xu(X_u),h_{ij}(X_u)-d_{ij}(X_u)\rangle\,d\langle B^{v,i},B^{v,j}\rangle_u\\
    &\quad+\int_0^s-k_{ij}(X_u)+\frac{1}{2}(v_i(X_u)v_j(X_u)+Z^i_uZ^j_u) \,d\langle B^{v,i},B^{v,j}\rangle_u\,,
	\end{aligned}    
	\end{equation}
	where $\mathcal{P}^v$ is a set that represents $\hat{\mathbb{E}}^v$.  Because $u$ is a classical solution to \eqref{sdf pde}, we obtain \eqref{K}.   
	This completes the proof. 
\end{proof}

\begin{thm}\label{thm:unique}
	Under Assumptions \ref{Assumption ergodic unique} and \ref{sdf assumption}, the
	decomposition  of the pricing kernel in \eqref{fac}  is unique. More precisely, if 
	\begin{equation}
	\label{unique}
	D_s=e^{\lambda s}e^{u(X_0)-u(X_s)}
	M_se^{K_s}=e^{\overline{\lambda} s}e^{\overline{u}(X_0)-\overline{u}(X_s)}
	\overline{M}_se^{\overline{K}_s}\,,\;s\ge0
	\end{equation}
	and if  $(u,M,K,\lambda)$ and $(\overline{u},\overline{M},\overline{K},\overline{\lambda})$
	satisfy \ref{sdf Z}, \ref{sdf growth u}, \ref{sdf K} in Theorem \ref{sdf thm}, then 
	$\lambda=\overline{\lambda},$  $M=\overline{M},$ $K=\overline{K},$ $u-u(X_0)=\overline{u}-\overline{u}(X_0).$ 
\end{thm}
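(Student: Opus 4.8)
The plan is to reduce the uniqueness claim to the rigidity theorem for ergodic $G$-BSDEs, Theorem \ref{markovian ergodic solution is unique}. The guiding observation is that \emph{every} decomposition of the form \eqref{fac} satisfying \ref{sdf Z}, \ref{sdf growth u}, \ref{sdf K} is merely a repackaging of a Markovian solution to the ergodic $G$-BSDE \eqref{eqn:appli egbsde}; once this correspondence is made, the already-established uniqueness of ergodic solutions forces the two decompositions to agree. Throughout I would work on the $\hat{\mathbb{E}}^v$-space, on which $B^v$ is a $G$-Brownian motion and $X$ solves \eqref{eqn:X on v}, exactly as in the proof of Theorem \ref{sdf thm}.

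The central step is to recover the ergodic $G$-BSDE from the decomposition. Starting from \eqref{SDF} together with the explicit form of $M$ in \ref{sdf Z}, I would take logarithms in \eqref{unique} and solve for $u(X_s)-u(X_0)$. Substituting $dB_u=dB^v_u-(\sum_j v_j(X_u)\,d\langle B^i,B^j\rangle_u)_i$ and using $\langle B^{v,i},B^{v,j}\rangle=\langle B^i,B^j\rangle$, the two martingale integrals collapse to $\int_0^s Z_u\,dB^v_u$, while the finite-variation $d\langle B^i,B^j\rangle$-integrand, after symmetrizing in $(i,j)$ and cancelling the cross terms $Z^i v_j$, simplifies to $k_{ij}-\frac{1}{2}Z^i_uZ^j_u-\frac{1}{2}v_i(X_u)v_j(X_u)$. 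This produces exactly
\begin{equation}
\begin{aligned}
Y_s &= Y_T+\int_s^T(-r(X_u)-\lambda)\,du\\
&\quad+\int_s^T\Big(-k_{ij}(X_u)+\frac{1}{2}v_i(X_u)v_j(X_u)+\frac{1}{2}Z^i_uZ^j_u\Big)\,d\langle B^{v,i},B^{v,j}\rangle_u\\
&\quad-\int_s^T Z_u\,dB^v_u-(K_T-K_s)
\end{aligned}
\end{equation}
with $Y_s=u(X_s)$; that is, $(u(X),Z,K,\lambda)$ is a solution to \eqref{eqn:appli egbsde}. The same computation applied to the barred data shows $(\overline{u}(X),\overline{Z},\overline{K},\overline{\lambda})$ is also a solution, with the required integrability inherited from \ref{sdf Z}, \ref{sdf growth u}, \ref{sdf K} and the moment bounds of Lemmas \ref{Ergodic mean is finite} and \ref{application lemma for expoential growth}.

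It then remains to verify the hypotheses of Theorem \ref{markovian ergodic solution is unique} and read off the conclusions. Assumptions \ref{ergodic assumption1} and \ref{ergodic assumption3} hold for the data of \eqref{eqn:appli egbsde} under Assumption \ref{sdf assumption} (this is precisely what was checked when Theorem \ref{QBSDE with lambda sol exist} was invoked in the proof of Theorem \ref{sdf thm}, here with $\gamma^1=-1$, $\gamma^2=0$, so that $\gamma^1+2G(\gamma^2)=-1<0$); Assumption \ref{Assumption ergodic unique} is assumed; the processes $Z,\overline{Z}$ are bounded by \ref{sdf Z}; and $u,\overline{u}$ are continuous with polynomial growth and satisfy $u(X)=Y$, $\overline{u}(X)=\overline{Y}$ by \ref{sdf growth u}. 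Theorem \ref{markovian ergodic solution is unique} then yields $\lambda=\overline{\lambda}$, $Z=\overline{Z}$, $K=\overline{K}$, and $u=\overline{u}+L$ for some constant $L$. From $u=\overline{u}+L$ I obtain $u-u(X_0)=\overline{u}-\overline{u}(X_0)$, and since $v$ is common and $Z=\overline{Z}$, the explicit formula in \ref{sdf Z} gives $M=\overline{M}$, completing the proof.

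The principal obstacle I anticipate is the second step: performing the logarithmic rearrangement cleanly and confirming that, after the change from $B$ to $B^v$ and the symmetrization of the quadratic-variation integrand, one recovers \emph{exactly} the driver of \eqref{eqn:appli egbsde}, together with the Markovian identification $Y_s=u(X_s)$ demanded by Theorem \ref{markovian ergodic solution is unique}. A secondary point needing care is that the process $Z$ appearing in \ref{sdf Z} is genuinely determined by $M$ — through the uniqueness of the semimartingale decomposition of $\ln M$ under each $P\in\mathcal{P}^v$ and the non-degeneracy of $\langle B\rangle$ — so that the passage from $Z=\overline{Z}$ to $M=\overline{M}$ is legitimate.
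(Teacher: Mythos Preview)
Your proposal is correct and follows exactly the paper's approach: the paper's proof is a three-line argument stating that a ``simple calculation'' shows both quadruples $(u(X),Z,K,\lambda)$ and $(\overline{u}(X),\overline{Z},\overline{K},\overline{\lambda})$ solve the ergodic $G$-BSDE \eqref{eqn:appli egbsde}, and then invokes Theorem \ref{markovian ergodic solution is unique}. You have supplied precisely the details of that calculation and correctly identified the change to the $\hat{\mathbb{E}}^v$-space and the verification of the hypotheses of Theorem \ref{markovian ergodic solution is unique} as the points requiring care.
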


\begin{proof}  
    Let $Z$ and $\overline{Z}$ be  the processes  satisfying \ref{sdf Z} in Theorem \ref{sdf thm}, associated with $M$ and $\overline{M}$, respectively.
	A simple calculation gives that both $(u(X),Z,K,\lambda)$
	and
	$(\overline{u}(X),\overline{Z},\overline{K},\overline{\lambda})$  
	are solutions to \eqref{eqn:appli egbsde}. By Theorem \ref{markovian ergodic solution is unique}, we obtain the uniqueness. \end{proof}

The following theorem provides an alternative characterization of the long-term decomposition  in terms of solutions to PDEs.
\begin{thm}\label{feynmac kac formula reverse} 
	Under \ref{sdf symm}, \ref{aa}, \ref{eta appli} in Assumption \ref{sdf assumption}, suppose that 
	the PDE \eqref{sdf pde}
	has a solution pair $(u,\lambda)\in C^2(\mathbb{R}^m)\times\mathbb{R}$ 
	such that $D_xu$ is bounded and for all $T>0,$
	the PDE
	\begin{equation}\label{eqn:PDE}
	\begin{aligned}
	\begin{cases}\; 
	\partial_t w+G(H( x,D_xw,D_x^2w) )+\langle b(x),D_xw\rangle-r(x)=0\,,  \\
	\; w(T,x)=0\,,
	\end{cases}
	\end{aligned}\end{equation}
	has a solution 
	$w\in C^{1,2}([0,T]\times\mathbb{R}^m)$ 
	such that
	$D_xw$ is bounded, 	where 
	\begin{equation}\begin{aligned}
	H_{ij}(x,D_xw,D_x^2w)&=\langle D^2_xw\sigma^i(x),\sigma^j(x)\rangle+2\langle D_xw,h_{ij}(x)-d_{ij}(x)\rangle-2k_{ij}(x)\\&\quad+v_i(x)v_j(x)+\langle\sigma^i(x),D_xw\rangle\langle\sigma^j(x),D_xw\rangle\,.
	\end{aligned}\end{equation}
	Define
	\begin{equation} 
	\begin{aligned}
	M_s&=e^{-\int_0^s\frac{1}{2} (  
 \langle  \sigma^i(X_u) , D_xu(X_u)   \rangle -v_i(X_u) ) (  \langle \sigma^j(X_u) , D_xu(X_u) \rangle- v_j(X_u) ) \,d\langle B^i,B^j\rangle_u}\\&\quad \cdot e^{\int_0^s \langle \sigma^j(X_u) , D_xu(X_u) \rangle-v_j(X_u) \,dB^j_u   }
	\end{aligned}	\end{equation}  
	and
	\begin{equation}\begin{aligned}
	K_s&=\int_0^s\frac{1}{2}H_{ij}(   X_u,D_xu(X_u),D_x^2u(X_u)     )\,d\langle B^i,B^j\rangle_u\\
    &\quad-\int_0^sG(H(  X_u,D_xu(X_u),D_x^2u(X_u)    ))\,du
	\end{aligned}\end{equation} for $s\ge0.$
	Then we have 
	\begin{equation}
	D_s=e^{\lambda s}e^{u(X_0)-u(X_s)}
	M_se^{K_s}\,,\;s\ge0
	\end{equation}
	and
	\begin{enumerate}[label=(\roman*)] 
		\item \label{PDE_thm_lambda} $\lim_{s\to\infty} \frac{1}{s}\ln\hat{\mathbb{E}}[{D_s}]=\lambda,$
		\item $M$ is a positive symmetric $G$-martingale,
		\item  $K$ is a decreasing $G$-martingale.
	\end{enumerate} 
\end{thm}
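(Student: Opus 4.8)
The plan is to verify the multiplicative decomposition by a direct $G$-It\^o computation, and then to recover the analytic content of Theorem \ref{sdf thm} in the present (weaker) hypotheses by building the relevant $G$-BSDE solutions \emph{directly out of} the assumed classical PDE solutions $u$ and $w$, rather than from the existence theory of Sections \ref{section 4}--\ref{section 5} (which is unavailable here because condition (iv) of Assumption \ref{sdf assumption} is not imposed). First I would establish $D_s=e^{\lambda s}e^{u(X_0)-u(X_s)}M_se^{K_s}$. Since $u\in C^2$, I apply the $G$-It\^o formula to $u(X_s)$ along \eqref{SDE appli} and substitute into $\ln D_s$. The $du$-terms collapse using the stationary PDE \eqref{sdf pde}, which rewrites $\langle b,D_xu\rangle-r$ as $\lambda-G(H)$; the $dB$-terms reorganize into $\int_0^s(\langle\sigma^j(X_u),D_xu(X_u)\rangle-v_j(X_u))\,dB^j_u$; and the $d\langle B^i,B^j\rangle$-terms reduce to $-k_{ij}$ after invoking the identity $\langle D_xu,d_{ij}\rangle=\tfrac12(\langle\sigma^i,D_xu\rangle v_j+v_i\langle\sigma^j,D_xu\rangle)$, which is precisely the definition of $d_{ij}$. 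This is bookkeeping with no genuine obstruction.

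Next I would record the structural properties of $M$ and $K$. Because $D_xu$ and $\sigma$ are bounded, the integrand $\langle\sigma^j(X),D_xu(X)\rangle-v_j(X)$ has at most linear growth, so by Lemma \ref{application lemma for expoential growth} and \cite[Proposition 5.10]{osuka2013girsanov} the process $M$ is a positive symmetric $G$-martingale, exactly as in Theorem \ref{sdf thm}. The process $K$ is already in the canonical form $\int_0^\cdot\tfrac12 H_{ij}\,d\langle B^i,B^j\rangle-\int_0^\cdot 2G(\tfrac12 H)\,du$ (using $2G(\tfrac12 H)=G(H)$), hence it is decreasing and is a $G$-martingale; its square-integrability $K_s\in\mathbb{L}^2_G(\Omega_s)$ follows from the moment bounds of Lemmas \ref{application lemma for expoential growth} and \ref{Ergodic mean is finite} applied to $H(X,D_xu(X),D_x^2u(X))$.

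The crux is claim \ref{PDE_thm_lambda}. I would set $Y_s:=u(X_s)$ and $Z^j_s:=\langle\sigma^j(X_s),D_xu(X_s)\rangle$, define the accompanying $K$ through $G$-It\^o, and check that $(Y,Z,K,\lambda)$ solves the ergodic $G$-BSDE \eqref{eqn:appli egbsde} with $Z$ bounded and $|Y|\le L(1+|X|)$; similarly $\overline Y_s:=w(s,X_s)$ furnishes a solution $(\overline Y,\overline Z,\overline K)$ of the finite-horizon $G$-BSDE \eqref{eqn:appli fintie gbsde} with $\overline Y_0=w(0,x)$. Repeating the It\^o argument of Theorem \ref{sdf thm} (where $e^{-\int_0^\cdot r-\int_0^\cdot(k_{ij}-\frac12 v_iv_j)\,d\langle B^{v,i},B^{v,j}\rangle}-e^{\overline Y_0}$ is shown to be a $G$-martingale via Lemma \ref{int with respect to K is martingale}) gives $\hat{\mathbb{E}}[D_T]=e^{\overline Y_0}=e^{w(0,x)}$. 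Then, reproducing the linearization of Theorem \ref{Finite QBSDE exis and unique} (introducing the induced expectation $\hat{\mathbb{E}}^{v,d^{\epsilon,ij}}$ and its auxiliary $G$-Brownian motion) and comparing $Y$ with $\overline Y$, I bound $|Y_0-\overline Y_0+\lambda T|\le L(1+|x|)$ uniformly in $T$ by invoking Lemma \ref{Ergodic mean is finite}; together with $|Y_0|\le L(1+|x|)$ this yields $|\tfrac1T\ln\hat{\mathbb{E}}[D_T]-\lambda|\le L(1+|x|)/T\to 0$, which is \ref{PDE_thm_lambda}.

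The main obstacle is precisely to carry out the comparison at the \emph{additive} ($Y$-)level, so that only the uniform-in-$T$ polynomial moment estimate of Lemma \ref{Ergodic mean is finite} is needed. The tempting shortcut $\ln\hat{\mathbb{E}}[D_T]-\lambda T=\ln\hat{\mathbb{E}}^v[e^{-u(X_T)}M_Te^{K_T}]$ forces one to control $\hat{\mathbb{E}}^v[e^{C|X_T|}M_T]$ uniformly in $T$, i.e.\ uniform-in-$T$ \emph{exponential} moments of $|X_T|$ (since $u$ is only Lipschitz, $e^{-u(X_T)}$ grows exponentially), and these do not follow from the stated lemmas --- Lemma \ref{application lemma for expoential growth} is not uniform in $T$ --- so the exponential route would require a separate Lyapunov argument exploiting the dissipativity of \ref{eta appli}. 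The BSDE/linearization route avoids this entirely. The remaining technical point is to confirm that the triplets built from $u$ and $w$ are genuine solutions in the sense of the paper's definitions, in particular that $K$ and $\overline K$ are square-integrable decreasing $G$-martingales; I would settle this from the $C^2$- and $C^{1,2}$-regularity of $u$ and $w$ together with Lemmas \ref{application lemma for expoential growth} and \ref{Ergodic mean is finite}.
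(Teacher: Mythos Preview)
Your approach coincides with the paper's: build the ergodic and finite-horizon $G$-BSDE solutions directly out of the classical PDE solutions $u$ and $w$ via It\^o's formula (so that Theorem~\ref{Finite QBSDE exis and unique} and Theorem~\ref{QBSDE with lambda sol exist} are bypassed), then replay the comparison and exponentiation argument of Theorem~\ref{sdf thm} to get $\hat{\mathbb{E}}[D_T]=e^{\overline Y_0}$ and $|\overline Y_0-\lambda T|\le L(1+|x|)$.

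The one place where you are too quick is the claim that $K$ is a decreasing $G$-martingale ``because it is in canonical form.'' The standard fact that $\int_0^\cdot\eta^{ij}\,d\langle B^i,B^j\rangle-\int_0^\cdot 2G(\eta)\,du$ is a decreasing $G$-martingale requires $\eta\in\mathbb{M}^1_G(0,T)$; here $\eta_u=\tfrac12 H_{ij}(X_u,D_xu(X_u),D_x^2u(X_u))$ involves $D_x^2u$, which is \emph{not} assumed bounded, so $\eta$ is not obviously in the right class. The paper handles this by localizing with $\tau_k=\inf\{s:|X_s|>k\}\wedge T$, replacing $X_u$ by $\tfrac{|X_u|\wedge k}{|X_u|}X_u$ to get a bounded integrand coinciding with $\eta$ on $[0,\tau_k]$, invoking the optional sampling result of \cite{hu2021extended} to get the martingale property for $K_{\cdot\wedge\tau_k}$, and then passing to the limit via the domination \eqref{eqn:K Lp} and Proposition~\ref{in sublinear DCT} (working under $\hat{\mathbb{E}}^v$ and transferring back through Lemma~\ref{K is MG under v}). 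Your final sentence correctly flags this as the ``remaining technical point,'' but the moment bounds alone do not give the $G$-martingale property --- you need the localization step.
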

\begin{proof}
 Let $H_{ij}(X_u):=H_{ij}(X_u,D_xu(X_u),D_x^2u(X_u) )$ for simplicity. Since $D_xu$ is bounded and $|v_i(x)v_j(x)|\leq C_1(1+|x|)$, by Lemma \ref{application lemma for expoential growth} and \cite[Proposition 5.10]{osuka2013girsanov}, $M$ is a positive symmetric $G$-martingale.
	
	Now we show that $K$ is a decreasing $G$-martingale. 
	For each $k\in\mathbb{N}$ and $T>0$, define   
	\begin{equation}\begin{aligned}
	\tau_k:=\inf\{s\geq 0| \, |X_s|>k\}\wedge T\,.
	\end{aligned}\end{equation}
	Then $\tau_k$ is a $\ast$-stopping time by \cite[Example 46]{hu2021extended}.
	It follows that
	\begin{equation}\begin{aligned}
	K_{s\wedge \tau_k}=\int_0^{s\wedge\tau_k} \frac{1}{2}H_{ij}(X_u)\,d\langle B^i,B^j\rangle_u-\int_0^{s\wedge\tau_k}G(H_{ij}(X_u))\,du\,.
	\end{aligned}\end{equation}
The process
	\begin{equation}\begin{aligned}
	\int_0^\cdot \frac{1}{2}H_{ij}\Big(\frac{|X_u|\wedge k}{|X_u|}X_u\Big)\,d\langle B^i,B^j\rangle_u-\int_0^\cdot G\Big(H_{ij}\Big(\frac{|X_u|\wedge k}{|X_u|}X_u\Big)\Big)\, du
	\end{aligned}\end{equation}
	is a decreasing $G$-martingale and coincides with $K$ on $[0,T\wedge \tau_k]$. 
	We have used 
  the convention $\frac{\,0\,}{0}=0$.
	Then $\hat{\mathbb{E}}_t[K_{s\wedge\tau_k}]=K_{t\wedge\tau_k}$ by \cite[Theorem 48]{hu2021extended}, which implies that $K_{\cdot\wedge\tau_k}$ is  a decreasing $G$-martingale.

	Let $\hat{\mathbb{E}}^v$ be the sublinear expectation induced by $v(X),$ then $B^v=B+(\sum_{j=1}^d\int_0^\cdot v_j(X_u)\,d\langle B^i,B^j\rangle_u)_{1\leq i\leq d}$ is a $G$-Brownian motion under the sublinear expectation $\hat{\mathbb{E}}^v$. Using It\^{o}'s formula, \eqref{sdf pde} and the fact 
	that the quadratic variations of $B$ and $B^v$ coincide under both $\hat{\mathbb{E}}$ and $\hat{\mathbb{E}}^v$, we have
	\begin{equation}\begin{aligned}\label{eqn: appli ergodic}
	u(X_s)&=u(X_{T\wedge\tau_k})-\int_s^{T\wedge\tau_k} \langle D_xu(X_u),b(X_u)\rangle \,du\\
    &\quad-\int_s^{T\wedge\tau_k}\langle D_xu(X_u),h_{ij}(X_u)-d_{ij}(X_u)\rangle\, d\langle B^{v,i},B^{v,j}\rangle_u\\
    &\quad-\int_s^{T\wedge\tau_k}\frac{1}{2} \langle D^2_x u(X_u)\sigma^i(X_u),\sigma^j(X_u)\rangle \,d\langle B^{v,i},B^{v,j}\rangle_u\\
    &\quad    -\int_s^{T\wedge\tau_k} D_xu^{\top}(X_u)\sigma(X_u)\,dB^v_u\\ 
	&=u(X_{T\wedge\tau_k})+\int_s^{T\wedge\tau_k}-r(X_u)-\lambda\,du
	\\
    &\quad+\int_s^{T\wedge\tau_k}\big(-k_{ij}(X_u)+\frac{1}{2}v_i(X_u)v_j(X_u)+\frac{1}{2}Z^i_uZ^j_u \big)\,d\langle B^{v,i},B^{v,j}\rangle_u\\
    &\quad-\int_s^{T\wedge\tau_k} Z_u\,dB^v_u-(K_{T\wedge\tau_k}-K_s)\,.
	\end{aligned}\end{equation} 
	Since $u,r,b,k_{ij},h_{ij},d_{ij},v_iv_j$ has linear growth and $D_xu$ is bounded, it follows that
	\begin{equation}\begin{aligned}\label{eqn:K Lp}
	|K_{T\wedge\tau_k}|&\leq L(1+\sup_{0\leq s\leq T}|X_s|)+\sup_{0\leq s\leq T}\Big|\int_0^sD_xu^{\top}(X_u)\sigma(X_u)\,dB^v_u\Big|	\,,
	\end{aligned}\end{equation}
	where $L>0$ is a   constant  depending only on $C_1,\lambda, M_\sigma, \overline{\sigma},T,x$.
	Combining this with Lemma \ref{Ergodic mean is finite} and Proposition \ref{in sublinear DCT}, we have 
	$\lim_{k\to\infty}\hat{\mathbb{E}}^v[|K_T-K_{T\wedge\tau_k}|]=0.$
	Thus, $K$ is a decreasing $G$-martingale under $\hat{\mathbb{E}}^v$. Using $\hat{\mathbb{E}}[K_T^p]<\infty$ for $p\geq 1$ and Lemma \ref{K is MG under v}, we obtain that $K$ is a decreasing $G$-martingale under $\hat{\mathbb{E}}$.

    The next step is to prove $\lim_{s\to\infty} \frac{1}{s}\ln\hat{\mathbb{E}}[{D_s}]=\lambda$.  
    Since	  
    $K_T$ is  quasi-continuous  and $\hat{\mathbb{E}}^v[|K_T|^p]<\infty$ for all $p\geq 1$ and $T>0,$ 	by \cite[Theorem 25]{denis2011function}, we have  $K_T\in \mathbb{L}^{2,v}_G(\Omega_T)$ 
    for all $T>0$. 
	Using this and  \eqref{eqn: appli ergodic}, it can be shown that  $(u(X), D_xu^{\top}(X)\sigma(X),K,\lambda)$ is a solution to \eqref{eqn:appli egbsde}. Define 
    \begin{align}
        \overline{K}&=\int_0^\cdot \frac{1}{2}H_{ij}(     X_u,D_xw(u,X_u), D_x^2w(u,X_u)    )\,d\langle B^{v,i},B^{v,j}\rangle_u\\
        &\quad-\int_0^\cdot G(H_{ij}(X_u,D_xw(u,X_u), D_x^2w(u,X_u))  )\,du\,.
    \end{align} Then one can easily prove that $(w(\,\cdot,X),D_xw^{\top}(\,\cdot,X)\sigma(X),\overline{K})$ is a solution to  \begin{equation}\begin{aligned}
	Y_s&=\int_s^T-r(X_u)\,du+\int_s^T \big( -k_{ij}(X_u)+\frac{1}{2}v_i(X_u)v_j(X_u)+\frac{1}{2}Z_u^iZ^j_u \big)\,d\langle B^{v,i},B^{v,j}\rangle_u\\&
    \quad-\int_s^TZ_u\,dB^v_u-(K_T-K_s)\,.
	\end{aligned}\end{equation}
	By applying the same argument in the proof of Theorem \ref{sdf thm} to $(u(X), D_xu^{\top}(X)\sigma(X),K,\lambda)$ and $(w(\,\cdot,X),D_xw^{\top}(\,\cdot,X)\sigma(X),\overline{K})$, we obtain $\lim_{T\to\infty} \frac{1}{T}\ln\hat{\mathbb{E}}[{D_T}]=\lambda$. This gives the desired result by replacing  $T$ with $s.$
\end{proof}

We now provide sufficient conditions under which the PDEs mentioned in Theorem \ref{feynmac kac formula reverse} admit classical solutions.
The following lemma is an immediate consequence of \cite[Theorems 6.1.5, 6.4.3, and 6.4.5]{krylov1987nonlinear}, and its proof is therefore omitted.

\begin{lemma}\label{thm:kryl}
Let $T,r>0,$ $a\in \mathbb{R}^d$ and let $\Sigma$ be an index set. For  $A\in \Sigma$ and $n\in \mathbb{N}$, consider a smooth function $F^{A,n}: (0,T)\times B_r(a)\times \mathbb{R}\times\mathbb{R}^d\times\mathbb{S}^d \to \mathbb{R}$ where $B_r(a)$ is the open ball centered at $a$ with radius $r$. 
Suppose  the following conditions are satisfied.
\begin{enumerate}[label=(\roman*)]
    \item \label{eqn:clas 1} There exist  positive constants $\epsilon,K,L$ 
    such that
    \begin{align}
        &\epsilon|\zeta|^2\le \sum_{i,j=1}^d\partial_{\varphi_{ij}} F^{A,n}\zeta_i\zeta_j\le K|\zeta|^2\,,\\   
        &|F^{A,n}-\sum_{i,j=1}^d\partial_{\varphi_{ij}} F^{A,n}\varphi_{ij}|\le L(1+|\varphi|)\Big(1+\sum_{i=1}^d|\varphi_i|^2\Big)\,,\\
	&|(\partial_{\varphi_i}F^{A,n})_{1\le i\le d}|\Big(1+\sum_{i=1}^d|\varphi_i|\Big)+|\partial_\varphi F^{A,n}|+|(\partial_{x_i}F^{A,n})_{1\le i\le d}|\Big(1+\sum_{i=1}^d|\varphi_i|\Big)^{-1}\\
	&\;\;\quad\quad\quad\le L(1+|\varphi|)\Big(1+\sum_{i=1}^d|\varphi_i|^2+\sum_{i,j=1}^d|\varphi_{ij}|\Big)\,,\\
    &|\partial_t F^{A,n}|\le L(1+|\varphi|+|\varphi_i|^2)
    \Big(1+\sum_{i,j=1}^d|\varphi_{ij}|^2\Big)\,,
    \end{align}
    for all $A\in \Sigma$, $n\in \mathbb{N}$,  $\zeta=(\zeta_1,\cdots \zeta_d)\in \mathbb{R}^d$ and $(t,x,\varphi,\varphi_i,\varphi_{ij})\in(0,T)\times B_r(a)\times \mathbb{R}\times\mathbb{R}^d\times\mathbb{S}^d$.
    \item \label{eqn:clas 2} There exists a positive constant $\tilde{L}$ 
    such that
    \begin{equation} 
    \begin{aligned}
        \Big(\tilde{L}\Big(1+|\varphi|+\sum_{i=1}^d|\varphi_i|^2\Big)\Big)^{-1}F^{A,n}_{(\eta)(\eta)}\le& \sum_{i,j=1}^d|\tilde{\varphi}_{ij}|\Big(\sum_{i=1}^d|\tilde{\varphi}_i|+\Big( 1+\sum_{i,j=1}^d|\varphi_{ij}|\Big)(|\tilde{\varphi}|+|\tilde{x}|) \Big)\\
        +\sum_{i=1}^d|\tilde{\varphi}_i|^2\Big(1+\sum_{i,j=1}^d|\varphi_{ij}|\Big)&+\Big(1+\sum_{i,j=1}^d|\varphi_{ij}|^3\Big)(|\tilde{\varphi}|^2+|\tilde{x}|^2)
    \end{aligned}    
    \end{equation}
    for all $A\in \Sigma$, $n\in \mathbb{N}$, $(t,x,\varphi,\varphi_i,\varphi_{ij})\in(0,T)\times B_r(a)\times \mathbb{R}\times\mathbb{R}^d\times\mathbb{S}^d$ and $\eta=(\tilde{x},\tilde{\varphi},\tilde{\varphi}_i,  \tilde{\varphi}_{ij}         )\in \mathbb{R}^d\times\mathbb{R}\times \mathbb{R}^d\times \mathbb{S}^d $, where
    \begin{equation}\label{eqn:F^A,N}
    \begin{aligned}
F^{A,n}_{(\eta)(\eta)}
&:=\sum_{i,j,\ell,r =1}^d\partial_{\varphi_{ij},\varphi_{\ell r}}F^{A,n}\tilde{\varphi}_{ij}\tilde{\varphi}_{\ell r}+2\sum_{i,j,\ell=1}^d \partial_{\varphi_{ij},\varphi_\ell}F^{A,n}\tilde{\varphi}_{ij}\tilde{\varphi}_\ell
\\&\quad+2\sum_{i,j=1}^d\partial_{\varphi_{ij},\varphi}F^{A,n}\tilde{\varphi}_{ij}\tilde{\varphi}+2\sum_{i,j,\ell=1}^d\partial_{\varphi_{ij},x_\ell}F^{A,n}\tilde{\varphi}_{ij}\tilde{x}_\ell+\sum_{i,j=1}^d\partial_{\varphi_i,\varphi_j}F^{A,n}\tilde{\varphi}_i\tilde{\varphi}_j
\\&\quad+2\sum_{i=1}^d\partial_{\varphi_i,\varphi}F^{A,n}\tilde{\varphi}_i\tilde{\varphi}+2\sum_{i,j=1}^d\partial_{\varphi_i,x_j}F^{A,n}\tilde{\varphi}_i\tilde{x}_j+\partial_{\varphi, \varphi}F^{A,n}(\tilde{\varphi})^2
\\&\quad+2\sum_{i=1}^d\partial_{\varphi, x_i}F^{A,n}\tilde{\varphi}\tilde{x}_i+\sum_{i,j=1}^d\partial_{x_i,x_j}F^{A,n}\tilde{x}_i\tilde{x}_j\,.
\end{aligned}
    \end{equation}
\item \label{eqn:clas 3} There exist positive  constants $\delta, M$ 
such that
\begin{align}
	F^{A,n}(t,x,-M,0,\varphi_{ij})\ge \delta\,,\;
	F^{A,n}(t,x,M,0,-\varphi_{ij})\le- \delta\,,
	\end{align}
	for all $A\in \Sigma$, $n\in \mathbb{N}$,  $(t,x)\in (0,T)\times B_r(a)$ and 
	 all symmetric nonnegative matrices
	$(\varphi_{ij})_{i,j=1}^d$.
\end{enumerate}
Define
\begin{align}
    F(t,x,\varphi,\varphi_i,\varphi_{ij})=\sup_{A\in \Sigma}\lim_{n\to \infty}  F^{A,n}(t,x,\varphi,\varphi_i,\varphi_{ij})\,.
\end{align}
Then for any $u\in C([0,T]\times\overline{B_r(a)})$, the boundary value problem
\begin{align}
	\begin{cases}
	\partial_t\varphi +F(t,x,\varphi,D_x\varphi,D_x^2\varphi)=0\,\,\text{ in } (0,T)\times B_r(a),\\
	\varphi=u\,\,\text{ on }((0,T)\times \partial B_r(a))\cup (\{T\}\times \overline{B_r(a)})
	\end{cases}
	\end{align} 
   admits a classical solution $\varphi\in C^{1,2}([0,T)\times  B_r(a))\cap C( [0,T]\times \overline{B_r(a)} )$.
Moreover if each $F^{A,n}$ is independent of $t$, then for any $u\in C(\overline{B_r(a)})$, the boundary value problem 
\begin{align}
	\begin{cases}
	F(x,\varphi,D_x\varphi,D_x^2\varphi )=0\,\,\text{ in } B_r(a)\,,\\
	\varphi=u\,\,\text{ on }\partial B_r(a)
	\end{cases}
	\end{align} 
admits a classical solution $\varphi\in C^2(B_r(a))\cap C(\overline{B_r(a)})$.
\end{lemma}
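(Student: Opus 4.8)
The plan is to recognize that $F=\sup_{A\in\Sigma}\lim_{n\to\infty}F^{A,n}$ is a Bellman-type operator, namely an upper envelope of a family of smooth fully nonlinear operators, and that hypotheses \ref{eqn:clas 1}--\ref{eqn:clas 3} are exactly the structural assumptions under which the existence-and-regularity theory of \cite[Chapter 6]{krylov1987nonlinear} applies. The entire proof is thus a matter of translating \ref{eqn:clas 1}--\ref{eqn:clas 3} into Krylov's hypotheses and then invoking \cite[Theorems 6.1.5, 6.4.3, 6.4.5]{krylov1987nonlinear}. The point of passing to the envelope is that, although $\sup_A\lim_n F^{A,n}$ is in general only Lipschitz in its arguments, Krylov's theory is tailored precisely to this situation: the value function of the associated controlled diffusion solves the Bellman equation $\partial_t\varphi+F(t,x,\varphi,D_x\varphi,D_x^2\varphi)=0$ in the classical sense once uniform bounds that are \emph{independent of $A$ and $n$} are available. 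Accordingly, the whole argument must be carried out with all estimates uniform in the control index $A$ and the approximation parameter $n$, so that the envelope inherits the regularity of the individual $F^{A,n}$.

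First I would establish the a priori bounds. The first inequality in \ref{eqn:clas 1} gives uniform (nondegenerate) ellipticity with constants $\epsilon,K$ independent of $A,n$, which is the standing assumption of the theory. The remaining inequalities in \ref{eqn:clas 1} control the first-order, zeroth-order, and spatial derivatives of $F^{A,n}$ by the prescribed quadratic-in-gradient growth, and these are exactly the structure conditions needed to produce uniform $L^\infty$ and interior gradient (i.e.\ $C^{0,1}$ and then $C^{1,\alpha}$) estimates for solutions of the approximating equations. The coercivity condition \ref{eqn:clas 3} supplies sub- and super-solutions of the form $\mp M$: since $F^{A,n}(t,x,-M,0,\varphi_{ij})\ge\delta>0$ and $F^{A,n}(t,x,M,0,-\varphi_{ij})\le-\delta<0$ uniformly, a standard comparison/maximum-principle argument yields a uniform $L^\infty$ bound on any solution in terms of the boundary data $u$ and $M$. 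These two families of estimates are uniform in $A$ and $n$ by construction.

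The decisive step, and the one I expect to be the main obstacle, is verifying that condition \ref{eqn:clas 2} is Krylov's second-variation (concavity/convexity-type) structural hypothesis that upgrades the solutions to interior $C^{2,\alpha}$. The quantity $F^{A,n}_{(\eta)(\eta)}$ defined in \eqref{eqn:F^A,N} is the full second variation of $F^{A,n}$ along an arbitrary direction $\eta=(\tilde x,\tilde\varphi,\tilde\varphi_i,\tilde\varphi_{ij})$ in the super-jet variables, and \ref{eqn:clas 2} bounds it from above by a controlled polynomial expression in $\eta$ and in $|\varphi_{ij}|$. This is precisely the hypothesis entering Krylov's interior Schauder-type estimate for Bellman equations (the content of the second-order a priori estimates underlying \cite[Theorems 6.4.3, 6.4.5]{krylov1987nonlinear}); its verification is delicate because one must keep track of all mixed second partials and confirm that the bound holds uniformly over $A\in\Sigma$ and $n\in\mathbb{N}$. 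Once this is in place, the a priori $C^{2,\alpha}$ estimate holds with a constant independent of $A,n$.

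With the uniform $L^\infty$, $C^{1,\alpha}$, and interior $C^{2,\alpha}$ estimates secured, I would assemble the existence statement exactly as in \cite[Theorems 6.1.5, 6.4.3]{krylov1987nonlinear}: solve the regularized problems (finite $n$, smooth operators) by the method of continuity, use the uniform estimates to extract a subsequential limit that is a classical interior solution, and appeal to the boundary regularity furnished by the barrier condition \ref{eqn:clas 3} to attach the boundary/terminal data $u$ continuously, giving $\varphi\in C^{1,2}([0,T)\times B_r(a))\cap C([0,T]\times\overline{B_r(a)})$. Finally, for the elliptic assertion I would note that when each $F^{A,n}$ is independent of $t$, the time-derivative hypotheses in \ref{eqn:clas 1} and \ref{eqn:clas 2} become vacuous, the operator reduces to the stationary Bellman operator $F(x,\varphi,D_x\varphi,D_x^2\varphi)$, and the same chain of uniform estimates together with \cite[Theorem 6.4.5]{krylov1987nonlinear} yields a classical solution $\varphi\in C^2(B_r(a))\cap C(\overline{B_r(a)})$ of the Dirichlet problem. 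Since all hypotheses of the quoted theorems have then been verified, the conclusion follows, which is why the detailed proof may be omitted.
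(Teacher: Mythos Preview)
Your proposal is correct and matches the paper's approach exactly: the paper simply states that the lemma is an immediate consequence of \cite[Theorems 6.1.5, 6.4.3, and 6.4.5]{krylov1987nonlinear} and omits the proof. Your write-up is a faithful (and more detailed) explanation of why hypotheses \ref{eqn:clas 1}--\ref{eqn:clas 3} are precisely Krylov's structural assumptions, so there is nothing to add.
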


\begin{thm}\label{thm:classical}
	Suppose that $m = d$, Assumptions \ref{Assumption ergodic unique} and \ref{sdf assumption} hold, and the functions $b,h_{ij},\sigma,r,k_{ij},v$ are  twice continuously differentiable.  Then the PDE \eqref{sdf pde}
	has a solution pair $(u,\lambda)$ in $C^2(\mathbb{R}^d)\times\mathbb{R}$, and 
	the PDE \eqref{eqn:PDE} 
	has a solution 
	$w$ in $C^{1,2}([0,T]\times\mathbb{R}^d)$.
\end{thm}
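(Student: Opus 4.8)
The plan is to realize both operators as Bellman operators and to invoke Lemma \ref{thm:kryl} on balls, afterwards passing to the limit. Writing $G(A)=\tfrac12\sup_{Q\in\Sigma}\mathrm{tr}(AQ)$, the left-hand side of \eqref{sdf pde} becomes $\sup_{Q\in\Sigma}\tfrac12\mathrm{tr}(H(x,D_xu,D_x^2u)Q)+\langle b(x),D_xu\rangle-r(x)$, a supremum over $Q\in\Sigma$ of operators that are affine in the Hessian and at most quadratic in the gradient, and the same holds for \eqref{eqn:PDE}. First I would set up a family $F^{Q,n}$, indexed by $Q\in\Sigma$ with $n$ accounting for a smoothing of the $C^2$ coefficients to the $C^\infty$ regularity required by Krylov's theory, whose sup-limit reproduces these operators, so that Lemma \ref{thm:kryl} is applicable.

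The decisive structural point is uniform ellipticity, and this is exactly where the hypotheses $m=d$ and Assumption \ref{Assumption ergodic unique} enter: $\sigma(x)\sigma^\top(x)>0$ forces $\sigma(x)$ to be invertible, while the strong ellipticity \eqref{ellip} gives $\underline\sigma^2 I_d\le Q\le\overline\sigma^2 I_d$ for every $Q\in\Sigma$, so the second-order coefficient matrix arising from $\tfrac12\mathrm{tr}((\sigma^\top D_x^2u\,\sigma)Q)$ is uniformly elliptic on each ball, yielding condition \ref{eqn:clas 1}. The remaining growth bounds in \ref{eqn:clas 1} and the second-order bound \ref{eqn:clas 2} follow because, for fixed $Q$, the operator $F^{Q,n}$ is affine in the Hessian, so its second Hessian-derivatives vanish, and its gradient dependence is quadratic with $C^2$ and bounded coefficients; these are precisely the patterns permitted by \ref{eqn:clas 1}-\ref{eqn:clas 2}. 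The properness condition \ref{eqn:clas 3} fails for the bare operators, which contain no zeroth-order term, so I would append a discount $-\delta u$ to \eqref{sdf pde} and $-\theta w$ to \eqref{eqn:PDE}; with such a term the sign inequalities in \ref{eqn:clas 3} hold for $M$ large, since $-\delta\varphi$ dominates while the Hessian contributions carry the correct sign for $\varphi_{ij}\ge0$.

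With Lemma \ref{thm:kryl} in hand I obtain, for each ball $B_R(0)$ and each discount parameter, a classical solution $u^{\delta,R}\in C^2(B_R)$ of $G(H)+\langle b,D_xu\rangle-r-\delta u=0$ with zero boundary data, and $w^{\theta,R}\in C^{1,2}$ of the corresponding proper parabolic problem on $(0,T)\times B_R$ with the data of \eqref{eqn:PDE}. The next step is a priori estimates uniform in $R$ and in the discount. The dissipativity hypothesis \ref{eta appli} together with \ref{aa} in Assumption \ref{sdf assumption} yields sup and oscillation bounds and, via a Bernstein-type argument (equivalently, from the boundedness of $Z=\sigma^\top D_xu$ at the BSDE level together with invertibility of $\sigma$), uniform gradient bounds $|D_xu^{\delta,R}|\le L$ and $|D_xw^{\theta,R}|\le L$. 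These feed interior Krylov--Safonov and Evans--Krylov estimates, giving uniform local $C^{2,\alpha}$ control, and Arzel\`a--Ascoli then lets me send $R\to\infty$ and $\theta\to0$ to produce $w\in C^{1,2}([0,T]\times\mathbb{R}^d)$ solving \eqref{eqn:PDE}. For the ergodic equation I would mimic the proof of Theorem \ref{QBSDE with lambda sol exist}: set $\overline u^\delta=u^\delta-u^\delta(0)$, extract $\overline u^{\delta_n}\to u$ locally in $C^2$ and $\delta_n u^{\delta_n}(0)\to\lambda$; since $G(H)+\langle b,D_xu\rangle-r$ depends only on the derivatives, the identity $\delta u^\delta=\delta\overline u^\delta+\delta u^\delta(0)$ passes to $\lambda$ in the limit, so $(u,\lambda)$ solves \eqref{sdf pde} classically, with $D_xu$ bounded.

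The main obstacle is the package of a priori estimates that must hold uniformly, and simultaneously, in the radius $R$ and in the vanishing discount: the gradient bound is what controls the quadratic term $\langle\sigma^i,D_xu\rangle\langle\sigma^j,D_xu\rangle$ inside $H$, keeps the equation uniformly elliptic, and preserves the applicability of the Evans--Krylov estimates, without which the limit could degenerate. Verifying the intricate second-order condition \ref{eqn:clas 2} for the quadratic-gradient Bellman operator, and confirming that the auxiliary discount can be removed in the limit without destroying these estimates, are the delicate points; by contrast the ellipticity \ref{eqn:clas 1} and the properness \ref{eqn:clas 3} are comparatively direct consequences of $m=d$, Assumption \ref{Assumption ergodic unique}, and the discount device.
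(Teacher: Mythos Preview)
Your overall strategy---writing the nonlinearity as a Bellman envelope, invoking Lemma~\ref{thm:kryl} on balls, and passing to a limit via interior Schauder/Evans--Krylov estimates---is a reasonable PDE route, but it differs from the paper's proof in a way that creates a real gap.

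The paper does \emph{not} solve auxiliary problems with zero boundary data and then extract a limit. Instead it first fixes the viscosity solution $u$ produced by the ergodic $G$-BSDE (Theorem~\ref{QBSDE with lambda sol exist}), and on each ball $B_r(a)$ solves the Dirichlet problem \emph{with boundary data $u$}. A BSDE comparison argument with the exit time $\tau_{r'}$ then forces the classical solution $\tilde u$ to coincide with $u$ on $B_{r'}(a)$, so $u\in C^2$ locally with no limiting procedure at all. Two devices make Lemma~\ref{thm:kryl} applicable: a multiplicative change of unknown $\tilde u=q(x_1)\varphi$ (not an additive discount) manufactures the zeroth-order monotonicity needed for condition~\ref{eqn:clas 3}, and a smooth cutoff $p$ truncates the quadratic gradient term so that conditions~\ref{eqn:clas 1}--\ref{eqn:clas 2} hold; the truncation is harmless precisely because the BSDE already gives $|Z|\le L_1$, hence $|D_xu|\le M$ on $\overline{B_r(a)}$.

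Your proposal has a gap at the a priori estimates. You claim uniform gradient bounds for $u^{\delta,R}$ ``equivalently, from the boundedness of $Z=\sigma^\top D_xu$ at the BSDE level'', but the BSDE bound pertains to the solution of the infinite-horizon BSDE on $\mathbb{R}^d$, not to your Dirichlet problems on $B_R(0)$ with zero boundary data; there is no reason these objects agree before you have already identified them. Without that identification you must produce a genuine Bernstein estimate uniform in $R$ and in $\delta\downarrow0$, which for a Bellman operator with quadratic gradient growth and merely $C^2$ coefficients is nontrivial and is exactly what the paper sidesteps. Moreover, without truncating the quadratic term you will have difficulty verifying condition~\ref{eqn:clas 2} of Lemma~\ref{thm:kryl} uniformly: the raw term $\langle\sigma^i,D_xu\rangle\langle\sigma^j,D_xu\rangle$ does not fit the required structural inequalities once differentiated twice in $x$ unless the gradient is a priori confined to a compact set. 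The paper's cutoff $p$ and the choice of boundary data $u$ are precisely what make both issues disappear.
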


\begin{proof}
	For simplicity, assume that $v=0.$ We begin by introducing several functions and constants that will be used in the proof. 
	Let $X^x$ be a solution to the SDE \eqref{SDE appli} and consider the ergodic $G$-BSDE 
	\begin{align}
	Y_s&=Y_T+\int_s^T(-r(X^x_u)-\lambda)\,du+\int_s^T\Big(-k_{ij}(X^x_u)+\frac{1}{2}Z^i_uZ^j_u\Big)\,d\langle B^{i},B^{j}\rangle_u\\
	&\quad-\int_s^TZ_u\,dB_u-(K_T-K_s)\,.\label{eqn: pde clas} 
	\end{align}
	By Theorems \ref{QBSDE with lambda sol exist} and \ref{markovian ergodic solution is unique}, there is a unique solution  
	$(Y^x,Z^x,K^x,\lambda)$ to \eqref{eqn: pde clas} such that
	$Z^x$ is bounded by a constant $L_1>0$. Define a function $u:\mathbb{R}^d\to \mathbb{R}$  as $u(x)=Y_0^x$. 
	Let $a=(a_1,\cdots,a_d)^{\top}\in \mathbb{R}^d$ and let $r$ and $ L_2$ be positive constants  to be specified later.   Set
	\begin{align}
	M:=\sup \{|z \sigma^{-1}(x)| : x\in B_r(a), z\in \mathbb{R}^d,|z|\le L_1 \} \,.
	\end{align}
Define $q:\mathbb{R} \to \mathbb{R}$ by
	\begin{align}
	q(x_1)=-\frac{1}{4r}(x_1-a_1)^2+(x_1-a_1)+L_2 \,,
	\end{align}
and let $p:\mathbb{R}^d \to \mathbb{R}^d$ be a smooth function such that
	\begin{align}
	p(\tilde{x}) =
	\begin{cases}
	\tilde{x}=(\tilde{x}_1,\cdots,\tilde{x}_d)^{\top}, & \text{for } |\tilde{x}| \le M, \\
	0, & \text{for } |\tilde{x}| \ge M+1,
	\end{cases}    
	\end{align}
and which smoothly interpolates between $\tilde{x}$ and $0$ for $M < | \tilde{x} | < M+1$.
For $n\in \mathbb{N}$, let $\psi^n$ be the standard mollification of $\psi$  
for $\psi=b,h_{ij},\sigma,r,k_{ij}$.  
Then $\psi^n$ is smooth on $\overline{B_r(a)}$, and
\begin{align}
\psi^n \to \psi\,,\; D_x \psi^n \to D_x \psi\,,\; D_x^2 \psi^n \to D_x^2 \psi \; \text{ uniformly on } \overline{B_r(a)}.    
\end{align}

We now show that, for sufficiently large $L_2>0$ and small $r>0$, the boundary value problem
	\begin{equation}
    \begin{aligned}\label{pde:bdy pde}
	\begin{cases}
	G(H(   x,D_x\tilde{u},D_x^2\tilde{u}   ) )+\langle b(x),D_x\tilde{u}\rangle-r(x)-\lambda=0\,\,\text{ in } B_r(a),\\
	\tilde{u}=u\,\,\text{ on }\partial B_r(a)
	\end{cases}
	\end{aligned} 
	\end{equation}
	admits a classical solution $\tilde{u}\in C^2(B_r(a))\cap C(\overline{B_r(a)})$.
	 Define $\hat{u}(x)=\frac{u(x)}{q(x_1)}$ for  $x=(x_1,\cdots x_d)\in \overline{B_r(a)}$. Setting  $\tilde{u}(x)=q(x_1)\varphi(x)$,  it suffices to show that the PDE
	\begin{equation} 
\begin{aligned}\label{eqn:PDE_tilde}
\begin{cases}    
G(\tilde{H}(         x,\varphi, D_x\varphi,D_x^2\varphi          ))+\langle b(x),D_x q(x_1)\varphi +D_x\varphi q(x_1)\rangle-r(x)-\lambda=0\,\,\text{ in } B_r(a),\\
\varphi=\hat{u}\,\,\text{ on }\partial B_r(a),
\end{cases}
\end{aligned}
\end{equation}
has a classical solution  $\varphi\in C^2(B_r(a))\cap C(\overline{B_r(a)})$ where 
\begin{equation}\begin{aligned}
&\quad\tilde{H}_{ij}(   x,\varphi, D_x\varphi,D_x^2\varphi    )\\
&=\langle 
(D^2_x q(x_1) \varphi +D_xq(x_1)D_x\varphi^{\top}+D_x\varphi D_xq(x_1)^{\top}+q(x_1)D^2_x\varphi)   
\sigma^{i}(x),\sigma^{j}(x)\rangle\\
&\quad+2\langle D_xq(x_1)\varphi+D_x\varphi q(x_1)   ,h_{ij}(x)\rangle-2k_{ij}(x)\\
&\quad+\langle  \sigma^{i}(x) , p (    D_x q(x_1)\varphi +D_x\varphi q(x_1) )  \rangle  \langle  \sigma^{j}(x) , p(    D_x q(x_1)\varphi +D_x\varphi q(x_1) )\rangle\,.
\end{aligned}\end{equation}
 For each $n\in \mathbb{N}$, let $\tilde{H}^n( x,\varphi, D_x\varphi,D_x^2\varphi )$ denote the function obtained from $\tilde{H}$ by replacing the coefficients $\sigma,h_{ij},k_{ij}$ with  $\sigma^n,h^n_{ij},k^n_{ij}$, respectively. Let $A$ be a symmetric matrix satisfying $\underline{\sigma}^2I_d\le A\le \overline{\sigma}^2I_d$ and define $F^{A,n}: B_r(a)\times \mathbb{R} \times \mathbb{R}^d\times \mathbb{S}^d\to \mathbb{R}$ as
\begin{equation}
	\begin{aligned}
	F^{A,n}(x,\varphi,\varphi_i,\varphi_{ij}    )&= \frac{1}{2}\text{tr}A\tilde{H}^n(x,\varphi,\varphi_i,\varphi_{ij})\\
	&+\langle b^n(x), \partial_{x_1}q(x_1)(\varphi,0,\cdots,0)^{\top}+q(x_1)(\varphi_1,\cdots,\varphi_d)^{\top}\rangle-r^n(x)-\lambda\,.
	\end{aligned}
\end{equation}
We now apply Lemma~\ref{thm:kryl} to this function $F^{A,n}$. 
Since conditions \ref{eqn:clas 1} and \ref{eqn:clas 3} in Lemma \ref{thm:kryl} are readily satisfied for sufficiently large $L_2 > 0$ and small $r > 0$, it remains to verify condition \ref{eqn:clas 2}.
We focus on the term  $\partial_{x_ix_j}F^{A,n}$, which is the most delicate part in  \eqref{eqn:F^A,N}.
Hereafter, $L>0$ denotes a generic constant, whose value may vary from line to line.
By direct calculation, we have
\begin{align}\label{eqn:clas bound}
&\quad |  \langle   \sigma^{n,i}(x),  \partial_{x_1,x_{1}}p (    \partial_{x_1}q(x_1)(\varphi,0,\cdots,0  )^{\top}+q(x_1)(\varphi_1,\cdots,\varphi_d)^{\top} )  \rangle \\
&\quad\cdot\langle\sigma^{n,j}(x),p(   \partial_{x_1}q(x_1)(\varphi,0,\cdots,0  )^{\top}+q(x_1)(\varphi_1,\cdots,\varphi_d)^{\top})\rangle \big|\\
&\le L | \partial_{x_1,x_1}p(  \partial_{x_1}q(x_1)(\varphi,0,\cdots,0  )^{\top}+q(x_1)(\varphi_1,\cdots,\varphi_d)^{\top}  )|\\
&\le L \big( | \text{tr} D^2_xp(  \partial_{x_1}q(x_1)(\varphi,0,\cdots,0  )^{\top}+q(x_1)(\varphi_1,\cdots,\varphi_d)^{\top} )  \alpha\alpha^{\top}|\\
&\quad+\big| D_x p(    \partial_{x_1}q(x_1)(\varphi,0,\cdots,0  )^{\top}+q(x_1)(\varphi_1,\cdots,\varphi_d)^{\top}  )   \big(\frac{1}{2r}(\varphi_1,\cdots,\varphi_d)\big)  \big| \Big)\,,
\end{align}
where $\alpha=(-\frac{1}{2r}\varphi,0,\cdots,0)^{\top}+\partial_{x_1}q(x_1)(\varphi_1,\cdots \varphi_d)^{\top}$.
Observe that  $\frac{1}{2}\le \partial_{x_1}q(x_1)\le 1 $ for $x=(x_1,\cdots,x_d)^{\top}\in B_r(a)$, the support of the function $p$ is contained in the ball $B_{M+1}(0)$, and  
\begin{align}
\alpha&=-\frac{1}{2r\partial_{x_1}q(x_1)} \big(\partial_{x_1}q(x_1)(\varphi,0,\cdots,0  )^{\top}+q(x_1)(\varphi_1,\cdots,\varphi_d)^{\top} \big)  \\&\quad+\Big(\frac{q(x_1)}{2r\partial_{x_1}q(x_1)}+\partial_{x_1}q(x_1)\Big)(\varphi_1,\cdots \varphi_d)^{\top}\,.
\end{align}
It follows that 
\begin{align}
&\quad\big|\text{tr}\big(D^2_xp  ( \partial_{x_1}q(x_1)(\varphi,0,\cdots,0  )^{\top}+q(x_1)(\varphi_1,\cdots,\varphi_d)^{\top})            \alpha\alpha^{\top} \big)\big|\\
&\le L\Big(L+\Big(\frac{q(x_1)}{2r\partial_{x_1}q(x_1)}+\partial_{x_1}q(x_1)\Big)^2(\varphi_1,\cdots \varphi_d)(\varphi_1,\cdots \varphi_d)^{\top}\Big)\\
&\le \sum_{\ell=1}^d L(1+|\varphi_\ell|^2)\,.
\end{align}
By substituting the expression above into \eqref{eqn:clas bound}, we have
\begin{align}
&\quad \big|\langle \sigma^{n,i}(x), \partial_{x_1,x_{1}} p (    \partial_{x_1}q(x_1)(\varphi,0,\cdots,0  )^{\top}+q(x_1)(\varphi_1,\cdots,\varphi_d)^{\top} )  \rangle \\
&\quad\cdot\langle \sigma^{n,j}(x),p(   \partial_{x_1}q(x_1)(\varphi,0,\cdots,0  )^{\top}+q(x_1)(\varphi_1,\cdots,\varphi_d)^{\top})\rangle \big|\le \sum_{\ell=1}^d L (1+|\varphi_\ell|^2)\,.
\end{align}
The other terms in  \eqref{eqn:F^A,N} can be estimated analogously. Hence,  $F^{A,n}$ satisfies condition \ref{eqn:clas 2} in Lemma \ref{thm:kryl}.
It follows that the PDE \eqref{eqn:PDE_tilde} admits a classical solution $\varphi$, and consequently the PDE \eqref{pde:bdy pde} also admits a classical solution $\tilde{u}$.

We now show that the function 
$u$ is twice continuously differentiable. First we prove that $u = \tilde{u}$ on $B_{r'}(a)$ for all $r'<r$. For any $x\in B_{r'}(a)$, we define the stopping time  
\begin{align}
    \tau_{r'}=\inf\{s>0\,|\,r'<|X^x_s-a|<r\}\,.
\end{align}
Recall that $(u(X^x),Z^x,K^x,\lambda)$ is the unique solution to the $G$-BSDE \eqref{eqn: pde clas}.
For simplicity, we define $\tilde{X}^x=\frac{|X^x|\wedge r'}{|X^x|}X^x$ with the convention that $\frac{0}{0}=0$.
Through a similar argument in the proof of Theorem \ref{feynmac kac formula reverse}, we have
\begin{align}\label{eqn: clas bsde}
    \tilde{u}(x)&=\tilde{u}(X^x_{T\wedge\tau_{r'}})-\int_0^{T\wedge\tau_{r'}}r(X_u^x)+\lambda\,du-\int_0^{T\wedge\tau_{r'}}k_{ij}(X^x_u)\,d\langle B^i,B^j\rangle_u\\
    &\quad+\int_0^{T\wedge\tau_{r'}}\frac{1}{2} \langle \sigma^i(X_u^x),p(D_x\tilde{u}(X_u^x))\rangle \langle \sigma^j(X_u^x),p(D_x\tilde{u}(X_u^x)) \rangle\,d\langle B^i,B^j\rangle_u\\
    &\quad-\int_0^{T\wedge\tau_{r'}}(D_x\tilde{u}^{\top}\sigma)(X_u^x)\,dB_u-\tilde{K}^x_{T\wedge\tau_{r'}}\,,
\end{align}
where
\begin{equation}\label{eqn:K}
\begin{aligned}
\tilde{K}^x_{\cdot\wedge\tau_{r'}}&=\int_0^{\cdot\wedge\tau_{r'}}\frac{1}{2}H_{ij}(      \tilde{X}_u^x    , D_x\tilde{u}(\tilde{X}_u^x), D_x^2\tilde{u}(\tilde{X}_u^x))\,d\langle B^i,B^j\rangle_u\\
&\quad-\int_0^{\cdot\wedge\tau_{r'}} G(H_{ij}(      \tilde{X}_u^x    , D_x\tilde{u}(\tilde{X}_u^x), D_x^2\tilde{u}(\tilde{X}_u^x)))\,du\,.
\end{aligned}
\end{equation}  Observe that $Z^x_s=  p^{\top}(   ( Z^x_s\sigma^{-1}(\tilde{X}_s^x)  )^{\top}  )  \sigma(\tilde{X}_s^x)$, $X^x_s=\tilde{X}^x_s$ for $s\in[0,\tau_{r'}]$ and
\begin{align}\label{eqn:32}
    &\quad\sum_{i,j=1}^d\big|Z_s^{x,i}Z_s^{x,j}-\langle \sigma^i(\tilde{X}_s^x),p(D_x\tilde{u}(\tilde{X}_s^x))\rangle\langle \sigma^j(\tilde{X}_s^x),p(D_x\tilde{u}(\tilde{X}_s^x) )\rangle\big|\\
    &\le L\big|Z^x_s- p^{\top}(D_x\tilde{u}(\tilde{X}_s^x))\sigma(\tilde{X}_s^x)\big|\\
    &\le LM_\sigma \big | p((  Z_s^x\sigma^{-1}(\tilde{X}_s^x) )^{\top} )-p(D_x\tilde{u}(\tilde{X}^x_s))    \big|\\
    &\le L
    \big|Z_s^x\sigma^{-1}(\tilde{X}_s^x)-(D_x\tilde{u}^{\top}\sigma \sigma^{-1})(\tilde{X}_s^x)\big|\le L\big|Z_s^x-(D_x\tilde{u}^{\top}\sigma)(\tilde{X}_s^x)\big|\,.    
\end{align}
Then, using a similar argument in the proof of Theorem \ref{Finite QBSDE exis and unique}, for each $\epsilon>0,$ we can construct two processes $  d^{\epsilon,ij},n^{\epsilon,ij}$, a sublinear expectation $\hat{\mathbb{E}}^{d^{\epsilon,ij}}$ and a $G$-Brownian motion $B^{\epsilon}$ such that
\begin{align}
    \tilde{u}(x)-u(x)+\tilde{K}^x_{T\wedge\tau_{r'}} &\le\tilde{u}(X^x_{T\wedge\tau_{r'}})-u(X^x_{T\wedge\tau_{r'}})+\int_0^{T\wedge\tau_{r'}}n_u^{\epsilon,ij}\,d\langle B^{\epsilon,i},B^{\epsilon,j}\rangle_u\\
    &\quad-\int_0^{T\wedge\tau_{r'}}(D_x\tilde{u}^{\top}\sigma)(X_u^x)-Z_u^x\,dB^{\epsilon}_u\,.
\end{align}
 By \cite[Theorem 48]{hu2021extended}, Lemma \ref{K is MG in new sublinear} and Lemma \ref{ergodic recurrent}, we have
 \begin{align}
     &\quad\tilde{u}(x)-u(x)\\
     &\le \hat{\mathbb{E}}^{d^{\epsilon,ij}} [(\tilde{u}(X_T^x)-u(X_T^x))\mathds{1}_{\{T\le \tau_{r'} \}} ] +\hat{\mathbb{E}}^{d^{\epsilon,ij}} [ (\tilde{u}(X_{\tau_{r'}}^x)-u(X_{\tau_{r'}}^x))\mathds{1}_{\{T\ge \tau_{r'}\}   } ]+L \epsilon T\,\\
     &\le
		\frac{L}{\sqrt{T}}+\hat{\mathbb{E}}^{d^{\epsilon,ij}} [ (\tilde{u}(X_{\tau_{r'}}^x)-u(X_{\tau_{r'}}^x))\mathds{1}_{\{T\ge \tau_{r'}  \}} ]+L\epsilon T\,.
 \end{align}
Since $\tilde{u}=u$ on $\partial B_r(a) $ and both  $u$ and $\tilde{u}$ are uniformly continuous on $\overline{B_r(a)}$, we have that for any $\delta>0$, there exists $r'<r$ such that 
\begin{align}\label{eqn:aaa}
    \Big|\hat{\mathbb{E}}^{d^{\epsilon,ij}} [ (\tilde{u}(X_{\tau_{r'}}^x)-u(X_{\tau_{r'}}^x))\mathds{1}_{ \{T\ge \tau_{r'}\} } ]\Big|
    &\le \Big|\hat{\mathbb{E}}^{d^{\epsilon,ij}} \Big[ \tilde{u}(X_{\tau_{r'}}^x)-\tilde{u}\Big( X^x_{\tau_{r'}} + \frac{r-r'}{r'}(  X_{\tau_{r'}}^x -a )  \Big)\Big]\Big|\\
    +\,\Big|\hat{\mathbb{E}}^{d^{\epsilon,ij}}& \Big[{u}\Big(X^x_{\tau_{r'}} + \frac{r-r'}{r'}(  X_{\tau_{r'}}^x -a )\Big)-u(X_{\tau_{r'}}^x) \Big]\Big|\le \delta\,.
\end{align} 
By letting $\epsilon \to 0$ and then $T \to \infty$, we obtain $\tilde{u}(x) - u(x) \le \delta$.
By a similar argument, $u(x) - \tilde{u}(x) \le \delta$, and hence $|u(x) - \tilde{u}(x)| \le \delta$ for $x \in B_{r'}(a)$.
Since $\delta$ is arbitrary, we conclude that $u(x) = \tilde{u}(x)$ for all $x \in B_{r'}(a)$ and $0 < r' < r$.
Therefore, $u = \tilde{u}$ on $B_r(a)$, implying that $u$ is twice continuously differentiable on $B_r(a)$.
As this argument applies to any $a \in \mathbb{R}^d$, we have $u \in C^{2}(\mathbb{R}^d)$.

We now prove that the PDE \eqref{eqn:PDE} admits a classical solution.
Since the proof is similar to the preceding argument, we present only a sketch.
By Theorem \ref{QBSDE feykac formula}, there is a viscosity solution 
$w\in C([0,T]\times\mathbb{R}^d)$ 
to  \eqref{eqn:PDE}.   First we verify that the PDE
\begin{align}\label{eqn:fine pde clas}
	\begin{cases}
	 \partial_t\overline{w}+G(H(    x,D_x\overline{w}, D_x^2\overline{w}            )             ) +\langle b(x),D_x\overline{w}\rangle-r(x)=0\,\,\text{ in } (0,T)\times B_r(a),\\
	\overline{w}=w\,\,\text{ on }((0,T)\times \partial B_r(a))  \cup (\{T\}\times \overline{B_r(a)})
	\end{cases}
	\end{align} 
admits a classical solution $\overline{w}$. Define $\hat{w}(t,x)=e^{L_2t}w(t,x)$ for $(t,x)\in [0,T]\times \overline{B_r(a)}$.  Then, by setting $\overline{w}(t,x)=e^{-L_2t}\overline{\varphi}(t,x)$, it suffices to prove that the PDE
\begin{align}
	\begin{cases}
	 \partial_t\overline{\varphi}+e^{L_2t}G(\overline{H}(               x,e^{-L_2t}D_x\overline{\varphi}, e^{-L_2t}D_x^2\overline{\varphi}           ))\\
    \quad\quad +\langle b(x),D_x\overline{\varphi}\rangle-e^{L_2t}r(x)-L_2\overline{\varphi}=0\,\,\text{ in } (0,T)\times B_r(a),\\
	\overline{\varphi}=\hat{w}\,\,\text{ on }((0,T)\times \partial B_r(a)) \cup (\{T\}\times \overline{B_r(a)})
	\end{cases}
	\end{align} 
    has a classical solution $\overline{\varphi}$,
    where 
   \begin{equation}\begin{aligned}
	\overline{H}_{ij}(   x,e^{-L_2t}D_x\overline{\varphi}, e^{-L_2t}D_x^2\overline{\varphi} )&=\langle e^{-L_2t} D^2_x\overline{\varphi}\sigma^i(x),\sigma^j(x)\rangle+2\langle e^{-L_2t} D_x\overline{\varphi},h_{ij}(x)\rangle\\&-2k_{ij}(x)+\langle\sigma^i(x),p(e^{-L_2t}D_x\overline{\varphi})\rangle\langle\sigma^j(x),p(e^{-L_2t}D_x\overline{\varphi})\rangle\,.
	\end{aligned}\end{equation}
	For each $n\in \mathbb{N}$, let $\overline{H}^n$ denote the function obtained from $\overline{H}$ by replacing the coefficients $\sigma,h_{ij},k_{ij}$ with  $\sigma^n,h^n_{ij},k^n_{ij}$, respectively. Let $A$ be a symmetric matrix satisfying $\underline{\sigma}^2I_d\le A\le \overline{\sigma}^2I_d$ and  define $\overline{F}^{A,n}:(0,T)\times B_r(a)\times \mathbb{R}\times \mathbb{R}^d\times\mathbb{S}^d\to \mathbb{R}$ as
	\begin{align}
	\overline{F}^{A,n}(       t,x,\overline{\varphi}, \overline{\varphi}_i,   \overline{\varphi}_{ij})&= \frac{1}{2}\text{tr}A(e^{L_2t}\overline{H}^n  (   x,e^{-L_2t}\overline{\varphi}_i,e^{-L_2t}\overline{\varphi}_{ij}))\\
	&\quad+\langle b^n(x), D_x\overline{\varphi}\rangle-e^{L_2t}r^n(x)-L_2\overline{\varphi}\,.
	\end{align}
It follows that conditions \ref{eqn:clas 1}, \ref{eqn:clas 2} and \ref{eqn:clas 3} in Lemma \ref{thm:kryl}  are satisfied, and hence the PDE \eqref{eqn:fine pde clas} admits a classical solution $\overline{w}$.

We now prove that $w$ is twice continuously differentiable.
It suffices to verify that
  $w(t,x)=\overline{w}(t,x)$ for all $(t,x)\in [0,T]\times B_r(a)$. We set $t = 0$ for simplicity and select $0 < r' < r$ such that $x \in B_{r'}(a)$.
  Define the stopping time
$    \tau_{r'}=\inf\{s>0\,|\,r'<|X^x_s-a|<r\}.$
Applying It\^o's formula, we have
\begin{equation} 
\begin{aligned}
   \overline{w}(0,x)
    &=\overline{w}({T\wedge\tau_{r'}},X^x_{T\wedge\tau_{r'}})-\int_0^{T\wedge\tau_{r'}}r(X_u^x)\,du-\int_0^{T\wedge\tau_{r'}}k_{ij}(X^x_u)\,d\langle B^i,B^j\rangle_u\\
    &\quad+\int_0^{T\wedge\tau_{r'}}\frac{1}{2}\langle \sigma^i(X_u^x),p(D_x\overline{w}(u,X_u^x))\rangle\langle \sigma^j(X_u^x),p(D_x\overline{w}(u,X_u^x))\rangle\,d\langle B^i,B^j\rangle_u\\
    &\quad-\int_0^{T\wedge\tau_{r'}}D_x\overline{w}^{\top}(u,X_u^x)\sigma(X_u^x)\,dB_u-\overline{K}_{T\wedge\tau_{r'}}
\end{aligned}
\end{equation}
where
$\overline{K}_{\cdot \wedge \tau_{r'}}$ is defined analogously to \eqref{eqn:K}.
Construct a process $\overline{d}^{\epsilon,ij}$ and a sublinear expectation $\hat{\mathbb{E}}^{\overline{d}^{\epsilon,ij}}$ such that
\begin{align}
    \overline{w}(0,x)-w(0,x) &\le \hat{\mathbb{E}}^{\overline{d}^{\epsilon,ij}} [(\overline{w}(T,X_T^x)-w(T,X_T^x))\mathds{1}_{ \{T\le \tau_{r'} \} } ]\\
    &\quad +\hat{\mathbb{E}}^{\overline{d}^{\epsilon,ij}} [ (\overline{w}(\tau_{r'},X_{\tau_{r'}}^x)-w(\tau_{r'},X_{\tau_{r'}}^x))\mathds{1}_{\{T\ge \tau_{r'} \}} ]+L \epsilon T\,.
\end{align} 
Since $\overline{w}=w$ on $ ((0,T) \times  \partial B_r(a)) \cup (\{T\} \times B_r(a))$, and both  $w$ and $\overline{w}$ are uniformly continuous on $[0,T]\times \overline{B_r(a)}$, by applying a similar argument to that in \eqref{eqn:aaa}, we obtain the desired result.  
\end{proof}

Theorems \ref{sdf thm}, \ref{thm:unique}, and \ref{feynmac kac formula reverse} are 	novel, even in the specific case where the G-Brownian motion is the standard Brownian motion. 
Theorem \ref{sdf thm} establishes a sufficient condition for achieving the long-term
decomposition  based on the parameters of pricing kernels.
The long-term decomposition of pricing kernels is not always guaranteed to exist, making it essential to find conditions under which such a decomposition is valid.
\cite{hansen2009long} and \cite{qin2017long} suggested Lyapunov-type criteria  
to ensure the existence of the decomposition. However, these criteria are not explicitly based on model parameters and are typically difficult to verify in practice.

It is desirable to formulate sufficient conditions directly in terms of model parameters.
Although previous studies have examined this problem in specific settings such as the HJM framework, there has been no work addressing it within the broader context of general Markovian market models.
Our study is the first to provide a sufficient condition for the decomposition  in terms of model parameters under a general Markovian framework.

Theorem \ref{thm:unique} demonstrates the uniqueness of
the long-term decomposition.
While the uniqueness of the long-term exponential rate 
$\lambda=\lim_{T\to\infty}\frac{1}{T}\ln \hat{\mathbb{E}}[D_T]$
is straightforward, the uniqueness of the other components is more subtle. This issue has not been explicitly addressed in the existing literature. Our study is the first to provide sufficient conditions for uniqueness in terms of model parameters.

Furthermore, Theorem \ref{feynmac kac formula reverse} offers an
alternative characterization of the decomposition through the solution to the PDE. This result provides a novel representation of the components in the decomposition.
Our findings reveal how these components can be expressed in terms of the solution to a second-order parabolic PDE.
This type of PDE-based representation has not been previously explored in the literature.

\section{Conclusion}
\label{sec:con}

This study investigates the long-term decomposition of pricing kernels under the $G$-expectation framework.
Our main result, 
Theorem \ref{sdf thm}, demonstrates that the pricing kernel can be decomposed
into a discounting component with a long-term exponential rate, a transitory component, a positive symmetric $G$-martingale, and a decreasing process.  The uniqueness of this decomposition is demonstrated in Theorem \ref{thm:unique}.
Theorem \ref{feynmac kac formula reverse}  provides an alternative characterization of the decomposition through a solution pair to the PDE \eqref{sdf pde}.
These findings extend previous results obtained under a single fixed  probability framework to the $G$-expectation context.

$ $

\noindent\textbf{Acknowledgement.} Hyungbin Park was supported by the National Research Foundation of Korea (NRF) grants funded by the Ministry of Science and ICT (Nos. 2021R1C1C1011675 and 2022R1A5A6000840). Financial support from the Institute for Research in Finance and Economics of Seoul National University is gratefully acknowledged.

\appendix

\section{Pricing kernels in equilibrium}

Pricing kernels  reflect  how economic agents value uncertain future cash flows and play a central role in asset pricing. In the classical framework, they are closely linked to general equilibria and linear pricing rules. 
This section discusses how pricing kernels are endogenously determined in equilibrium under the $G$-expectation framework. 
The results presented here correspond to a special case of the framework developed in \cite[Section 3.2]{epstein2013ambiguous}.

We consider an equilibrium in a representative agent economy. Let $B$ be a $d$-dimensional $G$-Brownian motion.
Suppose the endowment process $(w_s)_{s \ge 0}$ is strictly positive and follows the dynamics
$$
\frac{dw_s}{w_s} = b_s\,ds + \sigma_s\,dB_s,\quad w_0 = 1,
$$
where $b \in \mathbb{M}_G^2(0,T;\mathbb{R})$ and $\sigma \in \mathbb{M}_G^2(0,T;\mathbb{R}^{1\times d})$.
The agent's preferences
are described by the utility functional 
$$
U(c) = -\hat{\mathbb{E}}\Big[-\int_0^T e^{-\beta s} u(c_s)\,ds\Big]\,,
$$
for nonnegative consumption processes $c\in \mathbb{M}_G^2(0,T;\mathbb{R}) $
where $\beta > 0$ is the discount rate and $u:\mathbb{R} \to \mathbb{R}$ is a strictly increasing, concave, and three-times continuously differentiable utility function.
The market consists of a single consumption good, a risk-free asset with short rate $r_s$, and $d$ risky securities in zero net supply. The returns on the risky securities evolve as
$$
dR_s = \mu_s\,ds + \Sigma_s\,dB_s,
$$
where $r \in \mathbb{M}_G^2(0,T;\mathbb{R})$, $\mu \in \mathbb{M}_G^2(0,T;\mathbb{R}^{d \times 1})$ and $\Sigma \in \mathbb{M}_G^2(0,T;\mathbb{R}^{d \times d})$ are determined endogenously in equilibrium.
We define a symmetric, positive-definite, matrix-valued process $\eta = (\eta^{ij})_{1 \le i,j \le d}$ by
$$
\eta_s^{ij} := \limsup_{\epsilon \searrow 0} \frac{1}{\epsilon} ( \langle B^i, B^j \rangle_s - \langle B^i, B^j \rangle_{s - \epsilon} )$$
for $s > 0$ and $\eta_0^{ij}:=0.$

By \cite[Theorem 3]{epstein2013ambiguous}, under sequential equilibrium, we have
$$\theta_s:=\Sigma_s^{-1}(\mu_s-r_s\mathbf{1})=-\frac{u''(w_s)}{u'(w_s)}w_s\eta_s\sigma_s^\top$$
and
$$r_s=\frac{1}{2}u'''(w_s)\sigma_s\sigma_s^\top-\frac{u''(w_s)}{u'(w_s)}b_s-\beta\,.$$
The corresponding pricing kernel is given by
\begin{equation}\label{eqn:PK}
\begin{aligned}
D_s
&=e^{-\int_0^sr_u\,du-\frac{1}{2}\int_0^s\theta_u^\top \eta_u^{-1}\theta_u\,du-\int_0^s\theta_u^\top \eta_u^{-1}\,dB_u}\\
&=e^{-\int_0^sr_u\,du-\frac{1}{2}\int_0^sv_{i,u}v_{j,u}\,d\langle B^i,B^j \rangle_u-\int_0^sv_u \,dB_u}
\end{aligned}
\end{equation}
where 
$v_s=(v_{1,s},v_{2,s},\cdots,v_{d,s})=-\frac{u''(w_s)}{u'(w_s)}w_s\sigma_s.$
In particular, if the economy is driven by a state variable process $X = (X_s)_{s \ge 0}$ such that there exist continuous functions $r(\cdot)$ and $v(\cdot)$ with
$$
r_s = r(X_s) \quad \text{and} \quad v_s = v(X_s) $$
for all $s$,
then the pricing kernel in \eqref{eqn:PK} reduces to the expression given in \eqref{SDF}.

\section{Stochastic deflators}

We examine the stochastic deflator used by conservative traders under volatility uncertainty and demonstrate that it admits the representation given in \eqref{SDF}.
To begin, we consider a pricing kernel in a standard Brownian setting. Let $(D_s)_{s \ge 0}$ be a pricing kernel of the form
$$D_s=e^{-\int_0^sr(X_u)\,du-\int_0^sv(X_u)\,dB_u}$$ 
where $B$ is a $d$-dimensional standard Brownian motion, and $r$ and $v$ are discount rate functions. The process $X$ is a Markov process defined as the solution to the SDE
\begin{align}
X_s = x + \int_0^s b(X_u)\,du + \int_0^s \sigma(X_u)\,dB_u\,,\; s \ge 0
\end{align}
with drift function $b$ and diffusion function $\sigma$.
A trading agent uses this pricing kernel to convert a future payoff $\Phi$ at time $T$ into its present value by computing the expectation
$$\mathbb{E}[D_T\Phi]=\mathbb{E}\big[e^{-\int_0^Tr(X_u)\,du-\int_0^Tv(X_u)\,dB_u}\Phi\big]\,.$$

Under volatility uncertainty, a conservative trading agent evaluates upper prices by taking the supremum over a family of expectations associated with different volatility scenarios. Suppose the agent considers a bounded, convex and closed set $\Sigma$ of $d \times d$ symmetric positive definite matrices as the range of admissible Brownian covariance structures. Then, the upper price of a future payoff $\Phi$ is given by
\begin{equation}
\label{vol_uncertain}
\sup_{  \eta\in \Sigma}\mathbb{E}\big[e^{-\int_0^Tr(X_u^{\eta})\,du-\int_0^Tv(X_u^{\eta})\,dB_u^{\eta}}\Phi\big] \end{equation}
where $\eta$ is a progressively measurable process taking values in $\Sigma$,  $B^{\eta} = \int_0^\cdot \sqrt{\eta_u}\,dB_u$ is the corresponding  
distorted Brownian motion,
  and  $X^{\eta}$ is a solution to  the SDE
\begin{align} 
X^{\eta}_s=x+\int_0^sb(X^{\eta}_u)\,du+\int_0^s\sigma(X^{\eta}_u)\,dB_u^{\eta}\,,\;s\ge0\,.
\end{align}

This problem can be formulated more generally and concisely within the framework of $G$-expectation. Let $B$ be a $G$-Brownian motion with uncertain covariance range $\Sigma$ under a $G$-expectation $\hat{\mathbb{E}}$. Consider the process
\begin{equation} 
D_s:=e^{-\int_0^sr(X_u)\,du-\int_0^sk_{ij}(X_u)\,d\langle B^j,B^j\rangle_u-\int_0^sv(X_u)\,dB_u}\,,\;s\ge0
\end{equation}
where $r, k_{ij},$ and $v$ are discount rate functions, and $X$ satisfies the $G$-SDE
\begin{align} 
X_s=x+\int_0^sb(X_u)\,du+\int_0^sh_{ij}(X_u)\,d\langle B^i,B^j\rangle_u+\int_0^s\sigma(X_u)\,dB_u\,,\;s\ge0
\end{align}
for drift functions $b, h_{ij}$ and a volatility function $\sigma$.
Then, the $G$-expectation $\hat{\mathbb{E}}[D_T \Phi]$ coincides with the upper price  given in \eqref{vol_uncertain} when   $k_{ij} = h_{ij} = 0$.
Accordingly,  the process $(D_s)_{s \ge 0}$ can be interpreted as a stochastic deflator for conservative traders under volatility uncertainty.

\bibliographystyle{apalike}
\bibliography{references}

\end{document}